\documentclass[11pt,a4paper]{article}

\usepackage[english]{babel}
\usepackage{german}
\usepackage{amssymb}
\usepackage{amsfonts}
\usepackage{amsmath}
\usepackage[latin1]{inputenc}
\usepackage{fullpage}
\usepackage{graphicx}
\usepackage{subfigure}
\usepackage{algorithm}
\usepackage[noend]{algorithmic}
\usepackage{amsmath,amssymb,cite}
\usepackage{lineno,footmisc,marvosym}
\usepackage{wasysym,vmargin,stackrel}
\usepackage{color,hyperref}
\usepackage{boxedminipage}
\usepackage{xspace}
\usepackage{todonotes}

\newcommand{\LL}{{\cal L}}
\newcommand{\PP}{{\cal P}}

\newcommand{\ProblemFormat}[1]{{\sc #1}}
\newcommand{\ProblemName}[1]{\ProblemFormat{#1}}

\newcommand{\bdsprob}{\ProblemName{Bounded Dominating Set} on Tolerance Graphs}
\newcommand{\rbdsprob}{\ProblemName{Restricted Bounded Dominating Set} on Tolerance Graphs}

\newcommand{\sscprob}{\ProblemName{Special 3-Set Cover}}

\sloppy

\newtheorem{theorem}{Theorem}

\newtheorem{definition}{Definition}
\newtheorem{observation}{Observation}

\newtheorem{lemma}{Lemma}

\newenvironment{proof}[1][Proof]{\noindent\textbf{#1.} }{\ \rule{0.5em}{0.5em}}
\newenvironment{proofofclaim}[1][Proof]{\noindent\textbf{#1.} }{\ensuremath{\square}}
\newtheorem{myclaim}{Claim}

\setmarginsrb{3.5cm}{2.25cm}{3.5cm}{3.05cm}{0.3cm}{0.3cm}{-0.3cm}{1.0cm}
\addtolength{\oddsidemargin}{-1.275cm}
\addtolength{\evensidemargin}{-1.275cm}
\addtolength{\textwidth}{2.5cm}
\addtolength{\topmargin}{-0.5cm}
\addtolength{\textheight}{1.4cm}

   \renewenvironment{thebibliography}[1]{
 \begin{oldthebibliography}{#1}
 \setlength{\parskip}{0.0ex} \setlength{\itemsep}{1ex}}  {\end{oldthebibliography}}

\begin{document}

\title{New Geometric Representations and Domination Problems \\
on Tolerance and Multitolerance Graphs\thanks{%
A preliminary conference version of this work appeared in the \emph{%
Proceedings of the 32nd Symposium on Theoretical Aspects of Computer Science
(STACS)}, Munich, Germany, March 2015, pages 354-366.} \ \thanks{%
Partially supported by the EPSRC Grant EP/K022660/1.}}
\author{Archontia C.~Giannopoulou\thanks{%
Institute of Software Technology and Theoretical Computer Science, Technische Universit\"at Berlin, Germany. Email: \texttt{%
archontia.giannopoulou@gmail.com}} \ \thanks{%
The main part of this paper was prepared while the first author was
affiliated at the School of Engineering and Computing Sciences, Durham
University, UK.} \and George B.~Mertzios\thanks{%
School of Engineering and Computing Sciences, Durham University, UK. Email: 
\texttt{george.mertzios@durham.ac.uk}}}
\date{\vspace{-1.0cm}}
\maketitle

\begin{abstract}
Tolerance graphs model interval relations in such a way that intervals can
tolerate a certain amount of overlap without being in conflict. In one of
the most natural generalizations of tolerance graphs with direct
applications in the comparison of DNA sequences from different organisms,
namely \emph{multitolerance} graphs, two tolerances are allowed for each
interval --~one from the left and one from the right side. Several efficient
algorithms for optimization problems that are \textsf{NP}-hard in general
graphs have been designed for tolerance and multitolerance graphs. In spite
of this progress, the complexity status of some fundamental algorithmic
problems on tolerance and multitolerance graphs, such as the \emph{%
dominating set} problem, remained unresolved until now, three decades after
the introduction of tolerance graphs. In this article we introduce two new
geometric representations for tolerance and multitolerance graphs, given by
points and line segments in the plane. Apart from being important on their
own, these new representations prove to be a powerful tool for deriving both
hardness results and polynomial time algorithms. Using them, we surprisingly
prove that the dominating set problem can be solved in polynomial time on
tolerance graphs and that it is \textsf{APX}-hard on multitolerance graphs,
solving thus a longstanding open problem. This problem is the first one that
has been discovered with a different complexity status in these two graph
classes.\newline

\noindent \textbf{Keywords:} Tolerance graph, multitolerance graph,
geometric representation, dominating set problem, polynomial time algorithm, 
\textsf{APX}-hard.\newline

\noindent \textbf{AMS subject classification (2010).} Primary 05C62, 05C85;
Secondary 68R10.
\end{abstract}

\section{Introduction\label{introduction-sec}}

A graph $G=(V,E)$ on $n$ vertices is a \emph{tolerance} graph if there
exists a collection ${I=\{I_{v}\ |\ v\in V\}}$ of intervals on the real line
and a set ${t=\{t_{v}\ |\ v\in V\}}$ of positive numbers (the tolerances),
such that for any two vertices ${u,v\in V}$, ${uv\in E}$ if and only if ${%
|I_{u}\cap I_{v}|\geq \min \{t_{u},t_{v}\}}$, where $|I|$ denotes the length
of the interval $I$. The pair ${\langle I,t\rangle }$ is called a \emph{%
tolerance representation} of $G$. If $G$ has a tolerance representation~${%
\langle I,t\rangle }$, such that ${t_{v}\leq |I_{v}|}$ for every ${v\in V}$,
then $G$ is called a \emph{bounded tolerance} graph.

If we replace in the above definition ``min'' by~``max'', we obtain the
class of \emph{max-tolerance} graphs. Both tolerance and max-tolerance
graphs have attracted many research efforts~\cite%
{Bogart95,Bus06,Fel98,GolumbicMonma84,GolSi02,MSZ-Model-SIDMA-09,MSZ-SICOMP-11,Kaufmann06,Lehmann06,Multitol-Mertzios14}
as they find numerous applications, especially in bioinformatics, among
others~\cite{GolSi02,Kaufmann06,Lehmann06}; for a more detailed account see
the book on tolerance graphs~\cite{GolTol04}. One of their major
applications is in the comparison of DNA sequences from different organisms
or individuals by making use of a software tool like BLAST~\cite{Altschul90}%
. However, at some parts of the above genomic sequences in BLAST, we may
want to be more tolerant than at other parts, since for example some of them
may be biologically less significant or we have less confidence in the exact
sequence due to sequencing errors in more error prone genomic regions. This
concept leads naturally to the notion of \emph{multitolerance} graphs which
generalize tolerance graphs~\cite{Parra98,GolTol04,Multitol-Mertzios14}. The
main idea is to allow two different tolerances for each interval, one to
each of its sides. Then, every interval tolerates in its interior part the
intersection with other intervals by an amount that is a convex combination
of these two border-tolerances.

Formally, let ${I=[l,r]}$ be an interval on the real line and ${%
l_{t},r_{t}\in I}$ be two numbers between~$l$ and~$r$, called \emph{tolerant
points}. For every ${\lambda \in \lbrack 0,1]}$, we define the interval $%
I_{l_{t},r_{t}}(\lambda )={[l+(r_{t}-l)\lambda },$ $l_{t}+(r-l_{t})\lambda ]$%
, which is the convex combination of ${[l,l_{t}]}$ and ${[r_{t},r]}$.
Furthermore, we define the set $\mathcal{I}(I,l_{t},r_{t})=%
\{I_{l_{t},r_{t}}(\lambda )\ |\ \lambda \in \lbrack 0,1]\}$ of intervals.
That is, $\mathcal{I}(I,l_{t},r_{t})$ is the set of all intervals that we
obtain when we linearly transform~$[l,l_{t}]$ into~$[r_{t},r]$. For an
interval~$I$, the \emph{set of tolerance-intervals}~$\tau $ of $I$ is
defined either as $\tau =\mathcal{I}(I,l_{t},r_{t})$ for some values $%
l_{t},r_{t}\in I$ (the case of a \emph{bounded} vertex), or as $\tau =\{%
\mathbb{R}\}$ (the case of an \emph{unbounded} vertex). A graph $G=(V,E)$ is
a \emph{multitolerance} graph if there exists a collection $I=\{I_{v}\ |\
v\in V\}$ of intervals and a family $t=\{\tau _{v}\ |\ v\in V\}$ of sets of
tolerance-intervals, such that: for any two vertices~${u,v\in V}$, $uv\in E$
if and only if $Q_{u}\subseteq I_{v}$ for some $Q_{u}\in \tau _{u}$, or $%
Q_{v}\subseteq I_{u}$ for some $Q_{v}\in \tau _{v}$. Then, the pair~$\langle
I,t\rangle $ is called a \emph{multitolerance representation} of~$G$. If $G$
has a multitolerance representation with only bounded vertices, i.e., with $%
\tau_{v}\neq \{\mathbb{R}\}$ for every vertex $v$, then $G $ is called a 
\emph{bounded multitolerance} graph.

For several optimization problems that are \textsf{NP}-hard in general
graphs, such as the coloring, clique, and independent set problems,
efficient algorithms are known for tolerance and multitolerance graphs.
However, only few of them have been derived using the (multi)tolerance
representation (e.g.~\cite{Parra98,GolSi02}), while most of these algorithms
appeared as a consequence of the containment of tolerance and multitolerance
graphs to weakly chordal (and thus also to perfect) graphs~\cite{Spinrad03}.
To design efficient algorithms for (multi)tolerance graphs, it seems to be
essential to assume that a suitable representation of the graph is given
along with the input, as it has been recently proved that the recognition of
tolerance graphs is \textsf{NP}-complete~\cite{MSZ-SICOMP-11}. Recently two
new geometric intersection models in the 3-dimensional space have been
introduced for both tolerance graphs (the \emph{parallelepiped}
representation~\cite{MSZ-Model-SIDMA-09}) and multitolerance graphs (the 
\emph{trapezoepiped} representation~\cite{Multitol-Mertzios14}), which
enabled the design of very efficient algorithms for such problems, in most
cases with (optimal) $O(n\log n)$ running time~\cite%
{Multitol-Mertzios14,MSZ-Model-SIDMA-09}. In spite of this, the complexity
status of some algorithmic problems on tolerance and multitolerance graphs
still remains open, three decades after the introduction of tolerance graphs
in~\cite{GoMo82}. Arguably the two most famous and intriguing examples of
such problems are the \emph{minimum dominating set} problem and the \emph{%
Hamilton cycle} problem (see e.g.~\cite[page 314]{Spinrad03}). Both these
problems are known to be \textsf{NP}-complete on the greater class of weakly
chordal graphs~\cite{Booth82,Muller96} but solvable in polynomial time in
the smaller classes of bounded tolerance and bounded multitolerance (i.e.,
trapezoid) graphs~\cite{KratschStewart93,DeogunS94}. The reason that these
problems resisted solution attempts over the years seems to be that the
existing representations for (multi)tolerance graphs do not provide enough
insight to deal with these problems.

\paragraph{Our contribution.}

In this article we introduce a new geometric representation for
multitolerance graphs, which we call the \emph{shadow representation}, given
by a set of line segments and points in the plane. In the case of tolerance
graphs, this representation takes a very special form, in which all line
segments are horizontal, and therefore we call it the \emph{horizontal
shadow representation}. Note that both the shadow and the horizontal shadow
representations are \emph{not} intersection models for multitolerance graphs
and for tolerance graphs, respectively, in the sense that two line segments
may not intersect in the representation although the corresponding vertices
are adjacent. However, the main advantage of these two new representations
is that they provide substantially new insight for tolerance and
multitolerance graphs and they can be used to interpret optimization
problems (such as the dominating set problem and its variants) using
computational geometry terms.

Apart from being important on their own, these new representations enable us
to establish the complexity of the \emph{minimum dominating set} problem on
both tolerance and multitolerance graphs, thus solving a longstanding open
problem. Given a horizontal shadow representation of a tolerance graph $G$,
we present an algorithm that computes a minimum dominating set in polynomial
time. On the other hand, using the shadow representation, we prove that the
minimum dominating set problem is \textsf{APX}-hard on multitolerance graphs
by providing a reduction from a special case of the set cover problem. That
is, there exists no Polynomial Time Approximation Scheme (PTAS) for this
problem unless \textsf{P}=\textsf{NP}. This is the first problem that has
been discovered with a different complexity status in these two graph
classes. Therefore, given the (seemingly) small difference between the
definition of tolerance and multitolerance graphs, this dichotomy result
appears to be surprising.

\paragraph{Organization of the paper.}

In Section~\ref{preliminaries-sec} we briefly revise the 3-dimensional
intersection models for tolerance graphs~\cite{MSZ-Model-SIDMA-09} and
multitolerance graphs~\cite{Multitol-Mertzios14}, which are needed in order
to present our new geometric representations. In Section~\ref%
{representations-sec} we introduce our new geometric representation for
multitolerance graphs (the \emph{shadow representation}) and its special
case for tolerance graphs (the \emph{horizontal shadow representation}). In
Section~\ref{dominating-hard-multitolerance-sec} we prove that \textsc{%
Dominating Set} on multitolerance graphs is \textsf{APX}-hard. Then, in
Sections~\ref{Bounded-dominating-sec}-\ref{tolerance-domination-sec} we
present our polynomial algorithm for the dominating set problem on tolerance
graphs, using the horizontal shadow representation (cf.~Algorithms~\ref%
{bounded-dominating-tolerance-alg},~\ref{restricted-bounded-alg}, and~\ref%
{dominating-tol-alg}). In particular, we first present Algorithm~\ref%
{bounded-dominating-tolerance-alg} in Section~\ref{Bounded-dominating-sec},
which solves a variation of the dominating set problem on tolerance graphs,
called \textsc{Bounded Dominating Set}. Then we present Algorithm~\ref%
{restricted-bounded-alg} in Section~\ref{Restricted-domination-sec}, which
uses Algorithm~\ref{bounded-dominating-tolerance-alg} as a subroutine in
order to solve a slightly modified version of \textsc{Bounded Dominating Set}
on tolerance graphs, namely \textsc{Restricted Bounded Dominating Set}. In
Section~\ref{tolerance-domination-sec} we present our main algorithm
(Algorithm~\ref{dominating-tol-alg}) which solves \textsc{Dominating Set} on
tolerance graphs in polynomial time, using Algorithms~\ref%
{bounded-dominating-tolerance-alg} and~\ref{restricted-bounded-alg} as
subroutines. Finally, in Section \ref{conclusions-sec} we discuss the
presented results and some interesting further research questions.

\paragraph{Notation.}

In this article we consider simple undirected graphs with no loops or
multiple edges. In an undirected graph $G$ the edge between two vertices $u$
and $v$ is denoted by~$uv$, and in this case $u$ and $v$ are said to be 
\emph{adjacent} in $G$. We denote by $N(u)=\{v\in V:uv\in E\}$ the set of
neighbors of a vertex $u$ in $G$, and $N[u]=N(u)\cup \{u\}$. Given a graph ${%
G=(V,E)}$ and a subset ${S\subseteq V}$, $G[S]$ denotes the induced subgraph
of $G$ on the vertices in $S$. A subset $S\subseteq V$ is a \emph{dominating
set} of $G$ if every vertex $v\in V\setminus S$ has at least one neighbor in 
$S$. Finally, given a set $X\subseteq \mathbb{R}^{2}$ of points in the
plane, we denote by~$H_{\text{convex}}(X)$ the \emph{convex hull} defined by
the points of $X$, and by $\overline{X}=\mathbb{R}^{2}\setminus X$ the
complement of $X$ in $\mathbb{R}^{2}$. For simplicity of the presentation we
make the following notational convention throughout the paper: whenever we
need to compute a set $S$ with the smallest cardinality among a family $%
\mathcal{S}$ of sets, we write $S=\min \{\mathcal{S}\}$.

\section{Tolerance and multitolerance graphs\label{preliminaries-sec}}

In this section we briefly revise the 3-dimensional intersection model for
tolerance graphs~\cite{MSZ-Model-SIDMA-09} and its generalization to
multitolerance graphs~\cite{Multitol-Mertzios14}, together with some useful
properties of these models that are needed for the remainder of the paper.
Since the intersection model of~\cite{MSZ-Model-SIDMA-09} for tolerance
graphs is a special case of the intersection model of~\cite%
{Multitol-Mertzios14} for multitolerance graphs, we mainly focus below on
the more general model for multitolerance graphs.

Consider a multitolerance graph $G=(V,E)$ that is given along with a
multitolerance representation $R$. Let $V_{B}$ and $V_{U}$ denote the set of
bounded and unbounded vertices of $G$ in this representation, respectively.
Consider now two parallel lines~$L_{1}$ and $L_{2}$ in the plane. For every
vertex $v\in V=V_{B}\cup V_{U}$, we appropriately construct a trapezoid $%
\overline{T}_{v}$ with its parallel lines on~$L_{1}$ and~$L_{2}$,
respectively (for details of this construction of the trapezoids we refer to~%
\cite{Multitol-Mertzios14}). According to this construction, for every
unbounded vertex $v\in V_{U}$ the trapezoid $\overline{T}_{v}$ is trivial,
i.e.,~a line~\cite{Multitol-Mertzios14}. For every vertex $v\in V=V_{B}\cup
V_{U}$ we denote by $a_{v},b_{v},c_{v},d_{v}$ the lower left, upper right,
upper left, and lower right endpoints of the trapezoid $\overline{T}_{v}$,
respectively. Note that for every unbounded vertex $v\in V_{U}$ we have $%
a_{v}=d_{v}$ and $c_{v}=b_{v}$, since $\overline{T}_{v}$ is just a line
segment. An example is depicted in Figure~\ref{TuTv-modified-fig}, where $%
\overline{T}_{u}$ corresponds to a bounded vertex $u$ and~$\overline{T}_{v}$
corresponds to an unbounded vertex~$v$.

\begin{figure}[t]
\centering
\includegraphics[width=\textwidth]{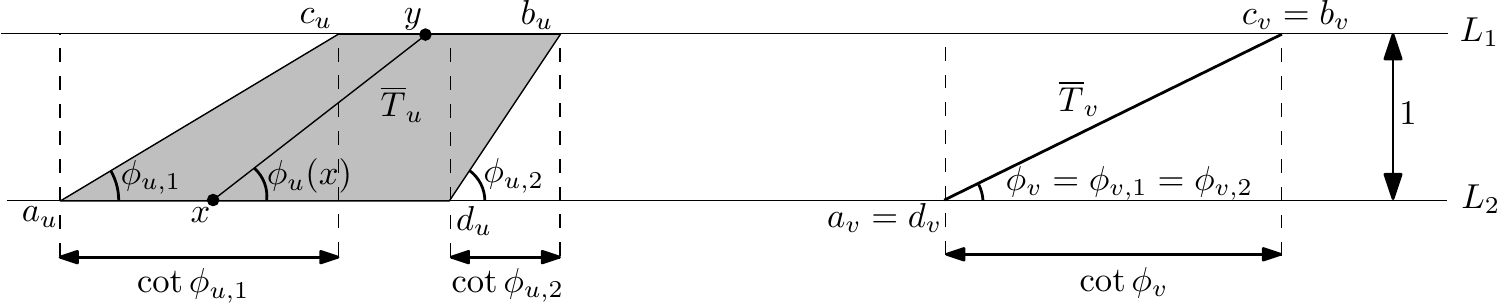}
\caption{The trapezoid $\overline{T}_{u}$ corresponds to the bounded vertex $%
u\in V_{B}$, while the line segment $\overline{T}_{v}$ corresponds to the
unbounded vertex $v\in V_{U}$.}
\label{TuTv-modified-fig}
\end{figure}

We now define the left and right angles of these trapezoids. For every angle 
$\phi $, the values $\tan \phi $ and $\cot \phi =\frac{1}{\tan \phi }$
denote the tangent and the cotangent of $\phi $, respectively. Furthermore, $%
\phi =arc\cot x$ is the angle $\phi $, for which $\cot \phi =x$.

\begin{definition}[\hspace{-0,01mm}\protect\cite{Multitol-Mertzios14}]
\label{Tv}For every vertex $v\in V=V_{B}\cup V_{U}$, the values $\phi
_{v,1}=arc\cot \left( c_{v}-a_{v}\right) $ and $\phi _{v,2}=arc\cot \left(
b_{v}-d_{v}\right) $ are the \emph{left angle} and the \emph{right angle} of~%
$\overline{T}_{v}$, respectively. Moreover, for every unbounded vertex $v\in 
{V_{U}}$, $\phi _{v}=\phi _{v,1}=\phi _{v,2}$ is the \emph{angle} of $%
\overline{T}_{v}$.
\end{definition}

Note here that, if $G$ is given along with a \emph{tolerance} representation 
$R$ (i.e., if $G$ is a tolerance graph), then for every bounded vertex $u$
we have that $\phi _{u,1}=\phi _{u,2}$, and thus the corresponding trapezoid 
$\overline{T}_{u}$ always becomes a \emph{parallelogram}~\cite%
{Multitol-Mertzios14} (see also~\cite{MSZ-Model-SIDMA-09}).

Without loss of generality we can assume that all endpoints and angles of
the trapezoids are distinct, i.e.,~$\{{a_{u},b_{u},c_{u},d_{u}\}\cap }\{{%
a_{v},b_{v},c_{v},d_{v}\}=\emptyset }$ and $\{{\phi _{u,1},\phi _{u,2}\}\cap 
}\{{\phi _{v,1},\phi _{v,2}\}=\emptyset }$ for every $u,v\in V$ with $u\neq
v $, as well as that ${0<\phi _{v,1},\phi _{v,2}<\frac{\pi }{2}}$ for all
angles $\phi _{v,1},\phi _{v,2}$~\cite{Multitol-Mertzios14}. It is important
to note here that this set of trapezoids $\{\overline{T}_{v}:v\in
V=V_{B}\cup V_{U}\}$ is \emph{not} an intersection model for the graph $G$,
as two trapezoids $\overline{T}_{v},\overline{T}_{w}$ may have a non-empty
intersection although $vw\notin E$. However the subset of trapezoids $\{%
\overline{T}_{v}:v\in V_{B}\}$ that corresponds to the \emph{bounded}
vertices (i.e.,~to the vertices of $V_{B}$) is an intersection model of the
induced subgraph $G[V_{B}]$.

In order to construct an intersection model for the whole graph $G$
(i.e.,~including also the set $V_{U}$ of the unbounded vertices), we exploit
the third dimension as follows. Let $\Delta =\max \{b_{v}:v\in V\}-\min
\{a_{u}:u\in V\}$ (where we consider the endpoints $b_{v}$ and $a_{u}$ as
real numbers on the lines $L_{1}$ and $L_{2}$, respectively). First, for
every unbounded vertex~$v\in V_{U}$ we construct the line~segment ${%
T_{v}=\{(x,y,z)}:{(x,y)\in \overline{T}_{v},z=\Delta -\cot \phi _{v}\}}$.
For every bounded vertex $v\in V_{B}$, denote by $\overline{T}_{v,1}$ and $%
\overline{T}_{v,2}$ the left and the right line segment of the trapezoid $%
\overline{T}_{v}$, respectively. We construct two line~segments ${%
T_{v,1}=\{(x,y,z):(x,y)\in \overline{T}_{v,1},z=\Delta -\cot \phi _{v,1}\}}$
and ${T_{v,2}=\{(x,y,z):(x,y)\in \overline{T}_{v,2},z=\Delta -\cot \phi
_{v,2}\}}$. Then, for every $v\in V_{B}$, we construct the 3-dimensional
object $T_{v}$ as the convex hull $H_{\text{convex}}({\overline{T}%
_{v},T_{v,1},T_{v,2})}$; this 3-dimensional object $T_{v}$ is called the 
\emph{trapezoepiped} of vertex $v\in V_{B}$. The resulting set $\{T_{v}:v\in
V=V_{B}\cup V_{U}\}$ of objects in the 3-dimensional space is called the 
\emph{trapezoepiped representation} of the multitolerance graph $G$~\cite%
{Multitol-Mertzios14}. This is an \emph{intersection model} of~$G$,
i.e.,~two vertices $v,w$ are adjacent if and only if $T_{v}\cap T_{w}\neq
\emptyset $. For a proof of this fact and for more details about the
trapezoepiped representation of multitolerance graphs we refer to~\cite%
{Multitol-Mertzios14}.

Recall that, if $G$ is a tolerance graph, given along with a tolerance
representation $R$, then $\phi _{u,1}=\phi _{u,2}$ for every bounded vertex $%
u$. Therefore, in the above construction, for every bounded vertex $u$ the
trapezoepiped $T_{u}$ becomes a \emph{parallelepiped}, and in this case the
resulting trapezoepiped representation is called a \emph{parallelepiped
representation}~\cite{Multitol-Mertzios14,MSZ-Model-SIDMA-09}.

An example of the construction of a trapezoepiped representation is given in
Figure~\ref{fig:3D}. A multitolerance graph $G$ with six vertices $%
\{v_{1},v_{2},\ldots ,v_{6}\}$ is depicted in Figure~\ref{fig:3Da}, while
the trapezoepiped representation of $G$ is illustrated in Figure~\ref%
{fig:3Db}. The set of bounded and unbounded vertices in this representation
are~${V_{B}=\{v_{3},v_{4},v_{6}\}}$ and~${V_{U}=\{v_{1},v_{2},v_{5}\}}$,
respectively. 
\begin{figure}[t]
\centering%
\subfigure[]{ \label{fig:3Da}
\includegraphics[scale=0.76]{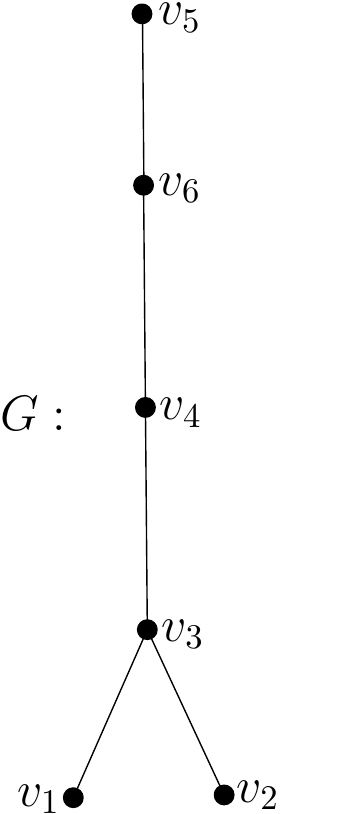}} \hspace{0,4cm} 
\subfigure[]{ \label{fig:3Db}
\hspace{-0.4cm}\includegraphics[scale=0.76]{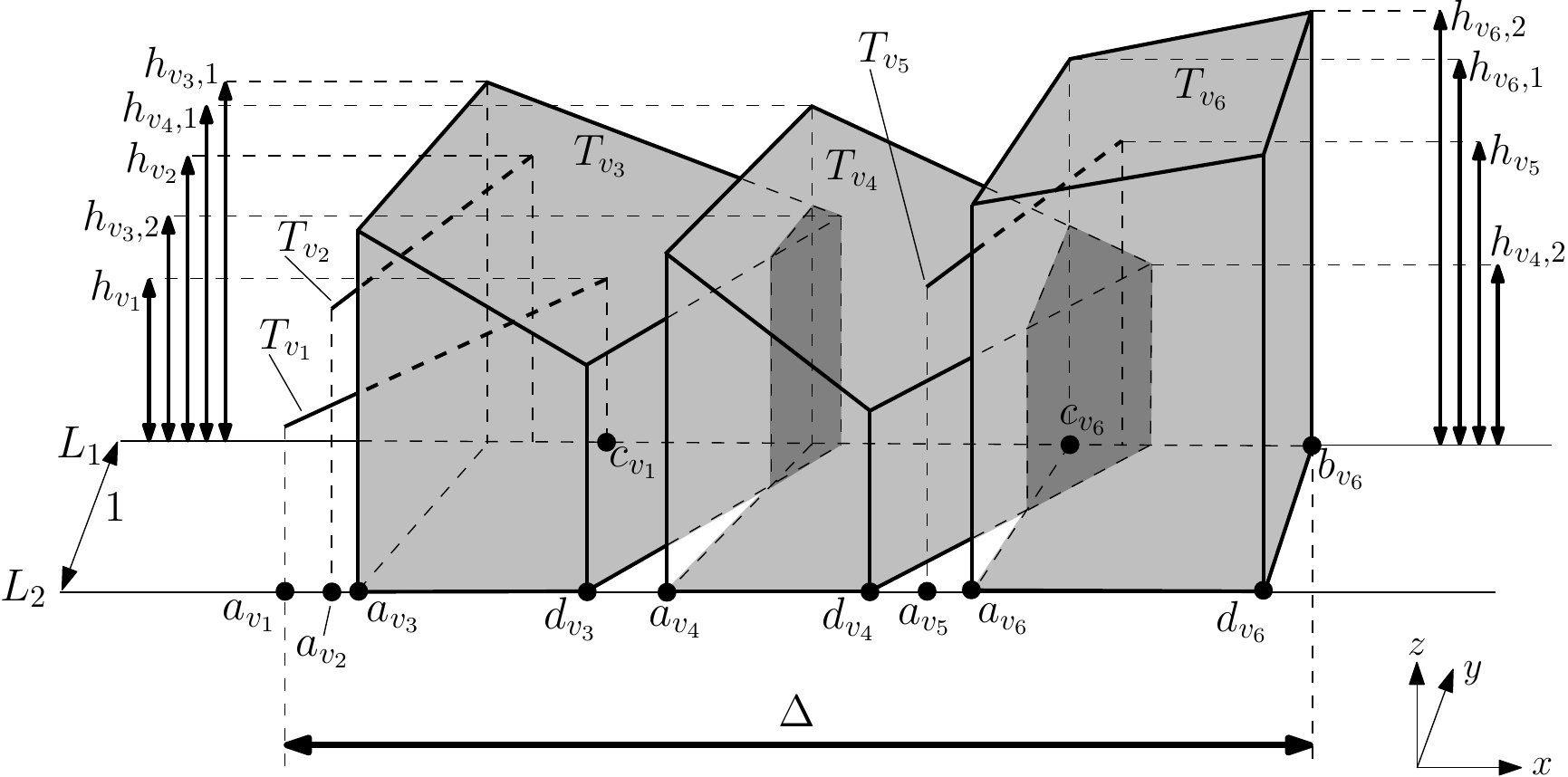}}
\caption{(a) A multitolerance graph $G$ and (b) a trapezoepiped
representation $R$ of $G$. Here, ${h_{v_{i},j}=\Delta -\cot \protect\phi %
_{v_{i},j}}$ for every bounded vertex ${v_{i}\in V_{B}}$ and ${j\in \{1,2\}}$%
, while ${h_{v_{i}}=\Delta -\cot \protect\phi _{v_{i}}}$ for every unbounded
vertex ${v_{i}\in V_{U}}$.}
\label{fig:3D}
\end{figure}

\begin{definition}[\hspace{-0,01mm}\protect\cite{Multitol-Mertzios14}]
\label{inevitable-canonical-def}An unbounded vertex $v\in V_{U}$ is \emph{%
inevitable} if replacing $T_{v}$ by $H_{\text{convex}}(T_{v},\overline{T}%
_{v})$ creates a new edge $uv$ in $G$; then $u$ is a \emph{hovering vertex}
of $v$ and the set $H(v)$ of all hovering vertices of $v$ is the \emph{%
hovering set} of $v$. A trapezoepiped representation of a multitolerance
graph $G$ is called \emph{canonical} if every unbounded vertex is inevitable.
\end{definition}

In the example of Figure~\ref{fig:3D}, $v_{2}$ and $v_{5}$ are inevitable
unbounded vertices, $v_{1}$ and $v_{4}$ are hovering vertices of $v_{2}$ and 
$v_{5}$, respectively, while $v_{1}$ is not an inevitable unbounded vertex.
Therefore, this representation is not canonical for the graph $G$. However,
if we replace~$T_{v_{1}}$ by $H_{\text{convex}%
}(T_{v_{1}},a_{v_{1}},c_{v_{1}})$, we get a canonical representation for $G$
in which vertex $v_{1}$ is bounded.

\begin{lemma}[\hspace{-0,01mm}\protect\cite{Multitol-Mertzios14}]
\label{neighbors-hovering}Let~$v\in V_{U}$ be an inevitable unbounded vertex
of a multitolerance graph~$G$. Then $N(v)\subseteq N(u)$ for every hovering
vertex~$u\in H(v)$ of~$v$.
\end{lemma}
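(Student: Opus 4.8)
The plan is to work entirely inside the trapezoepiped representation and reduce the claim to a single overlap statement about planar shadows. Fix coordinates as in Section~\ref{preliminaries-sec}, so that the base segment $\overline{T}_v$ of the unbounded vertex $v$ runs from $a_v$ on the lower line to $b_v=c_v$ on the upper line, and $T_v$ is this segment lifted to height $z_v=\Delta-\cot\phi_v$. The object $W_v:=H_{\text{convex}}(T_v,\overline{T}_v)$ used in Definition~\ref{inevitable-canonical-def} is then the \emph{vertical wall} over $\overline{T}_v$ between heights $0$ and $z_v$, and crucially its entire top edge (all points at height $z_v$) equals $T_v$. Since the trapezoepiped representation is an intersection model, $w\in N(v)$ means exactly that $T_w$ meets the top edge $T_v$ of the wall, whereas $u\in H(v)$ means that $T_u$ meets the wall $W_v$ but \emph{not} its top edge $T_v$ (otherwise $uv$ would already be an edge). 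Because the whole top edge is $T_v$, the second condition forces $T_u\cap W_v$ to lie strictly below height $z_v$. The goal $N(v)\subseteq N(u)$ thus becomes: for every $w$ with $T_w\cap T_v\neq\emptyset$ and the fixed hovering vertex $u$, show $T_w\cap T_u\neq\emptyset$.

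First I would intersect every object with the plane carrying $W_v$. Parametrising $W_v$ affinely by $(t,z)$, where $t\in[0,1]$ locates a point along $\overline{T}_v$ and $z\in[0,z_v]$ is the height, each convex trapezoepiped $T_x$ meets this plane in a convex region $R_x\subseteq[0,1]\times[0,z_v]$. By the two characterisations above, $R_w$ is a nonempty convex region that reaches the top line $z=z_v$, while $R_u$ is a nonempty convex region contained in $\{z<z_v\}$. A common point of $R_w$ and $R_u$ immediately yields a point of $T_w\cap T_u$ inside $W_v$, hence $w\in N(u)$. So the whole statement reduces to proving that $R_w$ and $R_u$ \emph{overlap} inside the wall. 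The governing intuition is that the hovering vertex $u$ sits directly beneath the top segment $T_v$ and is connected (for bounded $u$) down to the base line $z=0$, so any $w$ whose object climbs from the base up to $T_v$ must cross the horizontal band occupied by $R_u$ at a matching horizontal position.

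The main obstacle is exactly this horizontal alignment: convexity of $R_w$ and $R_u$ alone does not force overlap, since $w$ could touch $T_v$ at one end of $\overline{T}_v$ while $u$ contacts the wall far away. To control this I would exploit the specific affine structure of the slices $S_x(z)$ of a trapezoepiped as $z$ varies: each $S_x(z)$ is an affine interpolation between the base trapezoid $\overline{T}_x$ at $z=0$ and the lifted left/right edges at the top heights $z_{x,1},z_{x,2}$, so the wall interval $R_x\cap\{z=\text{const}\}$ has endpoints that are affine (hence monotone) in $z$. Translating the height encoding $z=\Delta-\cot\phi$ of Definition~\ref{Tv} into endpoint inequalities among the $a,b,c,d$ values and angles, I would run a short case analysis according to whether the neighbour $w$ reaches $T_v$ through its left edge, through its right edge, or while its base projection contains $\overline{T}_v$; in each case the monotone endpoints of $R_w$ between the base and $T_v$ are shown to sweep across the contact position of $R_u$, giving the overlap.

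Finally I would treat the degenerate subcases where $u$ or $w$ is itself unbounded. If $u\in V_U$, then $T_u$ is a single segment, so $R_u$ collapses to a point (or to a horizontal segment) at one height $z_u<z_v$; here the strict height ordering $z_u<z_v$ together with the inevitability of $v$ (which guarantees $R_u\neq\emptyset$ and places this contact beneath $T_v$) supplies the required overlap directly. If $w\in V_U$, then $R_w$ is a point on the top line $z=z_v$, and the same endpoint inequalities show that the climbing interval of $R_u$ (or the base projection of $T_u$) reaches this position. Combining the generic bounded case with these degenerate ones establishes $R_w\cap R_u\neq\emptyset$ for every $w\in N(v)$, and hence $N(v)\subseteq N(u)$.
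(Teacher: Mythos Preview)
The paper does not prove this lemma; it is quoted verbatim from~\cite{Multitol-Mertzios14} and stated without argument. So there is no in-paper proof to compare against, and your proposal must stand on its own.

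Your reduction to the vertical wall $W_v$ is geometrically sound up to the point where you isolate the ``main obstacle'': showing that the convex wall-traces $R_w$ and $R_u$ actually overlap. From there on, however, the proposal is only a promissory note. You assert that the endpoints of the interval $R_x\cap\{z=\text{const}\}$ are ``affine (hence monotone) in $z$'' and that a ``short case analysis'' on how $T_w$ reaches $T_v$ forces $R_w$ to sweep across the contact locus of $R_u$, but neither claim is argued. The endpoint functions are in general only \emph{piecewise} affine (there is a break at $z=\min(z_{w,1},z_{w,2})$ where one lifted edge disappears from the slice), so monotonicity is not automatic; and even granting monotonicity, you still have to rule out the scenario where $R_w$ climbs to $T_v$ along one end of $\overline{T}_v$ while $R_u$ sits near the other end. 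Nothing in the proposal pins down why the horizontal positions must meet. Also, the subcase $w\in V_U$ is vacuous: two unbounded vertices are never adjacent, so every $w\in N(v)$ is bounded.

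If you want a route that avoids the 3D casework entirely, the shadow representation developed in Section~\ref{representations-sec} gives a two-line argument (and its correctness, Lemmas~\ref{shadow-correctness-lem-1}--\ref{shadow-hovering-lem}, does not rely on the present lemma, so there is no circularity). Since $w\in N(v)$ one has $p_v\in S_w$, hence $p_v\in S_t$ for some $t\in L_w$, which gives the nesting $S_{p_v}\subseteq S_t\subseteq S_w$. If $u\in V_B$ then $u\in H(v)$ means $L_u\cap S_{p_v}\neq\emptyset$, so $L_u\cap S_w\neq\emptyset$ and $uw\in E$ by Lemma~\ref{shadow-correctness-lem-1}; if $u\in V_U$ then $p_u\in S_{p_v}\subseteq S_w$ and $uw\in E$ by Lemma~\ref{shadow-correctness-lem-2}. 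This replaces your entire wall argument by the single containment $S_{p_v}\subseteq S_w$.
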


\begin{lemma}[\hspace{-0,01mm}\protect\cite{Multitol-Mertzios14}]
\label{hovering-bounded}Let $R$ be a canonical representation of a
multitolerance graph~$G$ and~$v\in V_{U}$ be an (inevitable) unbounded
vertex of~$G$. Then there exists a hovering vertex~$u$ of~$v$, which is
bounded.
\end{lemma}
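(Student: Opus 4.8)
The plan is to combine two geometric readings of the hovering relation with an induction on the number of unbounded vertices that ``straddle'' $v$. First I would record two facts about the trapezoepiped model. Since the base trapezoid $\overline{T}_w$ of a \emph{bounded} vertex $w$ lies at height $z=0$ and is contained in $T_w$, the wall $H_{\text{convex}}(T_x,\overline{T}_x)$ meets $T_w$ if and only if $\overline{T}_w\cap\overline{T}_x\neq\emptyset$; hence a bounded vertex $w$ is a hovering vertex of an unbounded vertex $x$ exactly when $\overline{T}_w\cap\overline{T}_x\neq\emptyset$ and $wx\notin E$. For an \emph{unbounded} vertex $x$ the object $T_x$ is only the segment at height $\Delta-\cot\phi_x$, so $x$ is a hovering vertex of the unbounded vertex $v$ if and only if $\overline{T}_x$ and $\overline{T}_v$ cross and $\Delta-\cot\phi_x<\Delta-\cot\phi_v$; writing the endpoints on the two lines $L_1,L_2$, this is equivalent to $a_x<a_v$ and $b_x>b_v$ (two unbounded vertices are never adjacent, so the non-edge condition is automatic here).

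With these characterizations in hand I would induct on $k(v)=|\{x\in V_U:\ a_x<a_v,\ b_x>b_v\}|$, which by the second fact is exactly the number of unbounded hovering vertices of $v$. In the base case $k(v)=0$ the vertex $v$ has no unbounded hovering vertex, but since $v$ is inevitable its hovering set is non-empty and hence contains a bounded vertex, and we are done. For the inductive step, if $H(v)$ already contains a bounded vertex there is nothing to prove; otherwise pick an unbounded $u\in H(v)$, so $a_u<a_v$ and $b_u>b_v$. Every unbounded straddler of $u$ is then a straddler of $v$, while $u$ itself straddles $v$ but not $u$, so $k(u)<k(v)$. Because $R$ is canonical, $u$ is inevitable, and the induction hypothesis yields a bounded hovering vertex $w$ of $u$, i.e.\ a bounded $w$ with $\overline{T}_w\cap\overline{T}_u\neq\emptyset$ and $wu\notin E$. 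I then claim that this same $w$ is a bounded hovering vertex of $v$: the non-edge condition $wv\notin E$ follows at once from Lemma~\ref{neighbors-hovering} applied to $u$, since $N(v)\subseteq N(u)$ and $w\notin N(u)$, and it remains only to prove $\overline{T}_w\cap\overline{T}_v\neq\emptyset$.

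This last containment is the heart of the argument, and the step I expect to be the main obstacle. The idea is to read the non-edge condition $wu\notin E$ through the ``iso-height'' rulings of $T_w$: at each fixed height the top surface of $T_w$ interpolates linearly between the left segment $T_{w,1}$ (at height $\Delta-\cot\phi_{w,1}$) and the right segment $T_{w,2}$ (at height $\Delta-\cot\phi_{w,2}$), so through any base point $p\in\overline{T}_w$ there is a full bottom-to-top segment $\ell\subseteq\overline{T}_w$ along which $T_w$ has constant height $\Delta-\cot\theta$, with $\cot\theta$ interpolating between $\cot\phi_{w,1}$ and $\cot\phi_{w,2}$. Choosing $p$ to be a crossing point of $\overline{T}_w$ with $\overline{T}_u$, the condition $wu\notin E$ forces the height of $T_w$ at $p$ to be strictly below $\Delta-\cot\phi_u$, that is $\cot\theta>\cot\phi_u$, so the ruling $\ell$ is ``wider'' than $\overline{T}_u$ and meets it at $p$. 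A short computation with the slopes then shows that the bottom endpoint of $\ell$ lies strictly left of $a_u$ and its top endpoint strictly right of $b_u$; since $a_u<a_v$ and $b_u>b_v$, the segment $\ell$ begins to the left of $\overline{T}_v$ on the lower line and ends to the right of $\overline{T}_v$ on the upper line, and hence crosses $\overline{T}_v$. As $\ell\subseteq\overline{T}_w$, this gives $\overline{T}_w\cap\overline{T}_v\neq\emptyset$ and completes the inductive step. The generic-position assumption on the endpoints lets me keep the crossing point $p$ in the interior of the two lines, so that no degenerate boundary cases arise.
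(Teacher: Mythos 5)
The paper does not actually prove this lemma: it is imported verbatim from \cite{Multitol-Mertzios14}, so there is no in-paper argument to compare yours against. Judged on its own, your proof is correct and essentially self-contained given the trapezoepiped model. Your two geometric characterizations of hovering both check out: for a bounded $w$, hovering over $v$ is equivalent to $\overline{T}_w\cap\overline{T}_v\neq\emptyset$ and $wv\notin E$, because the wall $H_{\text{convex}}(T_v,\overline{T}_v)$ and the solid $T_w$ each contain their base at height $z=0$; for an unbounded $x$, the wall is reached only if additionally $\cot\phi_x>\cot\phi_v$, and an intersection of the two base segments together with $b_x-a_x>b_v-a_v$ forces $a_x<a_v$ and $b_x>b_v$. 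The induction on the number of straddlers is well-founded, canonicity is used exactly where it is needed (to guarantee that the chosen $u$ is inevitable, so $H(u)\neq\emptyset$ and the induction hypothesis applies), and the non-edge $wv\notin E$ does follow from Lemma~\ref{neighbors-hovering} as you say. The ruling argument for $\overline{T}_w\cap\overline{T}_v\neq\emptyset$ is the genuinely delicate step and it works: if $\ell$ is the ruling of $\overline{T}_w$ through a point $p\in\overline{T}_w\cap\overline{T}_u$, with endpoints $x$ on $L_2$ and $y$ on $L_1$, then $T_w$ contains the horizontal segment at height $\Delta-(y-x)$ above $\ell$, so $T_w\cap T_u=\emptyset$ forces $y-x>b_u-a_u$; combined with the crossing condition $(x-a_u)(y-b_u)\leq 0$ this yields $x\leq a_u<a_v$ and $y\geq b_u>b_v$, hence $\ell$, and therefore $\overline{T}_w$, crosses $\overline{T}_v$. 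Two cosmetic corrections: the endpoints of $\ell$ need not lie \emph{strictly} to the left of $a_u$ and to the right of $b_u$ (equality occurs, e.g., when $p=a_u$ lies on $L_2$), but the weak inequalities suffice because $a_u<a_v$ and $b_u>b_v$ are strict; and you only need that $T_w$ \emph{contains} that horizontal segment above $\ell$ (immediate from convexity of $T_w$), not the stronger claim that it is exactly the top surface of $T_w$.
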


Recall that $\{\overline{T}_{v}:v\in V_{B}\}$ is an intersection model of
the induced subgraph $G[V_{B}]$ on the bounded vertices of $G$, i.e.,~$uv\in
E$ if and only if $\overline{T}_{u}\cap \overline{T}_{v}\neq \emptyset $
where $u,v\in V_{B}$. Furthermore, although $\{\overline{T}_{v}:v\in
V=V_{B}\cup V_{U}\}$ is \emph{not} an intersection model of $G$, it still
provides the whole information about the adjacencies of the vertices of $G$,
cf.~Lemma~\ref{neighbors-bounded-unbounded}. For Lemma~\ref%
{neighbors-bounded-unbounded} we need the next definition of the angles $%
\phi _{u}(x)$, where $u\in V_{B}$ and $a_{u}\leq x\leq d_{u}$, cf.~Figure~%
\ref{TuTv-modified-fig} for an illustration.

\begin{definition}[\hspace{-0,01mm}\protect\cite{Multitol-Mertzios14}]
\label{phi(x)}Let $u\in V_{B}$ be a bounded vertex and $%
a_{u},b_{u},c_{u},d_{u}$ be the endpoints of the trapezoid $\overline{T}_{u}$%
. Let $x\in \lbrack a_{u},d_{u}]$ and $y\in \lbrack c_{u},b_{u}]$ be two
points on the lines~$L_{2}$ and~$L_{1}$, respectively, such that $x=\lambda
a_{u}+(1-\lambda )d_{u}$ and $y=\lambda c_{u}+(1-\lambda )b_{u}$ for the
same value~${\lambda \in \lbrack 0,1]}$. Then $\phi _{u}(x)$ is the angle of
the line segment with endpoints $x$ and $y$ on the lines~$L_{2}$ and~$L_{1}$%
, respectively.
\end{definition}

\begin{lemma}[\hspace{-0,01mm}\protect\cite{Multitol-Mertzios14}]
\label{neighbors-bounded-unbounded}Let $u\in V_{B}$ and $v\in V_{U}$ in a
trapezoepiped representation of a multitolerance graph $G=(V,E)$. Let $a_{u}$%
, $d_{u}$, and $a_{v}=d_{v}$ be the endpoints of $\overline{T}_{u}$ and $%
\overline{T}_{v}$, respectively, on~$L_{2}$.~Then:

\begin{itemize}
\item if $a_{v}<a_{u}$, then $uv\in E$ if and only if $\overline{T}_{u}\cap 
\overline{T}_{v}\neq \emptyset $,

\item if $a_{u}<a_{v}<d_{u}$, then $uv\in E$ if and only if $\phi
_{v}\leq\phi _{u}(a_{v})$,

\item if $d_{u}<a_{v}$, then $uv\notin E$.
\end{itemize}
\end{lemma}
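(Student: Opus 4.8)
The plan is to prove this characterization of adjacencies between a bounded vertex $u \in V_B$ and an unbounded vertex $v \in V_U$ by working directly with the trapezoepiped representation and using the fact that it is an intersection model of $G$. Recall that for the unbounded vertex $v$, the object $T_v$ is a line segment sitting at height $z = \Delta - \cot\phi_v$, whose projection $\overline{T}_v$ onto the base plane is the single line segment from $a_v = d_v$ on $L_2$ to $b_v = c_v$ on $L_1$. For the bounded vertex $u$, the trapezoepiped $T_u = H_{\text{convex}}(\overline{T}_u, T_{u,1}, T_{u,2})$ is a genuinely three-dimensional solid whose base is $\overline{T}_u$ (at height $z=0$) and whose two slanted edges $T_{u,1}, T_{u,2}$ sit at heights $\Delta - \cot\phi_{u,1}$ and $\Delta - \cot\phi_{u,2}$. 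Since this set is an intersection model, I have $uv \in E$ if and only if $T_u \cap T_v \neq \emptyset$, and the whole proof reduces to analyzing when the segment $T_v$ meets the solid $T_u$.

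First I would dispose of the two extreme cases. For the case $d_u < a_v$, I would observe that the entire projection $\overline{T}_v$ lies strictly to the right of $\overline{T}_u$ on $L_2$ (and correspondingly on $L_1$, since the trapezoid orderings are consistent in a valid representation), so the projections are disjoint; hence $T_u$ and $T_v$ cannot intersect and $uv \notin E$. For the case $a_v < a_u$, the key point is that $T_v$, if it meets $T_u$ at all, must meet it on its base $\overline{T}_u$: because $a_v$ lies to the left of the left endpoint $a_u$ of $\overline{T}_u$, the relevant portion of $T_v$ that could enter the $x$-range of $\overline{T}_u$ does so through the base, making the three-dimensional height condition vacuous. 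This would let me conclude $uv \in E \iff \overline{T}_u \cap \overline{T}_v \neq \emptyset$, i.e. the two-dimensional projections already decide adjacency.

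The substantive case is $a_u < a_v < d_u$, where $a_v$ falls inside the base interval $[a_u, d_u]$. Here I would use Definition~\ref{phi(x)}: the point $x = a_v$ determines, via the convex-combination parameter $\lambda$, a matching point $y \in [c_u, b_u]$ on $L_1$, and the line segment joining $x$ to $y$ has angle $\phi_u(a_v)$. This segment is precisely the cross-section of the trapezoid $\overline{T}_u$ through $a_v$, and the corresponding cross-section of the solid $T_u$ at this $x$-position is a two-dimensional slice whose height ranges continuously from $0$ (on the base) up to the value $\Delta - \cot\phi_u(a_v)$ determined by interpolating the heights of $T_{u,1}$ and $T_{u,2}$. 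The segment $T_v$ passes over $a_v$ at the fixed height $\Delta - \cot\phi_v$, so $T_v$ meets this slice of $T_u$ exactly when $\Delta - \cot\phi_v \leq \Delta - \cot\phi_u(a_v)$, which is equivalent to $\cot\phi_v \geq \cot\phi_u(a_v)$, and since cotangent is decreasing on $(0,\tfrac{\pi}{2})$ this is in turn equivalent to $\phi_v \leq \phi_u(a_v)$, as claimed.

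The main obstacle I anticipate is making the cross-section argument in the third case fully rigorous: I must verify that the height of the trapezoepiped $T_u$ at the $x$-coordinate $a_v$ interpolates linearly (in $\lambda$) exactly to the value $\Delta - \cot\phi_u(a_v)$, so that the angle $\phi_u(a_v)$ of Definition~\ref{phi(x)} genuinely encodes the top height of the solid at that slice. This requires checking that the convex-hull construction of $T_u$, together with the definition of the heights $\Delta - \cot\phi_{u,1}$ and $\Delta - \cot\phi_{u,2}$ on the two slanted edges, yields a top face whose height over $\overline{T}_u$ varies affinely with $\lambda$; the angle $\phi_u(x)$ being an $\arccot$ of an affine function of $\lambda$ is what makes the monotonicity of $\cot$ translate cleanly into the stated angle inequality. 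Once this interpolation is confirmed, the equivalence $T_u \cap T_v \neq \emptyset \iff \phi_v \leq \phi_u(a_v)$ follows, and combined with the intersection-model property this completes the proof.
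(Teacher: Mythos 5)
First, a point of reference: the paper does not prove Lemma~\ref{neighbors-bounded-unbounded} at all --- it is imported verbatim from~\cite{Multitol-Mertzios14} --- so there is no in-paper proof to compare against, and your attempt has to be judged on its own merits as a reconstruction from the trapezoepiped model. Your overall strategy (reduce to $T_u\cap T_v\neq\emptyset$ and analyse the three positions of $a_v$ relative to $[a_u,d_u]$) is the natural one, and your treatment of the middle case is essentially sound once the affine interpolation $\Delta-\cot\phi_u(x)=\lambda(\Delta-\cot\phi_{u,1})+(1-\lambda)(\Delta-\cot\phi_{u,2})$ is verified, as you anticipate; note though that for the ``only if'' direction of that case you must also check that $T_v$ stays above the top surface of $T_u$ over \emph{every} point of $\overline{T}_u\cap\overline{T}_v$, not only over the cross-section through $a_v$.

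The genuine gap is in your third case. You justify $uv\notin E$ for $d_u<a_v$ by claiming that the projections $\overline{T}_u$ and $\overline{T}_v$ are then disjoint, and this is false. Since $0<\phi_v<\tfrac{\pi}{2}$ forces $c_v-a_v=\cot\phi_v>0$, the segment $\overline{T}_v$ leans to the right, and it crosses the right edge of $\overline{T}_u$ whenever $d_u<a_v$ but $c_v<b_u$ --- which happens exactly when $\cot\phi_v<\cot\phi_{u,2}-(a_v-d_u)$. This configuration is not a degenerate corner case: it is precisely why the paper stresses that $\{\overline{T}_v:v\in V\}$ is \emph{not} an intersection model of $G$, i.e.\ two trapezoids may intersect although the corresponding vertices are non-adjacent, and it is the reason the lemma needs a separate third bullet rather than folding it into the first. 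The correct argument for this case must go through the heights: when $d_u<a_v$ and the projections do meet, the inequality $\cot\phi_v=c_v-a_v<b_u-d_u=\cot\phi_{u,2}$ places the segment $T_v$, at height $\Delta-\cot\phi_v$, strictly above the upper surface of $T_u$ over the entire region $\overline{T}_u\cap\overline{T}_v$ (that region lies near the right edge of $\overline{T}_u$, where the top of the convex hull is governed by $\Delta-\cot\phi_{u,2}$), so $T_u\cap T_v=\emptyset$ and $uv\notin E$. As written, your proof of the third bullet would fail on any instance exhibiting this crossing, so the case analysis needs to be redone with the three-dimensional information rather than dismissed at the level of projections.
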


\section{The new geometric representations\label{representations-sec}}

In this section we introduce new geometric representations on the plane for
both tolerance and multitolerance graphs. The new representation of
tolerance graphs is called the \emph{horizontal shadow representation},
which is given by a set of points and horizontal line segments in the plane.
The horizontal shadow representation can be naturally extended to general
multitolerance graphs, in which case the line segments are not necessarily
horizontal; we call this representation of multitolerance graphs the \emph{%
shadow representation}. In the remainder of this section, we present the
shadow representation of general multitolerance graphs, since the horizontal
shadow representation of tolerance graphs is just the special case, in which
every line segment is horizontal.

\begin{definition}[shadow representation]
\label{shadow-representation-def}Let $G=(V,E)$ be a multitolerance graph, $R$
be a trapezoepiped representation of $G$, and $V_{B},V_{U}$ be the sets of
bounded and unbounded vertices of $G$ in~$R$, respectively. We associate the
vertices of $G$ with points and line segments in the plane as follows:

\begin{itemize}
\item for every $v\in V_{B}$, the points $p_{v,1}=(a_{v},\Delta -\cot \phi
_{v,1})$ and $p_{v,2}=(d_{v},\Delta -\cot \phi _{v,2})$ and the line segment 
$L_{v}=(p_{v,1},p_{v,2})$,

\item for every $v\in V_{U}$, the point $p_{v}=(a_{v},\Delta -\cot \phi_{v})$%
.
\end{itemize}

The tuple $(\mathcal{P},\mathcal{L})$, where $\mathcal{L}=\{L_{v}:v\in
V_{B}\}$ and $\mathcal{P}=\{p_{v}:v\in V_{U}\}$, is the \emph{shadow
representation} of~$G$. If $\phi _{v,1}=\phi _{v,2}$ for every $v\in V_{B}$,
then $(\mathcal{P},\mathcal{L})$ is the \emph{horizontal shadow
representation} of the tolerance graph $G$. Furthermore, the representation $%
(\mathcal{P},\mathcal{L})$ is \emph{canonical} if the initial trapezoepiped
representation $R$ is also canonical.
\end{definition}

Note by Definition~\ref{shadow-representation-def} that, given a
trapezoepiped (resp.~parallelepiped) representation of a multitolerance
(resp.~tolerance) graph $G$ with $n$ vertices, we can compute a shadow
(resp.~horizontal shadow) representation of $G$ in $O(n)$ time. As an
example for Definition~\ref{shadow-representation-def}, we illustrate in
Figure~\ref{shadow-representation-fig} the shadow representation $(\mathcal{P%
},\mathcal{L})$ of the multitolerance graph $G$ of Figure~\ref{fig:3D}.

\begin{figure}[h]
\centering 
\includegraphics[scale=0.8]{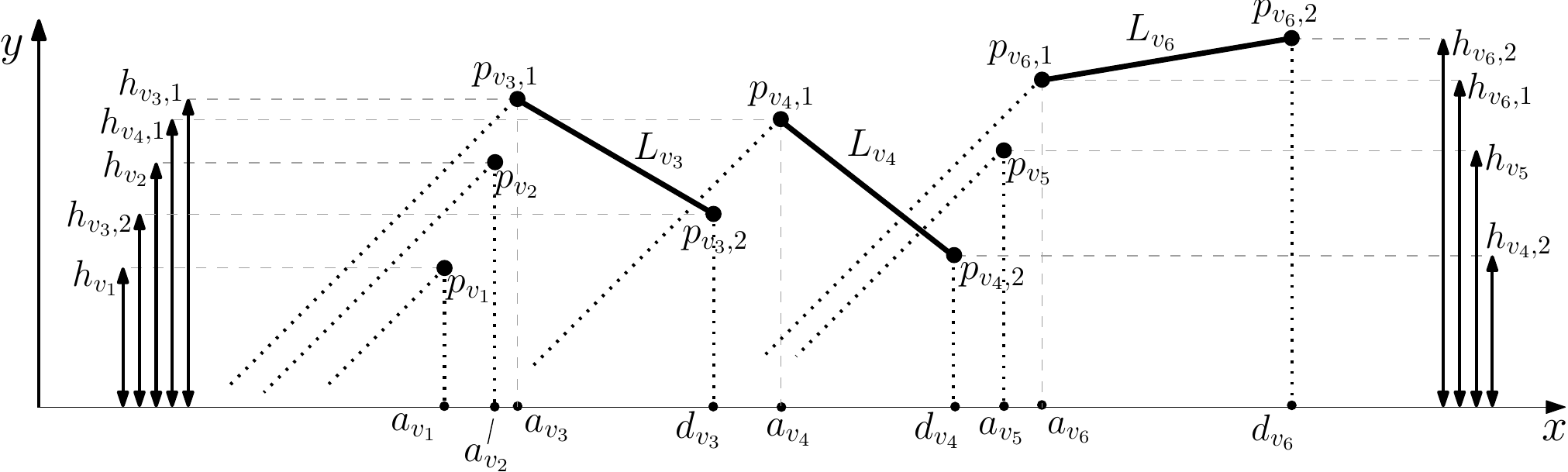}
\caption{The shadow representation $(\mathcal{P},\mathcal{L})$ of the
multitolerance graph $G$ of Figure~\protect\ref{fig:3D}. The unbounded
vertices $V_{U}=\{v_{1},v_{2},v_{5}\}$ and the bounded vertices $%
V_{B}=\{v_{3},v_{4},v_{6}\}$ are associated with the points $\mathcal{P}%
=\{p_{v_1},p_{v_2},p_{v_5}\}$ and with the line segments $\mathcal{L}%
=\{L_{v_1},L_{v_2},L_{v_5}\}$, respectively.}
\label{shadow-representation-fig}
\end{figure}

\begin{observation}
\label{segment-middle-obs}In Definition~\ref{shadow-representation-def}, $%
L_{v}=\{(x,\Delta -\cot \phi _{v}(x)):a_{v}\leq x\leq d_{v}\}$ for every
bounded vertex $v\in V_{B}$ of the multitolerance graph $G$.
\end{observation}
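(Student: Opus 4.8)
The plan is to verify the set equality by a direct parametrization: I would show that the straight segment $L_v$ coincides pointwise with the graph of the map $x\mapsto \Delta-\cot\phi_v(x)$ over $[a_v,d_v]$. The crux is to express $\cot\phi_v(x)$ explicitly and to observe that it is an affine function of the parameter $\lambda$ appearing in Definition~\ref{phi(x)}; once this is in hand, the fact that $L_v$ is a straight segment is exactly the statement that this function is affine, and the equality follows.

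First I would unwind Definition~\ref{phi(x)}. For a fixed $\lambda\in[0,1]$ it sets $\phi_v(x)$ to be the angle of the segment joining $x=\lambda a_v+(1-\lambda)d_v$ on $L_2$ to $y=\lambda c_v+(1-\lambda)b_v$ on $L_1$. Reading the cotangent as a horizontal displacement---the convention already used in Definition~\ref{Tv}, where $\cot\phi_{v,1}=c_v-a_v$ and $\cot\phi_{v,2}=b_v-d_v$ record the displacement between the $L_1$- and $L_2$-endpoints of the left and right sides of $\overline{T}_v$---I would obtain $\cot\phi_v(x)=y-x$, and hence
\[
\cot\phi_v(x)=\lambda(c_v-a_v)+(1-\lambda)(b_v-d_v)=\lambda\cot\phi_{v,1}+(1-\lambda)\cot\phi_{v,2}.
\]
This is the key identity: $\cot\phi_v(x)$ is affine in $\lambda$.

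It then remains to parametrize $L_v=(p_{v,1},p_{v,2})$ by the same $\lambda$, writing a generic point as $\lambda p_{v,1}+(1-\lambda)p_{v,2}$. Its first coordinate is $\lambda a_v+(1-\lambda)d_v=x$, which sweeps out $[a_v,d_v]$ as $\lambda$ runs over $[0,1]$, and its second coordinate is $\Delta-\bigl(\lambda\cot\phi_{v,1}+(1-\lambda)\cot\phi_{v,2}\bigr)=\Delta-\cot\phi_v(x)$ by the identity above. Thus every point of $L_v$ has the claimed form, and conversely every point $(x,\Delta-\cot\phi_v(x))$ with $a_v\le x\le d_v$ lies on $L_v$. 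I expect the only genuine subtlety to be this interpretation step, connecting the geometric ``angle of a segment'' of Definition~\ref{phi(x)} with the horizontal-displacement reading of the cotangent built into Definition~\ref{Tv}; after that point the statement reduces to the affineness of a convex combination and is routine.
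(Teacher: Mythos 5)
Your argument is correct: the identity $\cot\phi_v(x)=y-x=\lambda\cot\phi_{v,1}+(1-\lambda)\cot\phi_{v,2}$ is exactly the computation that makes the claim immediate, and the paper itself states this as an Observation with no proof precisely because this is the intended (and essentially only) verification. The one degenerate case worth a half-sentence is $a_v=d_v$, where the parametrization by $\lambda$ is not injective but the set equality still holds trivially since $L_v$ is a single point.
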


Now we introduce the notions of the \emph{shadow} and the \emph{reverse
shadow} of points and of line segments in the plane; an example is
illustrated in Figure~\ref{shadow-point-segment-fig}.

\begin{definition}[shadow]
\label{shadows-def}For an arbitrary point $t=(t_{x},t_{y})\in \mathbb{R}^{2}$
the \emph{shadow} of $t$ is the region $S_{t}=\{(x,y)\in \mathbb{R}%
^{2}:x\leq t_{x},\ y-x\leq t_{y}-t_{x}\}$. Furthermore, for every line
segment $L_{u}$, where $u\in V_{B}$, the \emph{shadow} of $L_{u}$ is the
region $S_{u}=\bigcup\nolimits_{t\in L_{u}}S_{t}$.
\end{definition}

\begin{definition}[reverse shadow]
\label{reverse-shadows-def}For an arbitrary point $t=(t_{x},t_{y})\in 
\mathbb{R}^{2}$ the \emph{reverse shadow} of $t$ is the region $%
F_{t}=\{(x,y)\in \mathbb{R}^{2}:x\geq t_{x},\ y-x\geq t_{y}-t_{x}\}$.
Furthermore, for every line segment $L_{i}$, where $u\in V_{B}$, the \emph{%
reverse shadow} of $L_{i}$ is the region $F_{i}=\bigcup\nolimits_{t\in
L_{i}}F_{t}$.
\end{definition}

\begin{figure}[h]
\centering%
\subfigure[]{ \label{shadow-point-fig}
\includegraphics[scale=0.69]{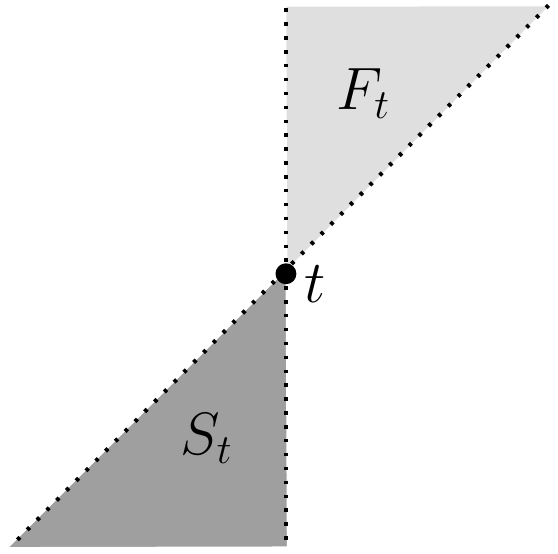}} \hspace{0,2cm} 
\subfigure[]{ \label{shadow-segment-fig}
\hspace{-0.4cm}\includegraphics[scale=0.68]{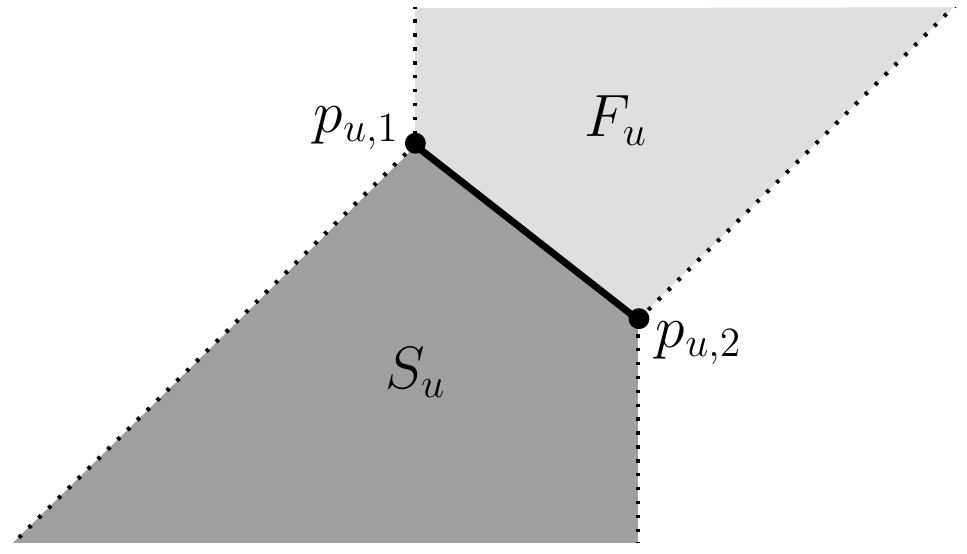}}
\caption{The shadow and the reverse shadow of (a)~a point $t\in\mathbb{R}%
^{2} $ and (b)~a line segment $L_u$. }
\label{shadow-point-segment-fig}
\end{figure}

\begin{lemma}
\label{slope-line-segments-lem}Let $G$ be a multitolerance graph and $(%
\mathcal{P},\mathcal{L})$ be a shadow representation of $G$. Let $u\in V_{B}$
be a bounded vertex of $G$ such that the corresponding line segment $L_{u}$
is not trivial, i.e.,~$L_{u}$ is not a single point. Then the angle of the
line segment $L_{u}$ with a horizontal line (i.e.,~parallel to the $x$-axis)
is at most $\frac{\pi }{4}$ and at least $-\frac{\pi }{2}$.
\end{lemma}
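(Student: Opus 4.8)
The plan is to write down explicitly the two endpoints of $L_u$ and compute the slope of the segment, from which both bounds follow by elementary trigonometry. Recall from Definition~\ref{Tv} that the left and right angles of the trapezoid $\overline{T}_u$ satisfy $\cot\phi_{u,1}=c_u-a_u$ and $\cot\phi_{u,2}=b_u-d_u$, and from Definition~\ref{shadow-representation-def} that $L_u$ joins the points $p_{u,1}=(a_u,\Delta-\cot\phi_{u,1})=(a_u,\Delta-(c_u-a_u))$ and $p_{u,2}=(d_u,\Delta-\cot\phi_{u,2})=(d_u,\Delta-(b_u-d_u))$. Since $u$ is a bounded vertex, its trapezoid $\overline{T}_u$ is non-degenerate, so its lower side runs from the lower-left corner $a_u$ to the lower-right corner $d_u$ on $L_2$; hence $a_u<d_u$ and the two endpoints of $L_u$ have distinct $x$-coordinates. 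In particular $L_u$ is never vertical, so its angle with the horizontal is $\theta=\arctan(m)$ for the (finite) slope
\[
m=\frac{(\Delta-(b_u-d_u))-(\Delta-(c_u-a_u))}{d_u-a_u}=\frac{(c_u-a_u)-(b_u-d_u)}{d_u-a_u}.
\]

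Because $m$ is finite, $\theta=\arctan(m)\in(-\tfrac{\pi}{2},\tfrac{\pi}{2})$, which immediately yields the lower bound $\theta>-\tfrac{\pi}{2}$; note that $m$ can be made arbitrarily negative (when $b_u-d_u$ is large compared with $c_u-a_u$), so $-\tfrac{\pi}{2}$ is exactly the best possible infimum. For the upper bound $\theta\le\tfrac{\pi}{4}$, since $\arctan$ is increasing and $\tan\tfrac{\pi}{4}=1$, it suffices to show $m\le 1$. Using $d_u-a_u>0$, the inequality $m\le 1$ is equivalent to $(c_u-a_u)-(b_u-d_u)\le d_u-a_u$, i.e.\ to $c_u\le b_u$. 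This last inequality is precisely the statement that on the line $L_1$ the upper-left corner $c_u$ lies to the left of the upper-right corner $b_u$, which again holds because $\overline{T}_u$ is a genuine (non-degenerate) trapezoid. This completes the plan.

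I expect no serious obstacle here: the argument is a direct computation once the coordinates of $p_{u,1}$ and $p_{u,2}$ are substituted. The only points requiring care are the sign of the denominator $d_u-a_u$ (so that the slope inequality is not inadvertently reversed) and the geometric reading of the trapezoid corners, namely $a_u<d_u$ on $L_2$ and $c_u<b_u$ on $L_1$; both follow from the convention that $a_u,b_u,c_u,d_u$ are the lower-left, upper-right, upper-left, and lower-right corners of $\overline{T}_u$. One should also confirm that ``the angle with a horizontal line'' is measured as $\arctan$ of the left-to-right slope (consistent with allowing negative angles down to $-\tfrac{\pi}{2}$), which fixes the orientation of $L_u$ from $p_{u,1}$ to $p_{u,2}$; as a sanity check, the tolerance case $\phi_{u,1}=\phi_{u,2}$ gives $m=0$, recovering the horizontal segments of the horizontal shadow representation.
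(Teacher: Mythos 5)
Your proof is correct and follows essentially the same route as the paper's: substitute the explicit coordinates of $p_{u,1}$ and $p_{u,2}$, reduce the upper bound $\psi\le\frac{\pi}{4}$ to $b_u\ge c_u$ and the lower bound to $a_u\le d_u$. The only caveat is that the paper only asserts $a_u\le d_u$ (not the strict inequality you claim from ``non-degeneracy''), so a vertical segment with angle exactly $-\frac{\pi}{2}$ is not actually excluded; this does not affect the stated bounds, since that case satisfies the lemma with equality.
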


\begin{proof}
The two endpoints of $L_{u}$ are the points $(a_{u},\Delta -\cot \phi
_{u,1}) $ and $(d_{u},\Delta -\cot \phi _{u,2})$. For the purposes of the
proof, denote by $\psi $ the angle of the line segment $L_{u}$ with a
horizontal line (i.e.,~parallel to the $x$-axis). To prove that $\psi \geq -%
\frac{\pi }{2}$ it suffices to observe that $a_{u}\leq d_{u}$ (cf.~Figure~%
\ref{TuTv-modified-fig}). To prove that $\psi \leq \frac{\pi }{4}$ it
suffices to show that $(\Delta -\cot \phi _{u,2})-(\Delta -\cot \phi
_{u,1})\leq d_{u}-a_{u}$, or equivalently to show that $(\Delta
-(b_{u}-d_{u}))-(\Delta -(c_{u}-a_{u}))\leq d_{u}-a_{u}$. The latter
inequality is equivalent to $b_{u}\geq c_{u}$, which is always true
(cf.~Figure~\ref{TuTv-modified-fig}).
\end{proof}

\medskip

Recall now that two unbounded vertices $u,v\in V_{U}$ are never adjacent.
The connection between a multitolerance graph $G$ and a shadow
representation of it is the following. Two bounded vertices $u,v\in V_{B}$
are adjacent if and only if $L_{u}\cap S_{v}\neq \emptyset $ or $L_{v}\cap
S_{u}\neq \emptyset $, cf.~Lemma~\ref{shadow-correctness-lem-1}. A bounded
vertex $u\in V_{B}$ and an unbounded vertex $v\in V_{U}$ are adjacent if and
only if $p_{v}\in S_{u}$, cf.~Lemma~\ref{shadow-correctness-lem-2}.

\begin{lemma}
\label{shadow-correctness-lem-1}Let $(\mathcal{P},\mathcal{L})$ be a shadow
representation of a multitolerance graph $G$. Let $u,v\in V_{B}$ be two
bounded vertices of $G$. Then $uv\in E$ if and only if $L_{v}\cap S_{u}\neq
\emptyset $ or $L_{u}\cap S_{v}\neq \emptyset $.
\end{lemma}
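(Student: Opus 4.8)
The plan is to reduce the equivalence to a purely geometric fact about the original trapezoids and then read off the two shadow conditions from that geometry. The starting point is that $\{\overline{T}_{w}:w\in V_{B}\}$ is an intersection model of the induced subgraph $G[V_{B}]$ (recalled just before the lemma), so that for $u,v\in V_{B}$ we have $uv\in E$ if and only if $\overline{T}_{u}\cap\overline{T}_{v}\neq\emptyset$. Hence it suffices to prove that $\overline{T}_{u}\cap\overline{T}_{v}\neq\emptyset$ is equivalent to $L_{v}\cap S_{u}\neq\emptyset$ or $L_{u}\cap S_{v}\neq\emptyset$.

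First I would fix the correspondence between points of $L_{w}$ and the chords (the ``sweeping segments'') of $\overline{T}_{w}$. By Definition~\ref{phi(x)} and Observation~\ref{segment-middle-obs}, the point of $L_{w}$ at parameter $\lambda\in[0,1]$ equals $(x,\,\Delta-(y-x))$, where $x=\lambda a_{w}+(1-\lambda)d_{w}$ and $y=\lambda c_{w}+(1-\lambda)b_{w}$ are the bottom (on $L_{2}$) and top (on $L_{1}$) endpoints of the chord of $\overline{T}_{w}$ at parameter $\lambda$, since $\cot\phi_{w}(x)=y-x$; conversely every chord yields a point of $L_{w}$. As the chords foliate the trapezoid and any common point of $\overline{T}_{u}$ and $\overline{T}_{v}$ lies on a chord of each, we get that $\overline{T}_{u}\cap\overline{T}_{v}\neq\emptyset$ if and only if some chord of $u$ meets some chord of $v$. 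Two chords across the strip between $L_{1}$ and $L_{2}$, with bottom/top endpoints $(x_{u},y_{u})$ and $(x_{v},y_{v})$, meet if and only if $(x_{v}-x_{u})(y_{v}-y_{u})\leq 0$: writing the horizontal positions of the chords as affine functions of the height and applying the intermediate value theorem, they cross precisely when one is left-of-or-equal at the bottom and right-of-or-equal at the top, or vice versa.

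Next I would express the two shadow conditions in the same $(x,y)$ coordinates. Substituting $s=(x_{v},\Delta-(y_{v}-x_{v}))\in L_{v}$ and $t=(x_{u},\Delta-(y_{u}-x_{u}))\in L_{u}$ into Definition~\ref{shadows-def}, the membership $s\in S_{t}$ unwinds to $x_{v}\leq x_{u}$ and $y_{v}\geq y_{u}$; indeed the first requirement $s_{x}\leq t_{x}$ is $x_{v}\leq x_{u}$, while the second requirement $s_{y}-s_{x}\leq t_{y}-t_{x}$ collapses to $-y_{v}\leq -y_{u}$. Therefore $L_{v}\cap S_{u}\neq\emptyset$ holds exactly when some pair of chords satisfies $x_{v}\leq x_{u}$ and $y_{v}\geq y_{u}$, and symmetrically $L_{u}\cap S_{v}\neq\emptyset$ holds exactly when some pair satisfies $x_{v}\geq x_{u}$ and $y_{v}\leq y_{u}$. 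Since $(x_{v}-x_{u})(y_{v}-y_{u})\leq 0$ is just the disjunction of these two sign patterns, and an existential quantifier distributes over a disjunction, the union of the two shadow conditions is precisely the chord-meeting characterization of the previous paragraph, finishing the equivalence.

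The coordinate substitutions are routine; the one step requiring care is the geometric characterization that $\overline{T}_{u}\cap\overline{T}_{v}\neq\emptyset$ is equivalent to the existence of two meeting chords and hence to the sign condition $(x_{v}-x_{u})(y_{v}-y_{u})\leq 0$. I would justify this by noting that each trapezoid is the union of its chords, that a common point lies on a chord of each, and that two chords meet within the strip exactly when their bottom and top endpoints occur in opposite relative order (handled by continuity, using $a_{w}\leq d_{w}$ and $c_{w}\leq b_{w}$). The distinctness assumption on endpoints rules out degenerate coincidences, but the non-strict inequalities make the argument valid regardless.
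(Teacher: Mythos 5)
Your proof is correct, but it takes a genuinely different route from the paper's. Both arguments start from the same foothold, namely that $\{\overline{T}_{w}:w\in V_{B}\}$ is an intersection model of $G[V_{B}]$, so the task reduces to showing $\overline{T}_{u}\cap \overline{T}_{v}\neq \emptyset$ is equivalent to the disjunction of the two shadow conditions. The paper does this by a two-case analysis on whether the projections $[a_{u},d_{u}]$ and $[a_{v},d_{v}]$ onto $L_{2}$ overlap: in the overlapping case it exhibits a witness point of one segment lying in the shadow of the point of the other segment with the same $x$-coordinate (comparing $\phi_{u}(t_{x})$ with $\phi_{v}(t_{x})$), and in the disjoint case it runs an explicit corner computation showing $b_{u}\geq c_{v}$ (or $b_{v}\geq c_{u}$) is equivalent to one specific endpoint of one segment lying in the shadow of one specific endpoint of the other, invoking Lemma~\ref{slope-line-segments-lem} to justify that this endpoint is the right witness. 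You instead set up the bijection between points of $L_{w}$ and the chords of $\overline{T}_{w}$, invoke the classical criterion that two segments spanning a strip between parallel lines cross if and only if their bottom and top endpoints appear in weakly opposite order (the sign condition $(x_{v}-x_{u})(y_{v}-y_{u})\leq 0$), and observe that the two shadow conditions are exactly the two sign patterns making up that condition. This buys a uniform, case-free argument that makes transparent \emph{why} there are precisely two symmetric shadow conditions, at the cost of needing the auxiliary facts that the chords foliate the trapezoid and that affine functions obey the intermediate-value sign criterion; the paper's version is more pedestrian but entirely self-contained in each case. Your coordinate computations ($s\in S_{t}$ unwinding to $x_{v}\leq x_{u}$ and $y_{v}\geq y_{u}$, and $\cot\phi_{w}(x)=y-x$ being consistent with Definitions~\ref{Tv} and~\ref{phi(x)} and Observation~\ref{segment-middle-obs}) all check out.
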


\begin{proof}
Let $R$ be the trapezoepiped representation of $G$, from which the shadow
representation $(\mathcal{P},\mathcal{L})$ is constructed, cf.~Definition~%
\ref{shadow-representation-def}.

($\Rightarrow $) Let $uv\in E$. Assume first that the intervals $%
[a_{u},d_{u}]$ and $[a_{v},d_{v}]$ of the $x$-axis share at least one common
point, say $t_{x}$. If $\phi _{v}(t_{x})\leq \phi _{u}(t_{x})$, then the
point $(t_{x},\Delta -\cot \phi _{v}(t_{x}))$ of the line segment $L_{v}$
belongs to the shadow $S_{u}$ of the line segment $L_{u}$, i.e.,~$L_{v}\cap
S_{u}\neq \emptyset $. Otherwise, symmetrically, if $\phi _{v}(t)>\phi
_{u}(t)$ then $L_{u}\cap S_{v}\neq \emptyset $.

Assume now that $[a_{u},d_{u}]$ and $[a_{v},d_{v}]$ are disjoint,
i.e.,~either $d_{u}<a_{v}$ or $d_{v}<a_{u}$. Without loss of generality we
may assume that $d_{u}<a_{v}$, as the other case is symmetric. Then, as $%
uv\in E$ by assumption, it follows that $\overline{T}_{u}\cap \overline{T}%
_{v}\neq \emptyset $ in the trapezoepiped representation $R$ of $G$. Thus $%
b_{u}\geq c_{v}$, since we assumed that $d_{u}<a_{v}$. Therefore $\cot \phi
_{u}=b_{u}-d_{u}\geq c_{v}-d_{u}=\cot \phi _{v,1}+(a_{v}-d_{u})$. That is, $%
(\Delta -\cot \phi _{u,2})-d_{u}\leq (\Delta -\cot \phi _{v,1})-a_{v}$, and
thus the point $(d_{u},\Delta -\cot \phi _{u,2})$ of the line segment $L_{u}$
belongs to the shadow $S_{t}$ of the point $t=(a_{v},\Delta -\cot \phi
_{v,1})$ of the line segment $L_{v}$. Therefore $L_{u}\cap S_{v}\neq
\emptyset $.

($\Leftarrow $) Let $L_{v}\cap S_{u}\neq \emptyset $ or $L_{u}\cap S_{v}\neq
\emptyset $. Assume first that the intervals $[a_{u},d_{u}]$ and $%
[a_{v},d_{v}]$ of the $x$-axis share at least one common point, say $t_{x}$.
Then $t_{x}\in \lbrack a_{u},d_{u}]\cap \lbrack a_{v},d_{v}]$, and thus the
trapezoids $\overline{T}_{u}$ and $\overline{T}_{v}$ in the trapezoepiped
representation $R$ have a common point on the line $L_{2}$, i.e.,~$\overline{%
T}_{u}\cap \overline{T}_{v}\neq \emptyset $. Therefore, since both $u$ and $%
v $ are bounded vertices, it follows that $uv\in E$.

Assume now that $[a_{u},d_{u}]$ and $[a_{v},d_{v}]$ are disjoint,
i.e.,~either $d_{v}<a_{u}$ or $d_{u}<a_{v}$. Without loss of generality we
may assume that $d_{v}<a_{u}$, as the other case is symmetric. Then $%
L_{u}\cap S_{v}=\emptyset $, and thus $L_{v}\cap S_{u}\neq \emptyset $.
Therefore, by Lemma~\ref{slope-line-segments-lem}, it follows that the point 
$t=(d_{v},\Delta -\cot \phi _{v,2})$ of $L_{v}$ must belong to $S_{u}$. In
particular, this point $t$ of $L_{v}$ must belong to the shadow $%
S_{t^{\prime }}$ of the point $t^{\prime }=(a_{u},\Delta -\cot \phi _{u,1})$
of $L_{u}$. That is, $(\Delta -\cot \phi _{v,2})-d_{v}\leq (\Delta -\cot
\phi _{u,1})-a_{u}$. It follows that $(b_{v}-d_{v})=\cot \phi _{v,2}\geq
\cot \phi _{u,1}+(a_{u}-d_{v})=(c_{u}-a_{u})+(a_{u}-d_{v})$, and thus $%
b_{v}\geq c_{u}$. Therefore, since $d_{v}<a_{u}$, it follows that $\overline{%
T}_{u}\cap \overline{T}_{v}\neq \emptyset $, and thus $uv\in E$.
\end{proof}

\begin{lemma}
\label{shadow-correctness-lem-2}Let $(\mathcal{P},\mathcal{L})$ be a shadow
representation of a multitolerance graph $G$. Let $v\in V_{U}$ and $u\in
V_{B}$ be two vertices of $G$. Then $uv\in E$ if and only if $p_{v}\in S_{u}$%
.
\end{lemma}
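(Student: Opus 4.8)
The plan is to characterize the adjacency between a bounded vertex $u\in V_{B}$ and an unbounded vertex $v\in V_{U}$ using Lemma~\ref{neighbors-bounded-unbounded}, which already splits the analysis into three cases based on the relative position of $a_{v}=d_{v}$ with respect to the endpoints $a_{u}$ and $d_{u}$ of $\overline{T}_{u}$ on $L_{2}$. Since the $x$-coordinate of the point $p_{v}$ is exactly $a_{v}$ and the line segment $L_{u}$ spans the $x$-interval $[a_{u},d_{u}]$ (cf.~Observation~\ref{segment-middle-obs}), the same trichotomy $a_{v}<a_{u}$, $a_{u}<a_{v}<d_{u}$, $d_{u}<a_{v}$ governs whether $p_{v}$ lies to the left of, directly below/above, or to the right of the $x$-extent of $L_{u}$. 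So I would prove the biconditional $uv\in E \iff p_{v}\in S_{u}$ by matching each of the three cases of Lemma~\ref{neighbors-bounded-unbounded} against the membership condition $p_{v}\in S_{u}$.

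First I would recall from Definition~\ref{shadows-def} that $S_{u}=\bigcup_{t\in L_{u}}S_{t}$, so $p_{v}=(a_{v},\Delta-\cot\phi_{v})\in S_{u}$ means there is a point $t=(t_{x},t_{y})\in L_{u}$ with $a_{v}\leq t_{x}$ and $(\Delta-\cot\phi_{v})-a_{v}\leq t_{y}-t_{x}$. The main work is the middle case $a_{u}<a_{v}<d_{u}$, where Lemma~\ref{neighbors-bounded-unbounded} says $uv\in E \iff \phi_{v}\leq\phi_{u}(a_{v})$. Here I would use Observation~\ref{segment-middle-obs}, which tells us that the point of $L_{u}$ with $x$-coordinate $a_{v}$ is exactly $(a_{v},\Delta-\cot\phi_{u}(a_{v}))$. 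Since $\cot$ is decreasing on $(0,\pi/2)$, the inequality $\phi_{v}\leq\phi_{u}(a_{v})$ is equivalent to $\cot\phi_{v}\geq\cot\phi_{u}(a_{v})$, i.e.\ to $\Delta-\cot\phi_{v}\leq\Delta-\cot\phi_{u}(a_{v})$. This says precisely that $p_{v}$ lies on or below the point of $L_{u}$ at the same $x$-coordinate. The key geometric observation to close this case is that, by Lemma~\ref{slope-line-segments-lem}, the slope of $L_{u}$ is at most $1$ (angle at most $\pi/4$), so the condition ``$p_{v}$ lies on or below the point of $L_{u}$ at $x$-coordinate $a_{v}$'' is equivalent to ``$p_{v}$ lies in the shadow of that specific point of $L_{u}$'', hence to $p_{v}\in S_{u}$; I would check that the $y-x\leq t_{y}-t_{x}$ constraint of the shadow is automatically the binding one here and does not exclude any desired point.

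For the case $a_{v}<a_{u}$, Lemma~\ref{neighbors-bounded-unbounded} reduces adjacency to $\overline{T}_{u}\cap\overline{T}_{v}\neq\emptyset$, which I would translate into the endpoint inequality $b_{u}\geq c_{v}$ (using $d_{u}<a_{v}$ is false here, so I instead compare $c_{u}$ and $b_{v}$ appropriately) and then convert, exactly as in the disjoint-interval subcase of the proof of Lemma~\ref{shadow-correctness-lem-1}, into the shadow membership of the leftmost endpoint $p_{u,1}=(a_{u},\Delta-\cot\phi_{u,1})$ relative to $p_{v}$; concretely I expect this to reduce to comparing $(\Delta-\cot\phi_{v})-a_{v}$ with $(\Delta-\cot\phi_{u,1})-a_{u}$, i.e.\ checking that $p_{v}$ lies in the shadow $S_{t}$ of the left endpoint $t=(a_{u},\Delta-\cot\phi_{u,1})$ of $L_{u}$, which is the governing point of $S_{u}$ for $x$-coordinates to the left of $a_{u}$. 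The third case $d_{u}<a_{v}$ gives $uv\notin E$, and on the geometric side I would observe that $a_{v}>d_{u}\geq t_{x}$ for every $t\in L_{u}$, so the constraint $a_{v}\leq t_{x}$ in the definition of $S_{t}$ fails for all $t\in L_{u}$, whence $p_{v}\notin S_{u}$; this direction is immediate.

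The main obstacle I anticipate is handling the shadow's defining inequalities carefully in the two ``interval-disjoint'' cases, making sure I identify which single point of $L_{u}$ actually governs membership of $p_{v}$ in the union $S_{u}=\bigcup_{t\in L_{u}}S_{t}$. Because the shadow of a point is bounded by both a vertical constraint ($x\leq t_{x}$) and a slope-$1$ constraint ($y-x\leq t_{y}-t_{x}$), and because $L_{u}$ has slope at most $1$, the extremal relevant point of $L_{u}$ is always one of the two endpoints or the point at $x=a_{v}$; pinning down that this is so (so that the reduction to a single endpoint inequality is valid) is where the slope bound of Lemma~\ref{slope-line-segments-lem} does the essential work, and it is the one place where I would be careful rather than routine.
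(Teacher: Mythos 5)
Your proposal follows essentially the same route as the paper's proof: the trichotomy of Lemma~\ref{neighbors-bounded-unbounded} on the position of $a_{v}$ relative to $[a_{u},d_{u}]$, with Observation~\ref{segment-middle-obs} handling the middle case and the left-endpoint inequality (correctly $b_{v}\geq c_{u}$, as you self-correct) handling the case $a_{v}<a_{u}$. Your explicit use of the slope bound from Lemma~\ref{slope-line-segments-lem} to identify which point of $L_{u}$ governs membership in $S_{u}=\bigcup_{t\in L_{u}}S_{t}$ is a detail the paper leaves implicit, and it is exactly the right justification.
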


\begin{proof}
Let $R$ be the trapezoepiped representation of $G$, from which the shadow
representation $(\mathcal{P},\mathcal{L})$ is constructed, cf.~Definition~%
\ref{shadow-representation-def}. Furthermore recall that $%
p_{v}=(a_{v},\Delta -\cot \phi _{v})$ by Definition~\ref%
{shadow-representation-def}.

($\Rightarrow $) Let $uv\in E$. If $d_{u}<a_{v}$, then $uv\notin E$ by Lemma~%
\ref{neighbors-bounded-unbounded}, which is a contradiction. Therefore $%
a_{v}<d_{u}$. Assume first that $a_{u}<a_{v}<d_{u}$. Then Lemma~\ref%
{neighbors-bounded-unbounded} implies that $\phi _{v}\leq \phi _{u}(a_{v})$.
Thus it follows by Observation~\ref{segment-middle-obs} that $p_{v}\in S_{u}$%
. Assume now that $a_{v}<a_{u}$. Then Lemma~\ref{neighbors-bounded-unbounded}
implies that $\overline{T}_{u}\cap \overline{T}_{v}\neq \emptyset $. Thus $%
b_{v}\geq c_{u}$, since $a_{v}<a_{u}$. Therefore $\cot \phi
_{v}=(b_{v}-a_{v})\geq (a_{u}-a_{v})+(c_{u}-a_{u})=(a_{u}-a_{v})+\cot \phi
_{u,1}$. That is, $(\Delta -\cot \phi _{v})-a_{v}\leq (\Delta -\cot \phi
_{u,1})-a_{u}$, and thus the point $p_{v}=(a_{v},\Delta -\cot \phi _{v})$
belongs to the shadow $S_{t}$, where $t=(a_{u},\Delta -\cot \phi _{u,1})\in
L_{u}$, i.e.,~$p_{v}\in S_{u}$.

($\Leftarrow $) Let $p_{v}\in S_{u}$. Then clearly $a_{v}\leq d_{u}$. Assume
first that $a_{u}\leq a_{v}\leq d_{u}$. Then, since $p_{v}\in S_{u}$, it
follows by Observation~\ref{segment-middle-obs} that $\Delta -\cot \phi
_{v}\leq \Delta -\cot \phi _{u}(a_{v})$, and thus $\phi _{v}\leq \phi
_{u}(a_{v})$. Therefore Lemma~\ref{neighbors-bounded-unbounded} implies that 
$uv\in E$.

Assume now that $a_{v}<a_{u}$. Then, since $p_{v}\in S_{u}$, it follows that 
$p_{v}\in S_{t}$, where $t=(a_{u},\Delta -\cot \phi _{u,1})\in L_{u}$. Thus $%
(\Delta -\cot \phi _{v})-a_{v}\leq (\Delta -\cot \phi _{u,1})-a_{u}$. That
is, $(b_{v}-a_{v})=\cot \phi _{v}\geq (a_{u}-a_{v})+\cot \phi
_{u,1}=(a_{u}-a_{v})+(c_{u}-a_{u})$, and thus $b_{v}\geq c_{u}$. Therefore,
since $a_{v}<a_{u}$, it follows that $\overline{T}_{u}\cap \overline{T}%
_{v}\neq \emptyset $, and thus $uv\in E$ by Lemma~\ref%
{neighbors-bounded-unbounded}.
\end{proof}

\medskip

Lemmas~\ref{shadow-correctness-lem-1} and~\ref{shadow-correctness-lem-2}
show how adjacencies between vertices can be seen in a shadow representation 
$(\mathcal{P},\mathcal{L})$ of a multitolerance graph $G$. The next lemma
describes how the hovering vertices of an unbounded vertex $v\in V_{U}$ (cf.
Definition~\ref{inevitable-canonical-def}) can be seen in a shadow
representation $(\mathcal{P},\mathcal{L})$.

\begin{lemma}
\label{shadow-hovering-lem}Let $(\mathcal{P},\mathcal{L})$ be a shadow
representation of a multitolerance graph $G$. Let $v\in V_{U}$ be an
unbounded vertex of $G$ and $u\in V\setminus \{v\}$ be another arbitrary
vertex. If $u\in V_{B}$ (resp.~$u\in V_{U}$), then $u$ is a hovering vertex
of $v$ if and only if $L_{u}\cap S_{v}\neq \emptyset $ (resp.~$p_{u}\in
S_{v} $).
\end{lemma}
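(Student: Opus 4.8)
The plan is to recognize the hovering relation of $v$ as the set of \emph{new} adjacencies created when $v$ is lowered, and then to read these off from Lemmas~\ref{shadow-correctness-lem-1} and~\ref{shadow-correctness-lem-2}. Write $G'$ for the graph obtained from $G$ by replacing $T_v$ with $H_{\text{convex}}(T_v,\overline{T}_v)$, as in Definition~\ref{inevitable-canonical-def}; by definition $u$ is a hovering vertex of $v$ exactly when $uv\notin E(G)$ but $uv\in E(G')$. First I would check that this lowering turns $v$ into a bounded vertex $v'$ of $G'$: since $v$ is unbounded we have $a_v=d_v$ and $b_v=c_v$, so $\overline{T}_v$ is a single slanted segment whose two lifted copies coincide at height $\Delta-\cot\phi_v$; hence $H_{\text{convex}}(T_v,\overline{T}_v)$ is precisely the trapezoepiped of a bounded vertex $v'$ with $\phi_{v',1}=\phi_{v',2}=\phi_v$ and degenerate endpoints $a_{v'}=d_{v'}=a_v$, $c_{v'}=b_{v'}=b_v$. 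In the shadow representation of $G'$ the segment of $v'$ therefore collapses to the single point $p_{v',1}=p_{v',2}=(a_v,\Delta-\cot\phi_v)=p_v$, so that $L_{v'}=\{p_v\}$ and $S_{v'}=S_{p_v}=S_v$.

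Consider first $u\in V_B$. After lowering, both $u$ and $v'$ are bounded, so Lemma~\ref{shadow-correctness-lem-1} gives $uv\in E(G')$ iff $L_u\cap S_{v'}\neq\emptyset$ or $L_{v'}\cap S_u\neq\emptyset$, that is (since $L_{v'}=\{p_v\}$ and $S_{v'}=S_v$) iff $L_u\cap S_v\neq\emptyset$ or $p_v\in S_u$. By Lemma~\ref{shadow-correctness-lem-2}, the second disjunct $p_v\in S_u$ is exactly the condition $uv\in E(G)$. Hence $u$ is a hovering vertex of $v$, i.e.\ $uv\notin E(G)$ and $uv\in E(G')$, if and only if $p_v\notin S_u$ and $L_u\cap S_v\neq\emptyset$.

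The step I expect to be the main obstacle is discharging the side condition $p_v\notin S_u$, i.e.\ proving that $L_u\cap S_v\neq\emptyset$ already forces $uv\notin E(G)$. Using Observation~\ref{segment-middle-obs} I would write $L_u=\{(x,\Delta-\cot\phi_u(x)):a_u\le x\le d_u\}$ and note that the associated point $y(x)=x+\cot\phi_u(x)$ on the line $L_1$ is strictly increasing in $x$ and maps $[a_u,d_u]$ onto $[c_u,b_u]$. A direct computation with Definitions~\ref{shadow-representation-def} and~\ref{shadows-def} shows that $L_u\cap S_v\neq\emptyset$ is equivalent to the existence of some $x_1\in[a_u,d_u]$ with $x_1\le a_v$ and $y(x_1)\ge b_v$, while $p_v\in S_u$ is equivalent to the existence of some $x_0\in[a_u,d_u]$ with $x_0\ge a_v$ and $y(x_0)\le b_v$. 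If both conditions held, the monotonicity of $y$ would force $x_0=x_1=a_v$ and $y(a_v)=b_v$, equivalently $\phi_u(a_v)=\phi_v$; this is a single non-generic coincidence, excluded by the general position assumption on the representation. Thus the two conditions are mutually exclusive, and the characterization for $u\in V_B$ simplifies to $L_u\cap S_v\neq\emptyset$, as required.

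Finally, for $u\in V_U$ the argument is cleaner, because two unbounded vertices are never adjacent, so $uv\notin E(G)$ holds automatically and $u$ is a hovering vertex of $v$ iff $uv\in E(G')$. In $G'$ the vertex $v'$ is bounded while $u$ stays unbounded, so Lemma~\ref{shadow-correctness-lem-2} applied to the pair $(v',u)$ yields $uv\in E(G')$ iff $p_u\in S_{v'}=S_v$. Hence $u$ is a hovering vertex of $v$ iff $p_u\in S_v$, which completes the plan.
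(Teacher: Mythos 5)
Your proposal is correct and follows essentially the same route as the paper: lower $v$ to a bounded vertex, observe that in the resulting shadow representation $p_v$ simply becomes a trivial line segment with the same shadow, and read off the new adjacencies from Lemmas~\ref{shadow-correctness-lem-1} and~\ref{shadow-correctness-lem-2}. You are in fact slightly more careful than the paper in explicitly verifying that $L_u\cap S_v\neq\emptyset$ and $p_v\in S_u$ cannot both hold (the paper simply asserts $uv\notin E(G)$ at that point), though note that the coincidence $\phi_u(a_v)=\phi_v$ you exclude is an intermediate angle not literally covered by the paper's stated genericity assumption on $\{\phi_{u,1},\phi_{u,2}\}$ — a harmless degeneracy, but worth flagging as resting on the spirit rather than the letter of that assumption.
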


\begin{proof}
Let $G=(V,E)$ and $R$ be the trapezoepiped representation of $G$, from which
the shadow representation $(\mathcal{P},\mathcal{L})$ is constructed, cf.
Definition~\ref{shadow-representation-def}.

($\Leftarrow $) Let $u$ be a hovering vertex of $v$. That is, if we replace
in the trapezoepiped representation $R$ the line segment $T_{v}$ by $H_{%
\text{convex}}(T_{v},\overline{T}_{v})$ (i.e.,~if we make $v$ a bounded
vertex) then the vertices $u$ and $v$ become adjacent in the resulting
trapezoepiped representation $R^{\prime }$. Denote the new graph by $%
G^{\prime }=(V,E\cup \{uv\})$, i.e.,~$R^{\prime }$ is a trapezoepiped
representation of $G^{\prime }$. Note here that, since both $T_{v}$ and $%
\overline{T}_{v}$ are line segments, $H_{\text{convex}}(T_{v},\overline{T}%
_{v})$ is a degenerate trapezoepiped which is 2-dimensional.

Consider the shadow representation $(\mathcal{P}^{\prime },\mathcal{L}%
^{\prime })$ of $G^{\prime }$ that is obtained by this new trapezoepiped
representation $R^{\prime }$. Note that $\mathcal{P}^{\prime }=\mathcal{P}%
\setminus \{p_{v}\}$ and $\mathcal{L}^{\prime }=\mathcal{L}\cup \{L_{v}\}$,
where $L_{v}$ is a trivial line segment that consists of only one point $%
p_{v}$. Assume first that $u\in V_{U}$. Then, since $v$ is bounded and $v$
is adjacent to $u$ in $G^{\prime }$, Lemma~\ref{shadow-correctness-lem-2}
implies that $p_{u}\in S_{v}$. Assume now that $u\in V_{B}$. Then, since $v$
is bounded and $v$ is adjacent to $u$ in $G^{\prime }$, Lemma~\ref%
{shadow-correctness-lem-1} implies that $L_{v}\cap S_{u}\neq \emptyset $ or $%
L_{u}\cap S_{v}\neq \emptyset $. That is, $p_{v}\in S_{u}$ or $L_{u}\cap
S_{v}\neq \emptyset $, since $L_{v}=\{p_{v}\}$. If $p_{v}\in S_{u}$ then $u$
and $v$ are adjacent in $G$, by Lemma~\ref{shadow-correctness-lem-2}, which
is a contradiction. Therefore $L_{u}\cap S_{v}\neq \emptyset $.

($\Rightarrow $) Consider the shadow representation $(\mathcal{P}^{\prime },%
\mathcal{L}^{\prime })$ that is obtained by the shadow representation $(%
\mathcal{P},\mathcal{L})$ of $G$, such that $\mathcal{P}^{\prime }=\mathcal{P%
}\setminus \{p_{v}\}$ and $\mathcal{L}^{\prime }=\mathcal{L}\cup \{L_{v}\}$,
where $L_{v}$ is a trivial line segment that consists of only one point $%
p_{v}$. Then $(\mathcal{P}^{\prime },\mathcal{L}^{\prime })$ is a shadow
representation of some multitolerance graph $G^{\prime }$, where the bounded
vertices $V_{B}^{\prime }$ of $G^{\prime }$ correspond to the line segments
of $\mathcal{L}^{\prime }$ and the unbounded vertices $V_{U}^{\prime }$ of $%
G^{\prime }$ correspond to the points of $\mathcal{P}^{\prime }$.
Furthermore note that $V_{B}^{\prime }=V_{B}\cup \{v\}$ and $V_{U}^{\prime
}=V_{U}\setminus \{v\}$.

Assume first that $u\in V_{B}^{\prime }$ and $L_{u}\cap S_{v}\neq \emptyset $%
. Then, since both $u,v\in V_{B}^{\prime }$, Lemma~\ref%
{shadow-correctness-lem-1} implies that $u$ and $v$ are adjacent in $%
G^{\prime }$. Thus, since $u$ is not adjacent to $v$ in $G$, it follows that 
$u$ is a hovering vertex of $v$. Assume now that $u\in V_{U}^{\prime }$ and $%
p_{u}\in S_{v}$. Then, since both $v\in V_{B}^{\prime }$, Lemma~\ref%
{shadow-correctness-lem-2} implies that $u$ and $v$ are adjacent in $%
G^{\prime }$. Thus, similarly, $u$ is a hovering vertex of $v$.
\end{proof}

\medskip

In the example of Figure~\ref{shadow-representation-fig} the shadows of the
points in $\mathcal{P}$ and of the line segments in $\mathcal{L}$ are shown
with dotted lines. For instance, $p_{v_{2}}\in S_{v_{3}}$ and $%
p_{v_{2}}\notin S_{v_{4}}$, and thus the unbounded vertex $v_{2}$ is
adjacent to the bounded vertex $v_{3}$ but not to the bounded vertex $v_{4}$%
. Furthermore $L_{v_{3}}\cap S_{v_{4}}\neq \emptyset $, and thus $v_{3}$ and 
$v_{4}$ are adjacent. On the other hand, $L_{v_{3}}\cap
S_{v_{6}}=L_{v_{6}}\cap S_{v_{3}}=\emptyset $, and thus $v_{3}$ and $v_{4}$
are not adjacent. Finally $p_{v_{1}}\in S_{v_{2}}$ and $L_{v_{4}}\cap
S_{v_{5}}\neq \emptyset $, and thus $v_{1}$ is a hovering vertex of $v_{2}$
and $v_{4}$ is a hovering vertex of $v_{5}$. These facts can be also checked
in the trapezoepiped representation of the same multitolerance graph $G$ in
Figure~\ref{fig:3Db}.

\section{Dominating set is \textsf{APX}-hard on multitolerance graphs\label%
{dominating-hard-multitolerance-sec}}

In this section we prove that the dominating set problem on multitolerance
graphs is \textsf{APX}-hard. 
Let us first recall that an optimization problem $P_{1}$ is \emph{$L$%
-reducible} to an optimization problem $P_{2}$~\cite{Williamson11} if there
exist two functions $f$ and $g$, which are computable in polynomial time,
and two constants $\alpha ,\beta >0$ such that:

\begin{itemize}
\item for any instance $\mathcal{I}$ of $P_{1}$, $f(\mathcal{I})$ is an
instance of $P_{2}$ and $\text{OPT}(f(\mathcal{I}))\leq \alpha\cdot \text{OPT%
}(\mathcal{I})$, and

\item for any feasible solution $D$ of $f(\mathcal{I})$, $g(D)$ is a
feasible solution of $\mathcal{I}$, and it holds that $|c(g(D))-\text{OPT}(%
\mathcal{I})|\leq \beta \cdot |c(D)-\text{OPT}(f(\mathcal{I}))|$, where $%
c(D) $ and $c(g(D))$ denote the costs of the solutions $D$ and $g(D)$,
respectively.
\end{itemize}

Let us now define a special case of the unweighted set cover problem, namely
the \textsc{Special 3-Set Cover (S3SC)} problem~\cite{ChanG14}.

\begin{theorem}[\hspace{-0,01mm}\protect\cite{ChanG14}]
\label{thm:apxssc} \textsc{Special 3-Set Cover} is \textsf{APX}-hard.
\end{theorem}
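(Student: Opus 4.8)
The plan is to establish \textsf{APX}-hardness of \sscprob\ through an approximation-preserving reduction (in fact an $L$-reduction, in the sense recalled just above) from a canonical bounded-occurrence problem whose hardness is already known. The most natural source is \textsc{Vertex Cover} restricted to cubic (3-regular) graphs, which is \textsf{APX}-hard by the results of Papadimitriou--Yannakakis and Alimonti--Kann. The reason this source fits a \emph{3-Set} target is structural: in a cubic graph every vertex is incident to exactly three edges, so interpreting each vertex as a set and each edge as an element produces, essentially for free, a family of sets each of cardinality exactly three, in which moreover every element lies in exactly two sets.

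Concretely, I would define the map $f$ as follows. Given a cubic graph $G=(V,E)$, let the universe be $E$ and let the set family be $\{S_{v}:v\in V\}$, where $S_{v}$ is the collection of edges incident to $v$; then $|S_{v}|=3$ for all $v$, and a sub-family covers $E$ if and only if the corresponding vertex set is a vertex cover of $G$. Hence $\mathrm{OPT}(f(G))$ equals the minimum vertex cover number $\tau(G)$. For the backward map $g$, a feasible cover $D\subseteq\{S_{v}\}$ is sent to the vertex set $\{v:S_{v}\in D\}$, a vertex cover of the same size. If the precise ``special'' constraints of \sscprob\ (on element frequencies, on overlaps between sets, or on a promise about the structure of optimal covers) are not already met by this raw construction, I would enforce them with standard padding gadgets --- for instance adjoining a constant number of fresh private elements to each set, or duplicating sets --- chosen so that they shift both the optimum and the cost of every solution by the same controlled amount.

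To complete the $L$-reduction I would verify the two required inequalities. For the first, the clean construction gives $\mathrm{OPT}(f(G))=\tau(G)=\mathrm{OPT}(G)$, so $\mathrm{OPT}(f(G))\le\alpha\cdot\mathrm{OPT}(G)$ holds with $\alpha=1$. Bounded degree is nonetheless essential here: it is what forces sets of size exactly three and, just as importantly, the linear lower bound $\tau(G)\ge|E|/3$, which is precisely what lets a constant additive hardness gap for cubic vertex cover survive as a constant multiplicative gap after the reduction. For the second inequality, since $g$ maps a cover of cost $c$ to a vertex cover of cost at most $c$ (with equality in the clean construction), we obtain $|c(g(D))-\mathrm{OPT}(G)|\le\beta\cdot|c(D)-\mathrm{OPT}(f(G))|$ with $\beta=1$. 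Both $f$ and $g$ are plainly polynomial-time computable.

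I expect the main obstacle to be matching the \emph{exact} definitional requirements of \sscprob\ rather than the asymptotics. The cubic-vertex-cover construction yields a 3-uniform instance almost immediately, but the ``special'' qualifier in \cite{ChanG14} typically imposes extra regularity (such as a prescribed number of sets containing each element, or a structural guarantee on optimal covers) needed downstream for the geometric realisation. The delicate point is to design the padding gadgets so that they simultaneously (i) force the instance into the required special form and (ii) preserve the constant-factor gap between completeness and soundness, keeping $\alpha$ and $\beta$ absolute constants. If a single gadgeting step cannot achieve both, the fallback is to route the reduction through a tailored intermediate problem, such as a bounded-occurrence \textsc{Max-3Sat} variant or \textsc{3-Dimensional Matching}, whose solution structure can be made to coincide exactly with the special 3-set instances required.
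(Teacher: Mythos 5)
The paper itself offers no proof of this statement: Theorem~\ref{thm:apxssc} is imported verbatim from~\cite{ChanG14}, so there is no in-house argument to compare against. Judged on its own terms, your proposal correctly identifies the right source problem and the right framework --- an $L$-reduction from \textsc{Vertex Cover} on cubic graphs, with the linear lower bound $\tau(G)\geq |E|/3$ converting additive hardness gaps into multiplicative ones --- and this is indeed the skeleton of the argument in~\cite{ChanG14}.

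The genuine gap is that everything specific to \emph{this} problem is deferred to unspecified ``padding gadgets'', and the particular devices you float (adjoining fresh private elements to each set, duplicating sets) cannot produce the mandated structure. \sscprob\ is not ``3-uniform set cover with each element in two sets'': its instances are forced to consist, for each $t\in[m]$, of the five sets $\{a_i,w_t\}$, $\{w_t,x_t\}$, $\{a_j,x_t,y_t\}$, $\{y_t,z_t\}$, $\{a_k,z_t\}$ over four fresh elements $w_t,x_t,y_t,z_t$, with each $a$-element in exactly two sets and $2n=3m$. The crux of the hardness proof is precisely this gadget: starting from your clean instance (one size-3 set per vertex of a cubic graph, so that gadgets play the role of vertices and the $a$-elements the role of edges), one replaces each size-3 set by the five-set path above and proves the accounting identity that covering $\{w_t,x_t,y_t,z_t\}$ costs exactly $2$ sets if the gadget covers none of its $a$-elements (the unique two-set choice being $\{w_t,x_t\},\{y_t,z_t\}$) and $3$ sets if it is to cover all three of $a_i,a_j,a_k$, whence $\mathrm{OPT}(f(G))=2m+\tau(G)$. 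Only with this identity, together with $\tau(G)\geq m/2$, do the two $L$-reduction inequalities go through with absolute constants $\alpha$ and $\beta$. Without designing and analysing that gadget, your argument establishes \textsf{APX}-hardness of a generic bounded-occurrence set cover variant, not of \sscprob\ as defined.
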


\vspace{0,2cm} \noindent \fbox{ 
\begin{minipage}{0.96\textwidth}
 \begin{tabular*}{\textwidth}{@{\extracolsep{\fill}}lr} \sscprob & \\ \end{tabular*}
 
 \vspace{1.2mm}
{\bf{Input:}} A pair $({\cal U},{\cal S})$ consisting of a universe ${\cal U}=A\cup W\cup X\cup Y\cup Z$, and  a family ${\cal S}$ of subsets of ${\cal U}$ such that:
\begin{itemize}\item the sets $A$, $W$, $X$, $Y$, $Z$ are disjoint,
\item $A=\{a_{i}: i\in [n]\}$, $W=\{w_{i}: i\in [m]\}$, $X=\{x_{i}: i\in [m]\}$, $Y=\{y_{i}: i\in [m]\}$, $Z=\{z_{i}: i\in [m]\}$,
\item $2n=3m$,
\item for all $t\in [n]$, the element $a_{t}$ belongs to exactly two sets of ${\cal S}$, and
\item ${\cal S}$ has $5m$ sets; for every $t\in [m]$ there exist integers $1\leq i<j<k<n$ such that ${\cal S}$ contains the sets $\{a_{i},w_{t}\}$, $\{w_{t},x_{t}\},\{a_{j},x_{t},y_{t}\}$, $\{y_{t},z_{t}\}$, 
$\{a_{k},z_{t}\}$.\end{itemize}
{\bf{Output:}} A subset ${\cal S}_{0} \subseteq {\cal S}$ of minimum size such that 
every element in ${\cal U}$ belongs to at least one set of ${\cal S}_{0}$.
\end{minipage}} \vspace{0,2cm}

\begin{theorem}
\label{apx-hard-thm} \textsc{Dominating Set} is \textsf{APX}-hard on
Multitolerance Graphs.
\end{theorem}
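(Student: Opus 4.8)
The plan is to exhibit an $L$-reduction from \sscprob{} — which is \textsf{APX}-hard by Theorem~\ref{thm:apxssc} — to \gdsprob{} on multitolerance graphs, using the shadow representation of Definition~\ref{shadow-representation-def} as the vehicle. Given an instance $(\mathcal{U},\mathcal{S})$ of \sscprob, the reduction $f$ builds a multitolerance graph $G$ by directly specifying a shadow representation $(\mathcal{P},\mathcal{L})$: I would introduce one \emph{unbounded} vertex (a point $p_{e}\in\mathcal{P}$) for each element $e\in\mathcal{U}$ and one \emph{bounded} vertex (a line segment $L_{S}\in\mathcal{L}$) for each set $S\in\mathcal{S}$. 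The coordinates are to be chosen so that, by Lemma~\ref{shadow-correctness-lem-2}, $p_{e}\in S_{S}$ (equivalently, $e$ and $S$ become adjacent) holds \emph{exactly} when $e\in S$, and so that, by Lemma~\ref{shadow-correctness-lem-1}, the set segments $\{L_{S}:S\in\mathcal{S}\}$ form a clique (their shadows pairwise overlap). Since two unbounded vertices are never adjacent, the element points automatically form an independent set, which is precisely the behaviour required of elements in a covering problem.

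With this construction I would argue that the two optima coincide, i.e. $\text{OPT}(G)=\text{OPT}(\mathcal{U},\mathcal{S})$, yielding the first $L$-reduction inequality with $\alpha=1$. In one direction, a set cover $\mathcal{S}_{0}$ gives the dominating set $\{L_{S}:S\in\mathcal{S}_{0}\}$ of the same size: every element point is dominated because $\mathcal{S}_{0}$ covers $\mathcal{U}$, and every set segment is dominated because the set segments form a clique and $\mathcal{S}_{0}\neq\emptyset$. In the other direction, I define $g$ on a dominating set $D$ of $G$ by replacing each element point $p_{e}\in D$ with an arbitrary set segment $L_{S}$ satisfying $e\in S$, obtaining a family $D'$ of set segments; then $g(D)=\{S:L_{S}\in D'\}$ has size at most $|D|$ and is a valid set cover, since every element $e$ is either in $D$ (hence covered after the replacement) or dominated in $D$ by some $L_{S}$ with $e\in S$. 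Consequently $\text{OPT}(\mathcal{U},\mathcal{S})\le |g(D)|\le |D|$, so $|c(g(D))-\text{OPT}(\mathcal{U},\mathcal{S})|\le |c(D)-\text{OPT}(G)|$, giving the second inequality with $\beta=1$. As $L$-reductions preserve \textsf{APX}-hardness, this proves the theorem.

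The hard part is the geometric realizability of $f$: I must produce explicit coordinates for the points $p_{e}$ and for the endpoints of the segments $L_{S}$ that simultaneously (i)~induce \emph{precisely} the incidences $p_{e}\in S_{S}\iff e\in S$, with no spurious shadow memberships, (ii)~make all set segments pairwise shadow-adjacent, and (iii)~respect the structural constraint of Lemma~\ref{slope-line-segments-lem} that every segment has slope-angle in $[-\tfrac{\pi}{2},\tfrac{\pi}{4}]$, so that $(\mathcal{P},\mathcal{L})$ is a legitimate shadow representation of an actual multitolerance graph. Because each shadow is a downward-left region bounded by vertical rays and rays of slope one, ruling out unwanted containments is delicate; here I would exploit the special chain-like structure of \sscprob{} — the prescribed ordering $i<j<k$ of the $a$-elements and the local gadgets $\{a_{i},w_{t}\}$, $\{w_{t},x_{t}\}$, $\{a_{j},x_{t},y_{t}\}$, $\{y_{t},z_{t}\}$, $\{a_{k},z_{t}\}$ — to lay the segments and points out in a controlled left-to-right order along a diagonal band, placing each element point just inside the shadows of its incident sets and outside all others. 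Verifying that this layout creates no extra edges while keeping the set-segment clique intact is the technical crux of the argument.
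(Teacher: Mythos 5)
Your high-level strategy matches the paper's: an $L$-reduction from \sscprob{} realized by directly drawing a shadow representation with one point per element and one segment per set, using Lemmas~\ref{shadow-correctness-lem-1} and~\ref{shadow-correctness-lem-2} to translate incidences into adjacencies. The bookkeeping in your forward and backward directions is sound \emph{conditional on your construction existing}. But the entire content of the proof is the geometric realizability, and that is exactly what you defer as ``the technical crux'' without carrying it out. Worse, you impose a requirement that the paper deliberately avoids: that the $5m$ set-segments be pairwise shadow-adjacent (a clique). Achieving exact incidences $p_e \in S_{L_S} \iff e \in S$ is already delicate (the shadow of a segment is everything below it together with everything below-left of the diagonal through its left endpoint, so segments covering far-apart element points tend to either swallow intermediate points or miss each other entirely); demanding in addition that every pair of these segments meet each other's shadows is a strictly stronger constraint, and you give no evidence it can be satisfied simultaneously. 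This is a genuine gap, not a routine verification.

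The paper sidesteps precisely this difficulty: it does \emph{not} make the set-segments a clique. Instead it places the $A$-points on the ray $\{(z,-z):z>0\}$ and the remaining element points on $\{(t,\tan(\pi/6)t):t<0\}$, builds one segment per set realizing only the incidence condition, and then adds a single long segment $L_{5m+1}$ lying below and spanning everything, which is adjacent to all $5m$ set-segments (so it dominates them) but is only a \emph{hovering} vertex of each element point (so it does not dominate any element, cf.~Lemma~\ref{shadow-hovering-lem}), plus a pendant segment $L_{5m+2}$ whose sole neighbor is $L_{5m+1}$, forcing $L_{5m+1}$ into every (normalized) solution. This shifts the optimum by exactly one, giving $\text{OPT}(f(\mathcal{I}))=\text{OPT}(\mathcal{I})+1$ and constants $\alpha=2$, $\beta=1$ rather than your $\alpha=\beta=1$. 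If you adopt this dummy-segment mechanism you can drop the clique requirement entirely, and the remaining geometric task (exact incidences only) becomes the one the special chain structure of \sscprob{} is designed to make feasible. As written, however, your proof is incomplete: it rests on an unconstructed and plausibly unconstructible layout.
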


\begin{proof}
From Theorem~\ref{thm:apxssc} it is enough to prove that \textsc{Special
3-Set Cover} is $L$-reducible to \textsc{Dominating Set} on Multitolerance
Graphs.\footnote{%
This proof is inspired by the proof of Theorem 1.1(C5) in~\cite{ChanG14}.}

Given an instance $\mathcal{I}=(\mathcal{U},\mathcal{S})$ of \textsc{Special
3-Set Cover} as above we construct a multitolerance graph $f(\mathcal{I})=(%
\mathcal{P},\mathcal{L})$, where $\mathcal{P}$ and $\mathcal{L}$ are the
sets of points and line segments in the shadow representation of $f(\mathcal{%
I})$, as follows. For every element $a_{i}\in A$, we create the point $%
p_{a_{i}}$ of $\mathcal{P}$ on the line $\{(z,-z):z>0\}$. Furthermore, for
every element $q\in W\cup X\cup Y\cup Z$, we create the point $p_{q}$ of $%
\mathcal{P}$ on the line $\{(t,\tan (\frac{\pi }{6})t):t<0\}$, such that for
every $i\in \lbrack m]$ the points that correspond to the elements $w_{i}$, $%
x_{i}$, $y_{i}$, and $z_{i}$ appear consecutively on this line (cf.~Figure~%
\ref{apx-reduction-fig}). Then, since every set of $\mathcal{S}$ contains at
most one element of $A$ and at most two elements of $W\cup X\cup Y\cup Z$,
it can be easily verified that we can construct for every set $Q_{j}\in 
\mathcal{S}$, $j\in \lbrack 5m]$, a line segment $L_{j}$ such that the
points of $\mathcal{P}$ that are contained within its shadow $S_{j}$ are
exactly the points of $\mathcal{P}$ that correspond to the elements of $%
Q_{j} $ (cf.~Figure~\ref{apx-reduction-fig}). Furthermore we construct an
additional line segment $L_{5m+1}$, with left endpoint $l_{5m+1}$ and right
endpoint $r_{5m+1}$, respectively, such that $l_{5m+1}$ (resp. $r_{5m+1}$)
lies below and to the left (resp. below and to the right) of every endpoint
of $\mathcal{P}\cup \{L_{1},L_{2},\ldots ,L_{5m}\}$. Then note that the line
segment $L_{5m+1}$ corresponds to a hovering vertex of every point $p\in 
\mathcal{P}$ in the multitolerance graph $f(\mathcal{I})$, cf.~Lemma~\ref%
{shadow-hovering-lem}. Moreover the line segment $L_{5m+1}$ is a neighbor to
all other line segments $\{L_{1},L_{2},\ldots ,L_{5m}\}$ in the
multitolerance graph $f(\mathcal{I})$, cf.~Lemma~\ref%
{shadow-correctness-lem-1}. Finally we add the line segment $L_{5m+2}$ such
that $L_{5m+1}$ is its only neighbor, cf.~Figure~\ref{apx-reduction-fig}.
This concludes the construction of the new instance $f(\mathcal{I})$.

\begin{figure}[tbp]
\centering 
\includegraphics[scale=0.7]{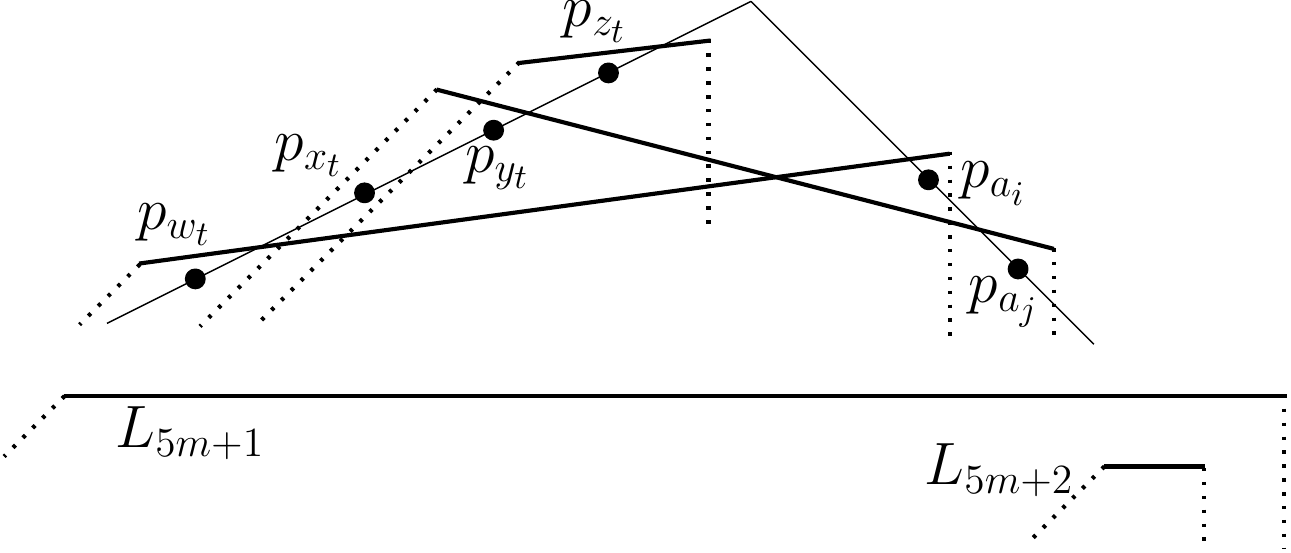}
\caption{The construction of the shadow representation in Theorem~\protect
\ref{apx-hard-thm}.}
\label{apx-reduction-fig}
\end{figure}

\begin{myclaim}
\label{clm:redpr1} $\text{OPT}(f(\mathcal{I}))\leq \text{OPT}(\mathcal{I})+1$%
, and thus $\text{OPT}(f(\mathcal{I}))\leq 2\cdot \text{OPT}(\mathcal{I})$.
\end{myclaim}

\begin{proofofclaim}[Proof of Claim~\protect\ref{clm:redpr1}]
Let $\mathcal{S}_{0}\subseteq \mathcal{S}$ be an optimum solution of an
instance $\mathcal{I}$ to \textsc{Special 3-Set Cover} and let $D$ be the
subset of $\mathcal{L}$ in the instance $f(\mathcal{I})$ of \textsc{%
Dominating Set}, where a line segment $L$ of $f(\mathcal{I})$ belongs to $D$
if and only if the corresponding set of $\mathcal{I}$ belongs to $S$. Let
now $D^{\prime }=D\cup \{L_{5m+1}\}$. As $S$ is an optimum solution of $%
\mathcal{I}$ it follows that all the elements of $\mathcal{U}$ belong to
some set of $S$ and from the construction of $f(\mathcal{I})$ it follows
that all points of $\mathcal{P}$ are contained inside the shadows of the
line segments in $D$. Thus, every point of $\mathcal{P}$ has a neighbor in $%
D $. Notice also that from the construction of $L_{5m+1}$ all line segments
of $\mathcal{L}$ have $L_{5m+1}$ as a neighbor. Therefore, as $|D|=|S|$ and $%
L_{5m+1}\notin D$, $D^{\prime }=D\cup \{L_{5m+1}\}$ is a solution to $f(%
\mathcal{I})$ of size $\text{OPT}(\mathcal{I})+1$. As \textsc{Dominating Set}
is a minimization problem we obtain that $\text{OPT}(f(\mathcal{I}))\leq
|D^{\prime }|=\text{OPT}(\mathcal{I})+1$.
\end{proofofclaim}

\medskip

We now define the function $g$ which, given a feasible solution $D$ of $f(%
\mathcal{I})$, returns a feasible solution $g(D)$ of $\mathcal{I}$. Let $D$
be a feasible solution of $f(\mathcal{I})$.

If $L_{5m+1}$ does not belong to $D$ then $L_{5m+2}$ belongs to $D$, since $%
L_{5m+1}$ the only neighbor of $L_{5m+2}$. By replacing $L_{5m+2}$ by $%
L_{5m+1}$ we obtain a solution of $f(\mathcal{I})$ of the same size. Thus,
without loss of generality we may assume that $L_{5m+1}$ belongs to $D$.
Furthermore, by the minimality of $D$ it follows that $D$ does not contain $%
L_{5m+2}$. Recall that all line segments $\{L_{1},L_{2},\ldots ,L_{5m}\}$
have $L_{5m+1}$ as a neighbor in $D$ and that every point $p$ of $f(\mathcal{%
I})$ is contained in the shadow of some line segment $L_{p}\in
\{L_{1},L_{2},\ldots ,L_{5m}\}$ in $f(\mathcal{I})$. Thus, for every point $%
p\in \mathcal{P}\cap D$, the set $(D\setminus \{p\})\cup \{L_{p}\}$ is also
a solution of $f(\mathcal{I})$ and has size at most $|D|$. Therefore,
without loss of generality we may also assume that $D$ only contains line
segments. As $L_{5m+1}\in D$ is not a neighbor of any point of $\mathcal{P}$
in $f(\mathcal{I})$, the set $D\setminus \{L_{5m+1}\}$ contains all
neighbors of the points of $f(\mathcal{I})$. Let $\mathcal{S}_{0}\subseteq 
\mathcal{S}$ contain all sets from $\mathcal{S}$ that correspond to the line
segments of $D\setminus \{L_{5m+1}\}$. From the construction of $f(\mathcal{I%
})$ we obtain that each element of $\mathcal{U}$ in $\mathcal{I}$ belongs to
at least one set of $\mathcal{S}_{0}$. We define $g(D)$ to be that set $%
\mathcal{S}_{0}$. Finally, notice that $|\mathcal{S}_{0}|\leq |D|-1$. This
implies the following simple observation.

\begin{observation}
\label{obs:solszs} If $D$ is a solution of $f(\mathcal{I})$, then $g(D)$ is
a solution of $\mathcal{I}$ and $c(g(D))\leq c(D)-1$.
\end{observation}

\begin{myclaim}
\label{clm:redpr22} $\text{OPT}(f(\mathcal{I}))= \text{OPT}(\mathcal{I})+1$.
\end{myclaim}

\begin{proofofclaim}[Proof of Claim~\protect\ref{clm:redpr22}]
Let $D$ be an optimum solution of $f(\mathcal{I})$. From Observation~\ref%
{obs:solszs}, we obtain that there exists a solution $S$ of $\mathcal{I}$
such that $|S|\leq \text{OPT}(f(\mathcal{I}))-1$. As \textsc{Special 3-Set
Cover} is a minimization problem it follows that $\text{OPT}(\mathcal{I}%
)\leq |S| \leq \text{OPT}(f(\mathcal{I}))-1$ and thus, $\text{OPT}(\mathcal{I%
})+1\leq \text{OPT}(f(\mathcal{I}))$. We now obtain the desired result from
Claim~\ref{clm:redpr1}.
\end{proofofclaim}

\medskip

We finally prove that $c(g(D))-\text{OPT}(\mathcal{I})\leq c(D)-\text{OPT}(f(%
\mathcal{I}))$. Notice that this is enough to prove the reduction for $%
\alpha =2$ (Claim~\ref{clm:redpr1}) and $\beta =1$. Claim~\ref{clm:redpr22}
yields that $c(g(D))-\text{OPT}(\mathcal{I})=c(g(D))-\text{OPT}(f(\mathcal{I}%
))+1$, and thus it follows by Observation~\ref{obs:solszs} that 
\begin{equation*}
c(g(D))-\text{OPT}(f(\mathcal{I}))+1\leq c(D)-1-\text{OPT}(f(\mathcal{I}%
))+1=c(D)-\text{OPT}(f(\mathcal{I})).
\end{equation*}%
This completes the proof of the theorem.
\end{proof}

\section{Bounded dominating set on tolerance graphs\label%
{Bounded-dominating-sec}}

In this section we use the \emph{horizontal shadow representation} of
tolerance graphs (cf.~Section~\ref{representations-sec}) to provide a
polynomial time algorithm for a variation of the minimum dominating set
problem on tolerance graphs, namely \textsc{Bounded Dominating Set},
formally defined below. This problem variation may be interesting on its
own, but we use our algorithm for \textsc{Bounded Dominating Set} as a
subroutine in our algorithm for the minimum dominating set problem on
tolerance graphs, cf.~Sections~\ref{Restricted-domination-sec} and~\ref%
{tolerance-domination-sec}. Note that, given a horizontal shadow
representation $(\mathcal{P},\mathcal{L})$ of a tolerance graph $G=(V,E)$,
the representation $(\mathcal{P},\mathcal{L})$ defines a partition of the
vertex set $V$ into the set $V_{B}$ of bounded vertices and the set $V_{U}$
of unbounded vertices. Indeed, every point of $\mathcal{P}$ corresponds to
an unbounded vertex in $V_{U}$ and every line segment of $\mathcal{L}$
corresponds to a bounded vertex of $V_{B}$. We denote $\mathcal{P}%
=\{p_{1},p_{2},\ldots ,p_{|\mathcal{P}|}\}$ and $\mathcal{L=}%
\{L_{1},L_{2},\ldots ,L_{|\mathcal{L}|}\}$, where $|\mathcal{P}|+|\mathcal{L}%
| = |V_{U}|+|V_{B}| = |V|$.

In this section we only deal with tolerance graphs and their horizontal shadow
representations. Thus, from now on, all line segments $\{L_{i}:1\leq i\leq |%
\mathcal{L}|\}$ will be assumed to be \emph{horizontal}. Furthermore, with a
slight abuse of notation, for any two elements $x_{1},x_{2}\in \mathcal{P}%
\cup \mathcal{L}$, we may say in the following that $x_{1}$ is adjacent with 
$x_{2}$ (or $x_{1}$ is a neighbor of $x_{2}$) if the vertices that
correspond to $x_{1}$ and $x_{2}$ are adjacent in the graph $G$. Moreover,
whenever $\mathcal{P}_{1}\subseteq \mathcal{P}_{2}\subseteq \mathcal{P}$ and 
$\mathcal{L}_{1}\subseteq \mathcal{L}_{2}\subseteq \mathcal{L}$, we may say
in the following that the set $\mathcal{P}_{1}\cup \mathcal{L}_{1}$ \emph{%
dominates} $\mathcal{P}_{2}\cup \mathcal{L}_{2}$ if the vertices that
correspond to $\mathcal{P}_{1}\cup \mathcal{L}_{1}$ are a dominating set of
the subgraph of $G$ induced by the vertices corresponding to $\mathcal{P}%
_{2}\cup \mathcal{L}_{2}$.

\vspace{0,2cm} \noindent \fbox{ 
\begin{minipage}{0.96\textwidth}
 \begin{tabular*}{\textwidth}{@{\extracolsep{\fill}}lr} \bdsprob & \\ \end{tabular*}
 
 \vspace{1.2mm}
{\bf{Input:}} A horizontal shadow representation $(\PP,\LL)$ of a tolerance graph $G$. \\
{\bf{Output:}} A set $Z\subseteq \LL$ of minimum size that dominates $(\PP,\LL)$, or the announcement that~$\LL$ does not dominate $(\PP,\LL)$.
\end{minipage}} \vspace{0,2cm}

Before we proceed with our polynomial time algorithm for \textsc{Bounded
Dominating Set} on tolerance graphs, we first provide some necessary
notation and terminology.

\subsection{Notation and terminology\label%
{terminology-bounded-domination-subsec}}

For an arbitrary point $t=(t_{x},t_{y})\in \mathbb{R}^{2}$ we define two
(infinite) lines passing through $t$:

\begin{itemize}
\item the vertical line $\Gamma _{t}^{\text{vert}}=\{(t_{x},s)\in \mathbb{R}%
^{2}:s\in \mathbb{R}\}$, i.e.,~the line that is parallel to the $y$-axis, and

\item the diagonal line $\Gamma _{t}^{\text{diag}}=\{(s,s+(t_{y}-t_{x}))\in 
\mathbb{R}^{2}:s\in \mathbb{R}\}$, i.e.,~the line that is parallel to the
main diagonal $\{(s,s)\in \mathbb{R}^{2}:s\in \mathbb{R}\}$.
\end{itemize}

The lines $\Gamma _{t}^{\text{vert}}$ and $\Gamma _{t}^{\text{diag}}$ are
illustrated in~Figure~\ref{At-Bt-fig} (see also Figure~\ref{shadow-point-fig}%
). For every point $t=(t_{x},t_{y})\in \mathbb{R}^{2}$, each of the lines $%
\Gamma _{t}^{\text{vert}},\Gamma _{t}^{\text{diag}}$ separates $\mathbb{R}%
^{2}$ into two regions. With respect to the line $\Gamma _{t}^{\text{vert}}$
we define the regions $\mathbb{R}_{\text{left}}^{2}(\Gamma _{t}^{\text{vert}%
})=\{(x,y)\in \mathbb{R}^{2}:x\leq t_{x}\}$ and $\mathbb{R}_{\text{right}%
}^{2}(\Gamma _{t}^{\text{vert}})=\{(x,y)\in \mathbb{R}^{2}:x\geq t_{x}\}$ of
points to the left and to the right of $\Gamma _{t}^{\text{vert}}$,
respectively. Similarly, with respect to the line $\Gamma _{t}^{\text{diag}}$%
, we define the regions ${\mathbb{R}_{\text{left}}^{2}(\Gamma _{t}^{\text{%
diag}})=\{(x,y)\in \mathbb{R}^{2}:y-x\geq t_{y}-t_{x}\}}$ and ${\mathbb{R}_{%
\text{right}}^{2}(\Gamma _{t}^{\text{diag}})=\{(x,y)\in \mathbb{R}%
^{2}:y-x\leq t_{y}-t_{x}\}}$ of points to the left and to the right of $%
\Gamma _{t}^{\text{diag}}$, respectively.

Furthermore, for an arbitrary point $t=(t_{x},t_{y})\in \mathbb{R}^{2}$ we
define the region $A_{t}$ (resp.~$B_{t}$) that contains all points that are
both to the right (resp.~to the left) of $\Gamma _{t}^{\text{vert}}$ and to
the right (resp.~to the left) of $\Gamma _{t}^{\text{diag}}$. That is,%
\vspace{-0,2cm}%
\begin{eqnarray*}
A_{t} &=&\mathbb{R}_{\text{right}}^{2}(\Gamma _{t}^{\text{vert}})\cap 
\mathbb{R}_{\text{right}}^{2}(\Gamma _{t}^{\text{diag}}), \\
B_{t} &=&\mathbb{R}_{\text{left}}^{2}(\Gamma _{t}^{\text{vert}})\cap \mathbb{%
R}_{\text{left}}^{2}(\Gamma _{t}^{\text{diag}}).
\end{eqnarray*}%
An example of the regions $A_{t}$ and $B_{t}$ is given in Figure~\ref%
{At-Bt-fig}, where $A_{t}$ (resp.~$B_{t}$) is the \emph{shaded region} of $%
\mathbb{R}^{2}$ that is to the right (resp.~to the left) of the point $t$.
Consider an arbitrary horizontal line segment $L_{i}\in \mathcal{L}$. We
denote by $l_{i}$ and $r_{i}$ its left and its right endpoint, respectively;
note that possibly $l_{i}=r_{i}$. Denote by $\mathcal{A}=\{l_{i},r_{i}:1\leq
i\leq |\mathcal{L}|\}$ the set of all endpoints of all line segments of $%
\mathcal{L}$. Furthermore denote by $\mathcal{B}=\{\Gamma _{t}^{\text{diag}%
}\cap \Gamma _{t^{\prime }}^{\text{vert}}:t,t^{\prime }\in \mathcal{A}\}$
the set of all intersection points of the vertical and the diagonal lines
that pass from points of $\mathcal{A}$. Note that $\mathcal{A\subseteq B}$.

Given a horizontal shadow representation $(\mathcal{P},\mathcal{L}\mathcal{)}
$ we always assume that the points $p_{1},p_{2},\ldots ,p_{|\mathcal{P}|}$
are ordered increasingly with respect to their $x$-coordinates. Similarly we
assume that the horizontal line segments $L_{1},L_{2},\dots ,L_{|\mathcal{L}%
|}$ are ordered increasingly with respect to the $x$-coordinates of their
endpoint $r_{i}$. That is, if $i<j$ then $p_{i}\in \mathbb{R}_{\text{left}%
}^{2}(\Gamma _{p_{j}}^{\text{vert}})$ and $r_{i}\in \mathbb{R}_{\text{left}%
}^{2}(\Gamma _{r_{j}}^{\text{vert}})$. Notice that, without loss of
generality, we may assume that all points of $\mathcal{P}$ and all endpoints
of the horizontal line segments in $\mathcal{L}$ have different $x$%
-coordinates.

\begin{definition}
\label{left-right-crossing-pair-def} Let $L_{i},L_{i^{\prime}},\in \mathcal{L%
}$ and let $L_{j},L_{j^{\prime}}\in \mathcal{L}$, where possibly ${%
i^{\prime}=i}$ and possibly ${j^{\prime}=j}$. The pair $(j,j^{\prime })$ is
a \emph{left-crossing pair} if~${l_{j}\in S_{l_{j^{\prime}}}}$. Furthermore
the pair $(i,i^{\prime })$ is a \emph{right-crossing pair} if $r_{i^{\prime
}}\in S_{r_{i}}$. For every left-crossing pair $(j,j^{\prime })$ we define%
\begin{equation*}
\mathcal{L}_{j,j^{\prime }}^{\text{right}}=\{x\in \mathcal{P}\cup \mathcal{L}%
:x\subseteq A_{t}\text{,\ where }t=\Gamma _{l_{j}}^{\text{vert}}\cap \Gamma
_{l_{j^{\prime }}}^{\text{diag}}\}
\end{equation*}%
and for every right-crossing pair $(i,i^{\prime })$ we define%
\begin{equation*}
\mathcal{L}_{i,i^{\prime }}^{\text{left}}=\{x\in \mathcal{P}\cup \mathcal{L}%
:x\subseteq B_{t}\text{,\ where }t=\Gamma _{r_{i}}^{\text{vert}}\cap \Gamma
_{r_{i^{\prime }}}^{\text{diag}}\}.
\end{equation*}%
Finally, for every line segment $L_{q}\in \mathcal{L}$ we define%
\begin{equation*}
\mathcal{L}_{q}^{\text{right}}=\{x\in \mathcal{P}\cup \mathcal{L}:x\subseteq 
\mathbb{R}_{\text{right}}^{2}(\Gamma _{l_{q}}^{\text{diag}})\}.
\end{equation*}
\end{definition}

Examples of left-crossing and right-crossing pairs (cf.~Definition~\ref%
{left-right-crossing-pair-def}) are illustrated in Figure~\ref%
{crossing-pairs-fig}.

\begin{figure}[t]
\centering%
\subfigure[]{ \label{At-Bt-fig}
\includegraphics[scale=0.68]{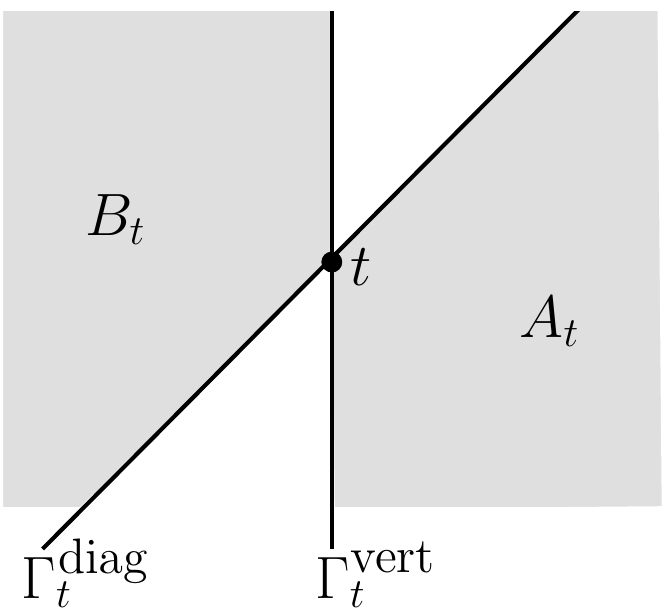}} \hspace{1,0cm} 
\subfigure[]{ \label{left-crossing-pair-fig}
\includegraphics[scale=0.68]{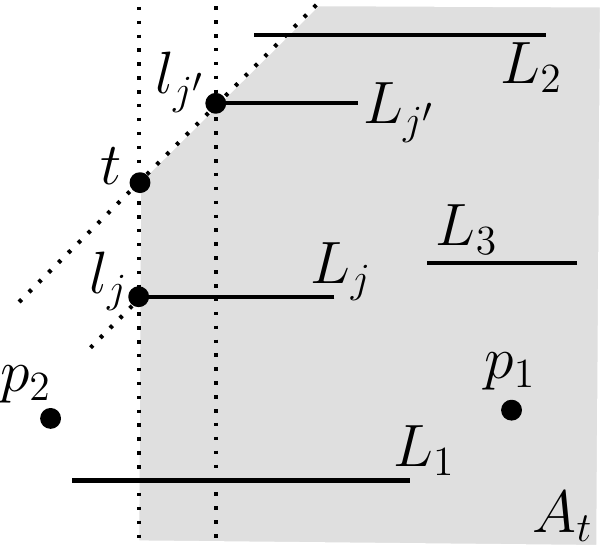}} \hspace{1,0cm} 
\subfigure[]{ \label{right-crossing-pair-fig}
\includegraphics[scale=0.68]{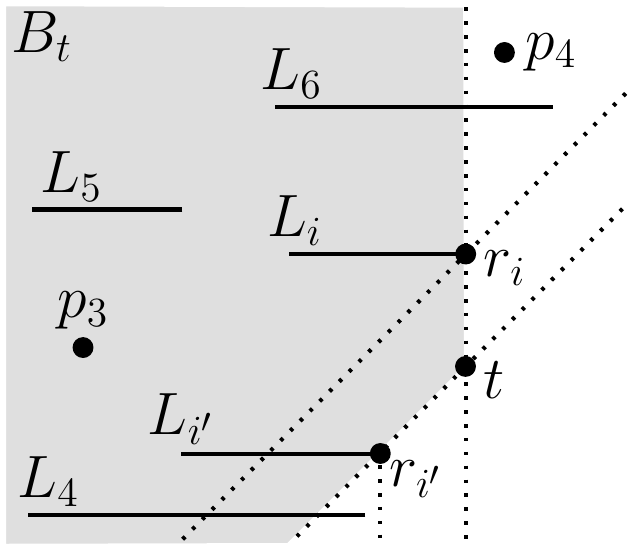}}
\caption{(a)~The regions $A_t, B_t$ and the lines $\Gamma_{t}^{\text{vert}},
\Gamma_{t}^{\text{diag}}$. (b) A left-crossing pair $(j,j^{\prime })$, where 
$L_{3}, p_{1} \in \mathcal{L}_{j,j^{\prime }}^{\text{right}}$ and $L_{1},
L_{2}, p_{2} \protect\notin \mathcal{L}_{j,j^{\prime }}^{\text{right}}$.
(c)~A~right-crossing pair $(i,i^{\prime })$, where $L_{5}, p_{3} \in 
\mathcal{L}_{i,i^{\prime }}^{\text{left}}$ and $L_{4}, L_{6}, p_{4} \protect%
\notin \mathcal{L}_{i,i^{\prime }}^{\text{left}}$.}
\label{crossing-pairs-fig}
\end{figure}

\begin{definition}
\label{start-end-pair-arbitrary-set-def}Let $S\subseteq \mathcal{P}\cup 
\mathcal{L}$ be an arbitrary set. Let $(i,i^{\prime })$ be a right-crossing
pair and $(j,j^{\prime })$ be a left-crossing pair. If $L_{i},L_{i^{\prime
}}\in S$ and $S\subseteq \mathcal{L}_{i,i^{\prime }}^{\text{left}}$, then $%
(i,i^{\prime })$ is the \emph{end-pair} of the set $S$. If $%
L_{j},L_{j^{\prime }}\in S$ and $S\subseteq \mathcal{L}_{j,j^{\prime }}^{%
\text{right}}$, then $(j,j^{\prime })$ is the \emph{start-pair} of the set $%
S $.
\end{definition}

\begin{definition}
\label{leftmost-segment-arbitrary-set-def}Let $S\subseteq \mathcal{P}\cup 
\mathcal{L}$ be an arbitrary set. The line segment $L_{q}\in S$ is the \emph{%
diagonally leftmost line segment} in $S$ if there exists a line segment $%
L_{j}\in \mathcal{L}\cap S$ such that $(j,q)$ is the start-pair of $S$.
\end{definition}

\begin{observation}
\label{unique-start-end-pair-L-obs}Every non-empty set $S\subseteq \mathcal{L%
}$ has a unique end-pair, a unique start-pair, and a unique diagonally
leftmost line segment.
\end{observation}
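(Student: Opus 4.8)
The plan is to reduce each of the three uniqueness claims to the fact that a suitable minimum or maximum, taken over the (horizontal) line segments of $S$, is attained at a unique segment. For a point $p\in\mathbb{R}^{2}$ write $\delta(p)$ for the difference ``$y$-coordinate minus $x$-coordinate'' of $p$; then, by Definition~\ref{shadows-def}, the relation $p\in S_{p'}$ is equivalent to saying that $p$ has $x$-coordinate at most that of $p'$ and $\delta(p)\le\delta(p')$. The preliminary step is to rewrite the constraints defining $\mathcal{L}_{j,j'}^{\text{right}}$ and $\mathcal{L}_{i,i'}^{\text{left}}$ in terms of endpoints. Fixing $t=\Gamma_{l_j}^{\text{vert}}\cap\Gamma_{l_{j'}}^{\text{diag}}$, we have that the $x$-coordinate of $t$ equals that of $l_j$ and $t_y-t_x=\delta(l_{j'})$; since a segment $L_k$ is horizontal, its $x$-coordinate is smallest and its value of $\delta$ is largest at the left endpoint $l_k$, so $L_k\subseteq A_t$ holds if and only if the $x$-coordinate of $l_k$ is at least that of $l_j$ and $\delta(l_k)\le\delta(l_{j'})$. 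Symmetrically, with $t=\Gamma_{r_i}^{\text{vert}}\cap\Gamma_{r_{i'}}^{\text{diag}}$, the $x$-coordinate and $\delta$ are largest and smallest respectively at the right endpoint $r_k$, so $L_k\subseteq B_t$ holds if and only if the $x$-coordinate of $r_k$ is at most that of $r_i$ and $\delta(r_k)\ge\delta(r_{i'})$.

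With these equivalences in hand, for the start-pair I would take $j$ to be the segment of $S$ whose left endpoint has the smallest $x$-coordinate and $j'$ the segment whose left endpoint has the largest value of $\delta$ (both exist since $S\neq\emptyset$; possibly $j=j'$). The extremality of $l_j$ and $l_{j'}$ gives directly that every $L_k\in S$ satisfies the two inequalities above, i.e.\ $S\subseteq\mathcal{L}_{j,j'}^{\text{right}}$, and also gives that the $x$-coordinate of $l_j$ is at most that of $l_{j'}$ together with $\delta(l_j)\le\delta(l_{j'})$, i.e.\ $l_j\in S_{l_{j'}}$, so $(j,j')$ is a left-crossing pair and hence a start-pair of $S$. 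Conversely, if $(j_1,j_1')$ is any start-pair, then the containment $S\subseteq\mathcal{L}_{j_1,j_1'}^{\text{right}}$ forces $l_{j_1}$ to minimize the $x$-coordinate over $S$ and $l_{j_1'}$ to maximize $\delta$ over $S$, so $(j_1,j_1')=(j,j')$ once these extrema are uniquely attained. The end-pair is handled by the mirror argument, taking $i$ to be the segment whose right endpoint has the largest $x$-coordinate and $i'$ the segment whose right endpoint has the smallest value of $\delta$. Finally, by Definition~\ref{leftmost-segment-arbitrary-set-def} the diagonally leftmost segment of $S$ is precisely the segment $L_{j'}$ appearing as the second coordinate of the start-pair, so its uniqueness follows immediately from that of the start-pair.

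The one step that requires care, and which I expect to be the main obstacle, is the unique attainment of these minima and maxima. Uniqueness of the extreme $x$-coordinates is immediate from the standing assumption that all endpoints of the segments in $\mathcal{L}$ have pairwise distinct $x$-coordinates. For the diagonal quantities one uses the explicit form of the horizontal shadow representation: since $\cot\phi_{k,1}=c_k-a_k$ and $\cot\phi_{k,2}=b_k-d_k$ by Definition~\ref{Tv}, a short computation shows $\delta(l_k)=\Delta-c_k$ and $\delta(r_k)=\Delta-b_k$. As the trapezoid endpoints $c_k$ (respectively $b_k$) are pairwise distinct by the general-position assumption, the values $\delta(l_k)$ (respectively $\delta(r_k)$) are pairwise distinct over $S$, which rules out ties and yields uniqueness of $j'$ and $i'$. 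Everything else is a direct translation of the region definitions $A_t,B_t$ under the horizontality of the segments.
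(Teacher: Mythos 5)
Your proof is correct. The paper states this as an Observation without any proof, and your argument is exactly the natural one that justifies it: since the segments are horizontal, containment in $A_t$ (resp.\ $B_t$) reduces to two inequalities on the left (resp.\ right) endpoints, so the start-pair is forced to be the segment minimizing the $x$-coordinate of $l_k$ paired with the segment maximizing $\delta(l_k)=\Delta-c_k$, and symmetrically for the end-pair; uniqueness of the diagonally leftmost segment then follows from Definition~\ref{leftmost-segment-arbitrary-set-def}. Your observation that distinctness of the diagonal values $\delta(l_k)$ and $\delta(r_k)$ must be derived from the general-position assumption on the trapezoid endpoints $c_k$ and $b_k$ (rather than from the stated assumption that the $x$-coordinates are distinct) is a genuine detail that the paper glosses over, and you handle it correctly.
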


\subsection{The algorithm\label{bounded-alg-subsec}}

In this section we present our algorithm for \textsc{Bounded Dominating Set}
on tolerance graphs, cf.~Algorithm~\ref{bounded-dominating-tolerance-alg}.
Given a horizontal shadow representation $(\mathcal{P},\mathcal{L})$ of a
tolerance graph $G$, we first add two dummy line segments $L_{0}$ and $L_{|%
\mathcal{L}|+1}$ (with endpoints $l_{0},r_{0}$ and $l_{|\mathcal{L}|+1},r_{|%
\mathcal{L}|+1}$, respectively) such that all elements of $\mathcal{P\cup L}$
are contained in $A_{r_{0}}$ and in $B_{l_{|\mathcal{L}|+1}}$. Let $\mathcal{%
L}^{\prime }=\mathcal{L}\cup \{L_{0},L_{|\mathcal{L}|+1}\}$. Note that $(%
\mathcal{P},\mathcal{L}^{\prime })$ is a horizontal shadow representation of
some tolerance graph $G^{\prime }$, where the bounded vertices $%
V_{B}^{\prime }$ of $G^{\prime }$ correspond to the line segments of $%
\mathcal{L}^{\prime }$ and the unbounded vertices $V_{U}^{\prime }$ of $%
G^{\prime }$ correspond to the points of $\mathcal{P}$. Furthermore note
that $V_{B}^{\prime }=V_{B}\cup \{v_{0},v_{|\mathcal{L}|+1}\}$ and $%
V_{U}^{\prime }=V_{U}$, where $v_{0}$ and $v_{|\mathcal{L}|+1}$ are the
(isolated) bounded vertices of $G^{\prime }$ that correspond to the line
segments $L_{0}$ and $L_{|\mathcal{L}|+1}$, respectively. Finally observe
now that the set $V_{B}^{\prime }$ dominates the augmented graph $G^{\prime
} $ if and only if the set $V_{B}$ dominates the graph $G$; moreover, a set $%
S\subseteq V_{B}$ dominates $G$ if and only if $S\cup \{v_{0},v_{|\mathcal{L}%
|+1}\}$ dominates $G^{\prime }$.

For simplicity of the presentation, we refer in the following to the
augmented set $\mathcal{L}^{\prime }$ of horizontal line segments by $%
\mathcal{L}$. In the remainder of this section we will write $\mathcal{L}%
=\{L_{1},L_{2},\ldots ,L_{|\mathcal{L}|}\}$ with the understanding that the
first and the last line segments $L_{1}$ and $L_{|\mathcal{L}|}$ of $%
\mathcal{L}$ are dummy. Furthermore, we will refer to the augmented
tolerance graph $G^{\prime }$ by $G$.

For every pair of points $(a,b)\in \mathcal{A}\times \mathcal{B}$ such that $%
b\in \mathbb{R}_{\text{right}}^{2}(\Gamma _{a}^{\text{diag}})$, define $%
X(a,b)$ to be the set of all points of $\mathcal{P}$ and all line segments
of $\mathcal{L}$ that are contained in the region $B_{b}\setminus \Gamma
_{b}^{\text{vert}}$ and to the right of the line $\Gamma _{a}^{\text{diag}}$%
, cf.~Figure~\ref{X-a-b-fig}. That is, 
\begin{eqnarray}
R(a,b) &=&\left( B_{b}\setminus \Gamma _{b}^{\text{vert}}\right) \cap 
\mathbb{R}_{\text{right}}^{2}(\Gamma _{a}^{\text{diag}}),
\label{region-R(a,b)-def-eq} \\
X(a,b) &=&\{x\in \mathcal{P}\cup \mathcal{L}:x\subseteq R(a,b)\}.
\label{X(a,b)-def-eq}
\end{eqnarray}

\begin{figure}[h]
\centering\includegraphics[scale=0.68]{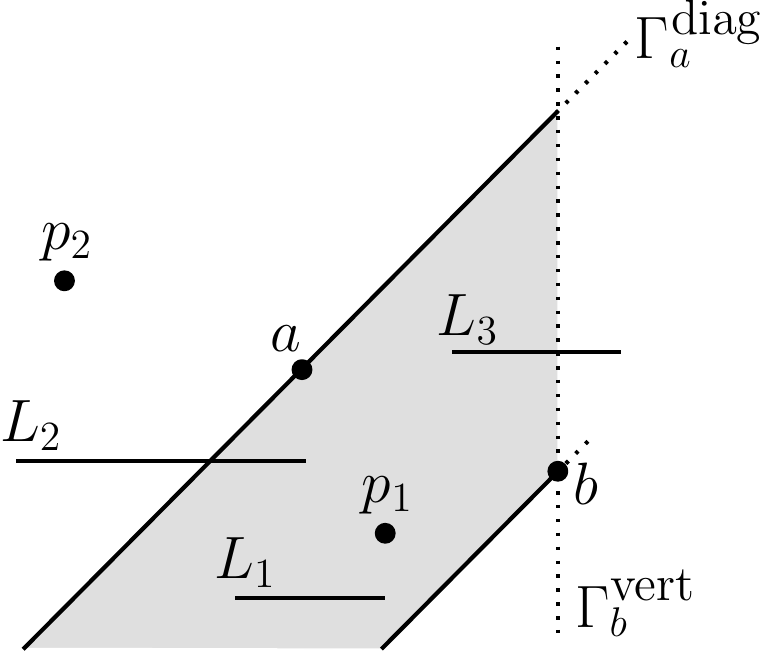}
\caption{The shaded region contains the points of $R(a,b)\subseteq \mathbb{R}%
^{2}$, where $(a,b)\in \mathcal{A}\times \mathcal{B}$. The set $X(a,b)$
contains all elements of $\mathcal{P}\cup \mathcal{L}$ that lie within $%
R(a,b)$. In this example, $L_{1},p_{1}\in X(a,b)$ and $L_{2},L_{3},p_{2}%
\protect\notin X(a,b)$.}
\label{X-a-b-fig}
\end{figure}

Now we present the main definition of this section, namely the quantity $%
BD_{(\mathcal{P},\mathcal{L})}(a,b,q,i,i^{\prime })$ for the \textsc{Bounded
Dominating Set} problem on tolerance graphs.

\begin{definition}
\label{BD-def}Let $(a,b)\in \mathcal{A}\times \mathcal{B}$ be a pair of
points such that $b\in \mathbb{R}_{\text{right}}^{2}(\Gamma _{a}^{\text{diag}%
})$. Let $(i,i^{\prime })$ be a right-crossing pair and $L_{q}$ be a line
segment such that $L_{q}\in \mathcal{L}_{i,i^{\prime }}^{\text{left}}$ and $%
L_{i},L_{i^{\prime }}\in \mathcal{\mathcal{L}}_{q}^{\text{right}}$.
Furthermore let $b\in \mathbb{R}_{\text{left}}^{2}(\Gamma _{r_{i}}^{\text{%
vert}})$. Then $BD_{(\mathcal{P},\mathcal{L})}(a,b,q,i,i^{\prime })$ is a
dominating set $Z\subseteq \mathcal{L}$ of $X(a,b)$ with the \emph{smallest}
size, such that:

\begin{itemize}
\item $(i,i^{\prime })$ is the end-pair of $Z$ and

\item $L_{q}$ is the diagonally leftmost line segment of $Z$.
\end{itemize}

If such a dominating set $Z\subseteq \mathcal{L}$ of $X(a,b)$ does not
exist, we define $BD_{(\mathcal{P},\mathcal{L})}(a,b,q,i,i^{\prime })=\bot $
and $|BD_{(\mathcal{P},\mathcal{L})}(a,b,q,i,i^{\prime })|=\infty $.
\end{definition}

Note that always $L_{q},L_{i},L_{i^{\prime }}\in BD_{(\mathcal{P},\mathcal{L}%
)}(a,b,q,i,i^{\prime })$. Furthermore some of the line segments $%
L_{q},L_{i},L_{i^{\prime }}$ may coincide, i.e.,~the set $%
\{L_{q},L_{i},L_{i^{\prime }}\}$ may have one, two, or three distinct
elements. However, since $b\in \mathbb{R}_{\text{left}}^{2}(\Gamma _{r_{i}}^{%
\text{vert}})$ in Definition~\ref{BD-def}, it follows that $L_{i}\nsubseteq
B_{b}\setminus \Gamma _{b}^{\text{vert}}$, and thus $L_{i}\notin X(a,b)$.
For simplicity of the presentation we may refer to the set $BD_{(\mathcal{P},%
\mathcal{L})}(a,b,q,i,i^{\prime })$ as $BD_{G}(a,b,q,i,i^{\prime })$, where $%
(\mathcal{P},\mathcal{L})$ is the horizontal shadow representation of the
tolerance graph $G$, or just as $BD(a,b,q,i,i^{\prime })$ whenever the
horizontal shadow representation $(\mathcal{P},\mathcal{L})$ is clear from
the context.

\begin{observation}
\label{obs:botcs}$BD(a,b,q,i,i^{\prime })\neq \bot $ if and only if $%
\mathcal{L\cap \mathcal{L}}_{q}^{\text{right}}\mathcal{\cap L}_{i,i^{\prime
}}^{\text{left}}$ is a dominating set of $X(a,b)$.
\end{observation}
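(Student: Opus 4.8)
The plan is to show that $W := \mathcal{L} \cap \mathcal{L}_{q}^{\text{right}} \cap \mathcal{L}_{i,i^{\prime}}^{\text{left}}$ is simultaneously the \emph{largest} set eligible to play the role of $Z$ in Definition~\ref{BD-def} and is itself an \emph{eligible} such set; since dominating $X(a,b)$ is monotone under taking supersets within $\mathcal{L}$, the existence of any eligible dominating set $Z$ is then equivalent to $W$ being one. Concretely, I would establish two inclusions: (i)~every eligible $Z$ satisfies $Z \subseteq W$, and (ii)~$W$ is eligible, i.e.~$(i,i^{\prime})$ is the end-pair of $W$ and $L_q$ is the diagonally leftmost line segment of $W$.

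For inclusion (i), suppose $Z \subseteq \mathcal{L}$ has end-pair $(i,i^{\prime})$ and diagonally leftmost segment $L_q$. The end-pair condition (Definition~\ref{start-end-pair-arbitrary-set-def}) gives immediately $Z \subseteq \mathcal{L}_{i,i^{\prime}}^{\text{left}}$. For the diagonal constraint, the diagonally leftmost condition (Definition~\ref{leftmost-segment-arbitrary-set-def}) supplies a segment $L_j \in \mathcal{L} \cap Z$ with $(j,q)$ a start-pair of $Z$, whence $Z \subseteq \mathcal{L}_{j,q}^{\text{right}}$. Writing $t = \Gamma_{l_j}^{\text{vert}} \cap \Gamma_{l_q}^{\text{diag}}$, the point $t$ lies on $\Gamma_{l_q}^{\text{diag}}$, so $A_t \subseteq \mathbb{R}^2_{\text{right}}(\Gamma_{l_q}^{\text{diag}})$ and hence $\mathcal{L}_{j,q}^{\text{right}} \subseteq \mathcal{L}_{q}^{\text{right}}$; this yields $Z \subseteq \mathcal{L}_{q}^{\text{right}}$. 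Combining, $Z \subseteq W$.

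For inclusion (ii), I first check the three designated segments lie in $W$. By hypothesis $L_q \in \mathcal{L}_{i,i^{\prime}}^{\text{left}}$ and $L_i, L_{i^{\prime}} \in \mathcal{L}_{q}^{\text{right}}$, and $L_q \in \mathcal{L}_{q}^{\text{right}}$ holds trivially because every point of a horizontal segment lies weakly to the right of the diagonal through its left endpoint. It remains to verify $L_i, L_{i^{\prime}} \in \mathcal{L}_{i,i^{\prime}}^{\text{left}}$: with $t = \Gamma_{r_i}^{\text{vert}} \cap \Gamma_{r_{i^{\prime}}}^{\text{diag}}$, the vertical half-plane of $B_t$ contains both segments since $r_{i^{\prime}}$ has $x$-coordinate at most that of $r_i$, and the diagonal half-plane contains them because the minimum of $y-x$ along a horizontal segment is attained at its right endpoint, so the required inequalities reduce to $r_{i,y}-r_{i,x} \geq r_{i^{\prime},y}-r_{i^{\prime},x}$ and to an equality, both guaranteed by $r_{i^{\prime}} \in S_{r_i}$ (the defining relation of the right-crossing pair $(i,i^{\prime})$). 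Thus $L_q, L_i, L_{i^{\prime}} \in W$ and $W \neq \emptyset$. Now $W \subseteq \mathcal{L}_{i,i^{\prime}}^{\text{left}}$ by construction, so $(i,i^{\prime})$ is the end-pair of $W$. Finally, $W \subseteq \mathcal{L}_{q}^{\text{right}} = \mathbb{R}^2_{\text{right}}(\Gamma_{l_q}^{\text{diag}})$ means every element of $W$ lies weakly to the right of $\Gamma_{l_q}^{\text{diag}}$, while $l_q$ lies on this line; hence $l_q$ is the unique left endpoint of $W$ maximizing $y-x$, so the (unique, by Observation~\ref{unique-start-end-pair-L-obs}) diagonally leftmost segment of $W$ is exactly $L_q$.

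With (i) and (ii) the equivalence is immediate. If $W$ dominates $X(a,b)$ then by (ii) $W$ is an eligible dominating set, so $BD(a,b,q,i,i^{\prime}) \neq \bot$. Conversely, if $BD(a,b,q,i,i^{\prime}) \neq \bot$ there is an eligible dominating set $Z$, and by (i) $Z \subseteq W \subseteq \mathcal{L}$; since every vertex dominated by $Z$ is a fortiori dominated by the superset $W$, the set $W$ dominates $X(a,b)$ as well. I expect the main obstacle to be the geometric bookkeeping in inclusion~(ii): carefully translating the right-crossing relation $r_{i^{\prime}} \in S_{r_i}$ and the shapes of $A_t, B_t$ into the containments $L_i, L_{i^{\prime}} \in \mathcal{L}_{i,i^{\prime}}^{\text{left}}$, and confirming, via Observation~\ref{unique-start-end-pair-L-obs} together with the distinct-coordinate assumption, that $L_q$---and no other segment of $W$---is the diagonally leftmost one.
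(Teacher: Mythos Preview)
Your proof is correct. The paper states this as an observation without proof, treating it as immediate from Definition~\ref{BD-def}; you have supplied the full argument that the paper omits. Your two-part strategy---showing that every eligible $Z$ is contained in $W=\mathcal{L}\cap\mathcal{L}_{q}^{\text{right}}\cap\mathcal{L}_{i,i^{\prime}}^{\text{left}}$, and that $W$ itself is eligible---is exactly the right way to unpack the definition, and your geometric verifications (in particular that $L_i,L_{i^{\prime}}\in\mathcal{L}_{i,i^{\prime}}^{\text{left}}$ via the right-crossing relation, and that $L_q$ is the diagonally leftmost segment of $W$ via maximality of $l_{q,y}-l_{q,x}$) are sound.
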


\begin{observation}
\label{DB-init-feasible-obs}$BD(a,b,q,i,i^{\prime
})=\{L_{q},L_{i},L_{i^{\prime }}\}$ if and only if $\{L_{q},L_{i},L_{i^{%
\prime }}\}$ dominates $X(a,b)$.
\end{observation}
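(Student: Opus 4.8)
The plan is to prove Observation~\ref{DB-init-feasible-obs} by unwinding Definition~\ref{BD-def} in the degenerate case where the dominating set $Z$ consists of nothing beyond the three forced line segments $L_q, L_i, L_{i^\prime}$. Recall from the remark following Definition~\ref{BD-def} that we always have $L_q, L_i, L_{i^\prime} \in BD(a,b,q,i,i^\prime)$ whenever this quantity is not $\bot$; moreover, $(i,i^\prime)$ is required to be the end-pair of $Z$ and $L_q$ the diagonally leftmost line segment of $Z$. The key observation is that the set $\{L_q, L_i, L_{i^\prime}\}$ automatically satisfies both of these structural constraints: by Definition~\ref{start-end-pair-arbitrary-set-def}, $(i,i^\prime)$ being the end-pair of $S$ requires $L_i, L_{i^\prime} \in S$ and $S \subseteq \mathcal{L}_{i,i^\prime}^{\text{left}}$, while $L_q$ being the diagonally leftmost segment requires (via Definition~\ref{leftmost-segment-arbitrary-set-def}) a start-pair $(j,q)$ with $L_j \in S$. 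These membership conditions hold for $S = \{L_q, L_i, L_{i^\prime}\}$ precisely because of the hypotheses $L_q \in \mathcal{L}_{i,i^\prime}^{\text{left}}$ and $L_i, L_{i^\prime} \in \mathcal{L}_q^{\text{right}}$ imposed in Definition~\ref{BD-def} itself.

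For the forward direction, I would argue: if $BD(a,b,q,i,i^\prime) = \{L_q, L_i, L_{i^\prime}\}$, then by Definition~\ref{BD-def} this set is a dominating set $Z \subseteq \mathcal{L}$ of $X(a,b)$, so in particular $\{L_q, L_i, L_{i^\prime}\}$ dominates $X(a,b)$. This direction is essentially immediate, since domination of $X(a,b)$ is built into the definition of $BD$.

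For the reverse direction, suppose $\{L_q, L_i, L_{i^\prime}\}$ dominates $X(a,b)$. I would first verify that $\{L_q, L_i, L_{i^\prime}\}$ is a \emph{feasible} candidate for $BD(a,b,q,i,i^\prime)$: it is a subset of $\mathcal{L}$ that dominates $X(a,b)$ by assumption, it has $(i,i^\prime)$ as its end-pair and $L_q$ as its diagonally leftmost line segment by the structural check outlined above, and these end-pair/start-pair witnesses are uniquely determined by Observation~\ref{unique-start-end-pair-L-obs}. Since $BD(a,b,q,i,i^\prime)$ is defined as the \emph{smallest} such dominating set and since $\{L_q, L_i, L_{i^\prime}\}$ is contained in every feasible candidate (as $L_q, L_i, L_{i^\prime}$ are always forced members), no feasible set can be strictly smaller than $\{L_q, L_i, L_{i^\prime}\}$. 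Hence $BD(a,b,q,i,i^\prime)$ must equal exactly $\{L_q, L_i, L_{i^\prime}\}$.

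The main obstacle I anticipate is the careful bookkeeping around the end-pair and start-pair conditions: one must confirm that the three forced segments genuinely witness these conditions for the specific set $\{L_q, L_i, L_{i^\prime}\}$, rather than merely for the full optimal set $Z$. This requires checking that the containment hypotheses $L_q \in \mathcal{L}_{i,i^\prime}^{\text{left}}$ and $L_i, L_{i^\prime} \in \mathcal{L}_q^{\text{right}}$ from Definition~\ref{BD-def} translate correctly into the region-containment requirements $S \subseteq \mathcal{L}_{i,i^\prime}^{\text{left}}$ and $S \subseteq \mathcal{L}_q^{\text{right}}$ of Definition~\ref{start-end-pair-arbitrary-set-def} when $S$ has only these three elements. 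Everything else reduces to the minimality clause in Definition~\ref{BD-def} together with the fact that the three segments are unavoidable, which is routine.
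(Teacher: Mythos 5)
Your proof is correct and matches the paper's treatment: the paper states this as an Observation with no proof, treating it as immediate from Definition~\ref{BD-def}, which is exactly what you do. Your careful verification that $\{L_{q},L_{i},L_{i^{\prime }}\}$ itself satisfies the end-pair and diagonally-leftmost conditions (via the hypotheses $L_{q}\in \mathcal{L}_{i,i^{\prime }}^{\text{left}}$ and $L_{i},L_{i^{\prime }}\in \mathcal{L}_{q}^{\text{right}}$ built into Definition~\ref{BD-def}), combined with the fact that these three segments are forced members of any feasible set, is precisely the intended argument.
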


\begin{observation}
\label{R-a-b-obs}If $R(a,b)\subseteq S_{i}$ then $BD(a,b,q,i,i^{\prime
})=\{L_{q},L_{i},L_{i^{\prime }}\}$.
\end{observation}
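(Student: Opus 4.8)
The plan is to reduce the statement to a pure domination question and then settle it with the adjacency characterizations already established. First I would invoke Observation~\ref{DB-init-feasible-obs}, which tells us that $BD(a,b,q,i,i^{\prime})=\{L_{q},L_{i},L_{i^{\prime}}\}$ holds precisely when $\{L_{q},L_{i},L_{i^{\prime}}\}$ dominates $X(a,b)$. Since by the remark following Definition~\ref{BD-def} the three line segments $L_{q},L_{i},L_{i^{\prime}}$ always belong to any feasible $Z$, the set $\{L_{q},L_{i},L_{i^{\prime}}\}$ is the smallest conceivable candidate, so establishing that it dominates $X(a,b)$ immediately yields the claimed equality. Thus the whole proof reduces to proving domination, and for that it suffices to show the even stronger statement that $L_{i}$ \emph{alone} dominates $X(a,b)$, because enlarging a dominating set cannot destroy domination.

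Next I would split $X(a,b)$ into its two types of elements and use the hypothesis $R(a,b)\subseteq S_{i}$ together with the definition $X(a,b)=\{x\in\mathcal{P}\cup\mathcal{L}:x\subseteq R(a,b)\}$. For every point $p_{v}\in\mathcal{P}\cap X(a,b)$ we have $p_{v}\in R(a,b)\subseteq S_{i}$, so Lemma~\ref{shadow-correctness-lem-2} gives that the unbounded vertex $v$ is adjacent to the bounded vertex corresponding to $L_{i}$; hence $p_{v}$ is dominated by $L_{i}$. For every line segment $L_{w}\in\mathcal{L}\cap X(a,b)$ we have $L_{w}\subseteq R(a,b)\subseteq S_{i}$, so $L_{w}\cap S_{i}=L_{w}\neq\emptyset$, and Lemma~\ref{shadow-correctness-lem-1} then gives that $w$ is adjacent to the vertex of $L_{i}$, so $L_{w}$ too is dominated by $L_{i}$. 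As every element of $X(a,b)$ falls into one of these two cases, $L_{i}$ dominates $X(a,b)$, and therefore so does $\{L_{q},L_{i},L_{i^{\prime}}\}$, completing the argument via Observation~\ref{DB-init-feasible-obs}.

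This proof is essentially routine, and I do not expect a genuine obstacle; the only point requiring a little care is making sure the elements we must dominate are really adjacent to $L_{i}$ rather than merely coincident with it. Here the condition $b\in\mathbb{R}^{2}_{\text{left}}(\Gamma_{r_{i}}^{\text{vert}})$ from Definition~\ref{BD-def} guarantees $L_{i}\notin X(a,b)$, so no element of $X(a,b)$ equals $L_{i}$ and the adjacency (rather than self-membership) argument applies uniformly. The conceptual content worth stating explicitly is simply that full containment of an element inside the shadow $S_{i}$ forces adjacency to $L_{i}$, which is exactly what Lemmas~\ref{shadow-correctness-lem-1} and~\ref{shadow-correctness-lem-2} provide.
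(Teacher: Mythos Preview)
Your proposal is correct. The paper states this as an ``observation'' without any explicit proof, treating it as an immediate consequence of Observation~\ref{DB-init-feasible-obs} together with the adjacency Lemmas~\ref{shadow-correctness-lem-1} and~\ref{shadow-correctness-lem-2}; your write-up spells out precisely that reasoning, so it matches the intended argument.

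One minor remark: your closing paragraph about the case $L_{i}\in X(a,b)$ is unnecessary caution. Even if some $L_{w}\in X(a,b)$ happened to coincide with $L_{i}$, it would already lie in the candidate set $\{L_{q},L_{i},L_{i^{\prime}}\}$ and hence be trivially dominated; the exclusion $L_{i}\notin X(a,b)$ (noted after Definition~\ref{BD-def}) is not needed for domination to go through. This does not affect correctness, only economy.
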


Due to Observations~\ref{obs:botcs}-\ref{R-a-b-obs}, without loss of
generality we assume below (in Lemmas~\ref{bounded-dom-correctness-lem-0}-%
\ref{bounded-dom-correctness-lem-2}) that $BD(a,b,q,i,i^{\prime })\neq \bot $
and that $BD(a,b,q,i,i^{\prime })\neq \{L_{q},L_{i},L_{i^{\prime }}\}$, and
thus also $R(a,b)\nsubseteq S_{i}$ (cf.~Observation~\ref{R-a-b-obs}). We
provide our recursive computations for $BD(a,b,q,i,i^{\prime })$ in Lemmas~%
\ref{bounded-dom-correctness-lem-0},~\ref{bounded-dom-correctness-lem-1},
and~\ref{bounded-dom-correctness-lem-2}. In Lemma~\ref%
{bounded-dom-correctness-lem-0} we consider the case where $b\in S_{l_{i}}$
and in Lemmas~\ref{bounded-dom-correctness-lem-1} and~\ref%
{bounded-dom-correctness-lem-2} we consider the case where~$b\notin
S_{l_{i}} $.

\begin{lemma}
\label{bounded-dom-correctness-lem-0}Suppose that $BD(a,b,q,i,i^{\prime
})\neq \bot $ and that $BD(a,b,q,i,i^{\prime })\neq
\{L_{q},L_{i},L_{i^{\prime }}\}$, where $R(a,b)\nsubseteq S_{i}$. If $b\in
S_{l_{i}}$ then%
\begin{equation}
BD(a,b,q,i,i^{\prime })=BD(a,b^{\ast },q,i,i^{\prime }),
\label{recursion-eq-0}
\end{equation}%
where $b^{\ast }=\Gamma _{b}^{\text{vert}}\cap \Gamma _{l_{i}}^{\text{diag}}$%
.
\end{lemma}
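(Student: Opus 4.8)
The plan is to reduce both sides to optimisation problems over the \emph{same} family of sets. Observe that $BD(a,b,q,i,i')$ and $BD(a,b^{*},q,i,i')$ are both minimum-size sets $Z\subseteq\mathcal{L}$ subject to the identical structural requirements that $(i,i')$ be the end-pair of $Z$ and that $L_{q}$ be the diagonally leftmost segment of $Z$. These requirements (Definitions~\ref{start-end-pair-arbitrary-set-def} and~\ref{leftmost-segment-arbitrary-set-def}) constrain $Z$ alone, and in particular force $L_{i}\in Z$. The only difference between the two quantities is that one asks $Z$ to dominate $X(a,b)$ and the other to dominate $X(a,b^{*})$. Hence it suffices to prove that, for every $Z$ with $L_{i}\in Z$, the set $Z$ dominates $X(a,b)$ if and only if it dominates $X(a,b^{*})$; the two feasible families then coincide, so their minimum-size members agree (under any fixed tie-breaking rule) and~\eqref{recursion-eq-0} follows.

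The geometric core is the claim $R(a,b)\setminus R(a,b^{*})\subseteq S_{i}$. Since $b^{*}=\Gamma_{b}^{\text{vert}}\cap\Gamma_{l_{i}}^{\text{diag}}$, the point $b^{*}$ shares the $x$-coordinate of $b$ and lies on $\Gamma_{l_{i}}^{\text{diag}}$; the hypothesis $b\in S_{l_{i}}$ gives $b_{x}\leq (l_{i})_{x}$ together with $b\in\mathbb{R}^{2}_{\text{right}}(\Gamma_{l_{i}}^{\text{diag}})$, so $b^{*}$ sits weakly above $b$ on their common vertical line. Consequently $R(a,b^{*})\subseteq R(a,b)$, and the difference $R(a,b)\setminus R(a,b^{*})$ consists precisely of the points strictly to the left of $\Gamma_{b}^{\text{vert}}$ whose coordinate difference $y-x$ lies between the offset of $\Gamma_{b}^{\text{diag}}$ and that of $\Gamma_{l_{i}}^{\text{diag}}$ (the latter excluded). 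I would then use the explicit shape of the shadow of a horizontal segment, namely that $S_{i}$ is the set of points lying simultaneously below the horizontal line containing $L_{i}$, in $\mathbb{R}^{2}_{\text{left}}(\Gamma_{r_{i}}^{\text{vert}})$, and in $\mathbb{R}^{2}_{\text{right}}(\Gamma_{l_{i}}^{\text{diag}})$, and check these three half-planes for every point of the difference: the third holds since $y-x<(l_{i})_{y}-(l_{i})_{x}$; the second holds since the $x$-coordinate is $<b_{x}\leq (l_{i})_{x}\leq (r_{i})_{x}$; and the first then follows by combining $x<(l_{i})_{x}$ with $y-x<(l_{i})_{y}-(l_{i})_{x}$. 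This containment, which uses the full strength of $b\in S_{l_{i}}$, is the step I expect to be the main obstacle.

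Granting the containment, the domination equivalence is short. The inclusion $R(a,b^{*})\subseteq R(a,b)$ yields $X(a,b^{*})\subseteq X(a,b)$, so ``dominates $X(a,b)$'' trivially implies ``dominates $X(a,b^{*})$''. For the converse, let $Z$ dominate $X(a,b^{*})$ and let $e\in X(a,b)\setminus X(a,b^{*})$ be arbitrary. As $e\subseteq R(a,b)$ but $e\not\subseteq R(a,b^{*})$, the element $e$ meets $R(a,b)\setminus R(a,b^{*})\subseteq S_{i}$. If $e$ is a point $p\in\mathcal{P}$, then $p\in S_{i}$ makes $p$ adjacent to $L_{i}$ by Lemma~\ref{shadow-correctness-lem-2}; if $e$ is a segment $L_{j}$, then $L_{j}\cap S_{i}\neq\emptyset$ makes $L_{j}$ adjacent to $L_{i}$ by Lemma~\ref{shadow-correctness-lem-1}. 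In either case $L_{i}\in Z$ dominates $e$, so $Z$ dominates all of $X(a,b)$, as required.

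It remains to confirm that the right-hand side is well defined, i.e.\ that $(a,b^{*})$ meets the hypotheses of Definition~\ref{BD-def}. The point $b^{*}$ lies in $\mathcal{B}$, being the intersection of $\Gamma_{l_{i}}^{\text{diag}}$ (with $l_{i}\in\mathcal{A}$) and the vertical line through $b$, which passes through a point of $\mathcal{A}$; and $b^{*}_{x}=b_{x}\leq (r_{i})_{x}$ gives $b^{*}\in\mathbb{R}^{2}_{\text{left}}(\Gamma_{r_{i}}^{\text{vert}})$. For the slab to be non-degenerate we need $b^{*}\in\mathbb{R}^{2}_{\text{right}}(\Gamma_{a}^{\text{diag}})$: were this to fail, $R(a,b^{*})$ would be empty, whence $R(a,b)=R(a,b)\setminus R(a,b^{*})\subseteq S_{i}$, contradicting the standing assumption $R(a,b)\not\subseteq S_{i}$ (cf.~Observation~\ref{R-a-b-obs}). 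This excludes the degenerate case and completes the identification of the two feasible families, establishing~\eqref{recursion-eq-0}.
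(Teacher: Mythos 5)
Your proposal is correct and follows essentially the same route as the paper's proof: establish $R(a,b^{*})\subseteq R(a,b)$, show every element of $X(a,b)\setminus X(a,b^{*})$ meets $S_{i}$ and is therefore dominated by $L_{i}$ (which is forced into every candidate set by the end-pair condition), and conclude that the two feasibility families coincide. You merely spell out the half-plane description of $S_{i}$ and handle the degenerate case $R(a,b^{*})=\emptyset$ via the containment claim rather than via the paper's observation that $a\notin S_{l_{i}}$, which are cosmetic differences.
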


\begin{proof}
Define the point $b^{\ast }=\Gamma _{b}^{\text{vert}}\cap \Gamma _{l_{i}}^{%
\text{diag}}$ of the plane. If $a\in S_{l_{i}}$ then $R(a,b)\subseteq S_{i}$%
, which is a contradiction. Thus $a\notin S_{l_{i}}$, and therefore $%
R(a,b^{\ast })\subseteq R(a,b)$. Consider now an element $x\in
X(a,b)\setminus X(a,b^{\ast })$. Then $x\cap S_{i}\neq \emptyset $, and thus 
$x$ is dominated by the line segment $L_{i}$. Therefore, for every set $Z$
of line segments such that $L_{i}\in Z$, we have that $Z$ dominates the set $%
X(a,b)$ if and only if $Z$ dominates the set $X(a,b^{\ast })$. Therefore $%
BD(a,b,q,i,i^{\prime })=BD(a,b^{\ast },q,i,i^{\prime })$.
\end{proof}

\medskip

Due to Lemma~\ref{bounded-dom-correctness-lem-0}, without loss of generality
we may assume in the following (in Lemmas~\ref%
{bounded-dom-correctness-lem-1-first-direction}-\ref%
{bounded-dom-correctness-lem-2}) that $b\notin S_{l_{i}}$. In order to
provide our second recursive computation for $BD(a,b,q,i,i^{\prime })$ in
Lemma~\ref{bounded-dom-correctness-lem-1} (cf.~Eq.~(\ref{recursion-eq-1})),
we first prove in the next lemma that the set at the right hand side of Eq.~(%
\ref{recursion-eq-1}) is indeed a dominating set of $X(a,b)$, in which $%
L_{q} $ is the diagonally leftmost line segment and $(i,i^{\prime })$ is the
end-pair.

\begin{lemma}
\label{bounded-dom-correctness-lem-1-first-direction}Suppose that $%
BD(a,b,q,i,i^{\prime })\neq \bot $ and that $BD(a,b,q,i,i^{\prime })\neq
\{L_{q},L_{i},L_{i^{\prime }}\}$, where $R(a,b)\nsubseteq S_{i}$ and $%
b\notin S_{l_{i}}$. Let $c\in \mathbb{R}^{2}$ and $L_{q^{\prime
}},L_{j},L_{j^{\prime }}\in \mathcal{L}$ such that:

\begin{enumerate}
\item $L_{q^{\prime }}\in \left( \mathcal{\mathcal{L}}_{q}^{\text{right}%
}\cap \mathcal{L}_{i,i^{\prime }}^{\text{left}}\right) \setminus \{L_{i}\}$,

\item $(j,j^{\prime })$ is a right-crossing pair of $\left( \mathcal{%
\mathcal{L}}_{q}^{\text{right}}\cap \mathcal{L}_{i,i^{\prime }}^{\text{left}%
}\right) \setminus \{L_{i}\}$, where $j^{\prime }=i^{\prime }$ whenever $%
i\neq i^{\prime }$,

\item $L_{q^{\prime }}\in \mathcal{L}_{j,j^{\prime }}^{\text{left}}$ and $%
L_{j},L_{j^{\prime }}\in \mathcal{\mathcal{L}}_{q^{\prime }}^{\text{right}}$,

\item $c=\Gamma _{r_{j}}^{\text{vert}}\cap \Gamma _{b}^{\text{diag}}$ if $%
r_{j}\in \mathbb{R}_{\text{left}}^{2}(\Gamma _{b}^{\text{vert}})$, and $c=b$
otherwise, and

\item the set $X(a,b)\setminus X(a,c)$ is dominated by $\{L_{j},L_{j^{%
\prime}}\}$.
\end{enumerate}

If $BD(a,c,q^{\prime },j,j^{\prime })\neq \bot $ then $\{L_{q},L_{i}\}\cup
BD(a,c,q^{\prime },j,j^{\prime })$ is a dominating set of $X(a,b)$, in which 
$L_{q}$ is the diagonally leftmost line segment and $(i,i^{\prime })$ is the
end-pair.
\end{lemma}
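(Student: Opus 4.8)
The plan is to set $Z_0 = BD(a,c,q',j,j')$ and $Z = \{L_q,L_i\}\cup Z_0$, and to verify separately the three assertions: that $Z$ dominates $X(a,b)$, that $(i,i')$ is the end-pair of $Z$, and that $L_q$ is the diagonally leftmost line segment of $Z$. Throughout I would use that, by Definition~\ref{BD-def} applied to $Z_0$, the pair $(j,j')$ is the end-pair of $Z_0$ and $L_{q'}$ is its diagonally leftmost segment; in particular $L_{q'},L_j,L_{j'}\in Z_0$ and, by Definition~\ref{start-end-pair-arbitrary-set-def}, $Z_0\subseteq \mathcal{L}_{j,j'}^{\text{left}}$, while by Definition~\ref{leftmost-segment-arbitrary-set-def} there is a start-pair $(j'',q')$ of $Z_0$ with $Z_0\subseteq \mathcal{L}_{j'',q'}^{\text{right}}\subseteq \mathbb{R}_{\text{right}}^{2}(\Gamma_{l_{q'}}^{\text{diag}})$.

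For domination, I would first check $X(a,c)\subseteq X(a,b)$: when $c=b$ this is trivial, and when $c=\Gamma_{r_j}^{\text{vert}}\cap\Gamma_{b}^{\text{diag}}$ with $r_j$ to the left of $\Gamma_b^{\text{vert}}$, the diagonal line through $c$ coincides with $\Gamma_b^{\text{diag}}$ while the vertical line through $c$ lies weakly left of $\Gamma_b^{\text{vert}}$, so $B_c\subseteq B_b$ and hence $R(a,c)\subseteq R(a,b)$. Now take $x\in X(a,b)$. If $x\in X(a,c)$ then, since $Z_0$ dominates $X(a,c)$ and $Z_0\subseteq Z$, the element $x$ is dominated by $Z$. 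If $x\in X(a,b)\setminus X(a,c)$ then condition~5 says $\{L_j,L_{j'}\}$ dominates $x$, and $L_j,L_{j'}\in Z_0\subseteq Z$, so again $x$ is dominated by $Z$. Hence $Z$ dominates $X(a,b)$.

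For the end-pair, by Definition~\ref{start-end-pair-arbitrary-set-def} (and the uniqueness in Observation~\ref{unique-start-end-pair-L-obs}) it suffices to show $L_i,L_{i'}\in Z$ and $Z\subseteq\mathcal{L}_{i,i'}^{\text{left}}$, since $(i,i')$ is already a right-crossing pair. Membership $L_i\in Z$ is immediate; for $L_{i'}$ note that either $i=i'$ (so $L_{i'}=L_i\in Z$) or, by condition~2, $j'=i'$ and then $L_{i'}=L_{j'}\in Z_0\subseteq Z$. For the containment I would treat the three pieces: $L_q\in\mathcal{L}_{i,i'}^{\text{left}}$ by the hypothesis of Definition~\ref{BD-def}; $L_i\in\mathcal{L}_{i,i'}^{\text{left}}$ by translating the right-crossing condition $r_{i'}\in S_{r_i}$ (Definition~\ref{left-right-crossing-pair-def}) into the inequalities on $x$ and $y-x$ that define $B_t$ with $t=\Gamma_{r_i}^{\text{vert}}\cap\Gamma_{r_{i'}}^{\text{diag}}$, using that $L_i$ is horizontal; and $Z_0\subseteq\mathcal{L}_{j,j'}^{\text{left}}\subseteq\mathcal{L}_{i,i'}^{\text{left}}$. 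The last inclusion reduces to $B_s\subseteq B_t$ with $s=\Gamma_{r_j}^{\text{vert}}\cap\Gamma_{r_{j'}}^{\text{diag}}$, which I would derive directly from condition~2: since $(j,j')$ is a right-crossing pair of $\left(\mathcal{L}_q^{\text{right}}\cap\mathcal{L}_{i,i'}^{\text{left}}\right)\setminus\{L_i\}$, both $L_j$ and $L_{j'}$ lie in $\mathcal{L}_{i,i'}^{\text{left}}$, giving $(r_j)_x\leq(r_i)_x$ and $(r_{j'})_y-(r_{j'})_x\geq(r_{i'})_y-(r_{i'})_x$, which are exactly the inequalities equivalent to $B_s\subseteq B_t$. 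Using condition~2 in this form handles $i=i'$ and $i\neq i'$ uniformly.

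For the diagonally leftmost segment, by Observation~\ref{unique-start-end-pair-L-obs} the set $Z$ has a unique diagonally leftmost segment, namely the one whose left endpoint maximizes $y-x$; so it suffices to show $l_q$ attains this maximum, equivalently that every element of $Z$ lies in $\mathbb{R}_{\text{right}}^{2}(\Gamma_{l_q}^{\text{diag}})$. Here $L_i$ lies there because $L_i\in\mathcal{L}_q^{\text{right}}$ (a hypothesis of Definition~\ref{BD-def}); $L_q$ lies there trivially since $l_q\in\Gamma_{l_q}^{\text{diag}}$ and $L_q$ is horizontal; and $Z_0$ lies there because $Z_0\subseteq\mathbb{R}_{\text{right}}^{2}(\Gamma_{l_{q'}}^{\text{diag}})$ while $L_{q'}\in\mathcal{L}_q^{\text{right}}$ (condition~1) forces $\mathbb{R}_{\text{right}}^{2}(\Gamma_{l_{q'}}^{\text{diag}})\subseteq\mathbb{R}_{\text{right}}^{2}(\Gamma_{l_q}^{\text{diag}})$. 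Combining these gives $Z\subseteq\mathbb{R}_{\text{right}}^{2}(\Gamma_{l_q}^{\text{diag}})$, and since all coordinates are distinct, $l_q$ is the unique extremal left endpoint, so $L_q$ is diagonally leftmost. The main obstacle is not conceptual but the careful bookkeeping of the regions $A_t,B_t,S_t$ and the repeated translation of ``right of $\Gamma^{\text{diag}}$'' and ``left of $\Gamma^{\text{vert}}$'' into the correct inequalities on $x$ and $y-x$; the one place needing the most care is the inclusion $\mathcal{L}_{j,j'}^{\text{left}}\subseteq\mathcal{L}_{i,i'}^{\text{left}}$, where invoking condition~2 directly rather than splitting on whether $i=i'$ keeps the argument clean.
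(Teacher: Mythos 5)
Your proposal is correct and follows essentially the same route as the paper's proof: domination is split between $X(a,c)$ (handled by $BD(a,c,q',j,j')$) and $X(a,b)\setminus X(a,c)$ (handled by $\{L_j,L_{j'}\}$ via condition~5), the end-pair claim rests on the inclusion $\mathcal{L}_{j,j'}^{\text{left}}\subseteq\mathcal{L}_{i,i'}^{\text{left}}$ together with condition~2, and the diagonally-leftmost claim rests on $L_{q'}\in\mathcal{L}_q^{\text{right}}$. Your write-up merely supplies more of the coordinate bookkeeping (e.g.\ $X(a,c)\subseteq X(a,b)$ and the inequalities behind $B_s\subseteq B_t$) that the paper leaves implicit, and treats the cases $i=i'$ and $i\neq i'$ uniformly where the paper splits them.
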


\begin{proof}
Assume that ${BD(a,c,q^{\prime },j,j^{\prime })\neq \bot}$. Since $%
X(a,b)\setminus X(a,c)$ is dominated by $\{L_{j},L_{j^{\prime}}\}$ by the
assumptions of the lemma, it follows that ${\{L_{q},L_{i}\}\cup
BD(a,c,q^{\prime },j,j^{\prime })}$ is a dominating set of~$X(a,b)$.

We now prove that $(i,i^{\prime })$ is the end-pair of $\{L_{q},L_{i}\}\cup
BD(a,c,q^{\prime },j,j^{\prime })$. First recall by the assumptions of the
lemma that $L_{j},L_{j^{\prime }}\in \mathcal{L\cap L}_{i,i^{\prime }}^{%
\text{left}}$ and note that $\mathcal{L}_{j,j^{\prime }}^{\text{left}%
}\subseteq \mathcal{L}_{i,i^{\prime }}^{\text{left}}$. Therefore, since $%
BD(a,c,q^{\prime },j,j^{\prime })\subseteq \mathcal{L\cap L}_{j,j^{\prime
}}^{\text{left}}$ by definition, it follows that $BD(a,c,q^{\prime
},j,j^{\prime })\subseteq \mathcal{L\cap L}_{i,i^{\prime }}^{\text{left}}$.
Let first $i^{\prime }=i$. Then clearly $L_{i}=L_{i^{\prime }}\in
\{L_{q},L_{i}\}\cup BD(a,c,q^{\prime },j,j^{\prime })\subseteq \mathcal{%
L\cap L}_{i,i}^{\text{left}}$, and thus in this case $(i,i^{\prime })=(i,i)$
is the end-pair of $\{L_{q},L_{i}\}\cup BD(a,c,q^{\prime },j,j^{\prime })$.
Let now $i^{\prime }\neq i$. Then $j^{\prime }=i^{\prime }$ by the
assumptions of the lemma, and thus $BD(a,c,q^{\prime },j,j^{\prime
})=BD(a,c,q^{\prime },j,i^{\prime })$. Then $L_{i},L_{i^{\prime }}\in
\{L_{q},L_{i}\}\cup BD(a,c,q^{\prime },j,j^{\prime })\subseteq \mathcal{%
L\cap L}_{i,i^{\prime }}^{\text{left}}$, and thus again $(i,i^{\prime })$ is
the end-pair of $\{L_{q},L_{i}\}\cup BD(a,c,q^{\prime },j,j^{\prime })$.

Finally, since $L_{q^{\prime }}\in \left( \mathcal{\mathcal{L}}_{q}^{\text{%
right}}\cap \mathcal{L}_{i,i^{\prime }}^{\text{left}}\right) \setminus
\{L_{i}\}$ by the assumptions of the lemma, it follows that $L_{q^{\prime
}}\subseteq \mathbb{R}_{\text{right}}^{2}(\Gamma _{l_{q}}^{\text{diag}})$,
cf.~Definition~\ref{left-right-crossing-pair-def}. Therefore, since $%
L_{q^{\prime }}$ is by definition the diagonally leftmost line segment of $%
BD(a,c,q^{\prime },j,j^{\prime })$, it follows that $L_{q}$ is the
diagonally leftmost line segment of $\{L_{q},L_{i}\}\cup $ $BD(a,c,q^{\prime
},j,j^{\prime })$. This completes the proof of the lemma.
\end{proof}

\medskip

Given the statement of Lemma~\ref%
{bounded-dom-correctness-lem-1-first-direction}, we are now ready to provide
our second recursive computation for $BD(a,b,q,i,i^{\prime })$ in the next
lemma.

\begin{lemma}
\label{bounded-dom-correctness-lem-1}Suppose that $BD(a,b,q,i,i^{\prime
})\neq \bot $ and that $BD(a,b,q,i,i^{\prime })\neq
\{L_{q},L_{i},L_{i^{\prime }}\}$, where $R(a,b)\nsubseteq S_{i}$ and $%
b\notin S_{l_{i}}$. If $BD(a,b,q,i,i^{\prime })\setminus L_{i}$ dominates
all elements of $\{x\in X(a,b):x\cap (S_{i}\cup F_{i})\neq \emptyset \}$
then 
\begin{equation}
BD(a,b,q,i,i^{\prime })=\{L_{q},L_{i}\}\cup \min_{c,q^{\prime },j,j^{\prime
}}\{BD(a,c,q^{\prime },j,j^{\prime })\},  \label{recursion-eq-1}
\end{equation}%
where the minimum is taken over all $c,q^{\prime },j,j^{\prime }$ that
satisfy the Conditions 1-5 of Lemma~\ref%
{bounded-dom-correctness-lem-1-first-direction}.
\end{lemma}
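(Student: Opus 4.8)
The plan is to establish Eq.~(\ref{recursion-eq-1}) as an equality of cardinalities (in the sense of the paper's $\min\{\cdot\}$ convention) by proving two opposite inequalities. Throughout I write $Z=BD(a,b,q,i,i')$ for the optimal set on the left-hand side, which exists and is non-trivial by the standing assumptions $BD(a,b,q,i,i')\neq\bot$, $BD(a,b,q,i,i')\neq\{L_q,L_i,L_{i'}\}$, together with $R(a,b)\nsubseteq S_i$ and $b\notin S_{l_i}$. The first (easy) inequality $|Z|\le|\{L_q,L_i\}\cup BD(a,c,q',j,j')|$ is essentially already done: for every tuple $(c,q',j,j')$ meeting Conditions~1--5 of Lemma~\ref{bounded-dom-correctness-lem-1-first-direction} with $BD(a,c,q',j,j')\neq\bot$, that lemma certifies that $\{L_q,L_i\}\cup BD(a,c,q',j,j')$ is a dominating set of $X(a,b)$ in which $L_q$ is diagonally leftmost and $(i,i')$ is the end-pair, hence a feasible competitor in Definition~\ref{BD-def}. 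Since $Z$ has smallest size among all such feasible sets, minimizing over admissible tuples gives this direction.

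The substance is the reverse inequality, obtained by decomposing $Z$. I would delete the end segment $L_i$ and set $W=Z\setminus\{L_i\}$; by Observation~\ref{unique-start-end-pair-L-obs} the non-empty set $W$ has a unique end-pair $(j,j')$ and a unique diagonally leftmost segment, and I take $L_{q'}$ to be the diagonally leftmost segment of $W$ and $c$ exactly as prescribed by Condition~4. The target is to exhibit $W$ (with the forced member $L_q$ reattributed to the outer union when $L_{q'}\neq L_q$) as a \emph{minimum} dominating set of $X(a,c)$ carrying end-pair $(j,j')$ and diagonally leftmost segment $L_{q'}$, so that it equals $BD(a,c,q',j,j')$ up to cardinality. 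Minimality is forced by optimality of $Z$: any strictly smaller such set recombines via Lemma~\ref{bounded-dom-correctness-lem-1-first-direction} into a dominating set of $X(a,b)$ of the required shape that would beat $Z$. The structural conditions I would read off directly from Definitions~\ref{left-right-crossing-pair-def} and~\ref{start-end-pair-arbitrary-set-def}: because $L_q$ is diagonally leftmost and $(i,i')$ is the end-pair of $Z$, the pair $(j,j')$ is a right-crossing pair inside $(\mathcal{L}_{q}^{\text{right}}\cap\mathcal{L}_{i,i'}^{\text{left}})\setminus\{L_i\}$ with $j'=i'$ whenever $i\neq i'$ (so the partner $L_{i'}$ is \emph{retained} inside the subproblem rather than re-added), and $L_{q'}$ lies diagonally to the right of $L_q$, which is Conditions~1 and~3.

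The main obstacle is the domination bookkeeping: verifying Condition~5 and the claim that the subproblem part dominates the truncated region $X(a,c)$. This is precisely where the hypothesis of the lemma is used. The only elements of $X(a,b)$ that could lose their unique dominator when $L_i$ is deleted are those meeting $S_i\cup F_i$ (by Lemmas~\ref{shadow-correctness-lem-1} and~\ref{shadow-correctness-lem-2}, since $L_i$ dominates $x$ only through $x\cap S_i\neq\emptyset$ or $L_i\cap S_x\neq\emptyset$), and the hypothesis states exactly that $Z\setminus L_i$ re-dominates all of these; this is also what separates the present case from the complementary recursion of Lemma~\ref{bounded-dom-correctness-lem-2}. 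I would then split $X(a,b)$ along $\Gamma_c^{\text{vert}}$: the part $X(a,b)\setminus X(a,c)$ lies to the right of $c$ and is dominated by the new end segments $\{L_j,L_{j'}\}$, which is Condition~5, while $X(a,c)$ is dominated by the subproblem part. The delicate point within this step is that the forced left segment $L_q$ may itself dominate elements of $X(a,c)$; I would argue from the diagonal-leftmost geometry of the start-pair (Definition~\ref{leftmost-segment-arbitrary-set-def}) that these are re-covered by $L_{q'}$ and the surviving segments, so that $L_q$ is genuinely a constraint rather than a domination witness.

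The last piece is routine cardinality accounting when some of $L_q,L_i,L_{i'},L_{q'},L_j,L_{j'}$ coincide, in particular the degenerate case $L_{q'}=L_q$ (where $L_q$ stays in the subproblem and is not added twice by the outer union). The remark recorded after Definition~\ref{BD-def} that $L_i\notin X(a,b)$, due to $b\in\mathbb{R}^2_{\text{left}}(\Gamma_{r_i}^{\text{vert}})$, guarantees $L_i$ is genuinely reattached by the union and counted once, which is what makes the size of $\{L_q,L_i\}\cup BD(a,c,q',j,j')$ match $|Z|$ and closes the equality.
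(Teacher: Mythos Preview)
Your proposal is correct and follows the same decomposition as the paper: delete $L_i$ from an optimal $Z$, take $(j,j')$ and $L_{q'}$ as the end-pair and diagonally leftmost segment of $W=Z\setminus\{L_i\}$, define $c$ per Condition~4, verify Conditions~1--5, and combine the feasibility direction from Lemma~\ref{bounded-dom-correctness-lem-1-first-direction} with the bound $|BD(a,c,q',j,j')|\le|W|$. The one place you overcomplicate things is the ``delicate point'' about $L_q$: the paper simply notes that if $L_q\neq L_i$ then $L_q$ remains the diagonally leftmost segment of $W$, so $L_{q'}=L_q$ and $L_q$ never leaves the subproblem; the only case with $L_{q'}\neq L_q$ is $L_q=L_i$, where the lemma's hypothesis handles re-coverage directly, and in either case the outer union $\{L_q,L_i\}$ contributes exactly one new element, yielding $|\{L_q,L_i\}\cup BD(a,c,q',j,j')|=1+|BD(a,c,q',j,j')|\le 1+|Z\setminus\{L_i\}|=|Z|$ without any separate geometric re-coverage argument.
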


\begin{proof}
Let $Z\subseteq \mathcal{L\cap \mathcal{L}}_{q}^{\text{right}}\mathcal{\cap 
\mathcal{L}}_{i,i^{\prime }}^{\text{left}}$ be a dominating set of $X(a,b)$
such that $L_{q}$ is the diagonally leftmost line segment of $Z$ and $%
(i,i^{\prime })$ is the end-pair of $Z$. Suppose that $|Z|=|BD(a,b,q,i,i^{%
\prime })|$ and that all elements of $\{x\in X(a,b):x\cap (S_{i}\cup
F_{i})\neq \emptyset \}$ are dominated by $Z\setminus L_{i}$. Recall that $%
L_{i}\notin X(a,b)$. Thus, $Z\setminus \{L_{i}\}$ is a dominating set of $%
X(a,b)$. Let $(j,j^{\prime })$ denote the end-pair of $Z\setminus \{L_{i}\}$%
. Then all elements of $X(a,b)$ that are contained in $\mathbb{R}_{\text{%
right}}^{2}(\Gamma _{r_{j}}^{\text{vert}})$ must be dominated by $%
\{L_{j},L_{j^{\prime }}\}$. Define 
\begin{equation*}
c=%
\begin{cases}
\Gamma _{r_{j}}^{\text{vert}}\cap \Gamma _{b}^{\text{diag}}\text{,} & \text{%
if }r_{j}\in \mathbb{R}_{\text{left}}^{2}(\Gamma _{b}^{\text{vert}}) \\ 
b\text{,} & \text{otherwise}%
\end{cases}%
.
\end{equation*}%
That is, the set $X(a,b)\setminus X(a,c)$ is dominated by $%
\{L_{j},L_{j^{\prime }}\}$. Let $L_{q^{\prime }}$ denote the diagonally
leftmost line segment of $Z\setminus \{L_{i}\}$. Note that, if $L_{q}\neq
L_{i}$ then $L_{q^{\prime }}=L_{q}$. Furthermore note that $L_{q^{\prime
}}\in \mathcal{L}_{j,j^{\prime }}^{\text{left}}$ and $L_{j},L_{j^{\prime
}}\in \mathcal{\mathcal{L}}_{q^{\prime }}^{\text{right}}$. Since $Z\subseteq 
\mathcal{L\cap \mathcal{L}}_{q}^{\text{right}}\mathcal{\cap \mathcal{L}}%
_{i,i^{\prime }}^{\text{left}}$, it follows that $(j,j^{\prime })$ is a
right-crossing pair of $\left( \mathcal{\mathcal{L}}_{q}^{\text{right}}\cap 
\mathcal{L}_{i,i^{\prime }}^{\text{left}}\right) \setminus \{L_{i}\}$ and
that $L_{q^{\prime }}\in \left( \mathcal{\mathcal{L}}_{q}^{\text{right}}\cap 
\mathcal{L}_{i,i^{\prime }}^{\text{left}}\right) \setminus \{L_{i}\}$.
Furthermore, if $i\neq i^{\prime }$ then $L_{i^{\prime }}\in Z\setminus
\{L_{i}\}$, and thus, by the choice of the right-crossing pair $(j,j^{\prime
})$ as the end-pair of $Z\setminus \{L_{i}\}$, it follows that $j^{\prime
}=i^{\prime }$.

Since $L_{j},L_{j^{\prime }}\in \mathcal{L}_{i,i^{\prime }}^{\text{left}%
}\setminus \{L_{i}\}$, note that $L_{i}\notin BD(a,b,q^{\prime },j,j^{\prime
})$. Moreover note that $X(a,c)\subseteq X(a,b)$, and thus $Z\setminus
\{L_{i}\}$ is also a dominating set of $X(a,c)$. Therefore, since $%
(j,j^{\prime })$ is the end-pair of $Z\setminus \{L_{i}\}$, it follows that 
\begin{equation*}
|\{L_{q}\}\cup BD(a,c,q^{\prime },j,j^{\prime })|=|BD(a,c,q^{\prime
},j,j^{\prime })|\leq |Z\setminus \{L_{i}\}|\text{, if }L_{q}\neq L_{i}
\end{equation*}%
and that%
\begin{equation*}
|BD(a,c,q^{\prime },j,j^{\prime })|\leq |Z\setminus \{L_{i}\}|\text{, if }%
L_{q}=L_{i}.
\end{equation*}%
That is, in both cases where $L_{q}\neq L_{i}$ or $L_{q}=L_{i}$, we have that%
\begin{eqnarray}
|\{L_{q},L_{i}\}\cup BD(a,b,q^{\prime },j,j^{\prime })| &=&1+\left\vert
\left( \{L_{q}\}\cup BD(a,c,q^{\prime },j,j^{\prime })\right) \setminus
\{L_{i}\}\right\vert  \notag \\
&=&1+\left\vert BD(a,c,q^{\prime },j,j^{\prime })\right\vert
\label{bounded-inequality-eq-1} \\
&\leq &1+\left\vert Z\setminus \{L_{i}\}\right\vert
=|Z|=|BD(a,b,q,i,i^{\prime })|.  \notag
\end{eqnarray}

Finally Lemma~\ref{bounded-dom-correctness-lem-1-first-direction} implies
that, if $BD(a,c,q^{\prime },j,j^{\prime })\neq \bot $, then $%
\{L_{q},L_{i}\}\cup $ $BD(a,c,q^{\prime },j,j^{\prime })$ is a dominating
set of $X(a,b)$, in which $L_{q}$ is the diagonally leftmost line segment
and $(i,i^{\prime })$ is the end-pair. Therefore $|BD(a,b,q,i,i^{\prime
})|\leq |\{L_{q},L_{i}\}\cup BD(a,b,q^{\prime},j,j^{\prime })|$, and thus it
follows by Eq.~(\ref{bounded-inequality-eq-1}) that $|BD(a,b,q,i,i^{\prime
})| = |\{L_{q},L_{i}\}\cup BD(a,b,q^{\prime},j,j^{\prime })|$
\end{proof}

\medskip

In order to provide our third recursive computation for $BD(a,b,q,i,i^{%
\prime })$ in Lemma~\ref{bounded-dom-correctness-lem-2} (cf.~Eq.~(\ref%
{recursion-eq-2})), we first prove in Lemmas~\ref%
{bounded-dom-correctness-lem-2-first-direction-lem-1} and~\ref%
{bounded-dom-correctness-lem-2-first-direction-lem-2} that the set at the
right hand side of Eq.~(\ref{recursion-eq-2}) is indeed a dominating set of $%
X(a,b)$, in which $L_{q}$ is the diagonally leftmost line segment and $%
(i,i^{\prime })$ is the end-pair.

\begin{lemma}
\label{bounded-dom-correctness-lem-2-first-direction-lem-1}Suppose that $%
BD(a,b,q,i,i^{\prime })\neq \bot $ and that $BD(a,b,q,i,i^{\prime })\neq
\{L_{q},L_{i},L_{i^{\prime }}\}$, where $R(a,b)\nsubseteq S_{i}$ and $%
b\notin S_{l_{i}}$. Let $c\in \mathbb{R}^{2}$ such that:

\begin{enumerate}
\item $c\in \mathcal{B}\cap R(a,b)$ and $c\in \mathbb{R}_{\text{right}%
}^{2}(\Gamma _{l_{i}}^{\text{vert}})\setminus F_{l_{i}}$,

\item $\mathcal{P}\cap X(a,b)\cap F_{c}\cap F_{i}=\emptyset $.
\end{enumerate}

If $BD(a,c,q,i,i^{\prime })\neq \bot $ and $BD(c,b,q,i,i^{\prime })\neq \bot 
$, then $BD(a,c,q,i,i^{\prime })\cup BD(c,b,q,i,i^{\prime })$ is a
dominating set of $X(a,b)$, in which $L_{q}$ is the diagonally leftmost line
segment and $(i,i^{\prime })$ is the end-pair.
\end{lemma}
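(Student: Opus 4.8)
The plan is to show that every element of $X(a,b)$ is dominated by $Z:=BD(a,c,q,i,i^{\prime})\cup BD(c,b,q,i,i^{\prime})$, and then to verify the two structural properties (end-pair and diagonally leftmost segment). The geometric picture I would set up first is that the diagonal line $\Gamma_{c}^{\text{diag}}$ splits $R(a,b)$ into a lower part, which is exactly $R(c,b)=\{x<b_{x},\ b_{y}-b_{x}\leq y-x\leq c_{y}-c_{x}\}$, and an upper part $U=\{x<b_{x},\ c_{y}-c_{x}\leq y-x\leq a_{y}-a_{x}\}$, of which $R(a,c)=U\cap\{x<c_{x}\}$ is the portion lying strictly to the left of $\Gamma_{c}^{\text{vert}}$. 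Consequently, if $x\in X(a,b)$ satisfies $x\subseteq R(a,c)$ (resp.\ $x\subseteq R(c,b)$), then $x\in X(a,c)$ (resp.\ $x\in X(c,b)$) and is dominated by $BD(a,c,q,i,i^{\prime})\subseteq Z$ (resp.\ by $BD(c,b,q,i,i^{\prime})\subseteq Z$). It therefore remains to treat the \emph{problematic} elements $x\in X(a,b)$ with $x\not\subseteq R(a,c)$ and $x\not\subseteq R(c,b)$, and my claim is that each such $x$ is adjacent to, hence dominated by, the line segment $L_{i}\in Z$.

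To handle a problematic element I would first translate condition~1 into coordinates: $c\in\mathbb{R}_{\text{right}}^{2}(\Gamma_{l_{i}}^{\text{vert}})\setminus F_{l_{i}}$ means $c_{x}\geq l_{i,x}$ and $c_{y}-c_{x}<h_{i}-l_{i,x}$, writing $h_{i}$ for the common height of the horizontal segment $L_{i}$, while the standing hypothesis $b\in\mathbb{R}_{\text{left}}^{2}(\Gamma_{r_{i}}^{\text{vert}})$ (part of Definition~\ref{BD-def}) gives $b_{x}\leq r_{i,x}$. For a problematic \emph{point} $p\in\mathcal{P}\cap X(a,b)$, the failure $p\not\subseteq R(c,b)$ forces $p_{y}-p_{x}>c_{y}-c_{x}$, and then $p\not\subseteq R(a,c)$ forces $p_{x}\geq c_{x}$; hence $p\in F_{c}$, so condition~2 yields $p\notin F_{i}$. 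Since $l_{i,x}\leq c_{x}\leq p_{x}<b_{x}\leq r_{i,x}$, the point $p$ lies in the vertical strip of $L_{i}$, where $F_{i}=\{y\geq h_{i}\}$ and $S_{i}=\{y\leq h_{i}\}$; thus $p\notin F_{i}$ gives $p_{y}<h_{i}$, i.e.\ $p\in S_{i}$, so $p$ is dominated by $L_{i}$ by Lemma~\ref{shadow-correctness-lem-2}.

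For a problematic horizontal \emph{segment} $L_{u}$ I would show that its right endpoint $r_{u}=(r_{u,x},h_{u})$ lies in $S_{i}\cup F_{i}$, which by Lemma~\ref{shadow-correctness-lem-1}, together with the identity $L_{i}\cap S_{u}\neq\emptyset\Leftrightarrow L_{u}\cap F_{i}\neq\emptyset$ (immediate from Definitions~\ref{shadows-def} and~\ref{reverse-shadows-def}), means $L_{u}$ is adjacent to $L_{i}$. Using $r_{u,x}<b_{x}\leq r_{i,x}$, I split into two cases. If $r_{u,x}\geq l_{i,x}$, then $r_{u,x}\in[l_{i,x},r_{i,x})$, so $r_{u}\in S_{i}$ when $h_{u}\leq h_{i}$ and $r_{u}\in F_{i}$ when $h_{u}\geq h_{i}$; either way $r_{u}\in S_{i}\cup F_{i}$. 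If instead $r_{u,x}<l_{i,x}$, then since $c_{x}\geq l_{i,x}>r_{u,x}$ the failure $L_{u}\not\subseteq R(a,c)$ cannot be caused by $L_{u}$ meeting $\mathbb{R}_{\text{right}}^{2}(\Gamma_{c}^{\text{vert}})$, so it must be caused by $h_{u}-r_{u,x}<c_{y}-c_{x}$; combined with $c_{y}-c_{x}<h_{i}-l_{i,x}$ this gives $h_{u}-r_{u,x}<h_{i}-l_{i,x}$ and hence $r_{u}\in S_{i}$. This is the step I expect to be the most delicate: it is exactly where the two positions imposed on $c$ relative to $\Gamma_{l_{i}}^{\text{vert}}$ and to $F_{l_{i}}$ are used, and where one must respect the asymmetry that points can only be dominated through $S_{i}$ whereas segments may also be dominated through $F_{i}$.

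Finally, for the structural properties: both $BD$ sets contain $L_{i},L_{i^{\prime}}$ and are contained in $\mathcal{L}_{i,i^{\prime}}^{\text{left}}$, so $Z\subseteq\mathcal{L}_{i,i^{\prime}}^{\text{left}}$ and, by Observation~\ref{unique-start-end-pair-L-obs}, $(i,i^{\prime})$ is the unique end-pair of $Z$. For the diagonally leftmost segment, each $BD$ set has $L_{q}$ as its diagonally leftmost segment, witnessed by left-crossing pairs $(j_{1},q)$ and $(j_{2},q)$ with $BD(a,c,q,i,i^{\prime})\subseteq\mathcal{L}_{j_{1},q}^{\text{right}}$ and $BD(c,b,q,i,i^{\prime})\subseteq\mathcal{L}_{j_{2},q}^{\text{right}}$; since both regions share the diagonal bound $y-x\leq l_{q,y}-l_{q,x}$, taking $L_{j}$ to be whichever of $L_{j_{1}},L_{j_{2}}$ has the smaller left endpoint gives $Z\subseteq\mathcal{L}_{j,q}^{\text{right}}$, so $(j,q)$ is the start-pair of $Z$ and $L_{q}$ is its diagonally leftmost line segment.
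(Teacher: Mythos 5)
Your proof is correct and takes essentially the same route as the paper's: both decompose $X(a,b)$ into the parts covered by $X(a,c)$ and $X(c,b)$ plus the leftover elements straddling $\Gamma _{c}^{\text{diag}}$ or meeting $F_{c}$, and both use Conditions~1 and~2 to show this leftover is dominated by $L_{i}$, respecting the asymmetry that points are only dominated through $S_{i}$ while segments may also be dominated through $F_{i}$. The only cosmetic differences are that the paper argues via the region inclusions $F_{c}\cap R(a,b)\subseteq S_{i}\cup F_{i}$ and $S_{c}\cap \Gamma _{c}^{\text{diag}}\subseteq S_{i}\cup F_{i}$ where you track the right endpoint of each straddling segment directly, and that your justification of the start-pair of the union is more explicit than the paper's one-line assertion.
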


\begin{proof}
Assume that $BD(a,c,q,i,i^{\prime })\neq \bot $ and $BD(c,b,q,i,i^{\prime
})\neq \bot $. First note that, since $c\in R(a,b)$ by assumption, it
follows that $X(a,c)\cup X(c,b)\subseteq X(a,b)$, cf.~Eq.~(\ref%
{X(a,b)-def-eq}). Furthermore, since $c\in R(a,b)\subseteq B_{b}$ and $c\in 
\mathbb{R}_{\text{right}}^{2}(\Gamma _{l_{i}}^{\text{vert}})\setminus
F_{l_{i}}$ by the assumption, it follows that also $b\in \mathbb{R}_{\text{%
right}}^{2}(\Gamma _{l_{i}}^{\text{vert}})\setminus F_{l_{i}}$. Now recall
that $b\in \mathbb{R}_{\text{left}}^{2}(\Gamma _{r_{i}}^{\text{vert}})$ by
Definition~\ref{BD-def}, and thus also $c\in \mathbb{R}_{\text{left}%
}^{2}(\Gamma _{r_{i}}^{\text{vert}})$. Therefore, since $c\in \mathbb{R}_{%
\text{right}}^{2}(\Gamma _{l_{i}}^{\text{vert}})\setminus F_{l_{i}}$ by the
assumption, it follows that $S_{c}\cap \Gamma _{c}^{\text{diag}}\subseteq
S_{i}\cup F_{i}$. Moreover, since $c\in \mathbb{R}_{\text{right}}^{2}(\Gamma
_{l_{i}}^{\text{vert}})$ and $b\in \mathbb{R}_{\text{left}}^{2}(\Gamma
_{r_{i}}^{\text{vert}})$, it follows that $F_{c}\cap R(a,b)\subseteq
S_{i}\cup F_{i}$.

The line segments of $\mathcal{L}\cap X(a,b)$ can be partitioned into the
following sets:%
\begin{eqnarray*}
\mathcal{L}_{1} &=&\mathcal{L}\cap X(a,c) \\
\mathcal{L}_{2} &=&\mathcal{L}\cap X(c,b) \\
\mathcal{L}_{3} &=&\{L_{k}\in \mathcal{L}\cap X(a,b):L_{k}\cap F_{c}\neq
\emptyset \} \\
\mathcal{L}_{4} &=&\{L_{k}\in \mathcal{L}\cap X(a,b):L_{k}\cap S_{c}\cap
\Gamma _{c}^{\text{diag}}\neq \emptyset \}
\end{eqnarray*}

Since $BD(a,c,q,i,i^{\prime })\neq \bot $ and $BD(c,b,q,i,i^{\prime })\neq
\bot $ by assumption, it follows that the line segments of $\mathcal{L}_{1}$
are all dominated by $BD(a,c,q,i,i^{\prime })$ and the line segments of $%
\mathcal{L}_{2}$ are all dominated by $BD(c,b,q,i,i^{\prime })$.
Furthermore, since $F_{c}\cap R(a,b)\subseteq S_{i}\cup F_{i}$ as we proved
above, it follows that all line segments of $\mathcal{L}_{3}$ are dominated
by the line segment $L_{i}$. Moreover, since $S_{c}\cap \Gamma _{c}^{\text{%
diag}}\subseteq S_{i}\cup F_{i}$ as we proved above, it follows that all
line segments of $\mathcal{L}_{4}$ are dominated also by the line segment $%
L_{i}$. That is, all line segments of $\mathcal{L}\cap X(a,b)=\mathcal{L}%
_{1}\cup \mathcal{L}_{2}\cup \mathcal{L}_{3}\cup \mathcal{L}_{4}$ are
dominated by $BD(a,c,q,i,i^{\prime })\cup BD(c,b,q,i,i^{\prime })$.

Since $\mathcal{P}\cap X(a,b)\cap F_{c}\cap F_{i}=\emptyset $ by the
assumption, the points of $\mathcal{P}\cap X(a,b)$ can be partitioned into
the following sets:%
\begin{eqnarray*}
\mathcal{P}_{1} &=&\mathcal{P}\cap X(a,c) \\
\mathcal{P}_{2} &=&\mathcal{P}\cap X(c,b) \\
\mathcal{P}_{3} &=&\mathcal{P}\cap X(a,b)\cap F_{c}\cap S_{i}
\end{eqnarray*}%
It is easy to see that the points of $\mathcal{P}_{1}$ are all dominated by $%
BD(a,c,q,i,i^{\prime })$ and that the points of $\mathcal{P}_{2}$ are all
dominated by $BD(c,b,q,i,i^{\prime })$. Furthermore the points of $\mathcal{P%
}_{3}$ are dominated by the line segment $L_{i}$. Thus all points of $%
\mathcal{P}\cap X(a,b)=\mathcal{P}_{1}\cup \mathcal{P}_{2}\cup \mathcal{P}%
_{3}$ are dominated by $BD(a,c,q,i,i^{\prime })\cup BD(c,b,q,i,i^{\prime })$%
. Summarizing, $BD(a,c,q,i,i^{\prime })\cup BD(c,b,q,i,i^{\prime })$ is a
dominating set of $X(a,b)$.

Furthermore, since $(i,i^{\prime })$ is the end-pair of both $%
BD(a,c,q,i,i^{\prime })$ and $BD(c,b,q,i,i^{\prime })$, it follows that $%
(i,i^{\prime })$ is also the end-pair of $BD(a,c,q,i,i^{\prime })\cup
BD(c,b,q,i,i^{\prime })$. Similarly, since $L_{q}$ is the diagonally
leftmost line segment of both $BD(a,c,q,i,i^{\prime })$ and $%
BD(c,b,q,i,i^{\prime })$, it follows that $L_{q}$ is also the diagonally
leftmost line segment of $BD(a,c,q,i,i^{\prime })\cup BD(c,b,q,i,i^{\prime
}) $. This completes the proof of the lemma.
\end{proof}

\begin{lemma}
\label{bounded-dom-correctness-lem-2-first-direction-lem-2}Suppose that $%
BD(a,b,q,i,i^{\prime })\neq \bot $ and that $BD(a,b,q,i,i^{\prime })\neq
\{L_{q},L_{i},L_{i^{\prime }}\}$, where $R(a,b)\nsubseteq S_{i}$ and $%
b\notin S_{l_{i}}$. Let $c^{\prime }\in \mathbb{R}^{2}$ and $L_{q^{\prime
}}\in \mathcal{L}$ such that:

\begin{enumerate}
\item $c^{\prime }\in \mathcal{B}\cap R(a,b)$ and $c^{\prime }\in F_{l_{i}}$,

\item $L_{i},L_{i^{\prime }}\in \mathcal{\mathcal{L}}_{q^{\prime }}^{\text{%
right}}$,

\item $L_{q^{\prime }}\in \mathcal{\mathcal{L}}_{q}^{\text{right}}\cap 
\mathcal{L}_{i,i^{\prime }}^{\text{left}}$ and $l_{q^{\prime }}\in F_{l_{i}}$%
,

\item $c^{\prime }\in \Gamma _{l_{q^{\prime }}}^{\text{diag}}$ or $c^{\prime
}\in \Gamma _{b}^{\text{diag}}$, and

\item $\mathcal{P}\cap X(a,b)\cap F_{c^{\prime }}=\emptyset $.
\end{enumerate}

If $BD(a,c^{\prime },q,i,i^{\prime })\neq \bot $ and $BD(c^{\prime
},b,q^{\prime },i,i^{\prime })\neq \bot $ then $BD(a,c^{\prime
},q,i,i^{\prime })\cup BD(c^{\prime },b,q^{\prime },i,i^{\prime })$ is a
dominating set of $X(a,b)$, in which $L_{q}$ is the diagonally leftmost line
segment and $(i,i^{\prime })$ is the end-pair.
\end{lemma}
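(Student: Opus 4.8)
The plan is to follow the same strategy as in the proof of Lemma~\ref{bounded-dom-correctness-lem-2-first-direction-lem-1}: partition the elements of $X(a,b)$ into those lying entirely in $X(a,c^{\prime})$, those lying entirely in $X(c^{\prime},b)$, and a ``straddling'' remainder, and then argue that the first two groups are dominated by $BD(a,c^{\prime},q,i,i^{\prime})$ and $BD(c^{\prime},b,q^{\prime},i,i^{\prime})$ respectively, while the remainder is dominated either by $L_i$ or by $L_{q^{\prime}}$. First I would record the basic containment: since $c^{\prime}\in R(a,b)$ (Condition~1), we have $X(a,c^{\prime})\cup X(c^{\prime},b)\subseteq X(a,b)$, exactly as in Lemma~\ref{bounded-dom-correctness-lem-2-first-direction-lem-1}. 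The essential difference from that lemma is that here $c^{\prime}\in F_{l_i}$ rather than $c^{\prime}\notin F_{l_i}$, and consequently the diagonally leftmost segment of the right part switches from $L_q$ to $L_{q^{\prime}}$; Conditions~2--4 on $L_{q^{\prime}}$ are precisely what is needed to make this switch legitimate, while Conditions~2--3 (giving $L_{q^{\prime}}\in\mathcal{L}_{i,i^{\prime}}^{\text{left}}$ and $L_i,L_{i^{\prime}}\in\mathcal{L}_{q^{\prime}}^{\text{right}}$) also guarantee that the right subproblem $BD(c^{\prime},b,q^{\prime},i,i^{\prime})$ is well-formed in the sense of Definition~\ref{BD-def}.

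The two geometric facts I would establish at the outset are the following. From $c^{\prime}\in F_{l_i}$ it is immediate that $F_{c^{\prime}}\subseteq F_{l_i}\subseteq F_i$, since the reverse shadow is monotone and $l_i\in L_i$. Secondly, I would analyse the diagonal $\Gamma_{c^{\prime}}^{\text{diag}}$: for a point $P\in S_{c^{\prime}}\cap\Gamma_{c^{\prime}}^{\text{diag}}$ (on this diagonal, with $P$ to the left of $\Gamma_{c^{\prime}}^{\text{vert}}$) there are two cases. If $P$ lies to the right of $\Gamma_{l_i}^{\text{vert}}$, then, because $c^{\prime}\in F_{l_i}$ forces $\Gamma_{c^{\prime}}^{\text{diag}}$ to lie diagonally above $\Gamma_{l_i}^{\text{diag}}$, we get $P\in F_{l_i}\subseteq F_i$. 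If instead $P$ lies to the left of $\Gamma_{l_i}^{\text{vert}}$, then I would invoke Condition~4: since $\Gamma_{c^{\prime}}^{\text{diag}}=\Gamma_{l_{q^{\prime}}}^{\text{diag}}$ (the case $c^{\prime}\in\Gamma_{b}^{\text{diag}}$ being degenerate, as then $R(c^{\prime},b)$ collapses onto $\Gamma_{b}^{\text{diag}}$ and no segment of $X(a,b)$ crosses $\Gamma_{c^{\prime}}^{\text{diag}}$ to its right), and since $l_{q^{\prime}}\in F_{l_i}$ gives $(l_{q^{\prime}})_x\ge (l_i)_x>P_x$ by Condition~3, the point $P$ lies in $S_{l_{q^{\prime}}}\subseteq S_{q^{\prime}}$. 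This is the key step, and I expect it to be the main obstacle: it is exactly the observation guaranteeing that the straddling segments crossing $\Gamma_{c^{\prime}}^{\text{diag}}$ far to the left are dominated not by $L_i$ but by the new diagonally leftmost segment $L_{q^{\prime}}$, via the shadow relation of Lemma~\ref{shadow-correctness-lem-1}.

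With these facts in hand I would partition $\mathcal{L}\cap X(a,b)$ into $\mathcal{L}_1=\mathcal{L}\cap X(a,c^{\prime})$, $\mathcal{L}_2=\mathcal{L}\cap X(c^{\prime},b)$, $\mathcal{L}_3=\{L_k:L_k\cap F_{c^{\prime}}\neq\emptyset\}$, and $\mathcal{L}_4=\{L_k:L_k\cap S_{c^{\prime}}\cap\Gamma_{c^{\prime}}^{\text{diag}}\neq\emptyset\}$, checking that every straddling horizontal segment falls into $\mathcal{L}_3$ or $\mathcal{L}_4$ (a straddler either pokes into $F_{c^{\prime}}$ or crosses $\Gamma_{c^{\prime}}^{\text{diag}}$ while staying left of $\Gamma_{c^{\prime}}^{\text{vert}}$). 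Then $\mathcal{L}_1,\mathcal{L}_2$ are dominated by the respective sub-solutions, $\mathcal{L}_3$ is dominated by $L_i$ via $F_{c^{\prime}}\subseteq F_i$, and $\mathcal{L}_4$ is dominated by $L_i$ or $L_{q^{\prime}}$ according to the two cases above. For the points I would use Condition~5: since $\mathcal{P}\cap X(a,b)\cap F_{c^{\prime}}=\emptyset$, the straddle region $F_{c^{\prime}}\cap R(a,b)$ contains no point of $\mathcal{P}$, so every point of $\mathcal{P}\cap X(a,b)$ lies in $X(a,c^{\prime})$ or $X(c^{\prime},b)$ and is dominated accordingly; note that, unlike in Lemma~\ref{bounded-dom-correctness-lem-2-first-direction-lem-1}, no third class of points is required, precisely because the hypothesis here is the stronger condition on $F_{c^{\prime}}$. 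Together these show that the union is a dominating set of $X(a,b)$.

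Finally I would verify the two structural properties. The end-pair claim follows as in Lemma~\ref{bounded-dom-correctness-lem-2-first-direction-lem-1}: $(i,i^{\prime})$ is the end-pair of both $BD(a,c^{\prime},q,i,i^{\prime})$ and $BD(c^{\prime},b,q^{\prime},i,i^{\prime})$ by Definition~\ref{BD-def}, hence of their union by Observation~\ref{unique-start-end-pair-L-obs}. For the diagonally leftmost segment, the left sub-solution has $L_q$ leftmost and the right one has $L_{q^{\prime}}$ leftmost; Condition~3 gives $L_{q^{\prime}}\in\mathcal{L}_q^{\text{right}}$, i.e.~$L_{q^{\prime}}\subseteq\mathbb{R}_{\text{right}}^{2}(\Gamma_{l_q}^{\text{diag}})$, so $L_q$ remains diagonally leftmost in the union. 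This is the only place where the switch $q\mapsto q^{\prime}$ visibly enters the bookkeeping, and it is handled precisely by Condition~3.
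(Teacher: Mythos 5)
Your proposal is correct and follows essentially the same route as the paper: the same partition of $\mathcal{L}\cap X(a,b)$ into $X(a,c^{\prime})$, $X(c^{\prime},b)$, the segments meeting $F_{c^{\prime}}$, and the segments meeting $S_{c^{\prime}}\cap \Gamma_{c^{\prime}}^{\text{diag}}$, the same two-class partition of the points via Condition~5, and the same bookkeeping for the end-pair and the diagonally leftmost segment. The only cosmetic differences are that you split the analysis of $S_{c^{\prime}}\cap\Gamma_{c^{\prime}}^{\text{diag}}$ on the position relative to $\Gamma_{l_i}^{\text{vert}}$ (attributing part of the domination to $L_i$), whereas the paper observes directly that this set lies on $\Gamma_{l_{q^{\prime}}}^{\text{diag}}\subseteq S_{q^{\prime}}\cup F_{q^{\prime}}$ and charges it all to $L_{q^{\prime}}$.
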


\begin{proof}
Assume that $BD(a,c^{\prime },q,i,i^{\prime })\neq \bot $ and $BD(c^{\prime
},b,q^{\prime },i,i^{\prime })\neq \bot $. First note that, since $c^{\prime
}\in R(a,b)$ by assumption, it follows that $X(a,c^{\prime })\cup
X(c^{\prime },b)\subseteq X(a,b)$, cf.~Eq.~(\ref{X(a,b)-def-eq}). Since $%
c^{\prime }\in F_{l_{i}}$ by assumption, it follows that $F_{c^{\prime
}}\subseteq F_{l_{i}}\subseteq S_{i}\cup F_{i}$. Moreover, if $c^{\prime
}\in \Gamma _{l_{q^{\prime }}}^{\text{diag}}$ then $S_{c^{\prime }}\cap
\Gamma _{c^{\prime }}^{\text{diag}}\subseteq \Gamma _{l_{q^{\prime }}}^{%
\text{diag}}$, and thus $S_{c^{\prime }}\cap \Gamma _{c^{\prime }}^{\text{%
diag}}\subseteq S_{q^{\prime }}\cup F_{q^{\prime }}$.

Similarly to the proof of Lemma~\ref%
{bounded-dom-correctness-lem-2-first-direction-lem-1}, the line segments of $%
\mathcal{L}\cap X(a,b)$ can be partitioned into the following sets:%
\begin{eqnarray*}
\mathcal{L}_{1} &=&\mathcal{L}\cap X(a,c^{\prime }), \\
\mathcal{L}_{2} &=&\mathcal{L}\cap X(c^{\prime },b), \\
\mathcal{L}_{3} &=&\{L_{k}\in \mathcal{L}\cap X(a,b):L_{k}\cap F_{c^{\prime
}}\neq \emptyset \}, \\
\mathcal{L}_{4} &=&\{L_{k}\in \mathcal{L}\cap X(a,b):L_{k}\cap S_{c^{\prime
}}\cap \Gamma _{c^{\prime }}^{\text{diag}}\neq \emptyset \}.
\end{eqnarray*}

Since $BD(a,c^{\prime },q,i,i^{\prime })\neq \bot $ and $BD(c^{\prime
},b,q^{\prime },i,i^{\prime })\neq \bot $ by assumption, it follows that the
line segments of $\mathcal{L}_{1}$ are all dominated by $BD(a,c^{\prime
},q,i,i^{\prime })$ and that the line segments of $\mathcal{L}_{2}$ are all
dominated by $BD(c^{\prime },b,q^{\prime },i,i^{\prime })$. Furthermore,
since $F_{c^{\prime }}\subseteq S_{i}\cup F_{i}$ as we proved above, it
follows that all line segments of $\mathcal{L}_{3}$ are dominated by the
line segment $L_{i}$. If $c^{\prime }\in \Gamma _{b}^{\text{diag}}$ then $%
\mathcal{L}_{4}=\emptyset $. Suppose that $c^{\prime }\in \Gamma
_{l_{q^{\prime }}}^{\text{diag}}$. Then, since $S_{c^{\prime }}\cap \Gamma
_{c^{\prime }}^{\text{diag}}\subseteq S_{q^{\prime }}\cup F_{q^{\prime }}$
as we proved above, it follows that all line segments of $\mathcal{L}_{4}$
are dominated by the line segment $L_{q^{\prime }}$. That is, in both cases
where $c^{\prime }\in \Gamma _{b}^{\text{diag}}$ or $c^{\prime }\in \Gamma
_{l_{q^{\prime }}}^{\text{diag}}$, all line segments of $\mathcal{L}\cap
X(a,b)=\mathcal{L}_{1}\cup \mathcal{L}_{2}\cup \mathcal{L}_{3}\cup \mathcal{L%
}_{4}$ are dominated by $BD(a,c^{\prime },q,i,i^{\prime })\cup BD(c^{\prime
},b,q^{\prime },i,i^{\prime })$.

Since $c^{\prime }\in F_{l_{i}}$ and $\mathcal{P}\cap X(a,b)\cap
F_{c^{\prime }}=\emptyset $ by the assumption, it follows that the points of 
$\mathcal{P}\cap X(a,b)$ can be partitioned into the following sets:%
\begin{eqnarray*}
\mathcal{P}_{1} &=&\mathcal{P}\cap X(a,c^{\prime }), \\
\mathcal{P}_{2} &=&\mathcal{P}\cap X(c^{\prime },b).
\end{eqnarray*}%
It is easy to see that the points of $\mathcal{P}_{1}$ are all dominated by $%
BD(a,c^{\prime },q,i,i^{\prime })$ and that the points of $\mathcal{P}_{2}$
are all dominated by $BD(c^{\prime },b,q^{\prime },i,i^{\prime })$.
Summarizing, $BD(a,c^{\prime },q,i,i^{\prime })\cup BD(c^{\prime
},b,q^{\prime },i,i^{\prime })$ is a dominating set of $X(a,b)$.

Since $(i,i^{\prime })$ is the end-pair of both $BD(a,c^{\prime
},q,i,i^{\prime })$ and $BD(c^{\prime },b,q^{\prime },i,i^{\prime })$, it
follows that $(i,i^{\prime })$ is also the end-pair of $BD(a,c^{\prime
},q,i,i^{\prime })\cup BD(c^{\prime },b,q^{\prime },i,i^{\prime })$. Now
note that $L_{q}$ is the diagonally leftmost line segment of $BD(a,c^{\prime
},q,i,i^{\prime })$ and $L_{q^{\prime }}$ is the diagonally leftmost line
segment of $BD(c^{\prime },b,q^{\prime },i,i^{\prime })$. Therefore, since $%
L_{q^{\prime }}\in \mathcal{\mathcal{L}}_{q}^{\text{right}}\cap \mathcal{L}%
_{i,i^{\prime }}^{\text{left}}$ by assumption, it follows that $L_{q}$
remains the diagonally leftmost line segment of $BD(a,c^{\prime
},q,i,i^{\prime })\cup BD(c^{\prime },b,q^{\prime },i,i^{\prime })$. This
completes the proof of the lemma.
\end{proof}

\medskip

Given the statements of Lemmas~\ref%
{bounded-dom-correctness-lem-2-first-direction-lem-1} and~\ref%
{bounded-dom-correctness-lem-2-first-direction-lem-2}, we are now ready to
provide our third recursive computation for $BD(a,b,q,i,i^{\prime })$ in the
next lemma.

\begin{lemma}
\label{bounded-dom-correctness-lem-2}Suppose that $BD(a,b,q,i,i^{\prime
})\neq \bot $ and that $BD(a,b,q,i,i^{\prime })\neq
\{L_{q},L_{i},L_{i^{\prime }}\}$, where $R(a,b)\nsubseteq S_{i}$ and $%
b\notin S_{l_{i}}$. If $BD(a,b,q,i,i^{\prime })\setminus L_{i}$ does not
dominate all elements of $\{x\in X(a,b):x\cap (S_{i}\cup F_{i})\neq
\emptyset \}$ then%
\begin{equation}
BD(a,b,q,i,i^{\prime })=\min_{c,c^{\prime },q^{\prime }}%
\begin{cases}
BD(a,c,q,i,i^{\prime })\cup BD(c,b,q,i,i^{\prime }) \\ 
BD(a,c^{\prime },q,i,i^{\prime })\cup BD(c^{\prime },b,q^{\prime
},i,i^{\prime })%
\end{cases}%
,  \label{recursion-eq-2}
\end{equation}%
where the minimum is taken over all $c,c^{\prime },q^{\prime }$ that satisfy
the Conditions of Lemmas~\ref%
{bounded-dom-correctness-lem-2-first-direction-lem-1} and~\ref%
{bounded-dom-correctness-lem-2-first-direction-lem-2}, i.e.,

\begin{enumerate}
\item $c,c^{\prime }\in \mathcal{B}\cap R(a,b)$,

\item $c\in \mathbb{R}_{\text{right}}^{2}(\Gamma _{l_{i}}^{\text{vert}%
})\setminus F_{l_{i}}$ and $c^{\prime }\in F_{l_{i}}$,

\item $L_{i},L_{i^{\prime }}\in \mathcal{\mathcal{L}}_{q^{\prime }}^{\text{%
right}}$,

\item $L_{q^{\prime }}\in \mathcal{\mathcal{L}}_{q}^{\text{right}}\cap 
\mathcal{L}_{i,i^{\prime }}^{\text{left}}$ and $l_{q^{\prime }}\in F_{l_{i}}$%
,

\item $c^{\prime }\in \Gamma _{l_{q^{\prime }}}^{\text{diag}}$ or $c^{\prime
}\in \Gamma _{b}^{\text{diag}}$, and

\item $\mathcal{P}\cap X(a,b)\cap F_{c}\cap F_{i}=\emptyset $ and $\mathcal{P%
}\cap X(a,b)\cap F_{c^{\prime }}=\emptyset $.
\end{enumerate}
\end{lemma}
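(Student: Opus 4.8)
The plan is to prove the claimed identity by establishing the two inequalities $|BD(a,b,q,i,i^{\prime})|\leq(\text{RHS})$ and $(\text{RHS})\leq|BD(a,b,q,i,i^{\prime})|$ separately, where $(\text{RHS})$ denotes the minimum size over the decompositions on the right-hand side of Eq.~(\ref{recursion-eq-2}). The first inequality is immediate from the two preceding lemmas: for every choice of $c$ (resp.~$c^{\prime},q^{\prime}$) satisfying Conditions~1--6 with all component values different from $\bot$, Lemma~\ref{bounded-dom-correctness-lem-2-first-direction-lem-1} (resp.~Lemma~\ref{bounded-dom-correctness-lem-2-first-direction-lem-2}) guarantees that the corresponding union is a dominating set of $X(a,b)$ in which $L_q$ is the diagonally leftmost line segment and $(i,i^{\prime})$ is the end-pair. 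Since $BD(a,b,q,i,i^{\prime})$ is by definition a \emph{smallest} such set, its size is at most the size of the minimizing union.

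For the reverse inequality I would take an optimal set $Z=BD(a,b,q,i,i^{\prime})$ and exhibit a valid decomposition of it of size at most $|Z|$. By hypothesis $Z\setminus\{L_i\}$ fails to dominate some element $x_0\in X(a,b)$ with $x_0\cap(S_i\cup F_i)\neq\emptyset$; since $L_i\notin X(a,b)$ (because $b\in\mathbb{R}_{\text{left}}^{2}(\Gamma_{r_i}^{\text{vert}})$) while $L_i$ is forced into $Z$ as part of the end-pair, the segment $L_i$ is the unique dominator of $x_0$ in $Z$ and cannot be removed. I would then read off from $Z$ a splitting point in the finite candidate set $\mathcal{B}\cap R(a,b)$: intuitively, $c$ (resp.~$c^{\prime}$) is placed at the diagonal position where $L_i$ takes over domination, so that every element of the middle strip $X(a,b)\setminus(X(a,c)\cup X(c,b))$ meets $F_c$ or $S_c\cap\Gamma_c^{\text{diag}}$ and is therefore dominated by $L_i$ --- this is exactly the region containment $F_c\cap R(a,b)\subseteq S_i\cup F_i$ and $S_c\cap\Gamma_c^{\text{diag}}\subseteq S_i\cup F_i$ verified in the proofs of the two preceding lemmas. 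The position of this point relative to $F_{l_i}$ dictates the case: if it lies in $\mathbb{R}_{\text{right}}^{2}(\Gamma_{l_i}^{\text{vert}})\setminus F_{l_i}$ then the diagonally leftmost segment $L_q$ survives on both sides and we are in Case~1; if it lies in $F_{l_i}$ then $l_i$ separates $L_q$ from the right block, so a new diagonally leftmost segment $L_{q^{\prime}}$ (with $L_{q^{\prime}}\in\mathcal{L}_q^{\text{right}}\cap\mathcal{L}_{i,i^{\prime}}^{\text{left}}$ and $l_{q^{\prime}}\in F_{l_i}$) must take over on the right, and we are in Case~2, which is why Condition~5 places $c^{\prime}$ on $\Gamma_{l_{q^{\prime}}}^{\text{diag}}$ or on $\Gamma_b^{\text{diag}}$.

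Having fixed the split, I would let $Z_L$ and $Z_R$ be the subsets of $Z$ responsible for the left and the right regions, each containing the forced common segments $\{L_q,L_i,L_{i^{\prime}}\}$ (together with $L_{q^{\prime}}$ on the right in Case~2), and verify that $Z_L$ is a feasible competitor for $BD(a,c,q,i,i^{\prime})$ and $Z_R$ for $BD(c,b,q,i,i^{\prime})$ --- i.e.~each has end-pair $(i,i^{\prime})$, the correct diagonally leftmost segment, and dominates its sub-instance --- so that $|BD(a,c,q,i,i^{\prime})|\leq|Z_L|$ and $|BD(c,b,q,i,i^{\prime})|\leq|Z_R|$ by minimality. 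Because $Z_L$ and $Z_R$ overlap precisely in those forced common segments, their sizes add up with the correct overlap correction and give $|BD(a,c,q,i,i^{\prime})\cup BD(c,b,q,i,i^{\prime})|\leq|Z|$; combined with the first inequality this yields the claimed equality. The main obstacle is this reverse direction, and within it the verification of Condition~6, namely that the splitting point can be chosen so that $\mathcal{P}\cap X(a,b)\cap F_c\cap F_i=\emptyset$ (resp.~$\mathcal{P}\cap X(a,b)\cap F_{c^{\prime}}=\emptyset$): one must ensure that no \emph{point} of $\mathcal{P}$ is stranded in the middle strip outside $S_i$, where $L_i$ could not reach it (recall a point is dominated by $L_i$ only if it lies in $S_i$, cf.~Lemma~\ref{shadow-correctness-lem-2}). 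This is exactly the step that uses the present hypothesis that $Z\setminus\{L_i\}$ does \emph{not} already dominate the $(S_i\cup F_i)$-elements, the complementary case being handled by Lemma~\ref{bounded-dom-correctness-lem-1}, and the delicate point is to argue that among the discretely many candidates in $\mathcal{B}\cap R(a,b)$ such a clean split always exists for the optimal $Z$.
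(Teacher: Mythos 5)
Your overall plan coincides with the paper's proof: the inequality $|BD(a,b,q,i,i^{\prime})|\leq(\text{RHS})$ is indeed immediate from Lemmas~\ref{bounded-dom-correctness-lem-2-first-direction-lem-1} and~\ref{bounded-dom-correctness-lem-2-first-direction-lem-2}, and the reverse inequality is obtained exactly as you describe, by splitting an optimal $Z$ at a point of $\mathcal{B}\cap R(a,b)$ and charging the middle strip to $L_{i}$ (or to $L_{q^{\prime}}$). You also correctly identify that the case distinction is governed by whether the split lands in $F_{l_{i}}$, forcing a new diagonally leftmost segment $L_{q^{\prime}}$ on the right.

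However, the step you yourself flag as ``the delicate point'' --- producing a split point in $\mathcal{B}\cap R(a,b)$ for which Condition~6 holds --- is the crux of the lemma, and your proposal does not supply the idea that resolves it. The paper's mechanism is the following. From the witness $x_{0}$ (dominated in $Z$ only by $L_{i}$) one extracts a single point $t^{\ast}$ (a point of $L_{i}\cap F_{t}$ when $x_{0}$ meets $S_{i}$, or a point of $x_{0}\cap F_{i}$ otherwise) with the property that $t^{\ast}\notin S_{k}$ for every $L_{k}\in Z\setminus\{L_{i}\}$; this is forced because $t\in S_{t^{\ast}}$, so $t^{\ast}\in S_{k}$ would make $L_{k}$ dominate $x_{0}$. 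The split point $c$ (resp.\ $c^{\prime}$) is then \emph{not} ``where $L_{i}$ takes over'' in any direct sense: it is obtained by snapping to extremal endpoints of the segments of $Z\setminus\{L_{i}\}$ on either side of $t^{\ast}$, namely the rightmost $r_{j}$ lying in $\mathbb{R}_{\text{left}}^{2}(\Gamma_{t^{\ast}}^{\text{vert}})$ and the diagonally leftmost $l_{j^{\prime}}$ lying in $\mathbb{R}_{\text{right}}^{2}(\Gamma_{t^{\ast}}^{\text{diag}})$ (with fallbacks $l_{i}$ and $b$), which is what places $c$ in the finite set $\mathcal{B}$ and simultaneously guarantees that $Z$ genuinely splits into a left and a right part. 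Condition~6 then follows by contradiction: a point $p_{k}\in\mathcal{P}\cap X(a,b)\cap F_{c}\cap F_{i}$ would need a dominator $L_{k^{\prime}}\in Z\setminus\{L_{i}\}$, and the extremality of $c$ forces $L_{k^{\prime}}\cap F_{t^{\ast}}\neq\emptyset$, i.e.\ $t^{\ast}\in S_{k^{\prime}}$, contradicting the property of $t^{\ast}$ above. Without this construction (and the accompanying partition of $Z\setminus\{L_{q},L_{i},L_{i^{\prime}}\}$ into below/left/right pieces, plus the two-subcase cardinality bookkeeping when $L_{q}\neq L_{q^{\prime}}$), the reverse inequality is not established, so as written the proposal is an accurate roadmap with the decisive argument still missing.
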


\begin{proof}
Assume that $BD(a,b,q,i,i^{\prime })\setminus L_{i}$ does not dominate all
elements of $\{x\in X(a,b):x\cap (S_{i}\cup F_{i})\neq \emptyset \}$. Recall
that $b\in \mathbb{R}_{\text{left}}^{2}(\Gamma _{r_{i}}^{\text{vert}})$ by
Definition~\ref{BD-def}. First we prove that also $b\in \mathbb{R}_{\text{%
right}}^{2}(\Gamma _{l_{i}}^{\text{vert}})$. Assume otherwise that $b\notin 
\mathbb{R}_{\text{right}}^{2}(\Gamma _{l_{i}}^{\text{vert}})$. Then, since $%
b\notin S_{l_{i}}$ by the assumption of the lemma, it follows that $b\in
B_{l_{i}}$. Thus $\left( S_{i}\cup F_{i}\right) \cap B_{b}=\emptyset $,
i.e.,~$L_{i}$ does not dominate any element of $X(a,b)$, cf.~Eq.~(\ref%
{X(a,b)-def-eq}). Therefore, since $BD(a,b,q,i,i^{\prime })\setminus L_{i}$
does not dominate all elements of $\{x\in X(a,b):x\cap (S_{i}\cup F_{i})\neq
\emptyset \}$ by assumption, it follows that $BD(a,b,q,i,i^{\prime })$ does
also not dominate all elements of $X(a,b)$, which is a contradiction to the
assumption that $BD(a,b,q,i,i^{\prime })\neq \bot $. Therefore $b\in \mathbb{%
R}_{\text{right}}^{2}(\Gamma _{l_{i}}^{\text{vert}})$.

Let $x_{0}\in X(a,b)$ be such that $x_{0}\cap (S_{i}\cup F_{i})\neq
\emptyset $ and $x_{0}$ is not dominated by $BD(a,b,q,i,i^{\prime
})\setminus L_{i}$. Let also $Z\subseteq \mathcal{L\cap \mathcal{L}}_{q}^{%
\text{right}}\mathcal{\cap \mathcal{L}}_{i,i^{\prime }}^{\text{left}}$ be an
arbitrary dominating set of $X(a,b)$ such that $L_{q}$ is the diagonally
leftmost line segment of $Z$ and $(i,i^{\prime })$ is the end-pair of $Z$.
Suppose that $|Z|=|BD(a,b,q,i,i^{\prime })|$ and that $x_{0}$ is dominated
by $L_{i}$ but not by $Z\setminus L_{i}$. Note that such a dominating set $Z$
always exists due to our assumption on $BD(a,b,q,i,i^{\prime })$. We
distinguish now two cases.

\medskip

\noindent \textbf{Case 1.} $x_{0}\cap \mathbb{R}_{\text{right}}^{2}(\Gamma
_{l_{i}}^{\text{diag}})\neq \emptyset $. Let $t\in 
\mathbb{R}
^{2}$ be an arbitrary point of $x_{0}\cap \mathbb{R}_{\text{right}%
}^{2}(\Gamma _{l_{i}}^{\text{diag}})$. Since $x_{0}\in X(a,b)$ and $b\in 
\mathbb{R}_{\text{left}}^{2}(\Gamma _{r_{i}}^{\text{vert}})$ by Definition~%
\ref{BD-def}, it follows that $t\in S_{i}\cup F_{i}$. If $t\in S_{i}$ then
let $t^{\ast }\in R(a,b)$ be an arbitrary point on the intersection of the
line segment $L_{i}$ with the reverse shadow $F_{t}$ of the point $t$, i.e., 
$t^{\ast }\in R(a,b)\cap L_{i}\cap F_{t}$. Note that $t^{\ast }$ always
exists, since $x_{0}\in X(a,b)$, $R(a,b)\nsubseteq S_{i}$ by the assumption
of the lemma, and $b\in \mathbb{R}_{\text{right}}^{2}(\Gamma _{l_{i}}^{\text{%
vert}})$ as we proved above. Otherwise, if $t\in F_{i}$, then we define $%
t^{\ast }=t$. Since $t\in \mathbb{R}_{\text{right}}^{2}(\Gamma _{l_{i}}^{%
\text{diag}})$ by assumption, note that in both cases where $t\in S_{i}$ and 
$t\in F_{i}$, we have that $t\in S_{t^{\ast }}$ and that either $t^{\ast
}\in L_{i}$ or $t^{\ast }\in F_{i}\setminus L_{i}$.

Suppose that there exists a line segment $L_{k}\in Z\setminus L_{i}$ such
that $t^{\ast }\in S_{k}$. Then, since $t\in S_{t^{\ast }}$, it follows that
also $t\in S_{k}$. Thus the element $x_{0}\in X(a,b)$ is dominated by $%
L_{k}\in Z\setminus L_{i}$, which is a contradiction. Therefore $t^{\ast
}\notin S_{k}$ for every line segment $L_{k}\in Z\setminus L_{i}$.

Let $j$ be the greatest index such that for the line segment $L_{j}\in
Z\setminus L_{i}$ we have $r_{j}\in \mathbb{R}_{\text{left}}^{2}(\Gamma
_{t^{\ast }}^{\text{vert}})$. That is, for every other line segment $%
L_{s}\in Z\setminus L_{i}$ with $r_{s}\in \mathbb{R}_{\text{left}%
}^{2}(\Gamma _{t^{\ast }}^{\text{vert}})$, we have $r_{s}\in \mathbb{R}_{%
\text{left}}^{2}(\Gamma _{l_{j}}^{\text{vert}})$. If $r_{j}\in \mathbb{R}_{%
\text{right}}^{2}(\Gamma _{l_{i}}^{\text{vert}})$ then we define $%
t_{1}=r_{j} $. If $r_{j}\notin \mathbb{R}_{\text{right}}^{2}(\Gamma
_{l_{i}}^{\text{vert}})$ then we define $t_{1}=l_{i}$. Furthermore, if such
a line segment $L_{j}$ does not exist in $Z\setminus L_{i}$ (i.e.,~if $%
r_{s}\notin \mathbb{R}_{\text{left}}^{2}(\Gamma _{t^{\ast }}^{\text{vert}})$
for every $L_{s}\in Z\setminus L_{i}$), then we define again $t_{1}=l_{i}$.

Let $L_{j^{\prime }}\in Z\setminus L_{i}$ be a line segment such that $%
l_{j^{\prime }}\in \mathbb{R}_{\text{right}}^{2}(\Gamma _{t^{\ast }}^{\text{%
diag}})$ and that, for every other line segment $L_{s}\in Z\setminus L_{i}$
with $l_{s}\in \mathbb{R}_{\text{right}}^{2}(\Gamma _{t^{\ast }}^{\text{diag}%
})$, we have $l_{s}\in \mathbb{R}_{\text{right}}^{2}(\Gamma _{l_{j^{\prime
}}}^{\text{diag}})$. If $l_{j^{\prime }}\in \mathbb{R}_{\text{left}%
}^{2}(\Gamma _{b}^{\text{diag}})$ then we define $t_{2}=l_{j^{\prime }}$. If 
$l_{j^{\prime }}\notin \mathbb{R}_{\text{left}}^{2}(\Gamma _{b}^{\text{diag}%
})$ then we define $t_{2}=b$. Furthermore, if such a line segment $%
L_{j^{\prime }}$ does not exist in $Z\setminus L_{i}$ (i.e.,~if $l_{s}\notin 
\mathbb{R}_{\text{right}}^{2}(\Gamma _{t^{\ast }}^{\text{diag}})$ for every $%
L_{s}\in Z\setminus L_{i}$), then we define again $t_{2}=b$.

Now we define 
\begin{equation*}
c=\Gamma _{t_{1}}^{\text{vert}}\cap \Gamma _{t_{2}}^{\text{diag}}.
\end{equation*}%
It is easy to check by the above definition of $t_{1}$ and $t_{2}$ that $%
c\in \mathcal{B}\cap R(a,b)$ and that $c\in \mathbb{R}_{\text{right}%
}^{2}(\Gamma _{l_{i}}^{\text{vert}})\setminus F_{l_{i}}$.

Assume that there exists at least one point $p_{k}\in \mathcal{P}\cap
X(a,b)\cap F_{c}\cap F_{i}$. Then, since $BD(a,b,q,i,i^{\prime })\neq \bot $
by assumption, there must be a line segment $L_{k^{\prime }}\in Z\setminus
L_{i}$ such that $L_{k^{\prime }}$ dominates $p_{k}$. Since $p_{k}\in F_{c}$
by assumption, it follows that $L_{k^{\prime }}\cap F_{c}\neq \emptyset $.
If $r_{k^{\prime }}\in \mathbb{R}_{\text{left}}^{2}(\Gamma _{t^{\ast }}^{%
\text{vert}})$ then $r_{k^{\prime }}\in \mathbb{R}_{\text{left}}^{2}(\Gamma
_{c}^{\text{vert}})$ by the above definition of $c$, and thus the line
segment $L_{k^{\prime }}$ does not dominate the point $p_{k}$, which is a
contradiction. Therefore $r_{k^{\prime }}\notin \mathbb{R}_{\text{left}%
}^{2}(\Gamma _{t^{\ast }}^{\text{vert}})$. If $l_{k^{\prime }}\in \mathbb{R}%
_{\text{right}}^{2}(\Gamma _{t^{\ast }}^{\text{diag}})$ then $l_{k^{\prime
}}\in \mathbb{R}_{\text{right}}^{2}(\Gamma _{c}^{\text{diag}})$ by the above
definition of $c$, and thus the line segment $L_{k^{\prime }}$ does not
dominate the point $p_{k}$, which is a contradiction. Therefore $%
l_{k^{\prime }}\notin \mathbb{R}_{\text{right}}^{2}(\Gamma _{t^{\ast }}^{%
\text{diag}})$. Summarizing, $r_{k^{\prime }}\notin \mathbb{R}_{\text{left}%
}^{2}(\Gamma _{t^{\ast }}^{\text{vert}})$ and $l_{k^{\prime }}\notin \mathbb{%
R}_{\text{right}}^{2}(\Gamma _{t^{\ast }}^{\text{diag}})$, and thus $%
L_{k^{\prime }}\cap F_{t^{\ast }}\neq \emptyset $. That is, $t^{\ast }\in
S_{k^{\prime }}$ for some $L_{k^{\prime }}\in Z\setminus L_{i}$, which is a
contradiction as we proved above. Thus there does not exist such a point $%
p_{k}$, i.e.,~%
\begin{equation*}
\mathcal{P}\cap X(a,b)\cap F_{c}\cap F_{i}=\emptyset .
\end{equation*}

Assume that $t^{\ast }\in L_{i}$. Then, since $t^{\ast }\notin S_{k}$ for
every line segment $L_{k}\in Z\setminus L_{i}$ as we proved above, we can
partition the set $Z\setminus \{L_{q},L_{i},L_{i^{\prime }}\}$ into the sets 
$Z_{\text{below}}$, $Z_{\text{left}}$, and $Z_{\text{right}}$ as follows:%
\begin{eqnarray}
Z_{\text{below}} &=&\{L_{k}\in Z\setminus \{L_{q},L_{i},L_{i^{\prime
}}\}:L_{k}\cap S_{i}\neq \emptyset \},  \notag \\
Z_{\text{left}} &=&\{L_{k}\in Z\setminus \{L_{q},L_{i},L_{i^{\prime
}}\}:L_{k}\cap S_{i}=\emptyset ,L_{k}\subseteq \mathbb{R}_{\text{left}%
}^{2}(\Gamma _{t^{\ast }}^{\text{vert}})\},  \label{partition-case-1-eq-1} \\
Z_{\text{right}} &=&\{L_{k}\in Z\setminus \{L_{q},L_{i},L_{i^{\prime
}}\}:L_{k}\cap S_{i}=\emptyset ,L_{k}\subseteq \mathbb{R}_{\text{right}%
}^{2}(\Gamma _{t^{\ast }}^{\text{diag}})\}.  \notag
\end{eqnarray}

Assume now that $t^{\ast }\in F_{i}\setminus L_{i}$; then $t^{\ast }=t$ is a
point of $x_{0}$. Note that all points of $\mathcal{P}\cap X(a,b)\cap F_{i}$
are dominated by $Z\setminus L_{i}$, since they are not dominated by $L_{i}$
and $BD(a,b,q,i,i^{\prime })\neq \bot $ by assumption. Therefore $x_{0}$ is
a line segment, i.e.,~$x_{0}\in \mathcal{L}$. Assume that there exists a
line segment $L_{k}\in Z\setminus L_{i}$ such that $L_{k}\cap \left(
S_{t^{\ast }}\cup F_{t^{\ast }}\right) \neq \emptyset $. Then $x_{0}$ is
dominated by $L_{k}\in Z\setminus L_{i}$, which is a contradiction.
Therefore $L_{k}\cap \left( S_{t^{\ast }}\cup F_{t^{\ast }}\right)
=\emptyset $ for every line segment $L_{k}\in Z\setminus L_{i}$. That is,
for every $L_{k}\in Z\setminus L_{i}$ we have that either $L_{k}\subseteq
B_{t^{\ast }}$ or $L_{k}\subseteq A_{t^{\ast }}$. Therefore, in the case
where $t^{\ast }\in F_{i}\setminus L_{i}$, we can partition the set $%
Z\setminus \{L_{q},L_{i},L_{i^{\prime }}\}$ into the sets $Z_{\text{below}}$%
, $Z_{\text{left}}$, and $Z_{\text{right}}$ as follows:%
\begin{eqnarray}
Z_{\text{below}} &=&\emptyset ,  \notag \\
Z_{\text{left}} &=&\{L_{k}\in Z\setminus \{L_{q},L_{i},L_{i^{\prime
}}\}:L_{k}\subseteq B_{t^{\ast }}\},  \label{partition-case-1-eq-2} \\
Z_{\text{right}} &=&\{L_{k}\in Z\setminus \{L_{q},L_{i},L_{i^{\prime
}}\}:L_{k}\subseteq A_{t^{\ast }}\}.  \notag
\end{eqnarray}

Notice that, in both cases where $t^{\ast }\in L_{i}$ and $t^{\ast }\in
F_{i}\setminus L_{i}$, the set $Z_{1}=Z_{\text{below}}\cup Z_{\text{left}%
}\cup \{L_{q},L_{i},L_{i^{\prime }}\}$ is a dominating set of $X(a,c)$.
Furthermore the set $Z_{2}=Z_{\text{right}}\cup \{L_{q},L_{i},L_{i^{\prime
}}\}$ is a dominating set of $X(c,b)$. Moreover, $L_{q}$ is the diagonally
leftmost line segment and $(i,i^{\prime })$ is the end-pair of both $Z_{1}$
and $Z_{2}$. Therefore $|BD(a,c,q,i,i^{\prime })|\leq |Z_{1}|$ and $%
|BD(c,b,q,i,i^{\prime })|\leq |Z_{2}|$. Now, since $\{L_{q},L_{i},L_{i^{%
\prime }}\}\subseteq BD(a,c,q,i,i^{\prime })\cap BD(c,b,q,i,i^{\prime })$,
we have that%
\begin{eqnarray*}
|BD(a,c,q,i,i^{\prime })\cup BD(c,b,q,i,i^{\prime })| &\leq
&|BD(a,c,q,i,i^{\prime })|+|BD(c,b,q,i,i^{\prime
})|-|\{L_{q},L_{i},L_{i^{\prime }}\}| \\
&\leq &|Z_{1}|+|Z_{2}|-|\{L_{q},L_{i},L_{i^{\prime }}\}| \\
&=&|Z_{\text{below}}\cup Z_{\text{left}}\cup \{L_{q},L_{i},L_{i^{\prime }}\}|
\\
&&+|Z_{\text{right}}\cup \{L_{q},L_{i},L_{i^{\prime
}}\}|-|\{L_{q},L_{i},L_{i^{\prime }}\}| \\
&=&|Z_{\text{below}}|+|Z_{\text{left}}|+|Z_{\text{right}}|+|%
\{L_{q},L_{i},L_{i^{\prime }}\}| \\
&=&|Z|=|BD(a,b,q,i,i^{\prime })|.
\end{eqnarray*}%
\noindent

Finally Lemma~\ref{bounded-dom-correctness-lem-2-first-direction-lem-1}
implies that, if $BD(a,c,q,i,i^{\prime })\neq \bot $ and $%
BD(c,b,q,i,i^{\prime })\neq \bot $, then $BD(a,c,q,i,i^{\prime })\cup
BD(c,b,q,i,i^{\prime })$ is a dominating set of $X(a,b)$, in which $L_{q}$
is the diagonally leftmost line segment and $(i,i^{\prime })$ is the
end-pair. Therefore%
\begin{equation*}
|BD(a,b,q,i,i^{\prime })|\leq |BD(a,c,q,i,i^{\prime })\cup
BD(c,b,q,i,i^{\prime })|.
\end{equation*}%
It follows that $|BD(a,b,q,i,i^{\prime })|=|BD(a,c,q,i,i^{\prime })\cup
BD(c,b,q,i,i^{\prime })|$.

\medskip

\noindent \textbf{Case 2.} $x_{0}\cap \mathbb{R}_{\text{right}}^{2}(\Gamma
_{l_{i}}^{\text{diag}})=\emptyset $. Then, since $x_{0}\cap (S_{i}\cup
F_{i})\neq \emptyset $ by the initial assumption on $x_{0}$, it follows that 
$x_{0}\cap F_{i}\neq \emptyset $. Note that all points in $\mathcal{P}\cap
X(a,b)\cap F_{i}$ are dominated by $Z\setminus \{L_{i}\}$, since they are
not dominated by $L_{i}$ and $BD(a,b,q,i,i^{\prime })\neq \bot $ by
assumption. Therefore $x_{0}\in \mathcal{L}$. Let $t^{\ast }\in \mathbb{R}^{2}$ be an arbitrary point of $x_{0}\cap F_{i}$.

If $i^{\prime }\neq i$ and $l_{i^{\prime }}\in \mathbb{R}_{\text{left}%
}^{2}(\Gamma _{l_{i}}^{\text{diag}})$, then $L_{i^{\prime }}\in Z\setminus
\{L_{i}\}$ and $L_{i^{\prime }}$ dominates $x_{0}$, which is a
contradiction. Therefore, if $i^{\prime }\neq i$ then $l_{i^{\prime }}\notin 
\mathbb{R}_{\text{left}}^{2}(\Gamma _{l_{i}}^{\text{diag}})$. Furthermore,
it follows that if $L_{q}\neq L_{i}$ then also $L_{q}\neq L_{i^{\prime }}$.

Assume that there exists a line segment $L_{k}\in Z\setminus L_{i}$ such
that $L_{k}\cap \left( S_{t^{\ast }}\cup F_{t^{\ast }}\right) \neq \emptyset 
$. Then $x_{0}$ is dominated by $L_{k}\in Z\setminus L_{i}$, which is a
contradiction. Therefore $L_{k}\cap \left( S_{t^{\ast }}\cup F_{t^{\ast
}}\right) =\emptyset $ for every line segment $L_{k}\in Z\setminus L_{i}$.
That is, for every $L_{k}\in Z\setminus L_{i}$ we have that either $%
L_{k}\subseteq B_{t^{\ast }}$ or $L_{k}\subseteq A_{t^{\ast }}$. Therefore,
similarly to Eq.~(\ref{partition-case-1-eq-2}) in Case 1, we can partition
the set $Z\setminus \{L_{q},L_{i},L_{i^{\prime }}\}$ into the sets $Z_{\text{%
left}}$ and $Z_{\text{right}}$ as follows:%
\begin{eqnarray}
Z_{\text{left}} &=&\{L_{k}\in Z\setminus \{L_{q},L_{i},L_{i^{\prime
}}\}:L_{k}\subseteq B_{t^{\ast }}\},  \notag \\
Z_{\text{right}} &=&\{L_{k}\in Z\setminus \{L_{q},L_{i},L_{i^{\prime
}}\}:L_{k}\subseteq A_{t^{\ast }}\}.  \label{partition-case-2-eq}
\end{eqnarray}

Similarly to Case 1, let $j$ be the greatest index such that for the line
segment $L_{j}\in Z\setminus L_{i}$ we have $r_{j}\in \mathbb{R}_{\text{left}%
}^{2}(\Gamma _{t^{\ast }}^{\text{vert}})$. That is, for every other line
segment $L_{s}\in Z\setminus L_{i}$ with $r_{s}\in \mathbb{R}_{\text{left}%
}^{2}(\Gamma _{t^{\ast }}^{\text{vert}})$, we have $r_{s}\in \mathbb{R}_{%
\text{left}}^{2}(\Gamma _{l_{j}}^{\text{vert}})$. If $r_{j}\in \mathbb{R}_{%
\text{right}}^{2}(\Gamma _{l_{i}}^{\text{vert}})$ then we define $%
t_{1}=r_{j} $. If $r_{j}\notin \mathbb{R}_{\text{right}}^{2}(\Gamma
_{l_{i}}^{\text{vert}})$ then we define $t_{1}=l_{i}$. Furthermore, if such
a line segment $L_{j}$ does not exist in $Z\setminus L_{i}$ (i.e.,~if $%
r_{s}\notin \mathbb{R}_{\text{left}}^{2}(\Gamma _{t^{\ast }}^{\text{vert}})$
for every $L_{s}\in Z\setminus L_{i}$), then we define again $t_{1}=l_{i}$.

Let $L_{j^{\prime }}\in Z\setminus L_{i}$ be a line segment such that $%
l_{j^{\prime }}\in \mathbb{R}_{\text{right}}^{2}(\Gamma _{t^{\ast }}^{\text{%
diag}})$ and that, for every other line segment $L_{s}\in Z\setminus L_{i}$
with $l_{s}\in \mathbb{R}_{\text{right}}^{2}(\Gamma _{t^{\ast }}^{\text{diag}%
})$, we have $l_{s}\in \mathbb{R}_{\text{right}}^{2}(\Gamma _{l_{j^{\prime
}}}^{\text{diag}})$. If $l_{j^{\prime }}\in \mathbb{R}_{\text{left}%
}^{2}(\Gamma _{l_{i}}^{\text{diag}})$ then we define $L_{q^{\prime
}}=L_{j^{\prime }}$. If $l_{j^{\prime }}\notin \mathbb{R}_{\text{left}%
}^{2}(\Gamma _{l_{i}}^{\text{diag}})$ then we define $L_{q^{\prime }}=L_{i}$%
. Furthermore, if such a line segment $L_{j^{\prime }}$ does not exist in $%
Z\setminus L_{i}$ (i.e.,~if $l_{s}\notin \mathbb{R}_{\text{right}%
}^{2}(\Gamma _{t^{\ast }}^{\text{diag}})$ for every $L_{s}\in Z\setminus
L_{i}$), then we define again $L_{q^{\prime }}=L_{i}$.

Thus, in both cases where $L_{q^{\prime }}=L_{j^{\prime }}$ and $%
L_{q^{\prime }}=L_{i}$, it follows that $L_{q^{\prime }}\in \mathcal{%
\mathcal{L}}_{q}^{\text{right}}\cap \mathcal{L}_{i,i^{\prime }}^{\text{left}%
} $ and that $l_{q^{\prime }}\in F_{l_{i}}$. Note that it can be either $%
L_{q^{\prime }}\neq L_{q}$ or $L_{q^{\prime }}=L_{q}$. Furthermore recall
that, if $i^{\prime }\neq i$, then $l_{i^{\prime }}\notin \mathbb{R}_{\text{%
left}}^{2}(\Gamma _{l_{i}}^{\text{diag}})$ as we proved above. Therefore $%
L_{i},L_{i^{\prime }}\in \mathcal{\mathcal{L}}_{q^{\prime }}^{\text{right}}$.

Now we define the point $t_{2}$ as follows. If $l_{q^{\prime }}\in \mathbb{R}%
_{\text{left}}^{2}(\Gamma _{b}^{\text{diag}})$ then we define $%
t_{2}=l_{q^{\prime }}$. Otherwise, if $l_{q^{\prime }}\notin \mathbb{R}_{%
\text{left}}^{2}(\Gamma _{b}^{\text{diag}})$ then we define $t_{2}=b$.
Furthermore we define 
\begin{equation*}
c^{\prime }=\Gamma _{t_{1}}^{\text{vert}}\cap \Gamma _{t_{2}}^{\text{diag}}.
\end{equation*}%
Therefore, due to the above definition of $t_{1}$ and $t_{2}$, it follows
that $c^{\prime }\in \Gamma _{l_{q^{\prime }}}^{\text{diag}}$ or $c^{\prime
}\in \Gamma _{b}^{\text{diag}}$. Furthermore note that $c^{\prime }\in
S_{t^{\ast }}$. It is easy to check by the definition of $t_{1}$ and $t_{2}$
that $c^{\prime }\in \mathcal{B}\cap R(a,b)$ and that $c^{\prime }\in
F_{l_{i}}$. Since $c^{\prime }\in F_{l_{i}}$, note that $F_{c^{\prime
}}\subseteq F_{i}$, and thus $F_{c^{\prime }}\cap F_{i}=F_{c^{\prime }}$.
Thus, similarly to Case 1, we can prove that%
\begin{equation*}
\mathcal{P}\cap X(a,b)\cap F_{c^{\prime }}=\emptyset .
\end{equation*}

Now recall the partition of the set $Z\setminus \{L_{q},L_{i},L_{i^{\prime
}}\}$ into the sets $Z_{\text{left}}$ and $Z_{\text{right}}$, cf.~Eq.~(\ref%
{partition-case-2-eq}). Notice that the set $Z_{1}=Z_{\text{left}}\cup
\{L_{q},L_{i},L_{i^{\prime }}\}$ is a dominating set of $X(a,c^{\prime })$
and that the set $Z_{2}=Z_{\text{right}}\cup \{L_{q^{\prime
}},L_{i},L_{i^{\prime }}\}$ is a dominating set of $X(c^{\prime },b)$.
Furthermore, $L_{q}$ is the diagonally leftmost line segment of $Z_{1}$ and $%
(i,i^{\prime })$ is the end-pair of $Z_{1}$. Similarly, $L_{q^{\prime }}$ is
the diagonally leftmost line segment of $Z_{2}$ and $(i,i^{\prime })$ is the
end-pair of $Z_{2}$. Therefore $|BD(a,c^{\prime },q,i,i^{\prime })|\leq
|Z_{1}|$ and $|BD(c^{\prime },b,q^{\prime },i,i^{\prime })|\leq |Z_{2}|$.

Let first $L_{q}=L_{q^{\prime }}$. Then, since $\{L_{q},L_{i},L_{i^{\prime
}}\}\subseteq BD(a,c^{\prime },q,i,i^{\prime })\cup BD(c^{\prime
},b,q^{\prime },i,i^{\prime })$, it follows that%
\begin{eqnarray*}
|BD(a,c^{\prime },q,i,i^{\prime })\cup BD(c^{\prime },b,q^{\prime
},i,i^{\prime })| &\leq &|BD(a,c^{\prime },q,i,i^{\prime })|+|BD(c^{\prime
},b,q^{\prime },i,i^{\prime })|-|\{L_{q},L_{i},L_{i^{\prime }}\}| \\
&\leq &|Z_{1}|+|Z_{2}|-|\{L_{q},L_{i},L_{i^{\prime }}\}| \\
&=&|Z_{\text{left}}\cup \{L_{q},L_{i},L_{i^{\prime }}\}|+|Z_{\text{right}%
}\cup \{L_{q^{\prime }},L_{i},L_{i^{\prime }}\}|-|\{L_{q},L_{i},L_{i^{\prime
}}\}| \\
&=&|Z_{\text{left}}|+|Z_{\text{right}}|+|\{L_{q},L_{i},L_{i^{\prime }}\}| \\
&=&|Z|=|BD(a,b,q,i,i^{\prime })|.
\end{eqnarray*}

Let now $L_{q}\neq L_{q^{\prime }}$. Then $l_{q^{\prime }}\in \mathbb{R}_{%
\text{right}}^{2}(\Gamma _{l_{q}}^{\text{diag}})$, since $L_{q^{\prime }}\in 
\mathcal{\mathcal{L}}_{q}^{\text{right}}$ as we proved above. Furthermore,
since $l_{q^{\prime }}\in \mathbb{R}_{\text{left}}^{2}(\Gamma _{l_{i}}^{%
\text{diag}})$ by definition of $q^{\prime }$, it follows that $L_{q}\neq
L_{i}$. Therefore also $L_{q}\neq L_{i^{\prime }}$, as we proved above.
Moreover, if $L_{q^{\prime }}\neq L_{i}$ then $L_{q^{\prime }}=L_{j^{\prime
}}$ by the above definition of $q^{\prime }$, and thus $L_{q^{\prime }}\in
Z_{\text{right}}$. Therefore, in both cases where $L_{q^{\prime }}=L_{i}$
and $L_{q^{\prime }}\neq L_{i}$, we have $Z_{2}=Z_{\text{right}}\cup
\{L_{q^{\prime }},L_{i},L_{i^{\prime }}\}=Z_{\text{right}}\cup
\{L_{i},L_{i^{\prime }}\}$. Thus, since $\{L_{i},L_{i^{\prime }}\}\subseteq
BD(a,c^{\prime },q,i,i^{\prime })\cap BD(c^{\prime },b,q^{\prime
},i,i^{\prime })$, it follows that%
\begin{eqnarray*}
|BD(a,c^{\prime },q,i,i^{\prime })\cup BD(c^{\prime },b,q^{\prime
},i,i^{\prime })| &\leq &|BD(a,c^{\prime },q,i,i^{\prime })|+|BD(c^{\prime
},b,q^{\prime },i,i^{\prime })|-|\{L_{i},L_{i^{\prime }}\}| \\
&\leq &|Z_{1}|+|Z_{2}|-|\{L_{i},L_{i^{\prime }}\}| \\
&=&|Z_{\text{left}}\cup \{L_{q},L_{i},L_{i^{\prime }}\}|+|Z_{\text{right}%
}\cup \{L_{i},L_{i^{\prime }}\}|-|\{L_{i},L_{i^{\prime }}\}| \\
&=&|Z_{\text{left}}\cup \{L_{q}\}|+|Z_{\text{right}}|+|\{L_{i},L_{i^{\prime
}}\}| \\
&=&|Z_{\text{left}}|+|Z_{\text{right}}|+|\{L_{q}\}|+|\{L_{i},L_{i^{\prime
}}\}| \\
&=&|Z|=|BD(a,b,q,i,i^{\prime })|.
\end{eqnarray*}

Finally Lemma~\ref{bounded-dom-correctness-lem-2-first-direction-lem-2}
implies that, if $BD(a,c^{\prime },q,i,i^{\prime })\neq \bot $ and $%
BD(c^{\prime },b,q^{\prime },i,i^{\prime })\neq \bot $ then $BD(a,c^{\prime
},q,i,i^{\prime })\cup BD(c^{\prime },b,q^{\prime },i,i^{\prime })$ is a
dominating set of $X(a,b)$, in which $L_{q}$ is the diagonally leftmost line
segment and $(i,i^{\prime })$ is the end-pair. Therefore%
\begin{equation*}
|BD(a,b,q,i,i^{\prime })|\leq |BD(a,c^{\prime },q,i,i^{\prime })\cup
BD(c^{\prime },b,q^{\prime },i,i^{\prime })|.
\end{equation*}%
It follows that $|BD(a,b,q,i,i^{\prime })|=|BD(a,c^{\prime },q,i,i^{\prime
})\cup BD(c^{\prime },b,q^{\prime },i,i^{\prime })|$.

\medskip

Summarizing Case 1 and Case 2, it follows that the value of $%
BD(a,b,q,i,i^{\prime })$ can be computed by Eq.~(\ref{recursion-eq-2}),
where the minimum is taken over all values of $c,c^{\prime },q^{\prime }$,
as stated in the lemma.
\end{proof}

\medskip

Using the recursive computations of Lemmas~\ref%
{bounded-dom-correctness-lem-0},~\ref{bounded-dom-correctness-lem-1}, and~%
\ref{bounded-dom-correctness-lem-2}, we are now ready to present Algorithm~%
\ref{bounded-dominating-tolerance-alg} for computing \textsc{Bounded
Dominating Set} on tolerance graphs in polynomial time.

\begin{algorithm}[t!]
\caption{\textsc{Bounded Dominating Set} on Tolerance Graphs} \label{bounded-dominating-tolerance-alg}
\begin{algorithmic} [1]
\REQUIRE{A horizontal shadow representation $(\mathcal{P},\mathcal{L})$, where ${\mathcal{P} = \{p_{1},p_{2}, \ldots, p_{|\mathcal{P}|}\}}$ and 
${\mathcal{L} = \{L_{1},L_{2}, \ldots, L_{|\mathcal{L}|}\}}$}
\ENSURE{A set $Z\subseteq \mathcal{L}$ of minimum size that dominates $(\mathcal{P},\mathcal{L})$, 
or the announcement that $\mathcal{L}$ does not dominate~${(\mathcal{P},\mathcal{L})}$}

\medskip

\STATE{Add two dummy line segments $L_{0}$ and $L_{|\mathcal{L}|+1}$ completely to the left and to the right of $\mathcal{P} \cup \mathcal{L}$, respectively} \label{alg-1-line-1}

\STATE{$\mathcal{L} \leftarrow \mathcal{L} \cup \{L_{0}, L_{|\mathcal{L}|+1}\}$; \ \ 
denote ${\mathcal{L} = \{L_{1},L_{2}, \ldots, L_{|\mathcal{L}|}\}}$, where now $L_{1}$ and $L_{|\mathcal{L}|}$ are dummy} \label{alg-1-line-2}  

\vspace{0,1cm}

\STATE{$\mathcal{A} \leftarrow \{l_{i},r_{i}:1\leq i\leq |\mathcal{L}|\}$; \ \ 
$\mathcal{B} \leftarrow \{\Gamma _{t}^{\text{diag} }\cap \Gamma _{t^{\prime }}^{\text{vert}}:t,t^{\prime }\in \mathcal{A}\}$} \label{alg-1-line-3}

\vspace{0,1cm}

\FOR[initialization]{every pair of points $(a,b) \in \mathcal{A} \times \mathcal{B}$ such that $b \in \mathbb{R}_{\text{right}}^{2}(\Gamma_{a}^{\text{diag}})$} \label{alg-1-line-4}

     \STATE{$X(a,b) \leftarrow \{x\in \mathcal{P}\cup \mathcal{L}:x\subseteq \left( B_{b}\setminus \Gamma _{b}^{\text{vert}}\right) \cap \mathbb{R}_{\text{right}}^{2}(\Gamma_{a}^{\text{diag}})\}$} \label{alg-1-line-5}

     \FOR{every $q,i,i^{\prime} \in \{1,2,\ldots, |\mathcal{L}|\}$} \label{alg-1-line-6}
          \IF[$(i,i^{\prime})$ is a right-crossing pair]{$r_{i^{\prime}} \in S_{r_{i}}$}\label{alg-1-line-7}
               
               \IF{$L_{q}\in \mathcal{L}_{i,i^{\prime }}^{\text{left}}$, \ $L_{i},L_{i^{\prime }}\in \mathcal{\mathcal{L}}_{q}^{\text{right}}$, \ and \ $b\in \mathbb{R}_{\text{left}}^{2}(\Gamma _{r_{i}}^{\text{vert}})$}\label{alg-1-line-8}

                    \vspace{0,1cm}
                    \STATE{$\mathcal{L}_{i,i^{\prime }}^{\text{left}} \leftarrow \{x\in \mathcal{P}\cup \mathcal{L} : x\subseteq B_{t}$,\ where $t=\Gamma _{r_{i}}^{\text{vert}}\cap \Gamma_{r_{i^{\prime }}}^{\text{diag}}\}$}\label{alg-1-line-9}
                    \STATE{$\mathcal{L}_{q}^{\text{right}} \leftarrow \{x\in \mathcal{P}\cup \mathcal{L}:x\subseteq 
\mathbb{R}_{\text{right}}^{2}(\Gamma _{l_{q}}^{\text{diag}})\}$}\label{alg-1-line-10}

                    \vspace{0,1cm}
                    
                    \IF{$\mathcal{L\cap \mathcal{L}}_{q}^{\text{right}}\mathcal{\cap L}_{i,i^{\prime}}^{\text{left}}$ does not dominate all elements of $X(a,b)$}\label{alg-1-line-11}
                         \STATE{$BD(a,b,q,i,i^{\prime}) \leftarrow \bot$}\label{alg-1-line-12}
                    \ELSIF{$\{L_{q},L_{i},L_{i^{\prime }}\}$ dominates all elements of $X(a,b)$}\label{alg-1-line-13}
                         \STATE{$BD(a,b,q,i,i^{\prime}) \leftarrow \{L_{q},L_{i},L_{i^{\prime }}\}$}\label{alg-1-line-14}
                    \ELSE \label{alg-1-line-15}
                         \STATE{$BD(a,b,q,i,i^{\prime}) \leftarrow \mathcal{L\cap \mathcal{L}}_{q}^{\text{right}}\mathcal{\cap L}_{i,i^{\prime}}^{\text{left}}$} \COMMENT{initialization}\label{alg-1-line-16}
                    \ENDIF
                      
               \ENDIF
          \ENDIF
     \ENDFOR 
\ENDFOR

\medskip

\FOR{every pair of points $(a,b) \in \mathcal{A} \times \mathcal{B}$ such that $b \in \mathbb{R}_{\text{right}}^{2}(\Gamma_{a}^{\text{diag}})$}\label{alg-1-line-17}
     \FOR{every $q,i,i^{\prime} \in \{1,2,\ldots, |\mathcal{L}|\}$}\label{alg-1-line-18}
          \IF[$(i,i^{\prime})$ is a right-crossing pair]{$r_{i^{\prime}} \in S_{r_{i}}$}\label{alg-1-line-19}
               \IF{$L_{q}\in \mathcal{L}_{i,i^{\prime }}^{\text{left}}$, \ $L_{i},L_{i^{\prime }}\in \mathcal{\mathcal{L}}_{q}^{\text{right}}$, \ and \ $b\in \mathbb{R}_{\text{left}}^{2}(\Gamma _{r_{i}}^{\text{vert}})$}\label{alg-1-line-20}

                    \vspace{0,1cm}
                    \STATE{Compute the solutions $Z_{1}, Z_{2}, Z_{3}$ by Lemmas~\ref{bounded-dom-correctness-lem-0},~\ref{bounded-dom-correctness-lem-1}, and~\ref{bounded-dom-correctness-lem-2}, respectively}\label{alg-1-line-21}
                    \FOR{$k=1$ to $3$}\label{alg-1-line-22}
                         \STATE{\textbf{if} $|Z_{k}| < |BD(a,b,q,i,i^{\prime})|$ \textbf{then} $BD(a,b,q,i,i^{\prime}) \leftarrow Z_{k}$}\label{alg-1-line-23}
                    \ENDFOR
               \ENDIF
          \ENDIF
     \ENDFOR 
\ENDFOR 

\medskip

\STATE{\textbf{if} $BD(l_{1},r_{\mathcal{L}},1,|\mathcal{L}|,|\mathcal{L}|) = \bot$ \textbf{then} \textbf{return} \ $\mathcal{L}$ does not dominate~${(\mathcal{P},\mathcal{L})}$}\label{alg-1-line-24}

\STATE{\ \ \ \ \textbf{else} \textbf{return} \ $BD(l_{1},r_{\mathcal{L}},1,|\mathcal{L}|,|\mathcal{L}|) \setminus \{L_{1}, L_{|\mathcal{L}|}\}$}\label{alg-1-line-25}
\end{algorithmic}
\end{algorithm}

\begin{theorem}
\label{bounded-correctness-thm}Given a horizontal shadow representation $(%
\mathcal{P},\mathcal{L})$ of a tolerance graph $G$ with $n$ vertices,
Algorithm~\ref{bounded-dominating-tolerance-alg} solves \textsc{Bounded
Dominating Set} in $O(n^{9})$ time.
\end{theorem}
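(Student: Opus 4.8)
The plan is to read Algorithm~\ref{bounded-dominating-tolerance-alg} as a dynamic program whose table is indexed by the admissible tuples $(a,b,q,i,i^{\prime})\in\A\times\B\times\LL^{3}$, and to prove two things separately: (i) after both passes every table cell stores exactly the set $BD(a,b,q,i,i^{\prime})$ of Definition~\ref{BD-def} together with its cardinality (with $|\bot|=\infty$), and (ii) the computation runs in $O(n^{9})$ time. For (i) I would argue by strong induction on $|X(a,b)|$, the number of elements of $\PP\cup\LL$ that fall inside the region $R(a,b)$. The base of the induction is furnished by the initialization phase (lines~\ref{alg-1-line-4}--\ref{alg-1-line-16}): Observation~\ref{obs:botcs} justifies setting $BD(a,b,q,i,i^{\prime})=\bot$ exactly when the maximal admissible candidate set $\LL\cap\mathcal{L}_{q}^{\text{right}}\cap\mathcal{L}_{i,i^{\prime}}^{\text{left}}$ fails to dominate $X(a,b)$, while Observations~\ref{DB-init-feasible-obs} and~\ref{R-a-b-obs} justify setting $BD(a,b,q,i,i^{\prime})=\{L_{q},L_{i},L_{i^{\prime}}\}$ precisely when this three-element set already dominates $X(a,b)$ (in particular whenever $R(a,b)\subseteq S_{i}$).

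For the inductive step I would first note that the three recursions are exhaustive: past the two base cases we may assume $BD(a,b,q,i,i^{\prime})\neq\bot$, $BD(a,b,q,i,i^{\prime})\neq\{L_{q},L_{i},L_{i^{\prime}}\}$, hence $R(a,b)\nsubseteq S_{i}$, and then exactly one of the three mutually exclusive situations covered by Lemma~\ref{bounded-dom-correctness-lem-0} ($b\in S_{l_{i}}$), Lemma~\ref{bounded-dom-correctness-lem-1} ($b\notin S_{l_{i}}$ and $BD\setminus L_{i}$ dominates $\{x\in X(a,b):x\cap(S_{i}\cup F_{i})\neq\emptyset\}$) and Lemma~\ref{bounded-dom-correctness-lem-2} (the complementary sub-case) occurs. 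In the applicable situation the corresponding lemma equates $BD(a,b,q,i,i^{\prime})$ with the value computed in line~\ref{alg-1-line-21}, and every subproblem on the right-hand side ($BD(a,b^{\ast},\cdot)$ in Eq.~\eqref{recursion-eq-0}, $BD(a,c,q^{\prime},j,j^{\prime})$ in Eq.~\eqref{recursion-eq-1}, and the paired terms in Eq.~\eqref{recursion-eq-2}) lives in a region contained in $R(a,b)$, so the induction hypothesis applies. The soundness of the ``keep the smallest of $Z_{1},Z_{2},Z_{3}$'' step is guaranteed by the first-direction Lemmas~\ref{bounded-dom-correctness-lem-1-first-direction},~\ref{bounded-dom-correctness-lem-2-first-direction-lem-1} and~\ref{bounded-dom-correctness-lem-2-first-direction-lem-2}: every configuration the algorithm combines yields a genuine dominating set of $X(a,b)$ with the prescribed diagonally leftmost segment and end-pair, so no candidate can be an invalid under-estimate, and the main Lemmas~\ref{bounded-dom-correctness-lem-1} and~\ref{bounded-dom-correctness-lem-2} supply the matching inequality. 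Finally I would show that the returned entry $BD(l_{1},r_{\LL},1,|\LL|,|\LL|)$ solves the whole instance: the two dummy segments are isolated, hence forced into every dominating set, and being extremal they are automatically the diagonally leftmost segment and the end-pair of any such set, while $X(l_{1},r_{\LL})=\PP\cup\LL$; thus removing the dummies in line~\ref{alg-1-line-25} returns a minimum $Z\subseteq\LL$ dominating $(\PP,\LL)$, and $\bot$ is returned exactly when none exists.

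For the running time I would count $|\A|=O(n)$, $|\B|=O(n^{2})$ and $q,i,i^{\prime}\in\{1,\dots,|\LL|\}$, so the table has $O(n)\cdot O(n^{2})\cdot O(n^{3})=O(n^{6})$ cells. The dominant per-cell cost lies in the recursion phase when evaluating Eq.~\eqref{recursion-eq-2}: its first alternative minimizes over $c\in\B\cap R(a,b)$ ($O(n^{2})$ choices) and its second over $c^{\prime}\in\B\cap R(a,b)$ together with $q^{\prime}\in\LL$ ($O(n^{2})\cdot O(n)=O(n^{3})$ choices), so a cell is filled in $O(n^{3})$ time once the sizes $|BD(\cdot)|$ are stored and the cardinality of a union such as $|BD(a,c,q,i,i^{\prime})\cup BD(c,b,q,i,i^{\prime})|$ is read off in $O(1)$ from the stored sizes and the known common part $\{L_{q},L_{i},L_{i^{\prime}}\}$. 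This yields $O(n^{6})\cdot O(n^{3})=O(n^{9})$; the initialization phase costs only $O(n^{6})\cdot O(n^{2})=O(n^{8})$ and is subsumed, provided the geometric predicates (membership in a shadow or reverse shadow, the emptiness tests on $\PP\cap X(a,b)\cap F_{c}\cap F_{i}$, and so on) are answered in constant time after a polynomial preprocessing.

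The hard part will be the bookkeeping that makes the induction well founded, namely fixing a valid evaluation order for the table. Whereas Eqs.~\eqref{recursion-eq-1} and~\eqref{recursion-eq-2} genuinely split or shrink $R(a,b)$, the reduction of Lemma~\ref{bounded-dom-correctness-lem-0} merely slides $b$ to the canonical point $b^{\ast}=\Gamma_{b}^{\text{vert}}\cap\Gamma_{l_{i}}^{\text{diag}}$ and need not strictly decrease $|X(a,b)|$; since $b^{\ast}$ again satisfies $b^{\ast}\in S_{l_{i}}$, a naive re-application would return $b^{\ast}$ itself. I would therefore order cells of equal $|X(a,b)|$ by a secondary key along the diagonal direction so that $b^{\ast}$ is always resolved before $b$, and then verify that the fixed points of this reduction never remain undetermined---that is, I must show that whenever $b$ already lies on $\Gamma_{l_{i}}^{\text{diag}}$ the cell is settled either by a base case (the region collapses into $\Gamma_{l_{i}}^{\text{diag}}\subseteq S_{i}$, so Observation~\ref{R-a-b-obs} applies) or, failing that, is still pinned down optimally by one of $Z_{2},Z_{3}$, whose validity via the first-direction lemmas does not itself hinge on $b\notin S_{l_{i}}$. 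Making this ordering precise, and confirming that the three cases of the inductive step remain exhaustive and pairwise compatible with it so that no circular dependency survives, is where the bulk of the care is needed; the remaining estimates are routine.
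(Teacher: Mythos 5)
Your proposal follows essentially the same route as the paper's proof: the initialization is justified by Observations~\ref{obs:botcs}--\ref{R-a-b-obs}, the recursive step by Lemmas~\ref{bounded-dom-correctness-lem-0},~\ref{bounded-dom-correctness-lem-1} and~\ref{bounded-dom-correctness-lem-2} (with the first-direction lemmas supplying soundness of every candidate), the output by the forced inclusion of the isolated dummy segments, and the running time by the same $O(n^{6})$-cells-times-$O(n^{3})$-per-cell count. Your explicit treatment of the evaluation order --- in particular observing that the reduction $b\mapsto b^{\ast}$ of Eq.~\eqref{recursion-eq-0} has fixed points on $\Gamma_{l_{i}}^{\text{diag}}$, which are then settled by Observation~\ref{R-a-b-obs} since $B_{b^{\ast}}\subseteq S_{l_{i}}\subseteq S_{i}$ --- is a detail the paper's proof leaves implicit, and is a correct and worthwhile addition rather than a departure.
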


\begin{proof}
In the first line, Algorithm~\ref{bounded-dominating-tolerance-alg} augments
the horizontal shadow representation $(\mathcal{P},\mathcal{L})$ by adding
to $\mathcal{L}$ the two dummy line segments $L_{0}$ and $L_{|\mathcal{L}%
|+1} $ (with endpoints $l_{0},r_{0}$ and $l_{|\mathcal{L}|+1},r_{|\mathcal{L}%
|+1}$, respectively) such that all elements of $\mathcal{P\cup L}$ are
contained in $A_{r_{0}}$ and in $B_{l_{|\mathcal{L}|+1}}$. In the second
line the algorithm renumbers the elements of the set $\mathcal{L}$ such that 
$\mathcal{L}=\{L_{1},L_{2},\ldots ,L_{|\mathcal{L}|}\}$, where in this new
enumeration the line segments $L_{1}$ and $L_{|\mathcal{L}|}$ are dummy.
Furthermore, in line~\ref{alg-1-line-3}, the algorithm computes the point
sets $\mathcal{A}$ and $\mathcal{B}$ (cf.~Section~\ref%
{terminology-bounded-domination-subsec}).

In lines~\ref{alg-1-line-4}-\ref{alg-1-line-16} the algorithm performs all
initializations. In particular, first in line~\ref{alg-1-line-5} the
algorithm computes the sets $X(a,b)\subseteq \mathcal{P}\cup \mathcal{L}$
for all feasible pairs $(a,b)\in \mathcal{A\times B}$ (cf.~Eq.~(\ref%
{X(a,b)-def-eq})). Then the algorithm iteratively executes lines~\ref%
{alg-1-line-9}-\ref{alg-1-line-16} for all values of $q,i,i^{\prime }\in
\{1,2,\ldots ,|\mathcal{L}|\}$ for which $BD(a,b,q,i,i^{\prime })$ can be
defined (these conditions on $q,i,i^{\prime }$ are tested in lines~\ref%
{alg-1-line-6}-\ref{alg-1-line-8}, cf.~Definition~\ref{BD-def}). For all
such values of $q,i,i^{\prime }$, the algorithm computes an initial value
for $BD(a,b,q,i,i^{\prime })$ in lines~\ref{alg-1-line-9}-\ref{alg-1-line-16}%
. In particular, in lines~\ref{alg-1-line-12} and~\ref{alg-1-line-14} it
computes the values of $BD(a,b,q,i,i^{\prime })$ which can be computed
directly (cf.~Observations~\ref{obs:botcs} and~\ref{DB-init-feasible-obs}).
In the case where $BD(a,b,q,i,i^{\prime })\neq \bot $ and $%
BD(a,b,q,i,i^{\prime })\neq \{L_{q},L_{i},L_{i^{\prime }}\}$, the set $%
\mathcal{L\cap \mathcal{L}}_{q}^{\text{right}}\mathcal{\cap L}_{i,i^{\prime
}}^{\text{left}}$ is a feasible (but not necessarily optimal) solution (cf.
Definition~\ref{BD-def}), therefore in this case the algorithm initializes
in line~\ref{alg-1-line-16} the value of $BD(a,b,q,i,i^{\prime })$ to $%
\mathcal{L\cap \mathcal{L}}_{q}^{\text{right}}\mathcal{\cap L}_{i,i^{\prime
}}^{\text{left}}$.

The main computations of the algorithm are performed in lines~\ref%
{alg-1-line-17}-\ref{alg-1-line-23}. In particular, the algorithm
iteratively executes lines~\ref{alg-1-line-21}-\ref{alg-1-line-23} for all
values of $a,b,q,i,i^{\prime }$ for which $BD(a,b,q,i,i^{\prime })$ can be
defined (these conditions on $a,b,q,i,i^{\prime }$ are tested in lines~\ref%
{alg-1-line-17}-\ref{alg-1-line-20}, cf.~Definition~\ref{BD-def}). In line~%
\ref{alg-1-line-21} the algorithm computes all the necessary values that are
the candidates for the value $BD(a,b,q,i,i^{\prime })$ and in lines~\ref%
{alg-1-line-22}-\ref{alg-1-line-23} the algorithm computes $%
BD(a,b,q,i,i^{\prime })$ from these candidate values. The correctness of
this computation of $BD(a,b,q,i,i^{\prime })$ follows by Lemmas~\ref%
{bounded-dom-correctness-lem-0},~\ref{bounded-dom-correctness-lem-1}, and~%
\ref{bounded-dom-correctness-lem-2}, respectively.

Finally, the algorithm computes the final output in lines~\ref{alg-1-line-24}%
-\ref{alg-1-line-25}. Indeed, since in the (augmented) horizontal shadow
representation $(\mathcal{P},\mathcal{L})$ the two dummy horizontal line
segments are isolated (i.e.,~the line segments $L_{1}$ and $L_{|\mathcal{L}%
|} $ in the augmented representation, cf.~lines~\ref{alg-1-line-1}-\ref%
{alg-1-line-2} of the algorithm), they must be included in every minimum
bounded dominating set of the (augmented) tolerance graph. Therefore the
algorithm correctly returns in line~\ref{alg-1-line-25} the computed set $%
BD(l_{1},r_{|\mathcal{L}|},1,|\mathcal{L}|,|\mathcal{L}|)\setminus
\{L_{1},L_{|\mathcal{L}|}\}$, as long as $BD(l_{1},r_{|\mathcal{L}|},1,|%
\mathcal{L}|,|\mathcal{L}|)\neq \bot $. Furthermore, if $BD(l_{1},r_{|%
\mathcal{L}|},1,|\mathcal{L}|,|\mathcal{L}|)=\bot $ then the whole
(augmented) set $\mathcal{L}$ does not dominate all elements of the
(augmented) set $\mathcal{P}\cup \mathcal{L}$, and thus in this case the
algorithm correctly returns a negative announcement in line~\ref%
{alg-1-line-24}.

Regarding the running time of Algorithm~\ref%
{bounded-dominating-tolerance-alg}, first recall that the sets $\mathcal{A}$
and~$\mathcal{B}$ have~$O(n)$ and~$O(n^{2})$ elements, respectively. Thus
the first three lines of the algorithm can be implemented in~$O(n^{2})$
time. Due to the for-loop of line~\ref{alg-1-line-4}, the lines~\ref%
{alg-1-line-5}-\ref{alg-1-line-16} are executed at most~$O(n^{3})$ times.
Recall by Eq.~(\ref{region-R(a,b)-def-eq}) and~(\ref{X(a,b)-def-eq}) that,
for every pair $(a,b)\in \mathcal{A}\times \mathcal{B}$, the region~$R(a,b)$
can be specified in constant time (cf.~the shaded region in Figure~\ref%
{X-a-b-fig}) and the vertex set $X(a,b)$ can be computed in $O(n)$ time.
That is, line~\ref{alg-1-line-5} of the algorithm can be executed in $O(n)$
time. For every fixed pair $(a,b)$, the lines~\ref{alg-1-line-7}-\ref%
{alg-1-line-16} are executed at most $O(n^{3})$ times, due to the for-loop
of line~\ref{alg-1-line-6}. Furthermore the if-statements of lines~\ref%
{alg-1-line-7} and~\ref{alg-1-line-8} can be executed in constant time,
while the computations of $\mathcal{L}_{i,i^{\prime }}^{\text{left}}$ and $%
\mathcal{\mathcal{L}}_{q}^{\text{right}}$ in lines~\ref{alg-1-line-9} and %
\ref{alg-1-line-10} can be computed in $O(n)$ time each. The if-statement of
line~\ref{alg-1-line-11} can be executed in $O(n^{2})$ time, since in the
worst case we check adjacency between each element of $\mathcal{\mathcal{%
L\cap L}}_{q}^{\text{right}}\cap \mathcal{L}_{i,i^{\prime }}^{\text{left}}$\
and each element of $X(a,b)$. Moreover, each of the lines~\ref{alg-1-line-12}%
-\ref{alg-1-line-16} can be trivially executed in at most $O(n)$ time.
Therefore the total execution time of lines~\ref{alg-1-line-4}-\ref%
{alg-1-line-16} is $O(n^{8})$.

Due to the for-loop of lines~\ref{alg-1-line-17} and~\ref{alg-1-line-18},
the lines~\ref{alg-1-line-19}-\ref{alg-1-line-23} are executed at most~$%
O(n^{6})$ times, since there exist at most $O(n^{3})$ pairs $(a,b)$ and at
most $O(n^{3})$ triples $\{q,i,i^{\prime }\}$. Furthermore, since each of
the lines~\ref{alg-1-line-19} and~\ref{alg-1-line-20} can be executed in
constant time, the execution time of the lines~\ref{alg-1-line-19}-\ref%
{alg-1-line-23} is dominated by the execution time of line~\ref%
{alg-1-line-21}, i.e., by the recursive computation of the set $%
BD(a,b,q,i,i^{\prime })$ from Lemmas~\ref{bounded-dom-correctness-lem-0}, %
\ref{bounded-dom-correctness-lem-1}, and~\ref{bounded-dom-correctness-lem-2}%
. Note that we have already computed in lines~\ref{alg-1-line-12} and~\ref%
{alg-1-line-14} of the algorithm whether $BD(a,b,q,i,i^{\prime })\neq \bot $
and $BD(a,b,q,i,i^{\prime })\neq \{L_{q},L_{i},L_{i^{\prime }}\}$. Moreover
it can also be checked in constant time whether $R(a,b)\nsubseteq S_{i}$ and
whether $b\in S_{l_{i}}$, and thus we can decide in constant time in line~%
\ref{alg-1-line-21} whether Lemmas~\ref{bounded-dom-correctness-lem-0},~\ref%
{bounded-dom-correctness-lem-1}, and~\ref{bounded-dom-correctness-lem-2} can
be applied. If Lemma~\ref{bounded-dom-correctness-lem-0} can be applied, the
corresponding candidate for $BD(a,b,q,i,i^{\prime })$ can be computed in
constant time by a previously computed value (cf.~Eq.~(\ref{recursion-eq-0})).

Assume now that Lemma~\ref{bounded-dom-correctness-lem-1} can be applied.
Then the corresponding candidate for $BD(a,b,q,i,i^{\prime })$ is computed
by the right-hand side of Eq.~(\ref{recursion-eq-1}), for all values of $%
c,q^{\prime },j,j^{\prime }$ that satisfy the conditions of Lemma~\ref%
{bounded-dom-correctness-lem-1-first-direction}. Note by Condition~2 of
Lemma~\ref{bounded-dom-correctness-lem-1-first-direction} that, if $i\neq
i^{\prime }$, then $j^{\prime }=i^{\prime }$. Therefore every feasible
quadruple $(i,i^{\prime },j,j^{\prime })$ is either $(i,i,j,j^{\prime })$ or 
$(i,i^{\prime },j,i^{\prime })$, i.e., there exist at most $O(n^{3})$
feasible quadruples $(i,i^{\prime },j,j^{\prime })$. Thus, since we already
considered $O(n^{2})$ iterations for all pairs $(i,i^{\prime })$ in line~\ref%
{alg-1-line-18} of the algorithm, we only need to consider another~$O(n)$
iterations (multiplicatively) in line~\ref{alg-1-line-21} for all feasible
pairs $(j,j^{\prime })$ in the execution of Lemma~\ref%
{bounded-dom-correctness-lem-1}. Furthermore there are at most~$O(n)$
feasible values of $q^{\prime }$ by Conditions~1 and~3 of Lemma~\ref%
{bounded-dom-correctness-lem-1-first-direction}. Moreover the value of $c$
is uniquely determined (in constant time) by the values of $j$ and $b$
(cf.~Condition~4 of Lemma~\ref{bounded-dom-correctness-lem-1-first-direction}%
); once $c$ has been computed, we also need $O(n)$ additional time to check
Condition~5 of Lemma~\ref{bounded-dom-correctness-lem-1-first-direction}.
Therefore, Lemma~\ref{bounded-dom-correctness-lem-1} can be applied in $%
O(n^{3})$ time in line~\ref{alg-1-line-21} of the algorithm.

Assume finally that Lemma~\ref{bounded-dom-correctness-lem-2} can be
applied. Then the corresponding candidate for $BD(a,b,q,i,i^{\prime })$ is
computed by the right-hand side of Eq.~(\ref{recursion-eq-2}), for all
values of $c,c^{\prime },q^{\prime }$ that satisfy the conditions of Lemma~%
\ref{bounded-dom-correctness-lem-2}. Note that there exist $O(n^{2})$
feasible values for $c$, cf.~Conditions~1 and~2 of Lemma~\ref%
{bounded-dom-correctness-lem-2}. Furthermore, once the value of $c$ has been
chosen, we need $O(n)$ additional time to check Condition~6 of Lemma~\ref%
{bounded-dom-correctness-lem-2}. Thus, the upper part of the right-hand side
of Eq.~(\ref{recursion-eq-2}) can be computed in $O(n^{3})$ time. On the
other hand, there exist $O(n)$ feasible values for $q^{\prime }$, cf.
Conditions~3 and~4 of Lemma~\ref{bounded-dom-correctness-lem-2}. For every
value of $q^{\prime }$ there exist $O(n)$ feasible values for $c^{\prime }$,
cf.~Condition~5 of Lemma~\ref{bounded-dom-correctness-lem-2}; once the value
of $c^{\prime }$ has been chosen, we need $O(n)$ additional time to check
Condition~6 of Lemma~\ref{bounded-dom-correctness-lem-2}. Thus, the lower
part of the right-hand side of Eq.~(\ref{recursion-eq-2}) can be also
computed in $O(n^{3})$ time. That is, Lemma~\ref%
{bounded-dom-correctness-lem-2} can be applied in $O(n^{3})$ time in line~%
\ref{alg-1-line-21} of the algorithm.

Summarizing, the total execution time of the lines~\ref{alg-1-line-17}-\ref%
{alg-1-line-23} is~$O(n^{9})$. Therefore, since the execution time of lines~%
\ref{alg-1-line-4}-\ref{alg-1-line-16} is $O(n^{8})$, the total running time
of Algorithm~\ref{bounded-dominating-tolerance-alg} is~$O(n^{9})$.
\end{proof}

\section{Restricted bounded dominating set on tolerance graphs\label%
{Restricted-domination-sec}}

In this section we use Algorithm~\ref{bounded-dominating-tolerance-alg} of
Section~\ref{Bounded-dominating-sec} to provide a polynomial time algorithm
(cf.~Algorithm~\ref{restricted-bounded-alg}) for a slightly modified version
of \textsc{Bounded Dominating Set} on tolerance graphs, which we call 
\textsc{Restricted Bounded Dominating Set}, formally defined below.

\vspace{0,2cm} \noindent \fbox{ 
\begin{minipage}{0.96\textwidth}
 \begin{tabular*}{\textwidth}{@{\extracolsep{\fill}}lr} \rbdsprob & \\ \end{tabular*}
 
 \vspace{1.2mm}
{\bf{Input:}} A 6-tuple ${\cal I}=(\PP, \LL,j,j',i,i')$, where $(\PP,\LL)$ is a horizontal shadow representation of a tolerance graph $G$, 
$(j,j')$ is a left-crossing pair of $G$, and $(i,i')$ is a right-crossing pair of $G$.
 \\
{\bf{Output:}} A set $Z \subseteq \LL$ of minimum size that dominates $(\PP,\LL)$, where $(j,j')$ is the start-pair~and $(i,i')$ is the end-pair of $Z$, 
or the announcement that $\LL \cap \LL^{\text{right}}_{j,j'}\cap \LL^{\text{left}}_{i,i'}$ does not dominate $(\PP,\LL)$.
\end{minipage}} \vspace{0,2cm}

In order to present Algorithm~\ref{restricted-bounded-alg} for \textsc{%
Restricted Bounded Dominating Set} on tolerance graphs, we first reduce this
problem to \textsc{Bounded Dominating Set} on tolerance graphs, cf.~Lemma~%
\ref{lem:sbssol2-first}. Before we present this reduction to \textsc{Bounded
Dominating Set}, we first need to prove some properties in the following
auxiliary Lemmas~\ref{lem:badvrt1}-\ref{obs:irlvvrt2}. These properties will
motivate the definition of bad and irrelevant points $p\in \mathcal{P}$ and
of bad and irrelevant line segments $L_{t}\in \mathcal{L}$, cf.~Definition %
\ref{bad-irrelevant-def}. The main idea behind Definition~\ref%
{bad-irrelevant-def} is the following. If an instance contains a bad point $%
p\in \mathcal{P}$ or a bad line segment $L_{t}\in \mathcal{L}$, then $%
\mathcal{L\cap L}_{j,j^{\prime }}^{\text{right}}\cap \mathcal{L}%
_{i,i^{\prime }}^{\text{left}}$ does not dominate $(\mathcal{P},\mathcal{L})$%
. On the other hand, if an instance contains an irrelevant point $p\in 
\mathcal{P}$ or an irrelevant line segment $L_{t}\in \mathcal{L}$, we can
safely ignore $p$ (resp. $L_{t}$).

\begin{lemma}
\label{lem:badvrt1}Let $\mathcal{I}=(\mathcal{P},\mathcal{L},j,j^{\prime
},i,i^{\prime })$ be an instance of \textsc{Restricted Bounded Dominating Set%
} on tolerance graphs. Let $l=\Gamma _{l_{j}}^{\text{vert}}\cap \Gamma
_{l_{j^{\prime }}}^{\text{diag}}$ and $r=\Gamma _{r_{i}}^{\text{vert}}\cap
\Gamma _{r_{i^{\prime }}}^{\text{diag}}$. If there exists a point $p\in 
\mathcal{P}$ such that $p\in \mathbb{R}_{\text{left}}^{2}(\Gamma _{l}^{\text{%
diag}})$ or $p\in \mathbb{R}_{\text{right}}^{2}(\Gamma _{r}^{\text{vert}})$,
then $\mathcal{L\cap L}_{j,j^{\prime }}^{\text{right}}\cap \mathcal{L}%
_{i,i^{\prime }}^{\text{left}}$ does not dominate $(\mathcal{P},\mathcal{L})$%
.
\end{lemma}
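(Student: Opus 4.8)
The plan is to first translate the objects in the statement back into the language of Definition~\ref{left-right-crossing-pair-def}. The points $l=\Gamma _{l_{j}}^{\text{vert}}\cap \Gamma _{l_{j^{\prime }}}^{\text{diag}}$ and $r=\Gamma _{r_{i}}^{\text{vert}}\cap \Gamma _{r_{i^{\prime }}}^{\text{diag}}$ are exactly the corner points used to define $\mathcal{L}_{j,j^{\prime }}^{\text{right}}$ and $\mathcal{L}_{i,i^{\prime }}^{\text{left}}$; that is, $\mathcal{L}_{j,j^{\prime }}^{\text{right}}=\{x\in \mathcal{P}\cup \mathcal{L}:x\subseteq A_{l}\}$ and $\mathcal{L}_{i,i^{\prime }}^{\text{left}}=\{x\in \mathcal{P}\cup \mathcal{L}:x\subseteq B_{r}\}$. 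Writing $Z=\mathcal{L}\cap \mathcal{L}_{j,j^{\prime }}^{\text{right}}\cap \mathcal{L}_{i,i^{\prime }}^{\text{left}}$, every element of $Z$ is a line segment, i.e.\ a bounded vertex, because of the intersection with $\mathcal{L}$. Since the given $p\in \mathcal{P}$ is an unbounded vertex with $p\notin Z$, and since two unbounded vertices are never adjacent, Lemma~\ref{shadow-correctness-lem-2} shows that $p$ is dominated by $Z$ if and only if $p\in S_{u}$ for some $L_{u}\in Z$. Hence it suffices to prove that $p\notin S_{u}$ for every $L_{u}\in Z$.

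The crux is a pair of one-line containments about shadows, read directly from Definition~\ref{shadows-def}. First I would show that $S_{u}\subseteq \mathbb{R}_{\text{right}}^{2}(\Gamma _{l}^{\text{diag}})$ for every $L_{u}\in \mathcal{L}_{j,j^{\prime }}^{\text{right}}$: any $s\in L_{u}\subseteq A_{l}$ satisfies $s_{y}-s_{x}\leq l_{y}-l_{x}$, and every point of $S_{s}$ has diagonal coordinate at most $s_{y}-s_{x}\leq l_{y}-l_{x}$, so $S_{s}\subseteq \mathbb{R}_{\text{right}}^{2}(\Gamma _{l}^{\text{diag}})$, and taking the union over $s\in L_{u}$ preserves this. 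Symmetrically, $S_{u}\subseteq \mathbb{R}_{\text{left}}^{2}(\Gamma _{r}^{\text{vert}})$ for every $L_{u}\in \mathcal{L}_{i,i^{\prime }}^{\text{left}}$, since each $s\in L_{u}\subseteq B_{r}$ has $s_{x}\leq r_{x}$ while $S_{s}\subseteq \{x\leq s_{x}\}$. Combining the two, every $L_{u}\in Z$ satisfies $S_{u}\subseteq \mathbb{R}_{\text{right}}^{2}(\Gamma _{l}^{\text{diag}})\cap \mathbb{R}_{\text{left}}^{2}(\Gamma _{r}^{\text{vert}})$, and therefore $\bigcup_{L_{u}\in Z}S_{u}\subseteq \mathbb{R}_{\text{right}}^{2}(\Gamma _{l}^{\text{diag}})\cap \mathbb{R}_{\text{left}}^{2}(\Gamma _{r}^{\text{vert}})$.

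With these containments in hand the conclusion is immediate. If $p\in \mathbb{R}_{\text{left}}^{2}(\Gamma _{l}^{\text{diag}})$ then (strictly) $p\notin \mathbb{R}_{\text{right}}^{2}(\Gamma _{l}^{\text{diag}})$, and if $p\in \mathbb{R}_{\text{right}}^{2}(\Gamma _{r}^{\text{vert}})$ then (strictly) $p\notin \mathbb{R}_{\text{left}}^{2}(\Gamma _{r}^{\text{vert}})$; in either case $p$ lies outside $\bigcup_{L_{u}\in Z}S_{u}$, so $p$ is undominated and $Z$ does not dominate $(\mathcal{P},\mathcal{L})$.

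The only genuine obstacle is that the half-planes $\mathbb{R}_{\text{left}}^{2}$ and $\mathbb{R}_{\text{right}}^{2}$ are closed, so ``$p$ strictly outside'' could fail on the common boundary line. In the vertical case this boundary is $p_{x}=r_{x}=(r_{i})_{x}$, which is excluded by the standing general-position assumption that all points of $\mathcal{P}$ and all segment endpoints have pairwise distinct $x$-coordinates. In the diagonal case the boundary would force both $p$ and some contact point of $L_{u}$ onto the line $\Gamma _{l}^{\text{diag}}=\Gamma _{l_{j^{\prime }}}^{\text{diag}}$, i.e.\ it would force two distinct objects to share the same diagonal coordinate $y-x$; I would dispose of this by the analogous general-position assumption on diagonal coordinates (equivalently, an infinitesimal perturbation of the representation that does not change $G$). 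Once this degeneracy is ruled out, the containment argument yields the claim directly, and the symmetric statement for bad line segments (the companion case in the intended Lemmas~\ref{lem:badvrt1}--\ref{obs:irlvvrt2}) follows by the same reasoning applied via Lemma~\ref{shadow-correctness-lem-1}.
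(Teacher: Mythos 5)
Your proof is correct and follows essentially the same route as the paper's: both arguments reduce domination of $p$ to membership in some shadow $S_{u}$ via Lemma~\ref{shadow-correctness-lem-2} and then exploit the monotonicity of shadows in the diagonal coordinate $y-x$ (resp.\ the $x$-coordinate) to show that no segment of $\mathcal{L}\cap\mathcal{L}_{j,j^{\prime}}^{\text{right}}\cap\mathcal{L}_{i,i^{\prime}}^{\text{left}}$ can reach a bad point; the paper merely phrases this as a contradiction with the start-pair/end-pair conditions rather than bounding $\bigcup_{L_{u}}S_{u}$ directly. The boundary degeneracy you flag is indeed resolved by the paper's standing general-position assumptions (distinct endpoints force distinct $x$- and $y-x$-coordinates), which the paper's own proof also uses implicitly.
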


\begin{proof}
Assume otherwise that $Z\subseteq \mathcal{L}$ is a solution of $\mathcal{I}$%
. First suppose that there exists a point $p\in \mathcal{P}$ such that $p\in 
\mathbb{R}_{\text{left}}^{2}(\Gamma _{l}^{\text{diag}})$, where $l=\Gamma
_{l_{j}}^{\text{vert}}\cap \Gamma _{l_{j^{\prime }}}^{\text{diag}}$. Then,
by Lemma~\ref{shadow-correctness-lem-2}, there must exist a line segment $%
L_{k}\in Z$ such that $p\in S_{k}$. Thus $l_{k}\in \mathbb{R}_{\text{left}%
}^{2}(\Gamma _{l_{j^{\prime }}}^{\text{diag}})$, which is a contradiction to
the fact that $(j,j^{\prime })$ is the start-pair of $Z$.

Now suppose that there exists a point $p\in \mathcal{P}$ such that $p\in 
\mathbb{R}_{\text{right}}^{2}(\Gamma _{r}^{\text{vert}})$, where $r=\Gamma
_{r_{i}}^{\text{vert}}\cap \Gamma _{r_{i^{\prime }}}^{\text{diag}}$. Then,
by Lemma~\ref{shadow-correctness-lem-2}, there must exist a line segment $%
L_{k}\in Z$ such that $p\in S_{k}$. Thus $r_{k}\in \mathbb{R}_{\text{right}%
}^{2}(\Gamma _{r_{i}}^{\text{vert}})$, which is a contradiction to the fact
that $(i,i^{\prime })$ is the end-pair of $Z$.
\end{proof}

\begin{lemma}
\label{irrelevant-points-lem}Let $\mathcal{I}=(\mathcal{P},\mathcal{L}%
,j,j^{\prime },i,i^{\prime })$ be an instance of \textsc{Restricted Bounded
Dominating Set} on tolerance graphs. Let $l=\Gamma _{l_{j}}^{\text{vert}%
}\cap \Gamma _{l_{j^{\prime }}}^{\text{diag}}$ and $r=\Gamma _{r_{i}}^{\text{%
vert}}\cap \Gamma _{r_{i^{\prime }}}^{\text{diag}}$. If there exists a point 
$p\in \mathcal{P}$ such that $p\in S_{l}\cup S_{r}$ then at least one of the
line segments $\{L_{j^{\prime }},L_{i}\}$ is a neighbor of $p$.
\end{lemma}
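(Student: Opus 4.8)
The plan is to translate everything into the language of shadows and then exploit the monotonicity (nesting) of shadows under the partial order they induce. By Lemma~\ref{shadow-correctness-lem-2}, since $p$ is an unbounded point and $j^{\prime},i$ are bounded, the line segment $L_{j^{\prime}}$ (resp.~$L_i$) is a neighbor of $p$ if and only if $p\in S_{j^{\prime}}$ (resp.~$p\in S_i$). Hence it suffices to establish the two containments $S_l\subseteq S_{j^{\prime}}$ and $S_r\subseteq S_i$: the hypothesis $p\in S_l\cup S_r$ then forces $p\in S_{j^{\prime}}$ or $p\in S_i$, giving the desired neighbor.

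First I would record the elementary nesting property of point-shadows coming straight from Definition~\ref{shadows-def}: for two points $s,t\in\mathbb{R}^2$ we have $S_s\subseteq S_t$ exactly when $s\in S_t$, because the two defining inequalities $x\le t_x$ and $y-x\le t_y-t_x$ are inherited through the intermediate point $s$. I would also use the trivial fact that for the left and right endpoints $l_u,r_u$ of a segment $L_u$ one has $S_{l_u}\subseteq S_u$ and $S_{r_u}\subseteq S_u$, since $S_u=\bigcup_{t\in L_u}S_t$.

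For the left side I would unwind the coordinates of $l=\Gamma_{l_j}^{\text{vert}}\cap\Gamma_{l_{j^{\prime}}}^{\text{diag}}$: its $x$-coordinate equals that of $l_j$, and since it lies on the diagonal through $l_{j^{\prime}}$ its diagonal value $l_y-l_x$ equals $(l_{j^{\prime}})_y-(l_{j^{\prime}})_x$. The left-crossing hypothesis $l_j\in S_{l_{j^{\prime}}}$ yields $(l_j)_x\le (l_{j^{\prime}})_x$, which together with the matching diagonal value gives $l\in S_{l_{j^{\prime}}}$; by the nesting property this gives $S_l\subseteq S_{l_{j^{\prime}}}\subseteq S_{j^{\prime}}$. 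Symmetrically, for $r=\Gamma_{r_i}^{\text{vert}}\cap\Gamma_{r_{i^{\prime}}}^{\text{diag}}$ the $x$-coordinate equals that of $r_i$ while the diagonal value equals that of $r_{i^{\prime}}$; the right-crossing hypothesis $r_{i^{\prime}}\in S_{r_i}$ supplies the inequality $(r_{i^{\prime}})_y-(r_{i^{\prime}})_x\le (r_i)_y-(r_i)_x$, whence $r\in S_{r_i}$ and thus $S_r\subseteq S_{r_i}\subseteq S_i$.

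The computations are routine once the two crossing hypotheses are rewritten as shadow-membership inequalities, so I do not expect a real obstacle here. The only point requiring genuine care is correctly identifying which endpoint controls the vertical boundary and which controls the diagonal boundary of the auxiliary points $l$ and $r$ (namely that $l_j$ fixes the vertical line and $l_{j^{\prime}}$ the diagonal, and dually $r_i$ the vertical and $r_{i^{\prime}}$ the diagonal), so that the nesting lands inside $S_{j^{\prime}}$ and $S_i$ and not inside the shadows of the other segments of the crossing pairs.
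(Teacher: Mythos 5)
Your proposal is correct and follows essentially the same route as the paper's proof: both establish $l\in S_{l_{j'}}$ and $r\in S_{r_i}$ from the crossing-pair hypotheses, then use the nesting of point-shadows together with Lemma~\ref{shadow-correctness-lem-2} to conclude that $p\in S_{j'}$ or $p\in S_i$. Your version merely spells out the coordinate computations that the paper leaves implicit.
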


\begin{proof}
Recall by Definition~\ref{left-right-crossing-pair-def} in Section~\ref%
{terminology-bounded-domination-subsec} that ${l_{j}\in S_{l_{j^{\prime }}}}$
and $r_{i^{\prime }}\in S_{r_{i}}$, since $(j,j^{\prime })$ is a
left-crossing pair and $(i,i^{\prime })$ is a right-crossing pair.
Therefore, since $l=\Gamma _{l_{j}}^{\text{vert}}\cap \Gamma _{l_{j^{\prime
}}}^{\text{diag}}$ and $r=\Gamma _{r_{i}}^{\text{vert}}\cap \Gamma
_{r_{i^{\prime }}}^{\text{diag}}$ by the assumptions of the lemma, it
follows that $l\in {S_{l_{j^{\prime }}}}$ and $r\in S_{r_{i}}$.

If $p\in S_{l}$ then also $p\in {S_{l_{j^{\prime }}}}$ (since $l\in {%
S_{l_{j^{\prime }}}}$ as we proved above), and thus $L_{j^{\prime }}$ is a
neighbor of $p$ by Lemma~\ref{shadow-correctness-lem-2}. Similarly, if $p\in
S_{r}$ then also $p\in S_{r_{i}}$ (since $r\in S_{r_{i}}$ as we proved
above), and thus $L_{i}$ is a neighbor of $p$ by Lemma~\ref%
{shadow-correctness-lem-2}.
\end{proof}

\begin{lemma}
\label{lem:badvrt2} Let $\mathcal{I}=(\mathcal{P},\mathcal{L},j,j^{\prime
},i,i^{\prime })$ be an instance of \textsc{Restricted Bounded Dominating Set%
} on tolerance graphs. Let $l=\Gamma _{l_{j}}^{\text{vert}}\cap \Gamma
_{l_{j^{\prime }}}^{\text{diag}}$ and $r=\Gamma _{r_{i}}^{\text{vert}}\cap
\Gamma _{r_{i^{\prime }}}^{\text{diag}}$. If there exists a line segment $%
L_{t}\in \mathcal{L}$ such that $L_{t}\subseteq B_{l}$ or $L_{t}\subseteq
A_{r}$, then $\mathcal{L\cap L}_{j,j^{\prime }}^{\text{right}}\cap \mathcal{L%
}_{i,i^{\prime }}^{\text{left}}$ does not dominate $(\mathcal{P},\mathcal{L}%
) $.
\end{lemma}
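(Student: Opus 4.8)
The plan is to adapt the argument of Lemma~\ref{lem:badvrt1}, replacing the point/segment adjacency test of Lemma~\ref{shadow-correctness-lem-2} by the segment/segment test of Lemma~\ref{shadow-correctness-lem-1}. Suppose, towards a contradiction, that some $Z\subseteq\mathcal{L}$ is a solution of $\mathcal{I}$, so that by Definition~\ref{start-end-pair-arbitrary-set-def} the pair $(j,j^{\prime})$ is the start-pair and $(i,i^{\prime})$ is the end-pair of $Z$; equivalently $Z\subseteq\mathcal{L}_{j,j^{\prime}}^{\text{right}}\cap\mathcal{L}_{i,i^{\prime}}^{\text{left}}$. Reading Definition~\ref{left-right-crossing-pair-def}, the region defining $\mathcal{L}_{j,j^{\prime}}^{\text{right}}$ is $A_t$ with $t=\Gamma_{l_j}^{\text{vert}}\cap\Gamma_{l_{j^{\prime}}}^{\text{diag}}=l$, and the region defining $\mathcal{L}_{i,i^{\prime}}^{\text{left}}$ is $B_t$ with $t=\Gamma_{r_i}^{\text{vert}}\cap\Gamma_{r_{i^{\prime}}}^{\text{diag}}=r$. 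Hence every line segment $L_k\in Z$ satisfies $L_k\subseteq A_l\cap B_r$. Since $Z$ dominates $(\mathcal{P},\mathcal{L})$, the offending segment $L_t$ must be adjacent to some member of $Z$: it cannot itself lie in $Z$, because $A_l\cap B_l=\{l\}$ and $A_r\cap B_r=\{r\}$ force $L_t\not\subseteq A_l$ (resp.\ $L_t\not\subseteq B_r$) whenever $L_t\subseteq B_l$ (resp.\ $L_t\subseteq A_r$), except in the degenerate situation $L_t=\{l\}$ (resp.\ $\{r\}$) excluded by general position. I will show that $L_t$ in fact has no neighbour in $Z$.

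Consider the case $L_t\subseteq B_l$ and fix any $L_k\in Z$, so $L_k\subseteq A_l$. The key observation, read off Definition~\ref{shadows-def}, is that a shadow inherits the relevant half-plane of its defining set: since every point of $L_k$ satisfies $y-x\leq l_y-l_x$, its shadow $S_k$ lies entirely in $\mathbb{R}_{\text{right}}^{2}(\Gamma_l^{\text{diag}})$; and since every point of $L_t$ satisfies $x\leq l_x$, its shadow $S_t$ lies entirely in $\mathbb{R}_{\text{left}}^{2}(\Gamma_l^{\text{vert}})$. As $L_t\subseteq B_l\subseteq\mathbb{R}_{\text{left}}^{2}(\Gamma_l^{\text{diag}})$ and $L_k\subseteq A_l\subseteq\mathbb{R}_{\text{right}}^{2}(\Gamma_l^{\text{vert}})$, any adjacency witness would have to lie on the separating lines: $L_t\cap S_k\subseteq\Gamma_l^{\text{diag}}$ and $L_k\cap S_t\subseteq\Gamma_l^{\text{vert}}$. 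Because all line segments here are horizontal and all endpoints are in general position, neither can occur: a horizontal $L_k\subseteq A_l$ meets $\Gamma_l^{\text{vert}}$ only at $x=l_x$, which by distinctness of $x$-coordinates forces $l_k=l_x=(l_j)_x$, hence $k=j$, and chasing the shadow condition then forces a second endpoint to share this $x$-coordinate, a contradiction; symmetrically, $L_t\cap S_k\subseteq\Gamma_l^{\text{diag}}$ would force a right endpoint of $L_t$ and a left endpoint of a segment of $Z$ to share the same diagonal coordinate $y-x$, which is excluded because the diagonal coordinates of endpoints are the distinct quantities $\Delta-b_v$ and $\Delta-c_v$ of the underlying parallelepiped representation. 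By Lemma~\ref{shadow-correctness-lem-1}, $L_t$ is thus adjacent to no $L_k\in Z$, contradicting that $Z$ dominates $(\mathcal{P},\mathcal{L})$.

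The case $L_t\subseteq A_r$ is handled symmetrically, with $r$ in the role of $l$: now $L_k\subseteq B_r$ for every $L_k\in Z$, the shadow $S_k$ lies in $\mathbb{R}_{\text{left}}^{2}(\Gamma_r^{\text{vert}})$ while $S_t$ lies in $\mathbb{R}_{\text{right}}^{2}(\Gamma_r^{\text{diag}})$, and the same general-position argument shows that the only possible witnesses, $L_t\cap S_k\subseteq\Gamma_r^{\text{vert}}$ and $L_k\cap S_t\subseteq\Gamma_r^{\text{diag}}$, are both empty. I expect the only delicate point of the whole proof to be precisely these boundary coincidences --~a point of $L_t$ or of some $L_k\in Z$ lying exactly on one of the separating lines through $l$ or $r$~-- and the argument hinges on invoking the standing general-position assumption (distinct $x$-coordinates together with, via the distinct trapezoid endpoints $b_v,c_v$, distinct diagonal coordinates) to rule them out. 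With that, both cases yield the desired contradiction, so $\mathcal{L}\cap\mathcal{L}_{j,j^{\prime}}^{\text{right}}\cap\mathcal{L}_{i,i^{\prime}}^{\text{left}}$ cannot dominate $(\mathcal{P},\mathcal{L})$.
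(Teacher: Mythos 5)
Your proof is correct and follows essentially the same route as the paper's: assume a solution $Z\subseteq\mathcal{L}\cap\mathcal{L}_{j,j^{\prime}}^{\text{right}}\cap\mathcal{L}_{i,i^{\prime}}^{\text{left}}$ exists and use Lemma~\ref{shadow-correctness-lem-1} together with the fact that every $L_k\in Z$ lies in $A_l\cap B_r$ to show that $L_t\subseteq B_l$ (resp.\ $L_t\subseteq A_r$) can meet no shadow of, nor cast its shadow onto, any member of $Z$, contradicting domination. The only difference is presentational: the paper phrases the contradiction via the endpoints $l_k$ (resp.\ $r_k$) landing in the forbidden half-plane, while you argue by region disjointness and explicitly discharge the boundary coincidences on $\Gamma_l^{\text{vert}}$ and $\Gamma_l^{\text{diag}}$ via the general-position assumption, which the paper leaves implicit.
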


\begin{proof}
Assume otherwise that $Z\subseteq \mathcal{L}$ is a solution of $\mathcal{I}$%
. First suppose that there exists a line segment $L_{t}\in \mathcal{L}$ such
that $L_{t}\subseteq B_{l}$, where $l=\Gamma _{l_{j}}^{\text{vert}}\cap
\Gamma _{l_{j^{\prime }}}^{\text{diag}}$. Then, by Lemma~\ref%
{shadow-correctness-lem-1}, there must exist a line segment $L_{k}\in Z$
such that $L_{t}\cap S_{k}\neq \emptyset $ or $L_{k}\cap S_{t}\neq \emptyset 
$. If $L_{t}\cap S_{k}\neq \emptyset $ then $l_{k}\in \mathbb{R}_{\text{left}%
}^{2}(\Gamma _{l_{j^{\prime }}}^{\text{diag}})$, which is a contradiction to
the fact that $(j,j^{\prime })$ is the start-pair of $Z$. If $L_{k}\cap
S_{t}\neq \emptyset $ then $l_{k}\in \mathbb{R}_{\text{left}}^{2}(\Gamma
_{l_{j}}^{\text{vert}})$, which is again a contradiction to the fact that $%
(j,j^{\prime })$ is the start-pair of $Z$.

Now suppose that there exists a line segment $L_{t}\in \mathcal{L}$ such
that $L_{t}\subseteq A_{r}$, where $r=\Gamma _{r_{i}}^{\text{vert}}\cap
\Gamma _{r_{i^{\prime }}}^{\text{diag}}$. Then, by Lemma~\ref%
{shadow-correctness-lem-1}, there exists a line segment $L_{k}\in Z$ such
that $L_{t}\cap S_{k}\neq \emptyset $ or $L_{k}\cap S_{t}\neq \emptyset $.
If $L_{t}\cap S_{k}\neq \emptyset $ then $r_{k}\in \mathbb{R}_{\text{right}%
}^{2}(\Gamma _{r_{i}}^{\text{vert}})$, which is a contradiction to the fact
that $(i,i^{\prime })$ is the end-pair of $Z$. If $L_{k}\cap S_{t}\neq
\emptyset $ then $r_{k}\in \mathbb{R}_{\text{right}}^{2}(\Gamma
_{r_{i^{\prime }}}^{\text{diag}})$, which is again a contradiction to the
fact that $(i,i^{\prime })$ is the end-pair of $Z$.
\end{proof}

\begin{lemma}
\label{obs:irlvvrt1} Let $\mathcal{I}=(\mathcal{P},\mathcal{L},j,j^{\prime
},i,i^{\prime })$ be an instance of \textsc{Restricted Bounded Dominating Set%
} on tolerance graphs. Let $l=\Gamma _{l_{j}}^{\text{vert}}\cap \Gamma
_{l_{j^{\prime }}}^{\text{diag}}$ and $r=\Gamma _{r_{i}}^{\text{vert}}\cap
\Gamma _{r_{i^{\prime }}}^{\text{diag}}$. If there exists a line segment $%
L_{t}\in \mathcal{L}$ with one of its endpoints in $B_{l}\cup A_{r}$ and one
point (not necessarily an endpoint) in $\overline{B_{l}}\cap \overline{A_{r}}
$, then at least one of the line segments $\{L_{j},L_{j^{\prime
}},L_{i},L_{i^{\prime }}\}$ is a neighbor of $L_{t}$. Moreover, $L_{t}$ does
not belong to any optimum solution $Z$ of \textsc{Restricted Bounded
Dominating Set}.
\end{lemma}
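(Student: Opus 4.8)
The plan is to prove the two assertions separately, in both cases exploiting that $L_{t}$ is a \emph{horizontal} segment, say at height $y=h$, that is forced to cross the boundary of the region $B_{l}\cup A_{r}$. I first record the boundary geometry on the line $y=h$: there $B_{l}$ is a left ray $\{x\le \beta\}$ with $\beta=\min\{l_{x},\,h-(l_{y}-l_{x})\}$ and $A_{r}$ is a right ray $\{x\ge \alpha\}$; the hypothesis that $L_{t}$ meets $\overline{B_{l}}\cap \overline{A_{r}}$ forces $\beta<\alpha$, so the ``middle'' at height $h$ is exactly $\{\beta<x<\alpha\}$. Since $L_{t}$ is connected and has an endpoint in $B_{l}\cup A_{r}$ together with a point in the middle, it must pass through the boundary point $(\beta,h)$ or $(\alpha,h)$. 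By the left/right symmetry of the construction (start-pair $(j,j^{\prime})$ and $B_{l}$ versus end-pair $(i,i^{\prime})$ and $A_{r}$) I treat only the case in which an endpoint lies in $B_{l}$, so the crossing point is $q=(\beta,h)\in \partial B_{l}$.

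For the adjacency claim I split on which part of $\partial B_{l}$ carries $q$. Recall $l=\Gv{l_{j}}\cap \Gd{l_{j^{\prime}}}$, so $l$ lies on $\Gv{l_{j}}$ (hence $l_{x}=(l_{j})_{x}$) and on $\Gd{l_{j^{\prime}}}$, and that $(j,j^{\prime})$ being a left-crossing pair means $l_{j}\in S_{l_{j^{\prime}}}$. If $q$ lies on the vertical part $\Gv{l}=\Gv{l_{j}}$ (the case $h\ge l_{y}$), then $q=(l_{x},h)$ sits on the vertical line through $l_{j}$ at height at least $(l_{j})_{y}$, and a direct check gives $l_{j}\in S_{q}\subseteq S_{t}$; hence $L_{j}\cap S_{t}\neq \emptyset$ and $L_{t}$ is adjacent to $L_{j}$ by Lemma~\ref{shadow-correctness-lem-1}. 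If instead $q$ lies on the diagonal part $\Gd{l}=\Gd{l_{j^{\prime}}}$ (the case $h\le l_{y}$), then $q$ lies on the diagonal through $l_{j^{\prime}}$ and, using $(l_{j^{\prime}})_{x}\ge l_{x}\ge \beta=q_{x}$, one checks $q\in S_{l_{j^{\prime}}}\subseteq S_{j^{\prime}}$; hence $L_{t}\cap S_{j^{\prime}}\neq \emptyset$ and $L_{t}$ is adjacent to $L_{j^{\prime}}$. The symmetric case (endpoint in $A_{r}$, crossing $\Gv{r}=\Gv{r_{i}}$ or $\Gd{r}=\Gd{r_{i^{\prime}}}$) yields adjacency to $L_{i}$ or $L_{i^{\prime}}$ through the reverse-shadow version of the same computation, which proves the first assertion.

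For the second assertion I argue that $L_{t}$ cannot even be a feasible member of a solution. If $L_{t}$ belonged to a solution $Z$ with start-pair $(j,j^{\prime})$ and end-pair $(i,i^{\prime})$, then by Definition~\ref{start-end-pair-arbitrary-set-def} we have $Z\subseteq \mathcal{L}_{j,j^{\prime}}^{\text{right}}\cap \mathcal{L}_{i,i^{\prime}}^{\text{left}}$, i.e.\ $L_{t}\subseteq A_{l}$ and $L_{t}\subseteq B_{r}$. Since the closed regions satisfy $A_{l}\cap B_{l}=\{l\}$ and $A_{r}\cap B_{r}=\{r\}$, an endpoint of $L_{t}$ lying in $B_{l}$ (resp.\ $A_{r}$) would have to equal $l$ (resp.\ $r$). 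But $l$ shares its $x$-coordinate with $l_{j}$ and $r$ shares its $x$-coordinate with $r_{i}$, so by the standing assumption that all endpoints have distinct $x$-coordinates such an endpoint could only be $l_{j}$ (resp.\ $r_{i}$) itself; and the distinctness of the diagonal offsets $\Delta-c_{j}$ forces $l\neq l_{j}$ when $j\neq j^{\prime}$ (and symmetrically $r\neq r_{i}$ when $i\neq i^{\prime}$), contradicting $L_{t}\subseteq A_{l}\cap B_{r}$. Hence no solution, in particular no optimum one, contains $L_{t}$.

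The main obstacle I expect lies in the first assertion: one must argue cleanly that $L_{t}$ is genuinely forced across $\partial(B_{l}\cup A_{r})$ and then, in each of the four subcases, match the crossed boundary line to the correct frame segment and verify the shadow (or reverse-shadow) membership of the crossing point. The most delicate point in the second assertion is making the distinctness and non-degeneracy hypotheses do exactly the work of pushing the stray endpoint \emph{strictly} outside $A_{l}$ (resp.\ $B_{r}$) rather than allowing it to sit at the corner $l$ (resp.\ $r$), which is where the assumption that $L_{t}$ differs from the frame segments $L_{j},L_{j^{\prime}},L_{i},L_{i^{\prime}}$ is used.
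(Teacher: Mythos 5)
Your proof is correct and follows essentially the same route as the paper's: the adjacency claim comes from the horizontal segment $L_t$ being forced across the boundary of $B_l$ (resp.\ $A_r$), whose vertical and diagonal parts lie in $F_{l_j}$ and $S_{l_{j'}}$ (resp.\ $S_{r_i}$ and $F_{r_{i'}}$), and the exclusion from $Z$ comes from the incompatibility of the stray endpoint with $Z\subseteq \mathcal{L}^{\text{right}}_{j,j'}\cap \mathcal{L}^{\text{left}}_{i,i'}$. The paper states the second part directly as $l_t\in \mathbb{R}^{2}_{\text{left}}(\Gamma^{\text{vert}}_{l_j})$ or $r_t\in \mathbb{R}^{2}_{\text{right}}(\Gamma^{\text{vert}}_{r_i})$ rather than via the corner intersections $A_l\cap B_l=\{l\}$ and $A_r\cap B_r=\{r\}$, but the two formulations are equivalent, and your explicit handling of the degenerate case where the endpoint sits at the corner (which both arguments must implicitly exclude via $L_t\notin\{L_j,L_{j'},L_i,L_{i'}\}$) is, if anything, more careful than the paper's.
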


\begin{proof}
Let $Z$ be an optimum solution of \textsc{Restricted Bounded Dominating Set}%
. Let $L_{t}\in \mathcal{L}$ be a line segment with one of its endpoints in $%
B_{l}\cup A_{r}$ and one point (not necessarily an endpoint) in $\overline{%
B_{l}}\cap \overline{A_{r}}$. Notice that $r_{t}\in A_{r}$ or $l_{t}\in
B_{l} $. Let first $r_{t}\in A_{r}$. Since $L_{t}$ has also a point in $%
\overline{B_{l}}\cap \overline{A_{r}}$, it follows that $L_{t}$ has a point
in $(S_{i}\cup F_{i})\cup (S_{i^{\prime }}\cup F_{i^{\prime }})$. Therefore $%
L_{t}$ is a neighbor of $L_{i}$ or $L_{i^{\prime }}$ by Lemma~\ref%
{shadow-correctness-lem-1}. Let now $l_{t}\in B_{l}$. Since $L_{t}$ has also
a point in $\overline{B_{l}}\cap \overline{A_{r}}$, it follows that $L_{t}$
has a point in $(S_{j}\cup F_{j})\cup (S_{j^{\prime }}\cup F_{j^{\prime }})$%
. Therefore $L_{t}$ is a neighbor of $L_{i}$ or $L_{i^{\prime }}$ by Lemma~%
\ref{shadow-correctness-lem-1}. Finally, since $r_{t}\in A_{r}$ or $l_{t}\in
B_{l}$, it follows that $r_{t}\in \mathbb{R}_{\text{right}}^{2}(\Gamma
_{r_{i}}^{\text{vert}})$ or $l_{t}\in \mathbb{R}_{\text{left}}^{2}(\Gamma
_{l_{j}}^{\text{vert}})$. Therefore $L_{t}\notin \mathcal{L}_{j,j^{\prime
}}^{\text{right}}$ or $L_{t}\notin \mathcal{L}_{i,i^{\prime }}^{\text{left}}$%
. Thus, since $Z\subseteq \mathcal{L\cap L}_{j,j^{\prime }}^{\text{right}%
}\cap \mathcal{L}_{i,i^{\prime }}^{\text{left}}$, it follows that $%
L_{t}\notin Z$.
\end{proof}

\begin{lemma}
\label{obs:irlvvrt2}Let $\mathcal{I}=(\mathcal{P},\mathcal{L},j,j^{\prime
},i,i^{\prime })$ be an instance of \textsc{Restricted Bounded Dominating Set%
} on tolerance graphs. Let $l=\Gamma _{l_{j}}^{\text{vert}}\cap \Gamma
_{l_{j^{\prime }}}^{\text{diag}}$ and $r=\Gamma _{r_{i}}^{\text{vert}}\cap
\Gamma _{r_{i^{\prime }}}^{\text{diag}}$. If there exists a line segment $%
L_{t}\in \mathcal{L}$ such that $L_{t}\subseteq \overline{B_{l}}\cap 
\overline{A_{r}}$ and $L_{t}\notin \mathcal{L}_{j,j^{\prime }}^{\text{right}%
}\cap \mathcal{L}_{i,i^{\prime }}^{\text{left}}$ then at least one of the
line segments $\{L_{j},L_{j^{\prime }},L_{i},L_{i^{\prime }}\}$ is a
neighbor of $L_{t}$. Moreover, $L_{t}$ does not belong to any optimum
solution $Z$ of \textsc{Restricted Bounded Dominating Set}.
\end{lemma}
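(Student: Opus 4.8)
The plan is to prove the two assertions of Lemma~\ref{obs:irlvvrt2} in the same style as the closely parallel Lemma~\ref{obs:irlvvrt1}, but now for a line segment $L_t$ that lies \emph{entirely} inside $\overline{B_l}\cap\overline{A_r}$ while failing to be in $\mathcal{L}_{j,j'}^{\text{right}}\cap\mathcal{L}_{i,i'}^{\text{left}}$. The first task is to extract a concrete geometric consequence from the membership hypothesis $L_t\notin \mathcal{L}_{j,j'}^{\text{right}}\cap\mathcal{L}_{i,i'}^{\text{left}}$. By Definition~\ref{left-right-crossing-pair-def}, $L_t\notin\mathcal{L}_{j,j'}^{\text{right}}$ means $L_t\not\subseteq A_l$ (equivalently, $L_t$ is not entirely to the right of $\Gamma_l^{\text{diag}}$ with respect to the start-point $l$), and $L_t\notin\mathcal{L}_{i,i'}^{\text{left}}$ means $L_t\not\subseteq B_r$. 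So at least one of these two containments fails; I would split into these two symmetric cases.

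**First I would** treat the case $L_t\not\subseteq B_r$, i.e.\ $L_t$ has a point outside $B_r$. Since $L_t\subseteq\overline{A_r}$ by hypothesis, this point lies neither in $A_r$ nor in $B_r$, hence (by the definition of $A_r,B_r$ via the two separating lines $\Gamma_r^{\text{vert}},\Gamma_r^{\text{diag}}$) it lies in the shadow/reverse-shadow region attached to $r=\Gamma_{r_i}^{\text{vert}}\cap\Gamma_{r_{i'}}^{\text{diag}}$; concretely it falls in $(S_i\cup F_i)\cup(S_{i'}\cup F_{i'})$, exactly as in the proof of Lemma~\ref{obs:irlvvrt1}. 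By Lemma~\ref{shadow-correctness-lem-1} this forces $L_t$ to be a neighbor of $L_i$ or $L_{i'}$. The symmetric case $L_t\not\subseteq A_l$ is handled the same way using the point $l=\Gamma_{l_j}^{\text{vert}}\cap\Gamma_{l_{j'}}^{\text{diag}}$: the offending point lies in $(S_j\cup F_j)\cup(S_{j'}\cup F_{j'})$, and Lemma~\ref{shadow-correctness-lem-1} yields that $L_t$ is a neighbor of $L_j$ or $L_{j'}$. In either case at least one of $\{L_j,L_{j'},L_i,L_{i'}\}$ is a neighbor of $L_t$, proving the first assertion.

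**For the second assertion** (that $L_t$ belongs to no optimum solution $Z$), I would argue directly from the structural constraint that every feasible solution satisfies $Z\subseteq\mathcal{L}\cap\mathcal{L}_{j,j'}^{\text{right}}\cap\mathcal{L}_{i,i'}^{\text{left}}$, since $(j,j')$ is the start-pair and $(i,i')$ is the end-pair of $Z$ by definition of the problem. But $L_t\notin\mathcal{L}_{j,j'}^{\text{right}}\cap\mathcal{L}_{i,i'}^{\text{left}}$ is exactly the hypothesis of the lemma, so $L_t$ cannot lie in $Z$. This matches the final sentence of the proof of Lemma~\ref{obs:irlvvrt1} almost verbatim, and combined with the first assertion it shows that $L_t$ is both dominated (by a boundary segment) and excludable.

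**The main obstacle** I anticipate is not conceptual but bookkeeping: correctly translating ``$L_t\subseteq\overline{B_l}\cap\overline{A_r}$ together with $L_t\not\subseteq A_l$ or $L_t\not\subseteq B_r$'' into the precise shadow membership that triggers Lemma~\ref{shadow-correctness-lem-1}. The subtlety is that $L_t$ being outside $B_r$ but inside $\overline{A_r}$ pins a point of $L_t$ into the region governed by the \emph{pair} $(i,i')$ rather than a single segment, so I must be careful to invoke \emph{both} potential neighbors $L_i$ and $L_{i'}$ (and symmetrically $L_j,L_{j'}$) and verify that the relevant point lands in $S_i\cup F_i$ or in $S_{i'}\cup F_{i'}$, depending on which side of $\Gamma_r^{\text{diag}}$ it falls. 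Once these region inclusions are laid out cleanly, the two applications of Lemma~\ref{shadow-correctness-lem-1} and the one-line set-membership argument finish the proof.
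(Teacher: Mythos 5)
Your proposal is correct and follows essentially the same route as the paper: the same case split on which of the two containments $L_t\subseteq A_l$, $L_t\subseteq B_r$ fails, the same placement of a witness point of $L_t$ into $S_l\cup F_l$ (resp.\ $S_r\cup F_r$) and hence into the shadows or reverse shadows of $L_j,L_{j'}$ (resp.\ $L_i,L_{i'}$) via Lemma~\ref{shadow-correctness-lem-1}, and the same one-line exclusion $Z\subseteq \mathcal{L}\cap\mathcal{L}_{j,j'}^{\text{right}}\cap\mathcal{L}_{i,i'}^{\text{left}}$ for the second assertion. The only cosmetic difference is that the paper's own proof pins the witness down to the endpoint $l_t$ and distinguishes explicitly between landing in $S_{j'}$ versus $F_j$, whereas you argue with a generic point of the complement region in the style of the proof of Lemma~\ref{obs:irlvvrt1}.
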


\begin{proof}
Suppose first that $L_{t}\notin \mathcal{L}_{j,j^{\prime }}^{\text{right}}$.
Then $l_{t}\in \mathbb{R}_{\text{left}}^{2}(\Gamma _{l_{j}}^{\text{vert}})$
or $l_{t}\in \mathbb{R}_{\text{left}}^{2}(\Gamma _{l_{j^{\prime }}}^{\text{%
diag}})$. We first consider the case where $l_{t}\in \mathbb{R}_{\text{left}%
}^{2}(\Gamma _{l_{j}}^{\text{vert}})$. Then, since $l_{t}\in \overline{B_{l}}%
\cap \overline{A_{r}}$ by assumption, it follows that $l_{t}\in \mathbb{R}_{%
\text{right}}^{2}(\Gamma _{l_{i^{\prime }}}^{\text{diag}})$. This implies
that $l_{t}\in S_{j^{\prime }}$, and thus $L_{j^{\prime }}$ is a neighbor of 
$L_{t}$. We now consider the case where $l_{t}\in \mathbb{R}_{\text{left}%
}^{2}(\Gamma _{l_{j^{\prime }}}^{\text{diag}})$. Then, since $l_{t}\in 
\overline{B_{l}}\cap \overline{A_{r}}$ by assumption, it follows that $%
l_{t}\in \mathbb{R}_{\text{right}}^{2}(\Gamma _{l_{j}}^{\text{vert}})$. This
implies that $l_{t}\in F_{j}$, and thus $L_{j}$ is a neighbor of $L_{t}$.

The case where $L_{t}\notin \mathcal{L}_{i,i^{\prime }}^{\text{left}}$ can
be dealt with in exactly the same way, implying that, in this case, $L_{i}$
or $L_{i^{\prime }}$ is a neighbor of $L_{t}$.
\end{proof}

\medskip

From Lemmas~\ref{lem:badvrt1} and~\ref{lem:badvrt2} we define now the
notions of a \emph{bad point} $p\in \mathcal{P}$ and a \emph{bad line segment%
} $L_{t}\in \mathcal{L}$, respectively. Moreover, from Lemmas~\ref%
{irrelevant-points-lem},~\ref{obs:irlvvrt1}, and~\ref{obs:irlvvrt2} we
define the notions of an \emph{irrelevant point} $p\in \mathcal{P}$ and of
an \emph{irrelevant line segment} $L_{t}\in \mathcal{L}$, as follows.

\begin{definition}
\label{bad-irrelevant-def}Let $\mathcal{I}=(\mathcal{P},\mathcal{L}%
,j,j^{\prime },i,i^{\prime })$ be an instance of \textsc{Restricted Bounded
Dominating Set} on tolerance graphs. Let $l=\Gamma _{l_{j}}^{\text{vert}%
}\cap \Gamma _{l_{j^{\prime }}}^{\text{diag}}$ and $r=\Gamma _{r_{i}}^{\text{%
vert}}\cap \Gamma _{r_{i^{\prime }}}^{\text{diag}}$. A point $p\in \mathcal{P%
}$ is a \emph{bad point} if $p\in \mathbb{R}_{\text{left}}^{2}(\Gamma _{l}^{%
\text{diag}})$ or $p\in \mathbb{R}_{\text{right}}^{2}(\Gamma _{r}^{\text{vert%
}})$. A point $p\in \mathcal{P}$ is an \emph{irrelevant point} if $p\in
S_{l}\cup S_{r}$. A line segment $L_{t}\in \mathcal{L}$ is a \emph{bad line
segment} if $L_{t}\subseteq B_{l}$ or $L_{t}\subseteq A_{r}$. Finally a line
segment $L_{t}\in \mathcal{L}$ is an \emph{irrelevant line segment} if
either $L_{t}\subseteq \overline{B_{l}}\cap \overline{A_{r}}$ and $%
L_{t}\notin \mathcal{L}_{j,j^{\prime }}^{\text{right}}\cap \mathcal{L}%
_{i,i^{\prime }}^{\text{left}}$, or $L_{t}$ has an endpoint in $B_{l}\cup
A_{r}$ and another point in $\overline{B_{l}}\cap \overline{A_{r}}$.
\end{definition}

The next lemma will enable us to reduce \textsc{Restricted Bounded
Dominating Set} to \textsc{Bounded Dominating Set} on tolerance graphs,
cf.~Lemma~\ref{lem:sbssol2-first}.

\begin{lemma}
\label{lem:cosntrnwinstnc}Let $\mathcal{I}=(\mathcal{P},\mathcal{L}%
,j,j^{\prime },i,i^{\prime })$ be an instance of \textsc{Restricted Bounded
Dominating Set} on tolerance graphs, which has no bad or irrelevant points $%
p\in \mathcal{P}$ and no bad or irrelevant line segments $L\in \mathcal{L}$.
Then we can add a new line segment $L_{j,1}$ to the set $\mathcal{P}\cup 
\mathcal{L}$ such that $L_{j}$ is the only neighbor of $L_{j,1}$.
\end{lemma}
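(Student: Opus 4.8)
The plan is to realise $L_{j,1}$ as a trivial (single-point) line segment placed in a ``private corner'' of the representation, and then to read off all of its adjacencies directly from Lemmas~\ref{shadow-correctness-lem-1} and~\ref{shadow-correctness-lem-2} (a single point is a degenerate line segment, exactly as the trivial segments $L_v=\{p_v\}$ used in the proof of Lemma~\ref{shadow-hovering-lem}). First I would extract the geometric consequences of the hypothesis that $\mathcal{I}$ has no bad or irrelevant points or line segments. Writing $l=\Gamma_{l_j}^{\text{vert}}\cap\Gamma_{l_{j^{\prime}}}^{\text{diag}}$, the absence of bad points together with the absence of irrelevant points (cf.~Lemmas~\ref{lem:badvrt1},~\ref{irrelevant-points-lem} and Definition~\ref{bad-irrelevant-def}) forces every point $p\in\mathcal{P}$ to satisfy $l_x<p_x$; likewise, since no segment is bad or irrelevant (Lemmas~\ref{lem:badvrt2},~\ref{obs:irlvvrt1},~\ref{obs:irlvvrt2}), every $L_t\in\mathcal{L}$ belongs to $\mathcal{L}_{j,j^{\prime}}^{\text{right}}$ and is therefore contained in $A_l$, so every point of every $L_t$ satisfies $x\ge l_x$ and $y-x\le l_y-l_x$. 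Because $l_x=l_j^{\,x}$ and all coordinates are distinct, the left endpoint $l_j$ of $L_j$ is the unique object of $\mathcal{P}\cup\mathcal{L}$ attaining the globally smallest $x$-coordinate. Let $x_2$ denote the second smallest $x$-coordinate occurring among all points of $\mathcal{P}$ and all endpoints of segments of $\mathcal{L}$, so $x_2>l_x$.

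Next I would define $L_{j,1}$ to be the single point $s=(s_x,s_y)$ with $s_x\in(l_x,x_2)$ and with $s_y$ chosen so large that $s_y-s_x>l_y-l_x$; thus $s$ lies just to the right of $\Gamma_{l_j}^{\text{vert}}$ but strictly above the diagonal line $\Gamma_l^{\text{diag}}$. Adjacency to $L_j$ is then witnessed by $L_j\cap S_s\neq\emptyset$: since $l_x<s_x<x_2\le r_j^{\,x}$, the portion $\{(x,l_j^{\,y}):l_x\le x\le s_x\}$ of $L_j$ is nonempty, and each of its points has $x\le s_x$ and $y-x=l_j^{\,y}-x\le l_y-l_x<s_y-s_x$, hence lies in $S_s$. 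By Lemma~\ref{shadow-correctness-lem-1} this gives $L_jL_{j,1}\in E$.

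The hard part, and the heart of the lemma, is to show that $L_{j,1}$ has \emph{no other} neighbour. The tempting choice of hiding $s$ \emph{below} $L_j$ (inside $S_j$) fails, because $L_{j^{\prime}}$ is the diagonally leftmost segment, its diagonal is the maximal one $l_y-l_x$, and its shadow $S_{j^{\prime}}$ swallows essentially all of $S_j$; this is precisely why I instead place $s$ strictly \emph{above} the extremal diagonal $\Gamma_l^{\text{diag}}$, while squeezing $s_x$ into the gap $(l_x,x_2)$ that only $L_j$ reaches. Concretely, for every segment $L_k$ each of its points satisfies $y-x\le l_y-l_x<s_y-s_x$, so no point of $L_k$ can have $s$ in its shadow; thus $s\notin S_k$, i.e.\ $L_{j,1}\cap S_k=\emptyset$, for every $k$. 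For the reverse direction, $S_s\subseteq\{(x,y):x\le s_x\}$, and since $s_x<x_2$ no point of $\mathcal{P}$ and no point of any segment $L_k$ with $k\neq j$ has $x\le s_x$; hence $L_k\cap S_s=\emptyset$ for all $k\neq j$ and $p\notin S_s$ for all $p\in\mathcal{P}$. Feeding these facts into Lemmas~\ref{shadow-correctness-lem-1} and~\ref{shadow-correctness-lem-2} shows that $L_{j,1}$ is non-adjacent to every other segment and to every point, so $L_j$ is its unique neighbour, as required. I expect the only delicate bookkeeping to be the strict versus non-strict coordinate comparisons, all of which are covered by the standing assumption that all coordinates are distinct.
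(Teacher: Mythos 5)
Your construction places $L_{j,1}$ (as a trivial segment) in the vertical strip immediately to the right of $\Gamma_{l_j}^{\text{vert}}$ and strictly above the diagonal $\Gamma_{l_{j'}}^{\text{diag}}$, which is exactly the interior of the region $R_1'$ used in the paper's proof; the derivation that all other elements lie strictly to the right of that strip and weakly below that diagonal is the same use of the no-bad/no-irrelevant hypothesis. The proposal is correct and follows essentially the paper's argument, merely spelling out the adjacency verifications via Lemmas~\ref{shadow-correctness-lem-1} and~\ref{shadow-correctness-lem-2} that the paper leaves as ``easy to verify.''
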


\begin{proof}
Since there are no bad or irrelevant points $p\in \mathcal{P}$ and no bad or
irrelevant line segments $L\in \mathcal{L}$ by assumption, there exists a
point $x\in \mathbb{R}^{2}$ such that, for every $p\in \mathcal{P}$ and for
every $L_{t}\in \mathcal{L}\setminus \{L_{j}\}$, we have that~${p,L_{t}\in 
\mathbb{R}_{\text{right}}^{2}(\Gamma _{x}^{\text{vert}})}$. That is, no
element of $\mathcal{P}\cup \left( \mathcal{L}\setminus \{L_{j}\}\right) $
has any point in the interior of the region $R_{1}=\mathbb{R}_{\text{right}%
}^{2}(\Gamma _{l_{j}}^{\text{vert}})\cap \mathbb{R}_{\text{left}}^{2}(\Gamma
_{x}^{\text{vert}})$. Furthermore we define the region $R_{1}^{\prime
}\subseteq R_{1}$, where $R_{1}^{\prime }=R_{1}\cap \mathbb{R}_{\text{left}%
}^{2}(\Gamma _{l_{j^{\prime}}}^{\text{diag}})$. This region $R_{1}^{\prime }$
is illustrated in Figure~\ref{construction-restricted-fig} for the case
where $j^{\prime }\neq j$; the case where $j^{\prime }=j$ is similar. Now we
add to $\mathcal{L}$ a new line segment $L_{j,1}$ arbitrarily within the
interior of the region $R_{1}^{\prime }$, cf.~Figure~\ref%
{construction-restricted-fig}. By the definition of $R_{1}^{\prime }$ it is
easy to verify that $L_{j,1}$ is adjacent only to~$L_{j}$.
\end{proof}

\begin{figure}[tbh]
\centering 
\includegraphics[scale=0.68]{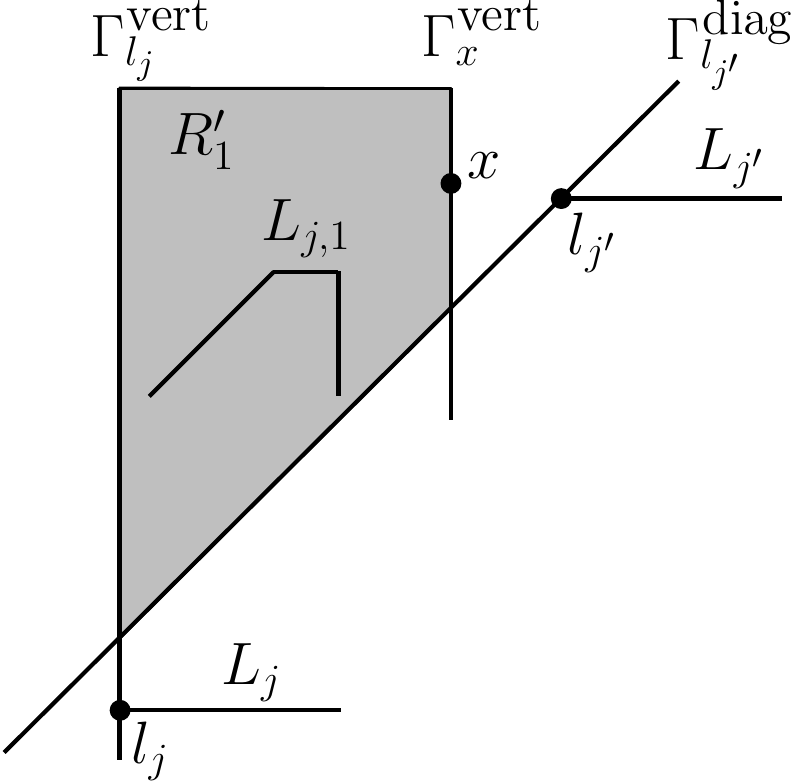}
\caption{The addition of the line segment $L_{j,1}$, in the case where $%
j^{\prime }\neq j$.}
\label{construction-restricted-fig}
\end{figure}

\medskip

In the following we denote by $l_{j,1}$ the left endpoint of the new line
segment $L_{j,1}$. Similarly to Definition~\ref{BD-def} in Section~\ref%
{bounded-alg-subsec}, we present in the next definition the quantity $RD_{(%
\mathcal{P},\mathcal{L})}(j,j^{\prime },i,i^{\prime })$ for the \textsc{%
Restricted Bounded Dominating Set} problem on tolerance graphs.

\begin{definition}
\label{RD-def}Let $\mathcal{I}=(\mathcal{P},\mathcal{L},j,j^{\prime
},i,i^{\prime })$ be an instance of \textsc{Restricted Bounded Dominating Set%
} on tolerance graphs. Then $RD_{(\mathcal{P},\mathcal{L})}(j,j^{\prime
},i,i^{\prime })$ is a dominating set ${Z\subseteq \mathcal{L\cap L}%
_{j,j^{\prime }}^{\text{right}}\cap \mathcal{L}_{i,i^{\prime }}^{\text{left}}%
}$ of $(\mathcal{P},\mathcal{L})$ with the smallest size, in which $%
(j,j^{\prime })$ and $(i,i^{\prime })$ are the start-pair and the end-pair,
respectively. If such a dominating set $Z$ does not exist, we define ${RD_{(%
\mathcal{P},\mathcal{L})}(j,j^{\prime },i,i^{\prime })=\bot }$ and $%
\left\vert {RD_{(\mathcal{P},\mathcal{L})}(j,j^{\prime },i,i^{\prime })}%
\right\vert {=\infty }$.
\end{definition}

\begin{observation}
\label{restricted-obs}${RD_{(\mathcal{P},\mathcal{L})}(j,j^{\prime
},i,i^{\prime })\neq \bot }$ if and only if $L_{j},L_{j^{\prime }}\in {%
\mathcal{L}_{i,i^{\prime }}^{\text{left}}}$, $L_{i},L_{i^{\prime }}\in {%
\mathcal{L}_{j,j^{\prime }}^{\text{right}}}$, and ${\mathcal{L\cap L}%
_{j,j^{\prime }}^{\text{right}}\cap \mathcal{L}_{i,i^{\prime }}^{\text{left}}%
}$ is a dominating set of $(\mathcal{P},\mathcal{L})$.
\end{observation}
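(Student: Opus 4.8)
The plan is to unfold Definitions~\ref{RD-def} and~\ref{start-end-pair-arbitrary-set-def} and combine them with one small geometric fact about crossing pairs. First I would record that fact, which is the only non-bookkeeping step: whenever $(j,j^{\prime})$ is a \emph{left}-crossing pair, the two defining line segments satisfy $L_j,L_{j^{\prime}}\in\mathcal{L}_{j,j^{\prime}}^{\text{right}}$, and symmetrically, whenever $(i,i^{\prime})$ is a \emph{right}-crossing pair, $L_i,L_{i^{\prime}}\in\mathcal{L}_{i,i^{\prime}}^{\text{left}}$. To see the first claim, set $t=\Gamma_{l_j}^{\text{vert}}\cap\Gamma_{l_{j^{\prime}}}^{\text{diag}}$, so that $t_x=(l_j)_x$ and $t_y-t_x=(l_{j^{\prime}})_y-(l_{j^{\prime}})_x$. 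Since $(j,j^{\prime})$ is a left-crossing pair, $l_j\in S_{l_{j^{\prime}}}$ (Definition~\ref{left-right-crossing-pair-def}), which gives both $(l_j)_x\le (l_{j^{\prime}})_x$ and $(l_j)_y-(l_j)_x\le t_y-t_x$; hence the endpoints $l_j$ and $l_{j^{\prime}}$ lie in the (closed) region $A_t$, and because each line segment is \emph{horizontal} the quantity $y-x$ only decreases as one moves rightwards, so the whole segments $L_j,L_{j^{\prime}}$ stay inside $A_t$, i.e.\ inside $\mathcal{L}_{j,j^{\prime}}^{\text{right}}$. The symmetric claim $L_i,L_{i^{\prime}}\in\mathcal{L}_{i,i^{\prime}}^{\text{left}}$ follows in the same way from $r_{i^{\prime}}\in S_{r_i}$ using the region $B_t$ with $t=\Gamma_{r_i}^{\text{vert}}\cap\Gamma_{r_{i^{\prime}}}^{\text{diag}}$.

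With this in hand the forward direction ($\Rightarrow$) is immediate. If $RD_{(\mathcal{P},\mathcal{L})}(j,j^{\prime},i,i^{\prime})\neq\bot$, then by Definition~\ref{RD-def} there is a dominating set $Z\subseteq\mathcal{L}\cap\mathcal{L}_{j,j^{\prime}}^{\text{right}}\cap\mathcal{L}_{i,i^{\prime}}^{\text{left}}$ of $(\mathcal{P},\mathcal{L})$ whose start-pair is $(j,j^{\prime})$ and whose end-pair is $(i,i^{\prime})$. Unfolding Definition~\ref{start-end-pair-arbitrary-set-def}, the end-pair condition gives $L_i,L_{i^{\prime}}\in Z\subseteq\mathcal{L}_{j,j^{\prime}}^{\text{right}}$, and the start-pair condition gives $L_j,L_{j^{\prime}}\in Z\subseteq\mathcal{L}_{i,i^{\prime}}^{\text{left}}$, which are exactly the two membership assertions of the statement. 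Finally, since $Z$ dominates $(\mathcal{P},\mathcal{L})$ and $Z\subseteq\mathcal{L}\cap\mathcal{L}_{j,j^{\prime}}^{\text{right}}\cap\mathcal{L}_{i,i^{\prime}}^{\text{left}}$, the larger set $\mathcal{L}\cap\mathcal{L}_{j,j^{\prime}}^{\text{right}}\cap\mathcal{L}_{i,i^{\prime}}^{\text{left}}$ dominates $(\mathcal{P},\mathcal{L})$ as well.

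For the converse ($\Leftarrow$) I would exhibit the witness explicitly: assuming the three listed conditions, set $Z=\mathcal{L}\cap\mathcal{L}_{j,j^{\prime}}^{\text{right}}\cap\mathcal{L}_{i,i^{\prime}}^{\text{left}}$. By the third hypothesis $Z$ dominates $(\mathcal{P},\mathcal{L})$, and trivially $Z\subseteq\mathcal{L}_{j,j^{\prime}}^{\text{right}}$ and $Z\subseteq\mathcal{L}_{i,i^{\prime}}^{\text{left}}$. It remains to check that the four designated segments lie in $Z$. Using the geometric fact of the first paragraph, $L_j,L_{j^{\prime}}\in\mathcal{L}_{j,j^{\prime}}^{\text{right}}$ and $L_i,L_{i^{\prime}}\in\mathcal{L}_{i,i^{\prime}}^{\text{left}}$ hold automatically; combining these with the two hypothesized cross-memberships $L_j,L_{j^{\prime}}\in\mathcal{L}_{i,i^{\prime}}^{\text{left}}$ and $L_i,L_{i^{\prime}}\in\mathcal{L}_{j,j^{\prime}}^{\text{right}}$ (and with $L_j,L_{j^{\prime}},L_i,L_{i^{\prime}}\in\mathcal{L}$) yields $\{L_j,L_{j^{\prime}},L_i,L_{i^{\prime}}\}\subseteq Z$. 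Hence $(j,j^{\prime})$ is the start-pair of $Z$ and $(i,i^{\prime})$ is the end-pair of $Z$, so $Z$ is feasible in the sense of Definition~\ref{RD-def} and $RD_{(\mathcal{P},\mathcal{L})}(j,j^{\prime},i,i^{\prime})\neq\bot$. The only step requiring care, and hence the main obstacle, is the geometric fact of the first paragraph, where the horizontality of the segments together with the precise $\le$/$\ge$ boundary conventions in Definitions~\ref{shadows-def} and~\ref{left-right-crossing-pair-def} must be tracked so that endpoints landing on $\Gamma_t^{\text{diag}}$ are correctly counted as belonging to the closed regions $A_t$ and $B_t$; everything else is a direct unfolding of the definitions.
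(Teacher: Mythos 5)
Your proof is correct and takes the route the paper intends: the paper states this as an Observation with no written proof, treating it as a direct unfolding of Definitions~\ref{RD-def} and~\ref{start-end-pair-arbitrary-set-def}, which is exactly what you do. The one substantive ingredient you supply --- that $L_j,L_{j^{\prime}}\in\mathcal{L}_{j,j^{\prime}}^{\text{right}}$ and $L_i,L_{i^{\prime}}\in\mathcal{L}_{i,i^{\prime}}^{\text{left}}$ hold automatically for crossing pairs of horizontal segments --- is verified correctly and is precisely the detail the paper leaves implicit.
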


For simplicity of the presentation we may refer to the set ${RD_{(\mathcal{P}%
,\mathcal{L})}(j,j^{\prime },i,i^{\prime })}$ as ${RD_{G}(j,j^{\prime
},i,i^{\prime })}$, where $(\mathcal{P},\mathcal{L})$ is the horizontal
shadow representation of the tolerance graph $G$. In the next lemma we
reduce the computation of ${RD_{(\mathcal{P},\mathcal{L})}(j,j^{\prime
},i,i^{\prime })}$ to the computation of an appropriate value for the
bounded dominating set problem (cf.~Section~\ref{Bounded-dominating-sec}).

\begin{lemma}
\label{lem:sbssol2-first}Let $\mathcal{I}=(\mathcal{P},\mathcal{L}%
,j,j^{\prime },i,i^{\prime })$ be an instance of \textsc{Restricted Bounded
Dominating Set} on tolerance graphs, which has no bad or irrelevant points $%
p\in \mathcal{P}$ and no bad or irrelevant line segments $L\in \mathcal{L}$.
Let $(\mathcal{P},\widehat{\mathcal{L}})$ be the augmented representation
that is obtained from $(\mathcal{P},\mathcal{L})$ by adding the line segment 
{$L_{j,1}$} as in Lemma~\ref{lem:cosntrnwinstnc}. Furthermore let $r=\Gamma
_{r_{i}}^{\text{vert}}\cap \Gamma _{r_{i^{\prime }}}^{\text{diag}}$. If ${%
RD_{(\mathcal{P},\mathcal{L})}(j,j^{\prime },i,i^{\prime })\neq \bot }$ then 
$RD_{(\mathcal{P},\mathcal{L})}(j,j^{\prime },i,i^{\prime })=BD_{(\mathcal{P}%
,\widehat{\mathcal{L}})}(l_{j,1},r,j^{\prime },i,i^{\prime })$.
\end{lemma}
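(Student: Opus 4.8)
The plan is to prove the identity by showing that the two optimization problems have \emph{identical families of feasible solutions}, so that their minimum-cardinality members (and hence the optimal sets themselves) coincide. Write $\widehat{\mathcal{L}}=\mathcal{L}\cup\{L_{j,1}\}$, $l=\Gamma_{l_j}^{\text{vert}}\cap\Gamma_{l_{j^{\prime}}}^{\text{diag}}$, and $r=\Gamma_{r_i}^{\text{vert}}\cap\Gamma_{r_{i^{\prime}}}^{\text{diag}}$. Let $\mathcal{F}_{RD}$ be the family of sets $Z\subseteq\mathcal{L}\cap\mathcal{L}_{j,j^{\prime}}^{\text{right}}\cap\mathcal{L}_{i,i^{\prime}}^{\text{left}}$ that dominate $(\mathcal{P},\mathcal{L})$ and have start-pair $(j,j^{\prime})$ and end-pair $(i,i^{\prime})$, and let $\mathcal{F}_{BD}$ be the family of sets $Z'\subseteq\widehat{\mathcal{L}}$ that dominate $X(l_{j,1},r)$, have end-pair $(i,i^{\prime})$, and have $L_{j^{\prime}}$ as diagonally leftmost line segment (cf.~Definitions~\ref{BD-def},~\ref{RD-def}). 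Since $RD\neq\bot$, the family $\mathcal{F}_{RD}$ is non-empty, and once I establish $\mathcal{F}_{RD}=\mathcal{F}_{BD}$ the claimed equality $RD=BD$ follows immediately, both being a smallest member of the same family.

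The first and main step is a geometric identification of $X(l_{j,1},r)$. Using that there are no bad and no irrelevant points, I would show every $p\in\mathcal{P}$ lies in $A_l\cap B_r$: the absence of bad points places $p$ in $\mathbb{R}_{\text{right}}^{2}(\Gamma_l^{\text{diag}})\cap\mathbb{R}_{\text{left}}^{2}(\Gamma_r^{\text{vert}})$, and $p\notin S_l\cup S_r$ then upgrades this to $p\in A_l\cap B_r$ after unfolding $S_l=\mathbb{R}_{\text{left}}^{2}(\Gamma_l^{\text{vert}})\cap\mathbb{R}_{\text{right}}^{2}(\Gamma_l^{\text{diag}})$ and the analogous $S_r$. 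A case analysis of Definition~\ref{bad-irrelevant-def} (the only way a horizontal $L_t$ can meet $B_l$ or $A_r$ is through its left, resp.~right, endpoint) likewise forces every $L_t\in\mathcal{L}$ to satisfy $L_t\subseteq A_l\cap B_r$, i.e.~$L_t\in\mathcal{L}_{j,j^{\prime}}^{\text{right}}\cap\mathcal{L}_{i,i^{\prime}}^{\text{left}}$. Because $L_{j,1}$ was placed to the left of $\Gamma_{l_{j^{\prime}}}^{\text{diag}}$ (Lemma~\ref{lem:cosntrnwinstnc}), one has $A_l\subseteq\mathbb{R}_{\text{right}}^{2}(\Gamma_{l_{j,1}}^{\text{diag}})$, hence $A_l\cap B_r\subseteq R(l_{j,1},r)$; together with $r_i\in\Gamma_r^{\text{vert}}$ this yields $X(l_{j,1},r)=(\mathcal{P}\cup\widehat{\mathcal{L}})\setminus\{L_i\}$, so the only element leaving is $L_i$ while $L_{j,1}$ itself enters. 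Since $L_i$ and $L_j$ belong to every set under consideration, dominating $X(l_{j,1},r)$ is then equivalent to dominating $(\mathcal{P},\mathcal{L})$.

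Given this, I would match the remaining structural constraints to get both inclusions. For $\mathcal{F}_{BD}\subseteq\mathcal{F}_{RD}$: as $L_{j,1}\in X(l_{j,1},r)$ has $L_j$ as its \emph{unique} neighbour, any $Z'$ must contain $L_j$ or $L_{j,1}$; but $L_{j,1}\notin Z'$, because every start-pair $(j^{\ast},j^{\prime})$ forces $Z'\subseteq\mathcal{L}_{j^{\ast},j^{\prime}}^{\text{right}}\subseteq\mathbb{R}_{\text{right}}^{2}(\Gamma_{l_{j^{\prime}}}^{\text{diag}})$, whereas $l_{j,1}\in\mathbb{R}_{\text{left}}^{2}(\Gamma_{l_{j^{\prime}}}^{\text{diag}})$, so $L_{j^{\prime}}$ could not be diagonally leftmost if $L_{j,1}\in Z'$. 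Hence $L_j\in Z'$, and since $l_j<l_{j,1}<l_t$ for every other left endpoint by construction, $l_j$ is the globally leftmost left endpoint in $Z'\subseteq\mathcal{L}$; the start-pair witnessing that $L_{j^{\prime}}$ is diagonally leftmost is therefore forced to be exactly $(j,j^{\prime})$, placing $Z'\in\mathcal{F}_{RD}$. The reverse inclusion $\mathcal{F}_{RD}\subseteq\mathcal{F}_{BD}$ is the easy direction: a set with start-pair $(j,j^{\prime})$ has $L_{j^{\prime}}$ as its diagonally leftmost segment (Definition~\ref{leftmost-segment-arbitrary-set-def}) and contains $L_j$, which dominates $L_{j,1}$, so it lies in $\mathcal{F}_{BD}$.

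The step I expect to be the main obstacle is precisely this joint bookkeeping in the second and third paragraphs: one must simultaneously track the four half-planes bounding $A_l$ and $B_r$, the two shadow regions $S_l,S_r$, and the delicate position of $l_{j,1}$ (right of $\Gamma_{l_j}^{\text{vert}}$, left of $\Gamma_{l_{j^{\prime}}}^{\text{diag}}$, and left of every other element), in order to guarantee both that $L_{j,1}$ is captured by $X(l_{j,1},r)$ \emph{and} that it can never enter a feasible $BD$-solution, while $L_i$ silently drops out. Once these region identities are in hand, the equivalence of the domination requirements and the matching of start- and end-pairs are routine consequences of Lemmas~\ref{shadow-correctness-lem-1} and~\ref{shadow-correctness-lem-2}.
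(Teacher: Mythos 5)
Your proposal is correct and follows essentially the same route as the paper's proof: the geometric identification $\mathcal{P}\cup\widehat{\mathcal{L}}=\{L_i\}\cup X(l_{j,1},r)$ via the absence of bad/irrelevant elements, the use of $L_{j,1}$'s unique neighbour $L_j$ to force the start-pair of any $BD$-feasible set to be $(j,j^{\prime})$, and the matching of end-pairs and diagonally leftmost segments. Your framing as an equality of feasible families $\mathcal{F}_{RD}=\mathcal{F}_{BD}$ is only a cosmetic repackaging of the paper's two-directional argument, and it makes explicit the inclusion $\mathcal{F}_{RD}\subseteq\mathcal{F}_{BD}$ that the paper leaves implicit.
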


\begin{proof}
Let $l=\Gamma _{l_{j}}^{\text{vert}}\cap \Gamma _{l_{j^{\prime }}}^{\text{%
diag}}$ and $r=\Gamma _{r_{i}}^{\text{vert}}\cap \Gamma _{r_{i^{\prime }}}^{%
\text{diag}}$. Then, since by assumption there are no bad or irrelevant
points $p\in \mathcal{P}$ or line segments $L\in \mathcal{L}$ in the
instance $\mathcal{I}=(\mathcal{P},\mathcal{L},j,j^{\prime },i,i^{\prime })$%
, it follows that all elements of $\mathcal{P}\cup \mathcal{L}$ are entirely
contained in the region $A_{l}\cap B_{r}$ of $\mathbb{R}^{2}$,
cf.~Definition~\ref{bad-irrelevant-def}. Therefore all elements of $\mathcal{%
P}\cup \mathcal{L}$ belong to the set $\{L_{i}\}\cup X(l,r)$, cf.~Eq.~(\ref%
{X(a,b)-def-eq}) in Section~\ref{bounded-alg-subsec}. Now recall from the
construction of the augmented representation $(\mathcal{P},\widehat{\mathcal{%
L}})$ from $(\mathcal{P},\mathcal{L})$ in the proof of Lemma~\ref%
{lem:cosntrnwinstnc} that $L_{j,1}$ is the only element of $\mathcal{P}\cup 
\widehat{\mathcal{L}}$ that does not belong to the set $\{L_{i}\}\cup X(l,r)$%
, cf.~Figure~\ref{construction-restricted-fig}. Furthermore, it is easy to
check that the set of elements of $\mathcal{P}\cup \widehat{\mathcal{L}}$ is
exactly the set $\{L_{i}\}\cup X(l_{j,1},r)$.

Since ${RD_{(\mathcal{P},\mathcal{L})}(j,j^{\prime },i,i^{\prime })\neq \bot 
}$ by assumption, it follows by Observation~\ref{restricted-obs} that $%
L_{j},L_{j^{\prime }}\in {\mathcal{L}_{i,i^{\prime }}^{\text{left}}}$ and $%
L_{i},L_{i^{\prime }}\in {\mathcal{L}_{j,j^{\prime }}^{\text{right}}}$ as
well as that ${\mathcal{L\cap L}_{j,j^{\prime }}^{\text{right}}\cap \mathcal{%
L}_{i,i^{\prime }}^{\text{left}}}$ is a dominating set of $(\mathcal{P},%
\mathcal{L})$. Furthermore, since $L_{j}$ is the only neighbor of $L_{j,1}$
in the augmented representation $(\mathcal{P},\widehat{\mathcal{L}})$, it
follows that ${\mathcal{L\cap L}_{j,j^{\prime }}^{\text{right}}\cap \mathcal{%
L}_{i,i^{\prime }}^{\text{left}}}$ is also a dominating set of $(\mathcal{P},%
\widehat{\mathcal{L}})$. Moreover, since ${\mathcal{L}_{j,j^{\prime }}^{%
\text{right}}\subseteq \mathcal{L}_{j^{\prime }}^{\text{right}}}$
(cf.~Definition~\ref{left-right-crossing-pair-def} in Section~\ref%
{terminology-bounded-domination-subsec}), it follows that also ${\mathcal{%
L\cap L}_{j^{\prime }}^{\text{right}}\cap \mathcal{L}_{i,i^{\prime }}^{\text{%
left}}}$ is a dominating set of $(\mathcal{P},\widehat{\mathcal{L}})$.
Therefore ${BD_{(\mathcal{P},\widehat{\mathcal{L}})}(l_{j,1},r,j^{\prime
},i,i^{\prime })\neq \bot }$ by Observation~\ref{obs:botcs}. That is, $BD_{(%
\mathcal{P},\widehat{\mathcal{L}})}(l_{j,1},r,j^{\prime },i,i^{\prime })$ is
a dominating set $Z\subseteq \widehat{\mathcal{L}}$ of $X(l_{j,1},r)$ with
the smallest size, in which $(i,i^{\prime })$ is its end-pair and $%
L_{j^{\prime }}$ is its diagonally leftmost line segment (cf.~Definition~\ref%
{BD-def} in Section~\ref{bounded-alg-subsec}). Since $L_{j^{\prime }}$ is
the diagonally leftmost line segment of $BD_{(\mathcal{P},\widehat{\mathcal{L%
}})}(l_{j,1},r,j^{\prime }i,i^{\prime })$, it follows that $L_{j,1}\notin
BD_{(\mathcal{P},\widehat{\mathcal{L}})}(l_{j,1},r,j^{\prime },i,i^{\prime
}) $. Therefore $L_{j}\in BD_{(\mathcal{P},\widehat{\mathcal{L}}%
)}(l_{j,1},r,j^{\prime },i,i^{\prime })$, since $L_{j}$ is the only neighbor
of $L_{j,1}$ in $(\mathcal{P},\widehat{\mathcal{L}})$. Thus $(j,j^{\prime })$
is the start-pair of $BD_{(\mathcal{P},\widehat{\mathcal{L}}%
)}(l_{j,1},r,j^{\prime },i,i^{\prime })$. Finally, since also $\mathcal{P}%
\cup \widehat{\mathcal{L}}=\{L_{i}\}\cup X(l_{j,1},r)$ as we proved above,
it follows that $RD_{(\mathcal{P},\mathcal{L})}(j,j^{\prime },i,i^{\prime
})=BD_{(\mathcal{P},\widehat{\mathcal{L}})}(l_{j,1},r,j^{\prime
},i,i^{\prime })$.
\end{proof}

\medskip

We are now ready to present Algorithm~\ref{restricted-bounded-alg} which,
given an instance $\mathcal{I}=(\mathcal{P},\mathcal{L},j,j^{\prime
},i,i^{\prime })$ of \textsc{Restricted Bounded Dominating Set} on tolerance
graphs, either outputs a set $Z\subseteq \mathcal{L\cap L}_{j,j^{\prime }}^{%
\text{right}}\cap \mathcal{L}_{i,i^{\prime }}^{\text{left}}$ of minimum size
that dominates all elements of $(\mathcal{P},\mathcal{L})$, or it announces
that such a set $Z$ does not exist. Algorithm~\ref{restricted-bounded-alg}
uses Algorithm~\ref{bounded-dominating-tolerance-alg} (which solves \textsc{%
Bounded Dominating Set} on tolerance graphs, cf.~Section~\ref%
{Bounded-dominating-sec}) as a subroutine.

\begin{algorithm}[t!]
\caption{\textsc{Restricted Bounded Dominating Set} on Tolerance Graphs} 
\label{restricted-bounded-alg}
\begin{algorithmic} [1]
\REQUIRE{A 6-tuple ${\cal I}=(\PP, \LL,j,j',i,i')$, where $(\PP,\LL)$ is a horizontal shadow representation of a tolerance graph $G$, 
$(j,j')$ is a left-crossing pair and $(i,i')$ is a right-crossing pair of $(\PP, \LL)$.}
\ENSURE{A set $Z\subseteq \mathcal{L}$ of minimum size that dominates $(\mathcal{P},\mathcal{L})$, where $(j,j')$ is the start-pair~and $(i,i')$ is the end-pair of $Z$, 
or the value $\bot$.}%

\medskip

\IF{$(\PP,\LL)$ contains a bad point $p\in \PP$ or a bad line segment $L_{k}\in \LL$ \ (cf.~Definition~\ref{bad-irrelevant-def})} \label{alg-2-line-1}
\vspace{0,1cm}
     \RETURN{$\bot$} \label{alg-2-line-2}
\ENDIF 

\vspace{0,1cm}

\IF{$L_{j},L_{j^{\prime }}\in {\mathcal{L}_{i,i^{\prime }}^{\text{left}}}$, \ $L_{i},L_{i^{\prime }}\in {\mathcal{L}_{j,j^{\prime }}^{\text{right}}}$, 
and ${\mathcal{L\cap L}_{j,j^{\prime }}^{\text{right}}\cap \mathcal{L}_{i,i^{\prime }}^{\text{left}}}$ is a dominating set of $(\mathcal{P},\mathcal{L})$} \label{alg-2-line-3}
     
     \vspace{0,1cm}
     
     \STATE{Compute the sets $\PP_{1}\subseteq \PP$ and $\LL_{1}\subseteq \LL$ of irrelevant points and line segments \ (cf.~Definition~\ref{bad-irrelevant-def})}\label{alg-2-line-4}%
     
     \STATE{$\mathcal{P} \leftarrow \mathcal{P} \setminus \PP_{1}$; \ \ $\mathcal{L} \leftarrow \mathcal{L} \setminus \LL_{1}$; \ \ $r \leftarrow \Gamma_{r_{i}}^{\text{vert}} \cap \Gamma_{r_{i^{\prime}}}^{\text{diag}}$} \label{alg-2-line-5}

     \vspace{0,1cm}
     
     \STATE{$\widehat{\mathcal{L}} \leftarrow \mathcal{L} \cup \{L_{j,1}\}$ \ (cf.~Lemma~\ref{lem:cosntrnwinstnc})} \label{alg-2-line-6}
     
     \vspace{0,1cm}
     
     \RETURN{$BD_{(\mathcal{P},\widehat{\mathcal{L}})}(l_{j,1},r,j',i,i^{\prime})$} \COMMENT{by calling Algorithm~\ref{bounded-dominating-tolerance-alg}} \label{alg-2-line-7}
    
\ENDIF

\vspace{0,1cm}

\STATE{\textbf{else \ return} $\bot$}\label{alg-2-line-8}
\end{algorithmic}
\end{algorithm}

\begin{theorem}
\label{restricted-correctness-thm}Given a $6$-tuple $\mathcal{I}=(\mathcal{P}%
,\mathcal{L},j,j^{\prime },i,i^{\prime })$, where $(\mathcal{P},\mathcal{L})$
is a horizontal shadow representation of a tolerance graph $G$ with $n$
vertices, $(j,j^{\prime })$ is a left-crossing pair and $(i,i^{\prime })$ is
a right-crossing pair of $(\mathcal{P},\mathcal{L})$, Algorithm~\ref%
{restricted-bounded-alg} computes \textsc{Restricted Bounded Dominating Set}
in $O(n^{9})$ time.
\end{theorem}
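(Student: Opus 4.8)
The plan is to establish correctness by checking that each branch of Algorithm~\ref{restricted-bounded-alg} returns the correct value, invoking the auxiliary lemmas already proved, and then to bound the running time by the single call to Algorithm~\ref{bounded-dominating-tolerance-alg}.

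First I would dispatch the two trivial branches. If $(\mathcal{P},\mathcal{L})$ contains a bad point or a bad line segment, then Lemmas~\ref{lem:badvrt1} and~\ref{lem:badvrt2} guarantee that $\mathcal{L}\cap\mathcal{L}_{j,j^{\prime}}^{\text{right}}\cap\mathcal{L}_{i,i^{\prime}}^{\text{left}}$ does not dominate $(\mathcal{P},\mathcal{L})$, so Observation~\ref{restricted-obs} yields $RD_{(\mathcal{P},\mathcal{L})}(j,j^{\prime},i,i^{\prime})=\bot$ and the return of $\bot$ in line~\ref{alg-2-line-2} is correct. Symmetrically, when no bad element exists but the feasibility test of line~\ref{alg-2-line-3} fails, Observation~\ref{restricted-obs} again gives $RD=\bot$, matching the return of $\bot$ in line~\ref{alg-2-line-8}.

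The core argument concerns the main branch (lines~\ref{alg-2-line-4}-\ref{alg-2-line-7}), entered exactly when the feasibility test succeeds, i.e.~when $RD\neq\bot$. Here I would first show that deleting the irrelevant points $\mathcal{P}_{1}$ and irrelevant line segments $\mathcal{L}_{1}$ changes neither feasibility nor the optimum value. The key point is that $L_{j},L_{j^{\prime}},L_{i},L_{i^{\prime}}$ lie in \emph{every} feasible solution $Z$, since $(j,j^{\prime})$ is forced to be the start-pair and $(i,i^{\prime})$ the end-pair. By Lemma~\ref{irrelevant-points-lem} each irrelevant point is already dominated by $L_{j^{\prime}}$ or $L_{i}$, and by Lemmas~\ref{obs:irlvvrt1} and~\ref{obs:irlvvrt2} each irrelevant line segment is dominated by one of $L_{j},L_{j^{\prime}},L_{i},L_{i^{\prime}}$ and belongs to no optimum solution. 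Hence a set $Z\subseteq\mathcal{L}\cap\mathcal{L}_{j,j^{\prime}}^{\text{right}}\cap\mathcal{L}_{i,i^{\prime}}^{\text{left}}$ dominates the reduced representation with the prescribed pairs if and only if it dominates the original one, and the two minimum sizes coincide. After the reduction the instance is free of bad and irrelevant elements, so Lemma~\ref{lem:cosntrnwinstnc} lets us append the gadget segment $L_{j,1}$ whose unique neighbour is $L_{j}$, and Lemma~\ref{lem:sbssol2-first} then gives directly that $RD_{(\mathcal{P},\mathcal{L})}(j,j^{\prime},i,i^{\prime})=BD_{(\mathcal{P},\widehat{\mathcal{L}})}(l_{j,1},r,j^{\prime},i,i^{\prime})$, which is exactly the value returned in line~\ref{alg-2-line-7}. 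For the running time, testing for bad elements (line~\ref{alg-2-line-1}) and computing the irrelevant sets (line~\ref{alg-2-line-4}) each amount to examining $O(n)$ elements against the fixed lines $\Gamma_{l}^{\text{diag}},\Gamma_{r}^{\text{vert}},\ldots$, costing $O(n)$ time; the feasibility test of line~\ref{alg-2-line-3} costs $O(n^{2})$ as in the analogous step of Algorithm~\ref{bounded-dominating-tolerance-alg}; and adding $L_{j,1}$ and performing the deletions are $O(n)$ operations. Since the augmented representation still has $O(n)$ elements, the single call to Algorithm~\ref{bounded-dominating-tolerance-alg} in line~\ref{alg-2-line-7} runs in $O(n^{9})$ time by Theorem~\ref{bounded-correctness-thm}, which dominates everything, giving total time $O(n^{9})$.

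I expect the main obstacle to be the invariance argument of the preceding paragraph: rigorously showing that deleting irrelevant points and segments preserves the exact value of $RD$. One must track simultaneously that the forced segments $L_{j},L_{j^{\prime}},L_{i},L_{i^{\prime}}$ already cover every deleted point, that no deleted segment could ever lie in an optimum solution, and that the remaining candidate pool $\mathcal{L}\cap\mathcal{L}_{j,j^{\prime}}^{\text{right}}\cap\mathcal{L}_{i,i^{\prime}}^{\text{left}}$ is unchanged by the deletions. Only once this is settled does Lemma~\ref{lem:sbssol2-first}, whose hypothesis explicitly requires an instance free of bad and irrelevant elements, become applicable and close the proof.
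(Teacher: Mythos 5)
Your proposal is correct and follows essentially the same route as the paper's proof: the same case dispatch via Lemmas~\ref{lem:badvrt1}, \ref{lem:badvrt2} and Observation~\ref{restricted-obs}, the same removal of irrelevant elements justified by Lemmas~\ref{irrelevant-points-lem}, \ref{obs:irlvvrt1}, \ref{obs:irlvvrt2}, the same reduction to \textsc{Bounded Dominating Set} via Lemmas~\ref{lem:cosntrnwinstnc} and~\ref{lem:sbssol2-first}, and the same running-time accounting dominated by the single $O(n^{9})$ call to Algorithm~\ref{bounded-dominating-tolerance-alg}. Your added remark that the forced segments $L_{j},L_{j^{\prime}},L_{i},L_{i^{\prime}}$ belong to every feasible solution, which makes the deletion of irrelevant elements value-preserving, is exactly the (implicit) content of the paper's argument, so there is no gap.
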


\begin{proof}
If the horizontal shadow representation $(\mathcal{P},\mathcal{L)}$ contains
at least one bad point $p\in \mathcal{P}$ or at least one bad line segment $%
L_{k}\in \mathcal{L}$ (cf.~Definition~\ref{bad-irrelevant-def}) then $%
\mathcal{L\cap L}_{j,j^{\prime }}^{\text{right}}\cap \mathcal{L}%
_{i,i^{\prime }}^{\text{left}}$ does not dominate $(\mathcal{P},\mathcal{L})$
by Lemmas~\ref{lem:badvrt1} and~\ref{lem:badvrt2}. Thus, in the case where
such a bad point or bad line segment exists in $(\mathcal{P},\mathcal{L)}$,
Algorithm~\ref{restricted-bounded-alg} correctly returns ${\bot }$,
cf.~lines~\ref{alg-2-line-1}-\ref{alg-2-line-2}. Furthermore, due to
Observation~\ref{restricted-obs}, the algorithm correctly returns ${\bot }$
in~line~\ref{alg-2-line-8} if at least one of the conditions checked in line~%
\ref{alg-2-line-3} is not satisfied.

Assume now that all conditions that are checked in line~\ref{alg-2-line-3}
are satisfied. Then ${RD_{(\mathcal{P},\mathcal{L})}(j,j^{\prime
},i,i^{\prime })\neq \bot }$ by Observation~\ref{restricted-obs}. Let ${%
\mathcal{P}_{1}\subseteq \mathcal{P}}$ and ${\mathcal{L}_{1}\subseteq 
\mathcal{L}}$ be the set of all irrelevant points and line segments,
respectively (cf.~Definition~\ref{bad-irrelevant-def}). Then, by Lemmas~\ref%
{irrelevant-points-lem},~\ref{obs:irlvvrt1}, and~\ref{obs:irlvvrt2}, every
point $p\in \mathcal{P}_{1}$ and every line segment $L_{t}\in \mathcal{L}%
_{1} $ is dominated by at least one of the line segments $%
\{L_{j},L_{j^{\prime }},L_{i},L_{i^{\prime }}\}$. Furthermore, by Lemmas~\ref%
{obs:irlvvrt1} and~\ref{obs:irlvvrt2}, no line segment $L_{t}\in \mathcal{L}%
_{1}$ is contained in any optimum solution $Z$ of \textsc{Restricted Bounded
Dominating Set}. Thus Algorithm~\ref{restricted-bounded-alg} correctly
removes the sets $\mathcal{P}_{1}$ and $\mathcal{L}_{1}$ of the irrelevant
points and line segments from the instance, cf.~lines~\ref{alg-2-line-4}-\ref%
{alg-2-line-5} of the algorithm.

In line~\ref{alg-2-line-6} the algorithm augments the set $\mathcal{L}$ of
line segments to the set $\widehat{\mathcal{L}}$ by adding to it the line
segment $L_{j,1}$ as in Lemma~\ref{lem:cosntrnwinstnc}. Then the algorithm
returns in line~\ref{alg-2-line-7} the value $BD_{(\mathcal{P},\widehat{%
\mathcal{L}})}(l_{j,1},r,j^{\prime },i,i^{\prime })$ by calling Algorithm~%
\ref{bounded-dominating-tolerance-alg} as a subroutine (cf.~Section~\ref%
{Bounded-dominating-sec}). The correctness of this computation in line~\ref%
{alg-2-line-7} follows immediately by Lemma~\ref{lem:sbssol2-first}.

Regarding the running time of Algorithm~\ref{restricted-bounded-alg}, note
by Definition~\ref{bad-irrelevant-def} that we can check in constant time
whether a given point $p\in \mathcal{P}$ (resp.~a given line segment $%
L_{t}\in \mathcal{L}$) is bad or irrelevant. Therefore each of the lines~\ref%
{alg-2-line-1},~\ref{alg-2-line-2}, and~\ref{alg-2-line-4} of the algorithm
can be executed in $O(n)$ time. The execution time of the if-statement of
line~\ref{alg-2-line-3} is dominated by the $O(n^{2})$ time that is needed to check
whether $\mathcal{L\cap L}_{j,j^{\prime }}^{\text{right}}\cap \mathcal{L}%
_{i,i^{\prime }}^{\text{left}}$ is a dominating set of $(\mathcal{P},%
\mathcal{L})$. Furthermore lines~\ref{alg-2-line-5}-\ref{alg-2-line-6} can
be executed trivially in total $O(n)$ time. Finally, line~\ref{alg-2-line-7}
can be executed in $O(n^{9})$ time by Theorem~\ref{bounded-correctness-thm},
and thus the total running time of Algorithm~\ref{restricted-bounded-alg} is 
$O(n^{9})$.
\end{proof}

\section{Dominating set on tolerance graphs\label{tolerance-domination-sec}}

In this section we present our main algorithm of the paper (cf.~Algorithm~%
\ref{dominating-tol-alg}) which computes in polynomial time a minimum
dominating set of a tolerance graph $G$, given by a horizontal shadow
representation $(\mathcal{P},\mathcal{L})$. Algorithm~\ref%
{dominating-tol-alg} uses as subroutines Algorithms~\ref%
{bounded-dominating-tolerance-alg} and~\ref{restricted-bounded-alg}, which
solve \textsc{Bounded Dominating Set} and \textsc{Restricted Bounded
Dominating Set} on tolerance graphs, respectively (cf.~Sections~\ref%
{Bounded-dominating-sec} and~\ref{Restricted-domination-sec}). Throughout
this section we assume without loss of generality that the given tolerance
graph $G$ is connected and that $G$ is given with a \emph{canonical}
horizontal shadow representation $(\mathcal{P},\mathcal{L})$. It is
important to note here that, in contrast to Algorithms~\ref%
{bounded-dominating-tolerance-alg} and~\ref{restricted-bounded-alg}, the
minimum dominating set $D$ that is computed by Algorithm~\ref%
{dominating-tol-alg} can also contain unbounded vertices. Thus always $D\neq
\bot $, since in the worst case $D$ contains the whole set $\mathcal{P}\cup 
\mathcal{L}$.

For every $p\in \mathcal{P}$ we denote by $N(p)=\{L_{k}\in \mathcal{L}:p\in
S_{k}\}$ and $H(p)=\{x\in \mathcal{P}\cup \mathcal{L}:x\cap S_{p}\neq
\emptyset \}$. Note that, due to Lemmas~\ref{shadow-correctness-lem-2} and~%
\ref{shadow-hovering-lem}, $N(p)$ is the set of neighbors of $p$ and $H(p)$
is the set of hovering vertices of $p$. Furthermore, for every $L_{k}\in 
\mathcal{L}$ we denote by $N(L_{k})=\{p\in \mathcal{P}:p\in S_{k}\}\cup
\{L_{t}\in \mathcal{L}:L_{t}\cap S_{k}\neq \emptyset $ or $L_{k}\cap
S_{t}\neq \emptyset \}$. Note that, due to Lemmas~\ref%
{shadow-correctness-lem-1} and~\ref{shadow-correctness-lem-2}, $N(L_{k})$ is
the set of neighbors of~$L_{k}$.

\begin{observation}
\label{neighbors-hovering-obs}Let $(\mathcal{P},\mathcal{L})$ be a canonical
representation of a connected tolerance graph $G$, and let $p\in \mathcal{P}$%
. Then $N(p)\subseteq N(x)$ for every $x\in H(p)$ by Lemma~\ref%
{neighbors-hovering}. Furthermore $H(p)\cap \mathcal{L}\neq \emptyset $ by
Lemma~\ref{hovering-bounded}.
\end{observation}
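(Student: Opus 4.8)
The plan is to obtain both assertions by translating Lemmas~\ref{neighbors-hovering} and~\ref{hovering-bounded} from the trapezoepiped representation into the shadow representation $(\mathcal{P},\mathcal{L})$; there is essentially no computation involved, only a careful matching of notation.

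First I would fix the dictionary between the two representations. By Definition~\ref{shadow-representation-def}, a \emph{canonical} shadow representation $(\mathcal{P},\mathcal{L})$ is obtained from a canonical trapezoepiped representation $R$ of $G$, in which points of $\mathcal{P}$ correspond to unbounded vertices $v\in V_{U}$ and line segments of $\mathcal{L}$ correspond to bounded vertices $u\in V_{B}$. Hence the given point $p\in \mathcal{P}$ corresponds to some $v\in V_{U}$, and since $R$ is canonical, Definition~\ref{inevitable-canonical-def} guarantees that $v$ is \emph{inevitable}. I would then recall the identifications already noted in the text preceding the observation: $N(p)=\{L_{k}\in \mathcal{L}:p\in S_{k}\}$ is exactly the neighbor set of $v$ by Lemma~\ref{shadow-correctness-lem-2}, and $H(p)=\{x\in \mathcal{P}\cup \mathcal{L}:x\cap S_{p}\neq \emptyset\}$ is exactly the hovering set $H(v)$ of $v$ by Lemma~\ref{shadow-hovering-lem}.

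With this dictionary in place, the first claim is immediate: since $v$ is an inevitable unbounded vertex, Lemma~\ref{neighbors-hovering} yields that the neighbor set of $v$ is contained in the neighbor set of $u$ for every hovering vertex $u\in H(v)$, and rewriting this through the identifications above gives $N(p)\subseteq N(x)$ for every $x\in H(p)$. For the second claim, I would apply Lemma~\ref{hovering-bounded} to the inevitable unbounded vertex $v$ in the canonical representation $R$; this produces a hovering vertex $u$ of $v$ that is \emph{bounded}. A bounded vertex corresponds to a line segment of $\mathcal{L}$, so the element $x\in H(p)$ that corresponds to $u$ lies in $\mathcal{L}$, whence $H(p)\cap \mathcal{L}\neq \emptyset$.

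The only real care needed -- and the single place where something could go wrong -- is the bookkeeping in the dictionary: I must make sure that canonicity of $(\mathcal{P},\mathcal{L})$ is used to force $v$ to be inevitable (so that Lemma~\ref{neighbors-hovering} applies), and that the geometric sets $N(p)$ and $H(p)$ really coincide with their graph-theoretic counterparts before the cited lemmas are invoked. I note that connectivity of $G$, although part of the hypothesis, is not actually needed for either assertion; it is presumably carried along only for the later stages of the domination algorithm.
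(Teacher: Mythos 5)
Your proposal is correct and is exactly the argument the paper intends: the observation is justified in the paper only by citing Lemmas~\ref{neighbors-hovering} and~\ref{hovering-bounded}, and your expansion (canonicity forces the unbounded vertex corresponding to $p$ to be inevitable, and Lemmas~\ref{shadow-correctness-lem-2} and~\ref{shadow-hovering-lem} identify $N(p)$ and $H(p)$ with the graph-theoretic neighborhood and hovering set) is precisely the missing bookkeeping. Your side remark that connectivity of $G$ is not needed here is also accurate.
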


\begin{lemma}
\label{lem:nhvornigb}Let $(\mathcal{P},\mathcal{L})$ be a canonical
horizontal shadow representation of a connected tolerance graph $G$ and let $%
D$ be a minimum dominating set of $(\mathcal{P},\mathcal{L})$. If there
exists a point $p\in \mathcal{P}$ such that $p\in D$ and $(N(p)\cup
H(p))\cap D\neq \emptyset $, then there exists a dominating set $D^{\prime }$
of $(\mathcal{P},\mathcal{L})$ such that $|D^{\prime }|=|D|$ and $|D^{\prime
}\cap \mathcal{P}|=|D\cap \mathcal{P}|-1$.
\end{lemma}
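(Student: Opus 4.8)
The plan is to build $D'$ from $D$ by deleting the point $p$ and inserting a single well-chosen line segment, so that $|D'|=|D|$ while one point is traded for one segment. The whole argument splits according to whether $p$ already has a neighbor inside $D$. Before the case analysis I would record the two structural facts I rely on, both collected in Observation~\ref{neighbors-hovering-obs}: since $(\mathcal{P},\mathcal{L})$ is canonical, $p$ is inevitable, so every hovering vertex $x\in H(p)$ satisfies $N(p)\subseteq N(x)$, and moreover $H(p)\cap\mathcal{L}\neq\emptyset$. The force of the containment is that any hovering vertex of $p$ — whether a point or a line segment — dominates every vertex dominated by $p$, except $p$ itself.

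In the first case, $N(p)\cap D\neq\emptyset$. Here I would take any bounded hovering vertex $x\in H(p)\cap\mathcal{L}$ and put $D'=(D\setminus\{p\})\cup\{x\}$. For domination I would partition the vertices into $N(p)$, the singleton $\{p\}$, and the rest: each element of $N(p)$ lies in $N(x)$ and is dominated by $x$; the vertex $p$ has a neighbor in $D\setminus\{p\}$ by the case hypothesis; and any remaining vertex is not dominated by $p$, so its dominator survives in $D\setminus\{p\}\subseteq D'$. Minimality of $D$ then forces $x\notin D$, since otherwise the very same verification would show that $D\setminus\{p\}$ is already a dominating set of size $|D|-1$. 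Hence $D'$ has the same cardinality as $D$, with the point $p$ replaced by the line segment $x$, giving $|D'\cap\mathcal{P}|=|D\cap\mathcal{P}|-1$.

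In the second case, $N(p)\cap D=\emptyset$, and then the hypothesis $(N(p)\cup H(p))\cap D\neq\emptyset$ forces some hovering vertex $y\in H(p)\cap D$. The key point is that $y$ still dominates all of $N(p)$ via $N(p)\subseteq N(y)$, so after deleting $p$ the \emph{only} vertex that loses its domination is $p$ itself. I would repair this by inserting a neighbor of $p$: connectivity of $G$ ensures $N(p)\neq\emptyset$, and the case hypothesis $N(p)\cap D=\emptyset$ ensures that any chosen $L\in N(p)$ lies outside $D$. Setting $D'=(D\setminus\{p\})\cup\{L\}$, the segment $L$ dominates $p$, the vertex $y\in D\setminus\{p\}$ dominates $N(p)$, and every other vertex is unaffected; thus $D'$ dominates $(\mathcal{P},\mathcal{L})$ with $|D'|=|D|$ and $|D'\cap\mathcal{P}|=|D\cap\mathcal{P}|-1$.

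The step I expect to be the genuine obstacle — and the reason the two cases cannot be merged — is the second one: once no neighbor of $p$ belongs to $D$, removing the point $p$ leaves $p$ undominated, so the substitution of a bounded hovering vertex used in the first case fails, because a hovering vertex is by definition never adjacent to $p$. The resolution is the observation that the hovering vertex already present in $D$ takes care of the whole of $N(p)$, collapsing the problem to re-dominating the single vertex $p$, which a fresh neighbor supplies at no net change in size.
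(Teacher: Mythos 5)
Your proof is correct and follows essentially the same route as the paper's: in both cases you swap $p$ for a single line segment (a bounded hovering vertex when $N(p)\cap D\neq\emptyset$, a neighbor of $p$ when instead $H(p)\cap D\neq\emptyset$), using $N(p)\subseteq N(x)$ for $x\in H(p)$ and $H(p)\cap\mathcal{L}\neq\emptyset$ exactly as in Observation~\ref{neighbors-hovering-obs}. Your explicit check that the inserted element does not already lie in $D$ (via minimality of $D$, resp.\ via $N(p)\cap D=\emptyset$) is a small point the paper leaves implicit, and is a welcome addition.
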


\begin{proof}
We may assume without loss of generality that $\mathcal{P}\neq \emptyset$
and $\mathcal{L}\neq \emptyset$. Indeed, if $\mathcal{P}= \emptyset$ then we
can just solve the problem \textsc{Bounded Dominating Set} (see Section~\ref%
{Bounded-dominating-sec}); furthermore, if $\mathcal{L}= \emptyset$, then
the graph $G$ is an independent set. Consider a point $p\in \mathcal{P}$
such that $p\in D$. Suppose first that $x\in D$ for some $x\in N(p)$, i.e., $%
N(p)\cap D\neq \emptyset $. Recall by Observation~\ref%
{neighbors-hovering-obs} that $H(p)\cap \mathcal{L}\neq \emptyset $ and
consider a line segment $L_{k}\in H(p)\cap \mathcal{L}$. We will prove that
the set $D^{\prime }=(D\setminus \{p\})\cup \{L_{k}\}$ is a minimum
dominating set of $G$. First note that $p$ is dominated by $x\in D\setminus
\{p\}\subseteq D^{\prime }$. Furthermore $N(p)\subseteq N(L_{k})$ by
Observation~\ref{neighbors-hovering-obs}, since $L_{k}\in H(p)$. This
implies that $N(p)$ is dominated by $L_{k}$ in $D^{\prime }$. Thus, since $%
|D^{\prime }|=|D|$, it follows that $D^{\prime }$ is a minimum dominating
set of $G$.

Suppose now that $x\in D$ for some $x\in H(p)$, i.e., $H(p)\cap D\neq
\emptyset $. Since $G$ is assumed to be connected, it follows that $N(p)\neq
\emptyset $. Let $L_{k}\in N(p)$. We will prove that the set $D^{\prime
}=(D\setminus \{p\})\cup \{L_{k}\}$ is a minimum dominating set of $G$.
First note that $p$ is dominated by $L_{k}\in D^{\prime }$. Recall by
Observation~\ref{neighbors-hovering-obs} that $N(p)\subseteq N(x)$. This
implies that $N(p)$ is dominated by $x$ in $D^{\prime }$. Thus, since $%
|D^{\prime }|=|D|$, it follows that $D^{\prime }$ is a minimum dominating
set of $G$.

To finish the proof of the lemma, note that $|D^{\prime }\cap \mathcal{P}%
|=|D\cap \mathcal{P}|-1$ follows from the construction of $D^{\prime }$, as
we always replace in $D^{\prime }$ the point $p\in \mathcal{P}$ by a line
segment $L_{k}\in \mathcal{L}$.
\end{proof}

\medskip

Define now the subset $\mathcal{P}^{\ast }\subseteq \mathcal{P}$ of points
as follows:%
\begin{equation}
\mathcal{P}^{\ast }=\{p\in \mathcal{P}:p\notin H(p^{\prime })\text{ for
every point }p^{\prime }\in \mathcal{P}\setminus \{p\}\}.  \label{P-ast-eq}
\end{equation}%
Equivalently, $\mathcal{P}^{\ast }$ contains all points $p\in \mathcal{P}$
such that $p\notin S_{p^{\prime }}$ for every other point $p^{\prime }\in 
\mathcal{P}\setminus \{p\}$. Note by the definition of the set $\mathcal{P}%
^{\ast }$ that for every $p_{1},p_{2}\in \mathcal{P}^{\ast }$ we have $%
p_{1}\notin S_{p_{2}}\cup F_{p_{2}}$. Furthermore recall that the points of $%
\mathcal{P}=\{p_{1},p_{2},\ldots ,p_{|\mathcal{P}|}\}$ have been assumed to
be ordered increasingly with respect to their $x$-coordinates. Therefore,
since $\mathcal{P}^{\ast }\subseteq \mathcal{P}$, the points of $\mathcal{P}%
^{\ast }$ are also ordered increasingly with respect to their $x$%
-coordinates.

\begin{definition}
\label{normal-dominating-def}Let $(\mathcal{P},\mathcal{L})$ be a horizontal
shadow representation. A dominating set $D$ of $(\mathcal{P},\mathcal{L})$
is \emph{normalized} if:

\begin{enumerate}
\item \label{lem:itm1}$(N(p)\cup H(p))\cap D=\emptyset $ whenever $p\in
D\cap \mathcal{P}$, and

\item \label{lem:itm2 copy(1)}$D\cap \mathcal{P}\subseteq \mathcal{P}^{\ast
} $.
\end{enumerate}
\end{definition}

\begin{lemma}
\label{lem:dswrlk4}Let $(\mathcal{P},\mathcal{L})$ be a canonical horizontal
shadow representation of a connected tolerance graph $G$. Then there exists
a minimum dominating set $D$ of $(\mathcal{P},\mathcal{L})$ that is
normalized.
\end{lemma}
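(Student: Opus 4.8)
The plan is to start from an arbitrary minimum dominating set $D$ of $(\mathcal{P},\mathcal{L})$ and massage it into a normalized one in two stages: first I would secure the first condition of Definition~\ref{normal-dominating-def}, and then I would observe that the second condition comes for free.

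For the first stage I would repeatedly invoke Lemma~\ref{lem:nhvornigb}. As long as the current minimum dominating set $D$ violates the first condition, there is a point $p\in D\cap\mathcal{P}$ with $(N(p)\cup H(p))\cap D\neq\emptyset$, and Lemma~\ref{lem:nhvornigb} then produces a dominating set $D^{\prime}$ with $|D^{\prime}|=|D|$ (so $D^{\prime}$ is again minimum) and $|D^{\prime}\cap\mathcal{P}|=|D\cap\mathcal{P}|-1$. Since $|D\cap\mathcal{P}|$ is a nonnegative integer that strictly decreases at each step, this process terminates after finitely many applications at a minimum dominating set $D$ satisfying $(N(p)\cup H(p))\cap D=\emptyset$ for every $p\in D\cap\mathcal{P}$, i.e.\ the first condition.

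The second stage is where I expect the only genuine argument, and it is a short proof by contradiction showing that the second condition is automatic. Suppose $D$ satisfies the first condition but some $p\in D\cap\mathcal{P}$ has $p\notin\mathcal{P}^{\ast}$. By the equivalent description of $\mathcal{P}^{\ast}$ following Eq.~(\ref{P-ast-eq}), there is a point $p^{\prime}\in\mathcal{P}\setminus\{p\}$ with $p\in S_{p^{\prime}}$, hence $p\in H(p^{\prime})$. Applying Observation~\ref{neighbors-hovering-obs} to $p^{\prime}$ (legitimate because the representation is canonical, so $p^{\prime}$ is an inevitable unbounded vertex and $p\in H(p^{\prime})$) gives $N(p^{\prime})\subseteq N(p)$. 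Now $p^{\prime}$ is a vertex of $G$, so it must be dominated by $D$: either $p^{\prime}\in D$ or $N(p^{\prime})\cap D\neq\emptyset$. But the first condition applied to $p$ yields $N(p)\cap D=\emptyset$, whence $N(p^{\prime})\cap D=\emptyset$; therefore $p^{\prime}\in D$. Then $p^{\prime}\in D\cap\mathcal{P}$, so the first condition applied to $p^{\prime}$ forces $(N(p^{\prime})\cup H(p^{\prime}))\cap D=\emptyset$, which contradicts $p\in H(p^{\prime})\cap D$. Hence no such $p$ exists, so $D\cap\mathcal{P}\subseteq\mathcal{P}^{\ast}$ and $D$ is normalized.

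The main (and only delicate) obstacle is this second stage: one must notice that the first condition prevents $p^{\prime}$ from being dominated through a neighbor, because $N(p^{\prime})\subseteq N(p)$ while $p$'s neighbors avoid $D$; this forces $p^{\prime}$ itself into $D$ and then re-triggers the first condition at $p^{\prime}$, yielding the contradiction. Everything else is bookkeeping: the first stage is a straightforward termination argument built directly on Lemma~\ref{lem:nhvornigb}, and the inclusion $N(p^{\prime})\subseteq N(p)$ is exactly Observation~\ref{neighbors-hovering-obs}.
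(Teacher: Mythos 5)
Your proof is correct and follows essentially the same route as the paper: both rest on Lemma~\ref{lem:nhvornigb} and Observation~\ref{neighbors-hovering-obs}, with your iterative point-removal argument replacing the paper's extremal choice of a minimum dominating set containing the fewest points of $\mathcal{P}$. The only cosmetic difference is in the second stage, where you derive the contradiction directly from Condition~1 of Definition~\ref{normal-dominating-def} (observing that it already forces Condition~2), whereas the paper re-invokes Lemma~\ref{lem:nhvornigb} against the extremal choice; both arguments run the same case split on whether $p'\in D$.
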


\begin{proof}
Let $D$ be a minimum dominating set of $G$ that contains the smallest
possible number of points from the set $\mathcal{P}$. That is, $|D\cap 
\mathcal{P}|\leq |D^{\prime }\cap \mathcal{P}|$ for every minimum dominating
set $D^{\prime }$ of $G$. Let $p\in D\cap \mathcal{P}$.

First assume that $(N(p)\cup H(p))\cap D\neq \emptyset $. Then Lemma~\ref%
{lem:nhvornigb} implies that there exists another minimum dominating set $%
D^{\prime }$ of $G$ such that $|D^{\prime }\cap \mathcal{P}|=|D\cap \mathcal{%
P}|-1<|D\cap \mathcal{P}|$, which is a contradiction to the choice of $D$.
Therefore $(N(p)\cup H(p))\cap D=\emptyset $ for every $p\in D\cap \mathcal{P%
}$.

Now assume that $p\in (\mathcal{P}\setminus \mathcal{P}^{\ast })\cap D$.
Then, by the definition of the set $\mathcal{P}^{\ast }$, there exists a
point $p^{\prime }\in \mathcal{P}$ such that $p\in H(p^{\prime })$. Note by
Observation~\ref{neighbors-hovering-obs} that $N(p^{\prime })\subseteq N(p)$%
. Suppose that $p^{\prime }\in D$. Then, since $p\in H(p^{\prime })$, Lemma~%
\ref{lem:nhvornigb} implies that there exists a minimum dominating set $%
D^{\prime }$ such that $|D^{\prime }\cap \mathcal{P}|=|D\cap \mathcal{P}%
|-1<|D\cap \mathcal{P}|$, which is a contradiction to the choice of $D$.
Therefore $p^{\prime }\notin D$. Thus, since $D$ is a dominating set of $G$
and $p^{\prime }\notin D$, there must exist an $L_{k}\in N(p^{\prime })$
such that $L_{k}\in D$. Therefore, since $N(p^{\prime })\subseteq N(p)$, it
follows that $L_{k}\in N(p)\cap D$. Then Lemma~\ref{lem:nhvornigb} implies
that there exists a minimum dominating set $D^{\prime }$ of $G$ such that $%
|D^{\prime }\cap \mathcal{P}|=|D\cap \mathcal{P}|-1<|D\cap \mathcal{P}|$,
which is again a contradiction to the choice of $D$. This implies that $(%
\mathcal{P}\setminus \mathcal{P}^{\ast })\cap D=\emptyset $ and therefore $%
D\cap \mathcal{P}\subseteq \mathcal{P}^{\ast }$. Thus the dominating set $D$
is normalized.
\end{proof}

\medskip

In the remainder of this section, whenever we refer to a minimum dominating
set $D$ of a connected tolerance graph $G$ that is given by a canonical
horizontal shadow representation $(\mathcal{P},\mathcal{L})$, we will always
assume (due to Lemma~\ref{lem:dswrlk4}) that $D$ is \emph{normalized}.
Moreover, given such a canonical horizontal shadow representation $(\mathcal{%
P},\mathcal{L})$, where $\mathcal{P}=\{p_{1},p_{2},\ldots ,p_{|\mathcal{P}%
|}\}$ and $\mathcal{L}=\{L_{1},L_{2},\ldots ,L_{|\mathcal{L}|}\}$, we add
two dummy line segments $L_{0}$ and $L_{|\mathcal{L}|+1}$ (with endpoints $%
l_{0},r_{0}$ and $l_{|\mathcal{L}|+1},r_{|\mathcal{L}|+1}$, respectively)
such that all elements of $\mathcal{P\cup L}$ are contained in $A_{r_{0}}$
and in $B_{l_{|\mathcal{L}|+1}}$. Denote $\mathcal{L}^{\prime }=\mathcal{L}%
\cup \{L_{0},L_{|\mathcal{L}|+1}\}$. Furthermore we add one dummy point $p_{|%
\mathcal{P}|+1}$ such that all elements of $\mathcal{P\cup L}^{\prime }$ are
contained in $B_{p_{|\mathcal{P}|+1}}$. Denote $\mathcal{P}^{\prime }=%
\mathcal{P}\cup \{p_{|\mathcal{P}|+1}\}$.

Note that $(\mathcal{P}^{\prime },\mathcal{L}^{\prime })$ is a horizontal
shadow representation of some tolerance graph $G^{\prime }$, where the
bounded vertices $V_{B}^{\prime }$ of $G^{\prime }$ correspond to the line
segments of $\mathcal{L}^{\prime }$ and the unbounded vertices $%
V_{U}^{\prime }$ of $G^{\prime }$ correspond to the points of $\mathcal{P}%
^{\prime }$. Furthermore note that, although $G$ is connected, $G^{\prime }$
is not connected as it contains the three isolated vertices that correspond
to $L_{0}$, $L_{|\mathcal{L}|+1}$, and $p_{|\mathcal{P}|+1}$. However, since
there exists by Lemma~\ref{lem:dswrlk4} a minimum dominating set $D$ of $G$
that is normalized, it is easy to verify that $G^{\prime }$ also admits a
normalized minimum dominating set. Therefore, whenever we refer to a minimum
dominating set $D^{\prime }$ of the augmented tolerance graph $G^{\prime }$,
we will always assume that $D^{\prime }$ is normalized.

For simplicity of the presentation, we refer in the following to the
augmented sets $\mathcal{P}^{\prime }$ and $\mathcal{L}^{\prime }$ of points
and horizontal line segments by $\mathcal{P}$ and $\mathcal{L}$,
respectively. In the remainder of this section we will write $\mathcal{P}%
=\{p_{1},p_{2},\ldots ,p_{|\mathcal{P}|}\}$ and $\mathcal{L}%
=\{L_{1},L_{2},\ldots ,L_{|\mathcal{L}|}\}$ with the understanding that the
last point $p_{|\mathcal{P}|}$ of $\mathcal{P}$, as well as the first and
the last line segments $L_{1}$ and $L_{|\mathcal{L}|}$ of $\mathcal{L}$, are
dummy. Note that the last point $p_{|\mathcal{P}|}$ (i.e., the new dummy
point) belongs to the set $\mathcal{P}^{\ast }$. Furthermore, we will refer
to the augmented tolerance graph $G^{\prime }$ by $G$. For every $%
p_{i},p_{j}\in \mathcal{P}^{\ast }$ with $i<j$, we denote by%
\begin{eqnarray}
G_{j} &=&\{x\in \mathcal{P}\cup \mathcal{L}:x\subseteq B_{p_{j}}\setminus
\Gamma _{p_{j}}^{\text{vert}}\},  \label{G_j-eq1} \\
G(i,j) &=&\{x\in G_{j}:x\subseteq A_{p_{i}}\}.  \label{G-i-j-eq2}
\end{eqnarray}%
that is, $G_{j}$ is set of elements of $\mathcal{P}\cup \mathcal{L}$ that
are entirely contained in the region $B_{p_{j}}\setminus \Gamma _{p_{j}}^{%
\text{vert}}$, and $G(i,j)$ is the subset of $G_{j}$ that contains the
elements of $\mathcal{P}\cup \mathcal{L}$ that are entirely contained in the
region $A_{p_{i}}$. Note that $p_{j}\notin G_{j}$ and $p_{j}\notin G(i,j)$.

\begin{definition}
\label{D-general-def}Let $p_{j}\in \mathcal{P}^{\ast }$ and $(i,i^{\prime })$
be a right-crossing pair in $G_{j}$. Then $D(j,i,i^{\prime })$ is a \emph{%
minimum normalized dominating set} of $G_{j}$ whose end-pair is $%
(i,i^{\prime })$. If there exists no dominating set $Z$ of $G_{j}$ whose
end-pair is $(i,i^{\prime })$, we define $D(j,i,i^{\prime })=\bot $.
\end{definition}

\begin{observation}
\label{general-infeasible-characterization-obs}$D(j,i,i^{\prime })\neq \bot $
if and only if $\mathcal{L}_{i,i^{\prime }}^{\text{left}}$ is a dominating
set of $G_{j}$.
\end{observation}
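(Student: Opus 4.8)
The plan is to follow the same pattern as the analogous Observations~\ref{obs:botcs} and~\ref{restricted-obs}: the set $\mathcal{L}_{i,i'}^{\text{left}}$ is the \emph{largest} candidate respecting the required end-pair constraint, so a feasible dominating set of $G_j$ with end-pair $(i,i')$ exists if and only if this maximal candidate already dominates $G_j$. By Definition~\ref{D-general-def}, the quantity $D(j,i,i')$ equals $\bot$ exactly when there exists \emph{no} dominating set of $G_j$ whose end-pair is $(i,i')$ (note that this dichotomy concerns existence of a not-necessarily-normalized dominating set, so no normalization issues arise here). Hence it suffices to prove that such a dominating set $Z$ exists if and only if $\mathcal{L}_{i,i'}^{\text{left}}$ is a dominating set of $G_j$.

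For the forward direction I would assume $D(j,i,i')\neq\bot$ and take a dominating set $Z$ of $G_j$ whose end-pair is $(i,i')$. By Definition~\ref{start-end-pair-arbitrary-set-def} this means $L_i,L_{i'}\in Z$ and $Z\subseteq\mathcal{L}_{i,i'}^{\text{left}}$. Since $Z$ is a dominating set of $G_j$ and $Z\subseteq\mathcal{L}_{i,i'}^{\text{left}}\cap G_j$, enlarging $Z$ to $\mathcal{L}_{i,i'}^{\text{left}}\cap G_j$ preserves domination (adding elements of the ground set $G_j$ to a dominating set of $G_j$ keeps it dominating). Therefore $\mathcal{L}_{i,i'}^{\text{left}}$ is a dominating set of $G_j$.

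For the reverse direction I would assume $\mathcal{L}_{i,i'}^{\text{left}}$ is a dominating set of $G_j$ and set $Z=\mathcal{L}_{i,i'}^{\text{left}}\cap G_j$, which then also dominates $G_j$. It remains to verify that $(i,i')$ is the end-pair of $Z$. By construction $Z\subseteq\mathcal{L}_{i,i'}^{\text{left}}$; and since $(i,i')$ is a right-crossing pair in $G_j$ we have $L_i,L_{i'}\in G_j$, while a short geometric check shows that $L_i,L_{i'}\subseteq B_t$ for $t=\Gamma_{r_i}^{\text{vert}}\cap\Gamma_{r_{i'}}^{\text{diag}}$ (using $r_{i'}\in S_{r_i}$ together with the fact that the horizontal segments extend only to the left of their right endpoints), i.e.\ $L_i,L_{i'}\in\mathcal{L}_{i,i'}^{\text{left}}$. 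Hence $L_i,L_{i'}\in Z$, and by Definition~\ref{start-end-pair-arbitrary-set-def} the pair $(i,i')$ is the end-pair of $Z$. Thus $Z$ is a dominating set of $G_j$ with end-pair $(i,i')$, which witnesses $D(j,i,i')\neq\bot$.

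The only mildly delicate step is the verification that $L_i,L_{i'}\in\mathcal{L}_{i,i'}^{\text{left}}$, i.e.\ that the two segments defining the right-crossing pair lie inside the region $B_t$ that defines $\mathcal{L}_{i,i'}^{\text{left}}$; this is a routine consequence of the definition of a right-crossing pair ($r_{i'}\in S_{r_i}$) combined with the horizontality of the segments, and it is precisely the analogue of the remark after Definition~\ref{BD-def} that $L_i,L_{i'}$ always belong to the corresponding $BD$ set. Everything else is bookkeeping about the superset-closure of domination within a fixed ground set, so I do not expect any genuine obstacle.
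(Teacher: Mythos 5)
Your proof is correct and is precisely the routine justification the paper has in mind: the observation is stated without proof (like its analogues, Observations~\ref{obs:botcs} and~\ref{restricted-obs}), and your argument — that $\mathcal{L}_{i,i'}^{\text{left}}\cap G_j$ is the maximal set with end-pair $(i,i')$, that enlarging a dominating set within the ground set preserves domination, and that $L_i,L_{i'}\in\mathcal{L}_{i,i'}^{\text{left}}$ follows from $r_{i'}\in S_{r_i}$ and horizontality — is exactly the intended reasoning. You also correctly note that the $\bot$ clause of Definition~\ref{D-general-def} is phrased in terms of arbitrary (not necessarily normalized) dominating sets, so no normalization argument is needed.
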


\begin{observation}
\label{infeasible-dominating-obs}If $X(r_{i^{\prime }},p_{j})$ is not
dominated by the set $\{L_{i},L_{i^{\prime }}\}$ then $D(j,i,i^{\prime
})=\bot $. Furthermore, if there exists a point $p\in \mathcal{P}\cap G_{j}$
such that $p\in \mathbb{R}_{\text{right}}^{2}(\Gamma _{r_{i}}^{\text{vert}})$
then $D(j,i,i^{\prime })=\bot $.
\end{observation}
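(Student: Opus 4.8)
The plan is to prove both statements by contraposition, reducing everything to one structural fact about $\mathcal{L}_{i,i^{\prime}}^{\text{left}}$. Recall that $\mathcal{L}_{i,i^{\prime}}^{\text{left}}=\{x\in\mathcal{P}\cup\mathcal{L}:x\subseteq B_{t_{0}}\}$ with $t_{0}=\Gamma_{r_{i}}^{\text{vert}}\cap\Gamma_{r_{i^{\prime}}}^{\text{diag}}$, and that $\Gamma_{t_{0}}^{\text{diag}}=\Gamma_{r_{i^{\prime}}}^{\text{diag}}$ since $t_{0}$ lies on $\Gamma_{r_{i^{\prime}}}^{\text{diag}}$; hence every point $s$ lying on some element of $\mathcal{L}_{i,i^{\prime}}^{\text{left}}$ satisfies $s_{x}\leq(r_{i})_{x}$ and $s_{y}-s_{x}\geq(r_{i^{\prime}})_{y}-(r_{i^{\prime}})_{x}$. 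By Observation~\ref{general-infeasible-characterization-obs}, assuming $D(j,i,i^{\prime})\neq\bot$ is exactly assuming that $Z=\mathcal{L}\cap\mathcal{L}_{i,i^{\prime}}^{\text{left}}$ dominates $G_{j}$, so every element of $G_{j}$ is dominated by some $L_{k}\in Z$ whose points all obey the two inequalities above. I would first dispose of the second statement: if there were a point $p\in\mathcal{P}\cap G_{j}$ with $p\in\mathbb{R}_{\text{right}}^{2}(\Gamma_{r_{i}}^{\text{vert}})$ and yet $D(j,i,i^{\prime})\neq\bot$, then $p$ would be dominated by some $L_{k}\in Z$, so $p\in S_{k}$ by Lemma~\ref{shadow-correctness-lem-2}, forcing $p_{x}\leq s_{x}\leq(r_{i})_{x}$ for a witnessing $s\in L_{k}$; this contradicts $(r_{i})_{x}<p_{x}$ (strict, by the standing distinctness of $x$-coordinates), and the second statement follows.

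For the first statement I would prove the contrapositive: if $D(j,i,i^{\prime})\neq\bot$ then $\{L_{i},L_{i^{\prime}}\}$ dominates $X(r_{i^{\prime}},p_{j})$. Note $X(r_{i^{\prime}},p_{j})\subseteq G_{j}$ and lies entirely in $\mathbb{R}_{\text{right}}^{2}(\Gamma_{r_{i^{\prime}}}^{\text{diag}})$. Take first a point $p\in X(r_{i^{\prime}},p_{j})\cap\mathcal{P}$. Some $L_{k}\in Z$ dominates $p$, so $p\in S_{k}$ and hence $p_{x}\leq(r_{i})_{x}$ as above; and since $p$ is right of $\Gamma_{r_{i^{\prime}}}^{\text{diag}}$ we get $p_{y}-p_{x}\leq(r_{i^{\prime}})_{y}-(r_{i^{\prime}})_{x}\leq(r_{i})_{y}-(r_{i})_{x}$, the last inequality being the right-crossing condition $r_{i^{\prime}}\in S_{r_{i}}$ from Definition~\ref{left-right-crossing-pair-def}. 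These two inequalities say precisely $p\in S_{r_{i}}\subseteq S_{i}$, so $L_{i}$ dominates $p$ by Lemma~\ref{shadow-correctness-lem-2}.

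It then remains to treat a line segment $L_{m}\in X(r_{i^{\prime}},p_{j})\cap\mathcal{L}$, dominated by some $L_{k}\in Z$, so by Lemma~\ref{shadow-correctness-lem-1} either $L_{m}\cap S_{k}\neq\emptyset$ or $L_{k}\cap S_{m}\neq\emptyset$. In the first case I pick $u\in L_{m}\cap S_{k}$ and run the point argument on $u$ (again $u_{x}\leq(r_{i})_{x}$ and $u$ right of $\Gamma_{r_{i^{\prime}}}^{\text{diag}}$) to conclude $u\in S_{r_{i}}\subseteq S_{i}$, so $L_{m}\cap S_{i}\neq\emptyset$ and $L_{i}$ dominates $L_{m}$. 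The delicate case is $L_{k}\cap S_{m}\neq\emptyset$: pick $u\in L_{k}$ with $u\in S_{m}$, witnessed by $v\in L_{m}$ with $u_{x}\leq v_{x}$ and $u_{y}-u_{x}\leq v_{y}-v_{x}$. Since $u$ is left of $\Gamma_{r_{i^{\prime}}}^{\text{diag}}$ and $v$ is right of it, the chain $u_{y}-u_{x}\leq v_{y}-v_{x}\leq(r_{i^{\prime}})_{y}-(r_{i^{\prime}})_{x}\leq u_{y}-u_{x}$ collapses to equalities, placing both $u$ and $v$ on $\Gamma_{r_{i^{\prime}}}^{\text{diag}}$. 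If $v_{x}\leq(r_{i^{\prime}})_{x}$ then $v\in S_{r_{i^{\prime}}}\subseteq S_{i^{\prime}}$, so $L_{m}\cap S_{i^{\prime}}\neq\emptyset$; otherwise $(r_{i^{\prime}})_{x}<v_{x}$ and, as $r_{i^{\prime}}$ and $v$ share the diagonal, $r_{i^{\prime}}\in S_{v}\subseteq S_{m}$, so $L_{i^{\prime}}\cap S_{m}\neq\emptyset$. Either way $L_{i^{\prime}}$ dominates $L_{m}$ by Lemma~\ref{shadow-correctness-lem-1}, which exhausts all cases.

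The main obstacle I expect is exactly this last subcase, where the domination $L_{k}\cap S_{m}\neq\emptyset$ runs ``backwards'': the shadow of the far-right segment $L_{m}$ reaches the far-left segment $L_{k}$, so one cannot immediately transfer the adjacency to $L_{i}$. The resolution is the geometric observation that elements of $\mathcal{L}_{i,i^{\prime}}^{\text{left}}$ and of $X(r_{i^{\prime}},p_{j})$ sit on opposite sides of $\Gamma_{r_{i^{\prime}}}^{\text{diag}}$, forcing any such interaction to occur on that diagonal itself, where it can be rerouted through the endpoint $r_{i^{\prime}}\in L_{i^{\prime}}$. The only remaining care concerns degenerate coincidences (an element of $X(r_{i^{\prime}},p_{j})$ lying on $\Gamma_{r_{i^{\prime}}}^{\text{diag}}$, or a trivial $L_{m}$), which are either excluded or reduced to the point case by the general-position assumption that all endpoints have distinct $x$-coordinates.
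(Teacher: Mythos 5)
The paper records this statement as an Observation and supplies no proof, so there is no in-paper argument to compare against; your write-up is a genuine verification, and its geometric core is exactly right: every element of $X(r_{i'},p_j)$ lies in $\mathbb{R}_{\text{right}}^{2}(\Gamma_{r_{i'}}^{\text{diag}})$ while every element of $\mathcal{L}_{i,i'}^{\text{left}}$ lies in $\mathbb{R}_{\text{left}}^{2}(\Gamma_{r_{i'}}^{\text{diag}})\cap\mathbb{R}_{\text{left}}^{2}(\Gamma_{r_{i}}^{\text{vert}})$, so any ``backwards'' domination collapses onto the diagonal through $r_{i'}$ and can be rerouted through the endpoint $r_{i'}\in L_{i'}$. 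The second statement and the subcases $L_m\cap S_k\neq\emptyset$ and $L_k\cap S_m\neq\emptyset$ are handled correctly.

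One point needs repair. Your opening reduction is not what Observation~\ref{general-infeasible-characterization-obs} says: by Definition~\ref{D-general-def}, $D(j,i,i')$ is only required to satisfy $D(j,i,i')\subseteq\mathcal{L}_{i,i'}^{\text{left}}$, and by Definition~\ref{left-right-crossing-pair-def} the set $\mathcal{L}_{i,i'}^{\text{left}}$ contains points of $\mathcal{P}$ as well as line segments (unlike $BD$, which is explicitly a subset of $\mathcal{L}$; indeed the recursion of Lemma~\ref{domination-correctness-lem} places points of $\mathcal{P}^{\ast}$ into $D(j,i,i')$). So $D(j,i,i')\neq\bot$ gives you that $\mathcal{L}_{i,i'}^{\text{left}}$ dominates $G_j$, not that $Z=\mathcal{L}\cap\mathcal{L}_{i,i'}^{\text{left}}$ does; the latter equivalence can genuinely fail, e.g.\ for a point of $\mathcal{P}\cap B_t\cap G_j$ with no line-segment neighbor in $B_t$. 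Consequently your case analysis for a segment $L_m\in X(r_{i'},p_j)$ omits two possibilities: $L_m$ is dominated by a point $q\in\mathcal{P}\cap\mathcal{L}_{i,i'}^{\text{left}}$ with $q\in S_m$, or $L_m$ itself belongs to the dominating set. Both are fixed by computations you already have: in the first, run the ``delicate case'' with $u=q$ (the chain of inequalities collapses identically, since $q\in B_t$); in the second, $L_m\subseteq B_t\cap\mathbb{R}_{\text{right}}^{2}(\Gamma_{r_{i'}}^{\text{diag}})$ forces the horizontal segment $L_m$ to degenerate to a single point on $\Gamma_{r_{i'}}^{\text{diag}}$, which your dichotomy on $v_x$ versus $(r_{i'})_x$ already covers. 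Finally, such on-the-diagonal coincidences are \emph{not} excluded by the paper's general-position assumptions (those concern only distinct endpoints, $x$-coordinates and angles), so you should not appeal to general position there; since your argument handles the degenerate case directly, nothing else is lost.
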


Due to Observation~\ref{general-infeasible-characterization-obs}, without
loss of generality we assume below (in Lemmas~\ref%
{domination-correctness-lem-1-first-direction} and~\ref%
{domination-correctness-lem}) that $D(j,i,i^{\prime })\neq \bot $. Before we
provide our recursive computation for $D(j,i,i^{\prime })$ in Lemma~\ref%
{domination-correctness-lem} (cf.~Eq.~(\ref{recursion-domination-eq})), we
first prove in the next lemma that the upper part of the right hand side of
Eq.~(\ref{recursion-domination-eq}) is indeed a normalized dominating set of 
$G_{j}$, in which $(i,i^{\prime })$ is its end-pair.

\begin{lemma}
\label{domination-correctness-lem-1-first-direction}Let $G$ be a tolerance
graph, $(\mathcal{P},\mathcal{L})$ be a canonical representation of $G$, $%
p_{j}\in \mathcal{P}^{\ast }$, and $(i,i^{\prime })$ be a a right-crossing
pair of $G_{j}$. Assume that $D(j,i,i^{\prime })\neq \bot $. Let $%
q,q^{\prime },z,z^{\prime },w,w^{\prime }$ such that:

\begin{enumerate}
\item \label{domination-correctness-condition-1}$p_{q^{\prime }}\in \mathcal{%
P}^{\ast }$, where $1\leq q^{\prime }<j$,

\item \label{domination-correctness-condition-2}$L_{i},L_{i^{\prime }}\notin
N(p_{q^{\prime }})\cup H(p_{q^{\prime }})$,

\item \label{domination-correctness-condition-3}$(w,w^{\prime })$ is a
left-crossing pair of $G(q^{\prime },j)$,

\item \label{domination-correctness-condition-4}$(z,z^{\prime })$ is a
right-crossing pair of $G_{q^{\prime }}$,

\item \label{domination-correctness-condition-5}$q=\min \{1\leq k\leq
q^{\prime }:p_{k}\in \mathcal{P}^{\ast },\ p_{k}\in A_{\zeta }\}$, where $%
\zeta =\Gamma _{r_{z}}^{\text{vert}}\cap \Gamma _{r_{z^{\prime }}}^{\text{%
diag}}$,

\item \label{domination-correctness-condition-6}${\left( H(p_{q})\cup
H(p_{q^{\prime }})\right) \setminus \left( \bigcup_{q\leq k\leq q^{\prime
}}N(p_{k})\right) }$ are dominated by the line segments $\{L_{z},L_{z^{%
\prime }},L_{w},L_{w^{\prime }}\}$,

\item \label{domination-correctness-condition-7}$G(q,q^{\prime })$ is
dominated by $\{p_{k}\in \mathcal{P}^{\ast }:q\leq k\leq q^{\prime }\}$.
\end{enumerate}

If $D(q,z,z^{\prime })\neq \bot $ and $RD_{G(q^{\prime },j)}(w,w^{\prime
},i,i^{\prime })\neq \bot $ then the set%
\begin{equation*}
D(q,z,z^{\prime })\cup \left\{ p_{k}\in \mathcal{P}^{\ast }:q\leq k\leq
q^{\prime }\right\} \cup RD_{G(q^{\prime },j)}(w,w^{\prime },i,i^{\prime })
\end{equation*}
is a normalized dominating set of $G_{j}$, in which $(i,i^{\prime })$ is its
end-pair.
\end{lemma}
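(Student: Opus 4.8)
The plan is to set $A = D(q,z,z')$, $B = \{p_k\in\mathcal{P}^{\ast}: q\le k\le q'\}$, and $C = RD_{G(q',j)}(w,w',i,i')$, and to prove that $D' = A\cup B\cup C$ has the three required properties: it dominates $G_j$, its end-pair is $(i,i')$, and it is normalized. I would follow the template of Lemma~\ref{bounded-dom-correctness-lem-2-first-direction-lem-1}. First I record the containments $G_q\subseteq G_j$ and $G(q',j)\subseteq G_j$ (using $q<q'<j$ and the monotonicity $B_{p_q}\subseteq B_{p_{q'}}\subseteq B_{p_j}$), and note that, by the hypotheses $D(q,z,z')\neq\bot$ and $RD_{G(q',j)}(w,w',i,i')\neq\bot$ together with Observations~\ref{general-infeasible-characterization-obs} and~\ref{restricted-obs}, the set $A$ is a normalized dominating set of $G_q$ with end-pair $(z,z')$, and $C$ is a dominating set of $G(q',j)$ with start-pair $(w,w')$ and end-pair $(i,i')$.

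For the dominance I would partition $G_j$ according to position relative to $p_q$ and $p_{q'}$. Any element entirely contained in $B_{p_q}\setminus\Gamma_{p_q}^{\text{vert}}$ lies in $G_q$ and is dominated by $A$; any element of $G_j$ entirely contained in $A_{p_{q'}}$ lies in $G(q',j)$ and is dominated by $C$; any element entirely contained in the middle strip $G(q,q')$ is dominated by $B$ thanks to Condition~\ref{domination-correctness-condition-7}. The remaining, genuinely straddling elements are those meeting both the region left of $p_q$ and the region right of $p_{q'}$ (or sticking out of the middle strip). The key observation is that every such element must intersect the shadow $S_{p_q}$ or $S_{p_{q'}}$ and hence lies in $H(p_q)\cup H(p_{q'})$; Condition~\ref{domination-correctness-condition-6} then guarantees that each is either a neighbor of some middle point $p_k$ (so dominated by $B$), or is dominated by one of the anchor segments $\{L_z,L_{z'},L_w,L_{w'}\}\subseteq A\cup C$. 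Because two points are never adjacent, a point of $\mathcal{P}\cap G_j$ is dominated only by lying in $D'$ itself or by an incident line segment, and the same partition together with Conditions~\ref{domination-correctness-condition-6} and~\ref{domination-correctness-condition-7} supplies exactly this.

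It then remains to verify the end-pair and normalization. Since $(i,i')$ is the end-pair of $C$, we have $L_i,L_{i'}\in C\subseteq D'$ and $C\subseteq\mathcal{L}_{i,i'}^{\text{left}}$; using that the point $\zeta=\Gamma_{r_z}^{\text{vert}}\cap\Gamma_{r_{z'}}^{\text{diag}}$ of the right-crossing pair $(z,z')$ of $G_{q'}$ lies in $B_r$, where $r=\Gamma_{r_i}^{\text{vert}}\cap\Gamma_{r_{i'}}^{\text{diag}}$, one obtains $A\subseteq\mathcal{L}_{z,z'}^{\text{left}}\subseteq\mathcal{L}_{i,i'}^{\text{left}}$ and $B\subseteq\mathcal{L}_{i,i'}^{\text{left}}$, so $(i,i')$ is the end-pair of $D'$. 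For normalization I check the two conditions of Definition~\ref{normal-dominating-def}: the condition $D'\cap\mathcal{P}\subseteq\mathcal{P}^{\ast}$ holds because $A$ and $C$ are normalized and $B\subseteq\mathcal{P}^{\ast}$ by construction; the condition $(N(p)\cup H(p))\cap D'=\emptyset$ for every $p\in D'\cap\mathcal{P}$ I establish by combining the normalization of $A$ and of $C$ with Condition~\ref{domination-correctness-condition-2} (which rules out $L_i,L_{i'}$ as neighbours or hovering vertices of $p_{q'}$) and the left/right separation of the three pieces.

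I expect the main obstacle to be exactly this cross-piece verification: ruling out that a neighbor or a hovering vertex of a point selected in one of the three pieces secretly lies in another piece, and, dually, carrying out the case analysis that assigns every straddling element either to a middle point of $B$ or to an anchor segment. This is where the geometric content of Conditions~\ref{domination-correctness-condition-2},~\ref{domination-correctness-condition-5},~\ref{domination-correctness-condition-6}, and~\ref{domination-correctness-condition-7} is genuinely used, and where the bookkeeping is most delicate; everything else is routine region-containment checking analogous to the earlier bounded-domination lemmas.
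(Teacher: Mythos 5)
Your decomposition into $A=D(q,z,z')$, $B=\{p_k\in\mathcal{P}^{\ast}:q\le k\le q'\}$, $C=RD_{G(q',j)}(w,w',i,i')$ and the overall plan (three ``clean'' regions $G_q$, $G(q,q')$, $G(q',j)$ plus a straddling remainder, then end-pair and normalization checks) is exactly the paper's proof. However, your stated key observation for the remainder is wrong, and the error is not just bookkeeping. The vertical and diagonal lines through $p_q$ (resp.\ $p_{q'}$) split the plane into \emph{four} quadrants $A_{p_q},B_{p_q},S_{p_q},F_{p_q}$, so an element of $G_j$ that fails to lie wholly in one of the three clean regions need only meet $S_{p_q}\cup F_{p_q}\cup S_{p_{q'}}\cup F_{p_{q'}}$ --- not $S_{p_q}\cup S_{p_{q'}}$ as you claim. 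An element meeting only a \emph{reverse} shadow $F_{p_q}$ or $F_{p_{q'}}$ is in general not a hovering vertex of $p_q$ or $p_{q'}$ (hovering corresponds to meeting the shadow $S_{p_q}$, cf.\ Lemma~\ref{shadow-hovering-lem}), so it does not lie in $H(p_q)\cup H(p_{q'})$ and Condition~\ref{domination-correctness-condition-6} says nothing about it; your argument leaves these elements undominated.

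The paper closes this case with a short separate argument that you would need to supply: if such an $x$ were a point, then $x\in F_{p_q}$ gives $p_q\in S_x$, i.e.\ $p_q\in H(x)$, contradicting $p_q\in\mathcal{P}^{\ast}$ (and likewise for $p_{q'}$ via Condition~\ref{domination-correctness-condition-1}); hence $x$ is a line segment $L_k$ with $L_k\cap F_{p_q}\neq\emptyset$ or $L_k\cap F_{p_{q'}}\neq\emptyset$, which means $p_q\in S_k$ or $p_{q'}\in S_k$, so $L_k\in N(p_q)\cup N(p_{q'})$ and is dominated by the middle block $B$. With that case added, the rest of your outline (dominance of the clean regions, end-pair inherited from $C$, and normalization from the normalization of the three pieces plus the separation guaranteed by Conditions~\ref{domination-correctness-condition-2}--\ref{domination-correctness-condition-4}) matches the paper's proof.
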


\begin{proof}
The choices of $q,q^{\prime },z,z^{\prime },w,w^{\prime },i,i^{\prime }$, as
described in the assumptions of the lemma, are illustrated in Figure~\ref%
{general-recursion-fig}. Assume that $D(q,z,z^{\prime })\neq \bot $ and that 
$RD_{G(q^{\prime },j)}(w,w^{\prime },i,i^{\prime })\neq \bot $. We denote
for simplicity $D=D_{1}\cup D_{2}\cup D_{3}$, where%
\begin{eqnarray}
D_{1} &=&D(q,z,z^{\prime }),  \notag \\
D_{2} &=&\left\{ p_{k}\in \mathcal{P}^{\ast }:q\leq k\leq q^{\prime
}\right\},  \label{domination-correctness-D1-D2-D3-eq} \\
D_{3} &=&RD_{G(q^{\prime },j)}(w,w^{\prime },i,i^{\prime }).  \notag
\end{eqnarray}

First we prove that $D$ is a dominating set of $G_{j}$ and that $%
(i,i^{\prime })$ is the end-pair of $D$. Since $D_{1}\neq \bot $ and $%
D_{3}\neq \bot $, note that the set $G_{q}$ is dominated by $D_{1}$ and that
the set $G(q^{\prime },j)$ is dominated by~$D_{3}$. Furthermore, by
Condition~\ref{domination-correctness-condition-7} of the lemma, the set $%
G(q,q^{\prime })$ is dominated by~$D_{2}$. It remains to prove that, if $%
x\notin D$ is an element of $G_{j}$ such that $x\cap F_{p_{q}}\neq \emptyset 
$, or $x\cap F_{p_{q^{\prime }}}\neq \emptyset $, or $x\cap S_{p_{q}}\neq
\emptyset $, or $x\cap S_{p_{q^{\prime }}}\neq \emptyset $, then $x$ is
dominated by some element of $D$.

Assume that $x\notin D$ is an element of $G_{j}$ such that $x\cap
S_{p_{q}}\neq \emptyset $ or $x\cap S_{p_{q^{\prime }}}\neq \emptyset $.
Then $x\in H(p_{q})\cup H(p_{q^{\prime }})$ by Lemma~\ref%
{shadow-hovering-lem}. If $x\in \bigcup_{q\leq k\leq q^{\prime }}N(p_{k})$
then $x$ is clearly dominated by $D_{2}$, cf.~Eq.~(\ref%
{domination-correctness-D1-D2-D3-eq}). Otherwise $x\in {\left( H(p_{q})\cup
H(p_{q^{\prime }})\right) \setminus \left( \bigcup_{q\leq k\leq q^{\prime
}}N(p_{k})\right) }$, and thus $x$ is dominated by the line segments $%
\{L_{z},L_{z^{\prime }},L_{w},L_{w^{\prime }}\}$ by Condition~\ref%
{domination-correctness-condition-6} of the lemma.

Now assume that $x\notin D$ is an element of $G_{j}$ such that $x\cap
F_{p_{q}}\neq \emptyset $ or $x\cap F_{p_{q^{\prime }}}\neq \emptyset $.
Suppose that $x\in \mathcal{P}$, i.e., $x\in F_{p_{q}}$ or $x\in
F_{p_{q^{\prime }}}$. If $x\in F_{p_{q}}$ then $p_{q}\in S_{x}$, and thus $%
p_{q}\in H(x)$ by Lemma~\ref{shadow-hovering-lem}. This is a contradiction,
since $p_{q}\in \mathcal{P}^{\ast }$ by Condition~\ref%
{domination-correctness-condition-5} of the lemma, cf.~the definition of $%
\mathcal{P}^{\ast }$ in Eq.~(\ref{P-ast-eq}). Similarly, if $x\in
F_{p_{q^{\prime }}}$ then we arrive again to a contradiction, since $%
p_{q^{\prime }}\in \mathcal{P}^{\ast }$ by Condition~\ref%
{domination-correctness-condition-1} of the lemma. Therefore $x\notin 
\mathcal{P}$, i.e., $x\in \mathcal{L}$. Let $x=L_{k}$. Since $L_{k}\cap
F_{p_{q}}\neq \emptyset $ or $L_{k}\cap F_{p_{q^{\prime }}}\neq \emptyset $,
it follows that $p_{q}\in S_{k}$ or $p_{q^{\prime }}\in S_{k}$, and thus $%
x=L_{k}\in N(p_{q})\cup N(p_{q^{\prime }})$. That is, $x$ is dominated by $%
\{p_{q},p_{q^{\prime }}\}$. Therefore $D$ is a dominating set of $G_{j}$.
Furthermore, since $(i,i^{\prime })$ is the end-pair of $D_{3}$, it follows
that $(i,i^{\prime })$ is also the end-pair of $D=D_{1}\cup D_{2}\cup D_{3}$.

We now prove that $D$ is normalized. First note that $D_{1}=D(q,z,z^{\prime
})$ is normalized by Definition~\ref{D-general-def} and that $D_{2}$ is
normalized as it only contains elements of $\mathcal{P}^{\ast }$, cf.
Definition~\ref{normal-dominating-def}. Moreover, due to Definition~\ref%
{normal-dominating-def}, $D_{3}$ is normalized as it contains only elements
of $\mathcal{L}$, cf.~Definition~\ref{RD-def} in Section~\ref%
{Restricted-domination-sec}. That is, each of $D_{1}$, $D_{2}$, and $D_{3}$
is normalized. Furthermore note that, due to the Conditions~\ref%
{domination-correctness-condition-2},~\ref%
{domination-correctness-condition-3}, and~\ref%
{domination-correctness-condition-4} of the lemma, for any two elements $%
x,x^{\prime }$ that belong to different sets among $D_{1},D_{2},D_{3}$, no
point of $x$ belongs to the shadow of $x^{\prime }$. Therefore the whole set 
$D$ is normalized. Summarizing, $D$ is a normalized dominating set of $G_{j}$
whose end-pair is $(i,i^{\prime })$.
\end{proof}

\medskip

Given the statement of Lemma~\ref%
{domination-correctness-lem-1-first-direction}, we are now ready to provide
our recursive computation of the sets $D(j,i,i^{\prime })$.

\begin{lemma}
\label{domination-correctness-lem} Let $G$ be a tolerance graph, $(\mathcal{P%
},\mathcal{L})$ be a canonical representation of $G$, $p_{j}\in \mathcal{P}%
^{\ast }$, and $(i,i^{\prime })$ be a a right-crossing pair of $G_{j}$ such
that $D(j,i,i^{\prime })\neq \bot $. Then 
\begin{equation}
D(j,i,i^{\prime })=\min_{q^{\prime },z,z^{\prime },w,w^{\prime }}%
\begin{cases}
D(q,z,z^{\prime })\cup \left\{ p_{k}\in \mathcal{P}^{\ast }:q\leq k\leq
q^{\prime }\right\} \cup RD_{G(q^{\prime },j)}(w,w^{\prime },i,i^{\prime })
\\ 
BD_{G_{j}}(l_{1},b,1,i,i^{\prime })\text{, where }b=\Gamma _{r_{i}}^{\text{%
vert}}\cap \Gamma _{r_{i^{\prime }}}^{\text{diag}}%
\end{cases}%
.  \label{recursion-domination-eq}
\end{equation}%
where the minimum is taken over all $q^{\prime },z,z^{\prime },w,w^{\prime }$
that satisfy\footnote{%
Note that the value of $q$ is uniquely determined by the value of $q^{\prime
}$ and by the pair $(z,z^{\prime })$, cf.~Condition~5 of Lemma~\ref%
{domination-correctness-lem-1-first-direction}.} the Conditions~\ref%
{domination-correctness-condition-1}-\ref{domination-correctness-condition-7}
of Lemma~\ref{domination-correctness-lem-1-first-direction}.
\end{lemma}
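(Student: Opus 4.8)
The plan is to prove the cardinality identity of Eq.~(\ref{recursion-domination-eq}) by establishing the two inequalities $|D(j,i,i^{\prime})|\leq c$ and $|D(j,i,i^{\prime})|\geq c$, where $c$ denotes the size of the right-hand side, exactly in the style of the proofs of Lemmas~\ref{bounded-dom-correctness-lem-1} and~\ref{bounded-dom-correctness-lem-2}. The first inequality is nearly immediate. For the upper branch, Lemma~\ref{domination-correctness-lem-1-first-direction} already guarantees that, whenever $D(q,z,z^{\prime})\neq\bot$ and $RD_{G(q^{\prime},j)}(w,w^{\prime},i,i^{\prime})\neq\bot$, the displayed union $D_{1}\cup D_{2}\cup D_{3}$ is a \emph{normalized} dominating set of $G_{j}$ with end-pair $(i,i^{\prime})$; hence by the minimality in Definition~\ref{D-general-def} its size bounds $|D(j,i,i^{\prime})|$ from above. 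For the lower branch I would argue that $BD_{G_{j}}(l_{1},b,1,i,i^{\prime})$, with $b=\Gamma_{r_{i}}^{\text{vert}}\cap\Gamma_{r_{i^{\prime}}}^{\text{diag}}$, is a dominating set of $G_{j}$ consisting only of line segments: the elements of $G_{j}$ lying in $\mathbb{R}_{\text{right}}^{2}(\Gamma_{r_{i}}^{\text{vert}})$ are dominated by the end-pair $\{L_{i},L_{i^{\prime}}\}$, and $X(l_{1},b)$ captures the remainder, so again this set bounds $|D(j,i,i^{\prime})|$ from above.

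For the reverse inequality I would invoke Lemma~\ref{lem:dswrlk4} to fix an optimal \emph{normalized} dominating set $Z$ of $G_{j}$ with end-pair $(i,i^{\prime})$, so $|Z|=|D(j,i,i^{\prime})|$, and split according to whether $Z$ meets $\mathcal{P}$. If $Z\cap\mathcal{P}=\emptyset$, then $Z\subseteq\mathcal{L}$; since the far-left dummy segment $L_{1}$ is isolated in $G_{j}$ it must lie in $Z$ and is its diagonally leftmost segment, so $Z$ is itself feasible for $BD_{G_{j}}(l_{1},b,1,i,i^{\prime})$, giving $|BD_{G_{j}}(l_{1},b,1,i,i^{\prime})|\leq|Z|$ and the lower branch attains the bound. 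The substantive case is $Z\cap\mathcal{P}\neq\emptyset$, where by item~\ref{lem:itm2 copy(1)} of Definition~\ref{normal-dominating-def} we have $Z\cap\mathcal{P}\subseteq\mathcal{P}^{\ast}$.

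In that case I would let $p_{q^{\prime}}$ be the point of $Z\cap\mathcal{P}^{\ast}$ of largest index and cut $Z$ at $p_{q^{\prime}}$ into three pieces: the part $Z_{\text{right}}$ lying in $G(q^{\prime},j)$, which—since $p_{q^{\prime}}$ is rightmost—consists only of line segments and whose start-pair I call $(w,w^{\prime})$; the middle block of $\mathcal{P}^{\ast}$-points; and the part $Z_{\text{left}}$ lying in $G_{q}$, whose end-pair I call $(z,z^{\prime})$, the index $q$ being then forced by Condition~\ref{domination-correctness-condition-5} through $\zeta=\Gamma_{r_{z}}^{\text{vert}}\cap\Gamma_{r_{z^{\prime}}}^{\text{diag}}$. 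I would then verify Conditions~\ref{domination-correctness-condition-1}--\ref{domination-correctness-condition-7}: Condition~\ref{domination-correctness-condition-2} follows from item~\ref{lem:itm1} of Definition~\ref{normal-dominating-def}, which forces $(N(p_{q^{\prime}})\cup H(p_{q^{\prime}}))\cap Z=\emptyset$ and hence $L_{i},L_{i^{\prime}}\notin N(p_{q^{\prime}})\cup H(p_{q^{\prime}})$; the crossing-pair Conditions~\ref{domination-correctness-condition-3},~\ref{domination-correctness-condition-4} and the minimality Condition~\ref{domination-correctness-condition-5} hold by construction; and Conditions~\ref{domination-correctness-condition-6},~\ref{domination-correctness-condition-7} express precisely that every element of $G_{j}$ not captured by the two optimal sub-solutions is dominated either by one of the four boundary segments $\{L_{z},L_{z^{\prime}},L_{w},L_{w^{\prime}}\}$ or by a point of the middle block, which I would check using Lemma~\ref{shadow-hovering-lem} and Observation~\ref{neighbors-hovering-obs}. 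Replacing $Z_{\text{left}}$ by the optimal $D(q,z,z^{\prime})$ and $Z_{\text{right}}$ by the optimal $RD_{G(q^{\prime},j)}(w,w^{\prime},i,i^{\prime})$ can only decrease the size, yielding $|D(j,i,i^{\prime})|=|Z|\geq c$.

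The main obstacle I anticipate is the careful size accounting across the cut together with the verification of Conditions~\ref{domination-correctness-condition-6} and~\ref{domination-correctness-condition-7}. Because the three pieces may share boundary line segments (the end-pair $(z,z^{\prime})$ of $D_{1}$, the start-pair $(w,w^{\prime})$ of $D_{3}$, and the end-pair $(i,i^{\prime})$), I must track exactly which segments are double counted so that the inequality $|D(q,z,z^{\prime})|+|D_{2}|+|RD_{G(q^{\prime},j)}(w,w^{\prime},i,i^{\prime})|-(\text{shared})\leq|Z|$ comes out tight, mirroring the bookkeeping in Eq.~(\ref{bounded-inequality-eq-1}) and in the two cases of Lemma~\ref{bounded-dom-correctness-lem-2}. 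The crux is to show that, by normalization, no element of the middle region $G(q,q^{\prime})$ is dominated from the far left or far right, so that substituting the full set $\{p_{k}\in\mathcal{P}^{\ast}:q\leq k\leq q^{\prime}\}$ for the middle points of $Z$ is legitimate and cost-neutral; this is where the defining property of $\mathcal{P}^{\ast}$ in Eq.~(\ref{P-ast-eq})—that its points neither hover over nor lie in one another's shadows—does the essential work.
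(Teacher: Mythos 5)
Your proposal is correct and follows essentially the same route as the paper's proof: the same case split on whether the optimal normalized set $Z$ meets $\mathcal{P}^{\ast}$, the same cut at the rightmost point $p_{q^{\prime}}\in Z\cap \mathcal{P}^{\ast}$ into a left piece dominating $G_{q}$, a middle block of $\mathcal{P}^{\ast}$-points, and a right piece of line segments feasible for $RD_{G(q^{\prime },j)}(w,w^{\prime },i,i^{\prime })$, with feasibility of the recombined set supplied by Lemma~\ref{domination-correctness-lem-1-first-direction}. The one obstacle you anticipate (double-counted boundary segments) does not in fact arise here: unlike in Lemmas~\ref{bounded-dom-correctness-lem-1} and~\ref{bounded-dom-correctness-lem-2}, the three pieces are mutually disjoint because they are separated by $p_{q}$ and $p_{q^{\prime }}$, so the sizes simply add.
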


\begin{proof}
Let $Z$ be a normalized dominating set of $G_{j}$ such that $(i,i^{\prime })$
is its end-pair and $Z=|D(j,i,i^{\prime })|$. We distinguish the following
two cases.

\medskip

\noindent \textbf{Case 1.} $Z\cap \mathcal{P}^{\ast }=\emptyset $, i.e., $%
Z\subseteq \mathcal{L}$. Denote $b=\Gamma _{r_{i}}^{\text{vert}}\cap \Gamma
_{r_{i^{\prime }}}^{\text{diag}}$ and observe that $X(l_{1},b)\subseteq
G_{j} $. Therefore, since $Z$ is a dominating set of $G_{j}$, it follows
that $Z$ is also a dominating set of $X(l_{1},b)$. Moreover recall that $%
L_{1}$ is a dummy isolated line segment, and thus $L_{1}\in Z$. In
particular, $L_{1}$ is the diagonally leftmost line segment of $Z$.
Therefore $|BD_{G_{j}}(l_{1},b,1,i,i^{\prime })|\leq |Z| $, since $%
Z\subseteq \mathcal{L}$ and $(i,i^{\prime })$ is the end-pair of $Z $ by
assumption.

Since $D(j,i,i^{\prime })\neq \bot $ by assumption, it follows by
Observation~\ref{infeasible-dominating-obs} that there are no points $p\in 
\mathcal{P}\cap G_{j}$ such that $p\in \mathbb{R}_{\text{right}}^{2}(\Gamma
_{r_{i}}^{\text{vert}})$, and that $X(r_{i^{\prime }},p_{j})$ is dominated
by $L_{i}$ and $L_{i^{\prime }}$. Therefore $BD_{G_{j}}(l_{1},b,1,i,i^{%
\prime })$ is a dominating set of $G_{j}$ that has $(i,i^{\prime })$ as its
end-pair. Moreover, due to Definition~\ref{normal-dominating-def}, $%
BD_{G_{j}}(l_{1},b,1,i,i^{\prime })$ is normalized as it contains only
elements of $\mathcal{L}$ (cf.~Definition~\ref{BD-def} in Section~\ref%
{bounded-alg-subsec}). Thus $|Z|\leq |BD_{G_{j}}(l_{1},b,1,i,i^{\prime })|$.
That is, $|Z|=|BD_{G_{j}}(l_{1},b,1,i,i^{\prime })|$.

\medskip

\noindent \textbf{Case 2.} $Z\cap \mathcal{P}^{\ast }\neq \emptyset $. Let $%
q^{\prime }=\max \{k<j:p_{k}\in \mathcal{P}^{\ast }\cap Z\}$, cf.~Figure~\ref%
{general-recursion-fig}. From the assumption that $Z$ is normalized, it
follows that for every line segment $L_{k}\in Z\cap \mathcal{L}$, either $%
L_{k}\subseteq B_{p_{q^{\prime }}}$ or $L_{k}\subseteq A_{p_{q^{\prime }}}$.
Therefore the set $Z\cap \mathcal{L}$ can be partitioned into two sets $Z_{%
\mathcal{L},1}$ and $Z_{\mathcal{L},2}$, where 
\begin{eqnarray*}
Z_{\mathcal{L},1} &=&\{L_{k}\in Z\cap \mathcal{L}:L_{k}\subseteq
B_{p_{q^{\prime }}}\}, \\
Z_{\mathcal{L},2} &=&\{L_{k}\in Z\cap \mathcal{L}:L_{k}\subseteq
A_{p_{q^{\prime }}}\}.
\end{eqnarray*}%
In particular, note that $L_{i},L_{i^{\prime }}\notin N(p_{q^{\prime }})\cup
H(p_{q^{\prime }})$. Now we prove that $L_{i},L_{i^{\prime }}\in Z_{\mathcal{%
L},2}$. Assume otherwise $L_{i}\in Z_{\mathcal{L},1}$, i.e., $L_{i}\subseteq
B_{p_{q^{\prime }}}$. Then $r_{i}\in B_{p_{q^{\prime }}}$, and thus $%
p_{q^{\prime }}\in \mathbb{R}_{\text{right}}^{2}(\Gamma _{r_{i}}^{\text{vert}%
})$. This is a contradiction by Observation~\ref{infeasible-dominating-obs},
since $D(j,i,i^{\prime })\neq \bot $ by assumption. Now assume that $%
L_{i^{\prime }}\in Z_{\mathcal{L},1}$, i.e., $L_{i^{\prime }}\subseteq
B_{p_{q^{\prime }}}$. Then $r_{i^{\prime }}\in B_{p_{q^{\prime }}}$, and
thus $p_{q^{\prime }}\in \mathbb{R}_{\text{right}}^{2}(\Gamma _{r_{i^{\prime
}}}^{\text{diag}})$. This is a contradiction to the assumption that $%
(i,i^{\prime })$ is the end-pair of $D(j,i,i^{\prime })$. Summarizing, $%
L_{i},L_{i^{\prime }}\in Z_{\mathcal{L},2}$.

\begin{figure}[t]
\centering
\includegraphics[width=\textwidth]{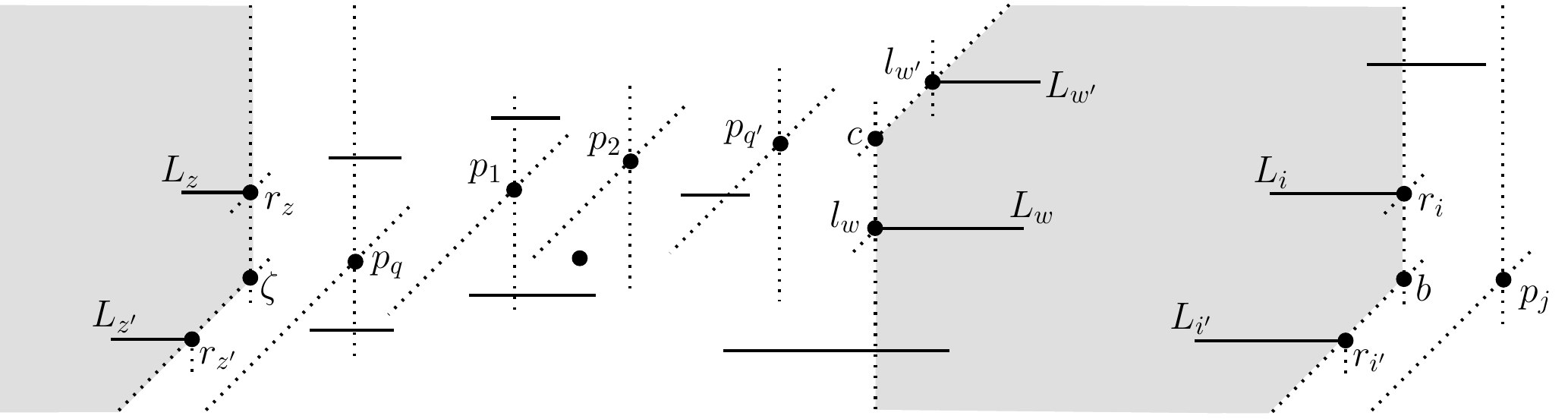}
\caption{The recursion for Case 2 of Lemma~\protect\ref%
{domination-correctness-lem}, where $p_{q}, p_{1}, p_{2}, p_{q^{\prime }}
\in P^{\ast}$.}
\label{general-recursion-fig}
\end{figure}

Notice that $Z_{\mathcal{L},2}\subseteq \mathcal{L}$ is a bounded dominating
set of $G(q^{\prime },j)$ with $(i,i^{\prime })$ as its end-pair, and thus $%
Z_{\mathcal{L},2}\neq \emptyset $. Since $Z_{\mathcal{L},2}\subseteq 
\mathcal{L}$, Observation~\ref{unique-start-end-pair-L-obs} implies that $Z_{%
\mathcal{L},2}$ contains a unique start-pair. Let $(w,w^{\prime })$ be the
left-crossing pair of $G(q^{\prime },j)$ which is the start-pair of $Z_{%
\mathcal{L},2}$. 
Then%
\begin{equation}
|RD_{G(q^{\prime },j)}(w,w^{\prime },i,i^{\prime })|\leq |Z_{\mathcal{L},2}|,
\label{eq:eqzl2}
\end{equation}%
and thus $RD_{G(q^{\prime },j)}(w,w^{\prime },i,i^{\prime })\neq \bot $.

Recall that $G_{j}$ contains the isolated (dummy) line segment $L_{1}$, and
thus $L_{1}\in Z_{\mathcal{L},1}$. Therefore $Z_{\mathcal{L},1}\neq
\emptyset $. Since $Z_{\mathcal{L},1}\subseteq \mathcal{L}$, Observation~\ref%
{unique-start-end-pair-L-obs} implies that $Z_{\mathcal{L},1}$ contains a
unique end-pair. Let $(z,z^{\prime })$ be the right-crossing pair of $%
G_{q^{\prime }}$ which is the end-pair of $Z_{\mathcal{L},1}$. Denote $\zeta
=\Gamma _{r_{z}}^{\text{vert}}\cap \Gamma _{r_{z^{\prime }}}^{\text{diag}}$,
cf.~Figure~\ref{general-recursion-fig}.

Consider now an arbitrary point $p\in \mathcal{P}^{\ast }\cap Z$. We will
prove that $p\notin F_{\zeta }\cup S_{\zeta }$. Assume otherwise that $p\in
F_{\zeta }$. Then $p\in \mathbb{R}_{\text{right}}^{2}(\Gamma _{r_{z}}^{\text{%
vert}})$, and thus also $p\in \mathbb{R}_{\text{right}}^{2}(\Gamma
_{r_{z^{\prime }}}^{\text{vert}})$. Moreover $p\in \mathbb{R}_{\text{left}%
}^{2}(\Gamma _{r_{z^{\prime }}}^{\text{diag}})$. This implies that $p\in
F_{r_{z^{\prime }}}$. That is, $r_{z^{\prime }}\in S_{p}$, and thus Lemma~%
\ref{shadow-hovering-lem} implies that $L_{z^{\prime }}\in H(p)$. This is a
contradiction to the assumption that $Z$ is normalized, since both $%
p,L_{z^{\prime }}\in Z$. Thus $p\notin F_{\zeta }$. Now assume that $p\in
S_{\zeta }$. Then $p\in \mathbb{R}_{\text{right}}^{2}(\Gamma _{r_{z^{\prime
}}}^{\text{diag}})$, and thus also $p\in \mathbb{R}_{\text{right}%
}^{2}(\Gamma _{r_{z}}^{\text{diag}})$. Furthermore $p\in \mathbb{R}_{\text{%
left}}^{2}(\Gamma _{r_{z}}^{\text{vert}})$. This implies that $p\in
S_{r_{z}} $, and thus $L_{z}\in N(p)$. This is again a contradiction to the
assumption that $Z$ is normalized, since both $p,L_{z}\in Z$. Thus $p\notin
S_{\zeta }$. Summarizing, for every $p\in P^{\ast }\cap Z$ we have that $%
p\notin F_{\zeta }\cup S_{\zeta }$, i.e., either $p\in A_{\zeta }$ or $p\in
B_{\zeta } $. Therefore the set $P^{\ast }\cap Z$ can be partitioned into
two sets $Z_{\mathcal{P}^{\ast },1}$ and $Z_{\mathcal{P}^{\ast },2}$, where 
\begin{eqnarray*}
Z_{\mathcal{P}^{\ast },1} &=&\{p\in P^{\ast }\cap Z:p\in B_{\zeta }\}, \\
Z_{\mathcal{P}^{\ast },2} &=&\{p\in P^{\ast }\cap Z:p\in A_{\zeta }\}.
\end{eqnarray*}

Note that $p_{q}\in Z_{\mathcal{P}^{\ast },2}$. Furthermore, since $%
(z,z^{\prime })$ is the end-pair of $Z_{\mathcal{L},1}$, note that all line
segments of $Z_{\mathcal{L},1}$ are contained in $B_{\zeta }$. Therefore all
elements of the set $Z_{1}=Z_{\mathcal{L},1}\cup Z_{\mathcal{P}^{\ast },1}$
are contained in $B_{\zeta }$, and thus $(z,z^{\prime })$ is the end-pair of 
$Z_{1}$. Define now $q=\min \{1\leq k\leq q^{\prime }:p_{k}\in \mathcal{P}%
^{\ast },\ p_{k}\in A_{\zeta }\}$, cf.~Figure~\ref{general-recursion-fig}.
Recall that $p_{q}\notin G_{q}$, cf.~Eq.~(\ref{G_j-eq1}). It is easy to
check that no line segment of $Z_{\mathcal{L},2}$ dominates any element of $%
G_{q}$, cf.~Figure~\ref{general-recursion-fig}. Similarly, no point of $Z_{%
\mathcal{P}^{\ast },2}$ dominates any element of $G_{q}$. Thus the set $%
Z_{1} $ is a dominating set of $G_{q}$. Furthermore $Z_{1}$ is normalized,
since $Z_{1}\subseteq Z$ and $Z$ is normalized by assumption. That is, $%
Z_{1} $ is a normalized dominating set of $G_{q}$ with $(z,z^{\prime })$ as
its end-pair. Therefore, 
\begin{equation}
|D(q,z,z^{\prime })|\leq |Z_{1}|,  \label{eq:eqz1}
\end{equation}%
and thus $D(q,z,z^{\prime })\neq \bot $.

We now prove that $Z_{\mathcal{P}^{\ast },2}=\{p_{k}\in \mathcal{P}^{\ast
}:q\leq k\leq q^{\prime }\}$. Clearly $Z_{\mathcal{P}^{\ast },2}\subseteq
\{p_{k}\in \mathcal{P}^{\ast }:q\leq k\leq q^{\prime }\}$ by the definition
of the index $q$ and of the set $Z_{\mathcal{P}^{\ast },2}$. Recall that for
every line segment $L_{t}\in Z$, either $L_{t}\in Z_{\mathcal{L},1}$ or $%
L_{t}\in Z_{\mathcal{L},2}$. If $L_{t}\in Z_{\mathcal{L},1}$ then $%
L_{t}\subseteq B_{\zeta }\subseteq B_{p_{q}}$. Denote $c=\Gamma _{l_{w}}^{%
\text{vert}}\cap \Gamma _{l_{w^{\prime }}}^{\text{diag}}$, cf.~Figure~\ref%
{general-recursion-fig}. If $L_{t}\in Z_{\mathcal{L},2}$ then $%
L_{t}\subseteq A_{c}\subseteq A_{p_{q^{\prime }}}$, since $(w,w^{\prime })$
is the start-pair of $Z_{\mathcal{L},2}$. Thus, for every line segment $%
L_{t}\in Z$, either $L_{t}\subseteq B_{p_{q}}$ or $L_{t}\subseteq
A_{p_{q^{\prime }}}$. Therefore $N(p_{k})\cap Z=\emptyset $, for every $k\in
\{q,q+1,\ldots ,q^{\prime }\}$, and thus all points $p_{k}\in \mathcal{P}%
^{\ast }$, where $q\leq k\leq q^{\prime }$, must belong to $Z$. That is, $%
\{p_{k}\in \mathcal{P}^{\ast }:q\leq k\leq q^{\prime }\}\subseteq Z_{%
\mathcal{P}^{\ast },2}$. Therefore,%
\begin{equation}
Z_{\mathcal{P}^{\ast },2}=\{p_{k}\in \mathcal{P}^{\ast }:q\leq k\leq
q^{\prime }\}.  \label{eq:eqzp22222}
\end{equation}

Recall that for every line segment $L_{k}\in Z$, either $L_{k}\subseteq
B_{p_{q}}$ or $L_{k}\subseteq A_{p_{q^{\prime }}}$, as we proved above.
Therefore $G(q,q^{\prime })$ must be dominated by $Z_{\mathcal{P}^{\ast },2}$%
. Furthermore, due to Eq.~(\ref{eq:eqzp22222}), $Z_{\mathcal{P}^{\ast },2}$
clearly dominates the set $\bigcup_{q\leq k\leq q^{\prime }}N(p_{k})$.
Moreover every hovering vertex of $p_{q}$ and of $p_{q^{\prime }}$ must be
dominated by $Z_{\mathcal{P}^{\ast },2}$ or by the set $\{L_{z},L_{z^{\prime
}},L_{w},L_{w^{\prime }}\}$. Therefore $\{L_{z},L_{z^{\prime
}},L_{w},L_{w^{\prime }}\}$ must dominate the set $(H(p_{q})\cup
H(p_{q^{\prime }}))\setminus \left( \bigcup_{q\leq k\leq q^{\prime
}}N(p_{k})\right) $.

Now note that the sets $D(q,z,z^{\prime })$, $Z_{\mathcal{P}^{\ast },2}$,
and $RD_{G(q^{\prime },j)}(w,w^{\prime },i,i^{\prime })$ are mutually
disjoint. Furthermore, it follows by Eq.~(\ref{eq:eqzl2}) and~(\ref{eq:eqz1}%
) that%
\begin{eqnarray}
\left\vert D(q,z,z^{\prime })\right\vert +\left\vert Z_{\mathcal{P}^{\ast
},2}\right\vert +\left\vert RD_{G(q^{\prime },j)}(w,w^{\prime },i,i^{\prime
})\right\vert &\leq &|Z_{1}|+|Z_{\mathcal{P}^{\ast },2}|+|Z_{\mathcal{L},2}|
\notag \\
&=&|Z_{\mathcal{L},1}\cup Z_{\mathcal{P}^{\ast },1}|+|Z_{\mathcal{P}^{\ast
},2}|+|Z_{\mathcal{L},2}|
\label{domination-correctness-case-2-inequality-eq-1} \\
&=&|Z|=|D(j,i,i^{\prime })|.  \notag
\end{eqnarray}%
Therefore $\left\vert D(q,z,z^{\prime })\cup Z_{\mathcal{P}^{\ast },2}\cup
RD_{G(q^{\prime },j)}(w,w^{\prime },i,i^{\prime })\right\vert \leq
|D(j,i,i^{\prime })|$. On the other hand, since $Z_{\mathcal{P}^{\ast
},2}=\{p_{k}\in \mathcal{P}^{\ast }:q\leq k\leq q^{\prime }\}$ by Eq.~(\ref%
{eq:eqzp22222}), Lemma~\ref{domination-correctness-lem-1-first-direction}
implies that, if $D(q,z,z^{\prime })\neq \bot $ and $RD_{G(q^{\prime
},j)}(w,w^{\prime },i,i^{\prime })\neq \bot $, then $D(q,z,z^{\prime })\cup
Z_{\mathcal{P}^{\ast },2}\cup RD_{G(q^{\prime },j)}(w,w^{\prime
},i,i^{\prime })$ is a normalized dominating set of $G_{j}$, in which $%
(i,i^{\prime })$ is its end-pair. Therefore%
\begin{equation}
|D(j,i,i^{\prime })|\leq \left\vert D(q,z,z^{\prime })\cup Z_{\mathcal{P}%
^{\ast },2}\cup RD_{G(q^{\prime },j)}(w,w^{\prime },i,i^{\prime
})\right\vert.  \label{domination-correctness-case-2-inequality-eq-2}
\end{equation}%
The lemma follows by Eq.~(\ref{domination-correctness-case-2-inequality-eq-1}%
) and~(\ref{domination-correctness-case-2-inequality-eq-2}).
\end{proof}

\medskip

We are now ready to present Algorithm~\ref{dominating-tol-alg} which, given
a canonical horizontal shadow representation $(\mathcal{P},\mathcal{L}) $ of
a connected tolerance graph $G$, computes a (normalized) minimum dominating
set $D$ of $G$. The correctness of Algorithm~\ref{dominating-tol-alg} is
proved in Theorem~\ref{domination-correctness-thm}.

\begin{algorithm}[t!]
\caption{\textsc{Dominating Set} on Tolerance Graphs} \label{dominating-tol-alg}
\begin{algorithmic} [1]
\REQUIRE{A canonical horizontal shadow representation $(\mathcal{P},\mathcal{L})$, where ${\mathcal{P} = \{p_{1},p_{2}, \ldots, p_{|\mathcal{P}|}\}}$ and 
${\mathcal{L} = \{L_{1},L_{2}, \ldots, L_{|\mathcal{L}|}\}}$.}
\ENSURE{A set $D\subseteq \mathcal{L} \cup \mathcal{P}$ of minimum size that dominates $(\mathcal{P},\mathcal{L})$.}

\medskip

\STATE{Add two dummy line segments $L_{0}$ (resp.~$L_{|\mathcal{L}|+1}$) completely to the left (resp.~right) of $\mathcal{P} \cup \mathcal{L}$} \label{alg-3-line-1}

\STATE{Add a dummy point $p_{|\mathcal{P}|+1}$ completely to the right of $L_{|\mathcal{L}|+1}$} \label{alg-3-line-2}

\STATE{$\mathcal{P} \leftarrow \mathcal{P} \cup \{p_{|\mathcal{P}|+1}\}$; \ \ 
$\mathcal{L} \leftarrow \mathcal{L} \cup \{L_{0}, L_{|\mathcal{L}|+1}\}$} \label{alg-3-line-3}

\STATE{Denote ${\mathcal{P} = \{p_{1},p_{2}, \ldots, p_{|\mathcal{P}|}\}}$ and ${\mathcal{L} = \{L_{1},L_{2}, \ldots, L_{|\mathcal{L}|}\}}$, 
where now $p_{|\mathcal{P}|}$, $L_{1}$, and $L_{|\mathcal{L}|}$ are dummy} \label{alg-3-line-4}

\vspace{0,1cm}

\STATE{$\mathcal{P}^{\ast }=\{p\in \mathcal{P}:p\notin H(p^{\prime })\text{ for every point }p^{\prime }\in \mathcal{P}\setminus \{p\}\}$} \label{alg-3-line-5}

\vspace{0,1cm}

\FOR{every pair of points $(a,b) \in \mathcal{A} \times \mathcal{B}$ such that $b \in \mathbb{R}_{\text{right}}^{2}(\Gamma_{a}^{\text{diag}})$} \label{alg-3-line-6}
     \STATE{$X(a,b) \leftarrow \{x\in \mathcal{P}\cup \mathcal{L}:x\subseteq \left( B_{b}\setminus \Gamma _{b}^{\text{vert}}\right) \cap \mathbb{R}_{\text{right}}^{2}(\Gamma_{a}^{\text{diag}})\}$} \label{alg-3-line-7}
\ENDFOR 

\vspace{0,1cm}

\FOR{every $p_{j} \in \mathcal{P}^{\ast}$} \label{alg-3-line-8}
     \STATE{$G_{j} \leftarrow \{x\in \mathcal{P}\cup \mathcal{L}:x\subseteq B_{p_{j}}\setminus \Gamma _{p_{j}}^{\text{vert}}\}$} \label{alg-3-line-9}
     \FOR{every $i,i^{\prime} \in \{1,2,\ldots, |\mathcal{L}|\}$} \label{alg-3-line-10}
     
     \vspace{0,1cm}
     
          \IF[$(i,i^{\prime})$ is a right-crossing pair of $G_{j}$]{$L_{i}, L_{i^{\prime}} \in G_{j}$ and $r_{i^{\prime}} \in S_{r_{i}}$} \label{alg-3-line-11}
               \STATE{\textbf{if} \ $\mathcal{L}_{i,i^{\prime }}^{\text{left}}$ does not dominate all elements of $G_{j}$ \ \textbf{then} \ $D(j,i,i^{\prime}) \leftarrow \bot $} \label{alg-3-line-12}
               \STATE{\ \ \ \textbf{else} \ Compute $D(j,i,i^{\prime})$ by Lemma~\ref{domination-correctness-lem}} \COMMENT{by calling Algorithms~\ref{bounded-dominating-tolerance-alg}~and~\ref{restricted-bounded-alg}} \label{alg-3-line-13}

          \ENDIF
     \ENDFOR
\ENDFOR

\medskip

\RETURN{$D(|\PP|, |\LL|, |\LL|) \setminus \{L_{1}, L_{\LL}\}$} \label{alg-3-line-14}
\end{algorithmic}
\end{algorithm}

\begin{theorem}
\label{domination-correctness-thm}Given a canonical horizontal shadow
representation $(\mathcal{P},\mathcal{L})$ of a connected tolerance graph $G$
with $n$ vertices, Algorithm~\ref{dominating-tol-alg} computes in $O(n^{15})$
time a (normalized) minimum dominating set~$D$ of~$G$.
\end{theorem}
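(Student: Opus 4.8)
The plan is to prove the two halves of the statement separately: first that Algorithm~\ref{dominating-tol-alg} returns a minimum dominating set of $G$, and then that it runs in $O(n^{15})$ time. For correctness I would argue that, after the dummy augmentation performed in lines~\ref{alg-3-line-1}--\ref{alg-3-line-4}, the value $D(|\mathcal{P}|,|\mathcal{L}|,|\mathcal{L}|)$ is exactly a minimum normalized dominating set of the augmented graph, and that stripping the two isolated dummy segments $L_{1},L_{|\mathcal{L}|}$ recovers a minimum dominating set of the original connected graph $G$. Throughout I would invoke Lemma~\ref{lem:dswrlk4}, which guarantees that some minimum dominating set is normalized, so that restricting the search to normalized dominating sets (as $D(j,i,i')$ does, cf.~Definition~\ref{D-general-def}) loses no generality.

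The key correctness step is the following reduction argument. The dummy point $p_{|\mathcal{P}|}$ lies in $\mathcal{P}^{\ast}$ and is placed so that every genuine element of $\mathcal{P}\cup\mathcal{L}$, together with the dummy segments, is contained in $B_{p_{|\mathcal{P}|}}\setminus\Gamma_{p_{|\mathcal{P}|}}^{\text{vert}}$; hence $G_{|\mathcal{P}|}$ (cf.~Eq.~(\ref{G_j-eq1})) is precisely the whole augmented vertex set. Since $L_{|\mathcal{L}|}$ is placed completely to the right and is isolated, it must belong to every dominating set and, by Observation~\ref{unique-start-end-pair-L-obs}, forces the end-pair to be $(|\mathcal{L}|,|\mathcal{L}|)$; the same isolation argument forces $L_{1}\in D$. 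Thus $D(|\mathcal{P}|,|\mathcal{L}|,|\mathcal{L}|)$ is a minimum normalized dominating set of the augmented graph, and the final output in line~\ref{alg-3-line-14} is correct. It then remains to justify that each table entry $D(j,i,i')$ is computed correctly; this follows by induction on the rank of $p_{j}$ in $\mathcal{P}^{\ast}$, with the base case and the infeasible cases handled by Observations~\ref{general-infeasible-characterization-obs} and~\ref{infeasible-dominating-obs} (lines~\ref{alg-3-line-11}--\ref{alg-3-line-12}), and the inductive step handled by the recursion of Lemma~\ref{domination-correctness-lem} (line~\ref{alg-3-line-13}). Crucially the recursion only refers to subproblems $D(q,z,z')$ with $q<j$, so the required entries are already available when the increasing-$j$ loop reaches them.

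For the running time I would count as follows. The table $D(j,i,i')$ has $O(n)$ choices of $j\in\mathcal{P}^{\ast}$ and $O(n^{2})$ choices of the right-crossing pair $(i,i')$, i.e.~$O(n^{3})$ entries. Evaluating one entry by Lemma~\ref{domination-correctness-lem} amounts to minimizing over the parameters $q',z,z',w,w'$ (the index $q$ being determined by Condition~5 of Lemma~\ref{domination-correctness-lem-1-first-direction}); after accounting for the crossing-pair constraints and the pinning of a primed index against $i'$ exactly as in the proof of Theorem~\ref{bounded-correctness-thm}, the number of relevant tuples is $O(n^{3})$. For each tuple the dominant cost is one call to the \textsc{Restricted Bounded Dominating Set} subroutine $RD_{G(q',j)}(w,w',i,i')$ (Algorithm~\ref{restricted-bounded-alg}) and, for the lower branch, one call to the \textsc{Bounded Dominating Set} subroutine $BD_{G_{j}}$ (Algorithm~\ref{bounded-dominating-tolerance-alg}); by Theorems~\ref{restricted-correctness-thm} and~\ref{bounded-correctness-thm} each costs $O(n^{9})$, which absorbs the $O(n)$ work of forming set unions and checking the Conditions~1--7. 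Multiplying $O(n^{3})$ entries by $O(n^{3})$ tuples by $O(n^{9})$ per subroutine call yields the claimed $O(n^{15})$ bound.

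I expect the main obstacle to be the running-time bookkeeping rather than the correctness, which is largely assembled from the already-proved Lemmas~\ref{lem:dswrlk4} and~\ref{domination-correctness-lem}. Specifically, the delicate point is verifying that the parameter space $(q',z,z',w,w')$ of the recursion collapses to $O(n^{3})$ feasible tuples rather than the naive $O(n^{5})$; this requires reusing the observation that, for a fixed end-pair, the relevant start- and end-pairs of the two sub-instances are essentially pinned (one primed coordinate coincides with $i'$), in direct analogy with the quadruple count in the proof of Theorem~\ref{bounded-correctness-thm}. A secondary point worth stating carefully is that the $RD$ and $BD$ subroutines are invoked on the induced sub-representations $G(q',j)$ and $G_{j}$, so one must confirm that these are themselves valid horizontal shadow representations to which Theorems~\ref{restricted-correctness-thm} and~\ref{bounded-correctness-thm} apply.
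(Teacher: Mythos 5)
Your correctness argument follows the paper's route essentially verbatim (dummy augmentation, Lemma~\ref{lem:dswrlk4} to restrict to normalized dominating sets, induction on the rank of $p_{j}$ in $\mathcal{P}^{\ast}$ via Observations~\ref{general-infeasible-characterization-obs}--\ref{infeasible-dominating-obs} and Lemma~\ref{domination-correctness-lem}), and that half is fine. The running-time half, however, has a genuine gap exactly at the point you flag as delicate. You claim that the parameter space $(q',z,z',w,w')$ of the recursion in Eq.~(\ref{recursion-domination-eq}) collapses to $O(n^{3})$ feasible tuples per table entry because ``one primed coordinate coincides with $i'$,'' in analogy with Condition~2 of Lemma~\ref{bounded-dom-correctness-lem-1-first-direction}. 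No such pinning exists in Lemma~\ref{domination-correctness-lem-1-first-direction}: Conditions~3 and~4 there allow $(w,w')$ to be an arbitrary left-crossing pair of $G(q',j)$ and $(z,z')$ an arbitrary right-crossing pair of $G_{q'}$, so the honest count is $O(n)\cdot O(n^{2})\cdot O(n^{2})=O(n^{5})$ tuples per entry. If, as in your accounting, each tuple is charged one $O(n^{9})$ subroutine call, the bound becomes $O(n^{3})\cdot O(n^{5})\cdot O(n^{9})=O(n^{17})$, not $O(n^{15})$.

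The correct repair --- and the paper's actual argument --- is not to shrink the tuple space but to decouple the expensive subroutine calls from it. The call $RD_{G(q',j)}(w,w',i,i')$ does not depend on $(z,z')$, and the lower-branch call $BD_{G_{j}}(l_{1},b,1,i,i')$ depends only on $(j,i,i')$; hence all needed $RD$ values can be precomputed over the $O(n^{6})$ distinct argument tuples $(q',j,w,w',i,i')$ at $O(n^{9})$ each, giving the dominant $O(n^{15})$ term, and all needed $BD$ values over $O(n)$ choices of $j$ in $O(n^{10})$ total. After this memoization, iterating over the full $O(n^{5})$ tuples per entry costs only table lookups, condition checks, and set unions, contributing $O(n^{11})$ overall. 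Your final product happens to equal $O(n^{15})$, but the intermediate step justifying it is false as stated, so the bound is not actually established by your argument.
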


\begin{proof}
In the first line, Algorithm~\ref{dominating-tol-alg} augments the given
canonical horizontal shadow representation $(\mathcal{P},\mathcal{L})$ by
adding to $\mathcal{L}$ the dummy line segments $L_{0}$ and $L_{|\mathcal{L}%
|+1}$ (with endpoints $l_{0},r_{0}$ and $l_{|\mathcal{L}|+1},r_{|\mathcal{L}%
|+1}$, respectively) such that all elements of $\mathcal{P\cup L}$ are
contained in $A_{r_{0}}$ and in $B_{l_{|\mathcal{L}|+1}}$. Furthermore, in
the second line, the algorithm further augments the set of points $\mathcal{P%
}$ by adding to it the dummy point $p_{|\mathcal{P}|+1}$ such that all
elements of $\mathcal{P\cup L}^{\prime }$ are contained in $B_{p_{|\mathcal{P%
}|+1}}$. In lines~\ref{alg-3-line-3} and~\ref{alg-3-line-4} the algorithm
renumbers the elements of the sets $\mathcal{P}$ and $\mathcal{L}$ such that 
$\mathcal{P}=\{p_{1},p_{2},\ldots ,p_{|\mathcal{P}|}\}$ and $\mathcal{L}%
=\{L_{1},L_{2},\ldots ,L_{|\mathcal{L}|}\}$, where in this new enumeration
the point $p_{|\mathcal{P}|}$ is dummy and the line segments $L_{1}$ and $%
L_{|\mathcal{L}|}$ are dummy as well. In lines~\ref{alg-3-line-5}-\ref%
{alg-3-line-9} the algorithm computes the subset $\mathcal{P}^{\ast
}\subseteq \mathcal{P}$ (cf.~Eq.~(\ref{P-ast-eq})), all feasible subsets $%
X(a,b)\subseteq \mathcal{P\cup L}$ (cf.~Eq.~(\ref{X(a,b)-def-eq}) in Section~%
\ref{bounded-alg-subsec}), and all sets $G_{j}$, where $p_{j}\in \mathcal{P}%
^{\ast }$ (cf.~Eq.~(\ref{G_j-eq1})).

The main computations of the algorithm are performed in lines~\ref%
{alg-3-line-12}-\ref{alg-3-line-13}, which are executed for every point $%
p_{j}\in \mathcal{P}^{\ast }$ and for every right-crossing pair $%
(i,i^{\prime })$ of the set $G_{j}$. In line~\ref{alg-3-line-12} the
algorithm checks whether $\mathcal{L}_{i,i^{\prime }}^{\text{left}}$
dominates all elements of $G_{j}$. If it is not the case, it correctly
computes $D(j,i,i^{\prime })=\bot $ by Observation~\ref%
{general-infeasible-characterization-obs}. Otherwise, if $\mathcal{L}%
_{i,i^{\prime }}^{\text{left}}$ is a dominating set of $G_{j}$, then the
algorithm computes in line~\ref{alg-3-line-13} the value of $D(j,i,i^{\prime
})$ with the recursive formula of Lemma~\ref{domination-correctness-lem}.
Note that, to compute all the necessary values for this recursive formula,
Algorithm~\ref{dominating-tol-alg} needs to call Algorithms~\ref%
{bounded-dominating-tolerance-alg} and~\ref{restricted-bounded-alg} as
subroutines, cf.~Lemma~\ref{domination-correctness-lem}.

Once all values $D(j,i,i^{\prime })$ have been computed, the set $D(|%
\mathcal{P}|,|\mathcal{L}|,|\mathcal{L}|)$ is a minimum normalized
dominating set of $G_{|\mathcal{P}|}$ whose end-pair is $(|\mathcal{L}|,|%
\mathcal{L}|)$, cf.~Definition~\ref{D-general-def}. Recall that $p_{|%
\mathcal{P}|}\notin G_{|\mathcal{P}|}$, i.e., $G_{|\mathcal{P}|}=(\mathcal{P}%
\setminus \{p_{|\mathcal{P}|}\})\cup \mathcal{L}$. Therefore, since the two
dummy line segments are isolated, they must belong to the dominating set $D(|%
\mathcal{P}|,|\mathcal{L}|,|\mathcal{L}|)$ of $G_{|\mathcal{P}|}$. Thus the
algorithm correctly returns in line~\ref{alg-3-line-14} the value $D(|%
\mathcal{P}|,|\mathcal{L}|,|\mathcal{L}|)\setminus \{L_{1},L_{|\mathcal{L}%
|}\}$ as a minimum normalized dominating set for the input tolerance graph $%
G $.

Regarding the running time of Algorithm~\ref{dominating-tol-alg}, first note
that the execution time of lines~\ref{alg-3-line-1}-\ref{alg-3-line-5} is
dominated by the computation of the set $\mathcal{P}^{\ast }$ in line~\ref%
{alg-3-line-5}; this can be done in at most $O(n^{2})$ time, since we check
in the worst case for every two points $p,p^{\prime }\in \mathcal{P}$
whether $p\in H(p^{\prime })$. Due to the for-loop of line~\ref{alg-3-line-6}%
, line~\ref{alg-3-line-7} is executed at most~$O(n^{3})$ times. Furthermore
recall by Eq.~(\ref{region-R(a,b)-def-eq}) and~(\ref{X(a,b)-def-eq}) that,
for every pair $(a,b)\in \mathcal{A}\times \mathcal{B}$, the vertex set $%
X(a,b)$ can be computed in $O(n)$ time. Therefore, lines~\ref{alg-3-line-6}-%
\ref{alg-3-line-7} are executed in $O(n^{4})$ time. Due to the for-loop of
line~\ref{alg-3-line-8}, the lines~\ref{alg-3-line-9}-\ref{alg-3-line-13}
are executed~$O(n)$ times, since there are at most $O(n)$ points in the set $%
\mathcal{P}^{\ast }$. For every fixed $p_{j}\in \mathcal{P}^{\ast }$, line~\ref%
{alg-3-line-9} can be trivially executed in $O(n)$ time. For every fixed $%
p_{j}\in \mathcal{P}^{\ast }$, the lines~\ref{alg-3-line-11}-\ref%
{alg-3-line-13} are executed $O(n^{2})$ times, due to the for-loop of line~%
\ref{alg-3-line-10}. Furthermore, for every fixed triple $(j,i,i^{\prime })$%
, line~\ref{alg-3-line-11} can be executed in constant time and line~\ref%
{alg-3-line-12} can be easily executed in $O(n^{2})$ time.

It remains to upper bound the execution time of line~\ref{alg-3-line-13}
using Lemma~\ref{domination-correctness-lem}. Before we execute line~\ref%
{alg-3-line-13} for the first time, we perform two preprocessing steps. In
the first preprocessing step we compute, for each of the $O(n)$ possible
values for $j$, the graph $G_{j}$ in $O(n)$ time (cf.~Eq.~(\ref{G_j-eq1}))
and then we compute by Algorithm~\ref{bounded-dominating-tolerance-alg} in $%
O(n^{9})$ time the values $BD_{G_{j}}(l_{1},b,1,i,i^{\prime })$ for every
feasible pair $(i,i^{\prime })$, cf.~Theorem~\ref{bounded-correctness-thm} in Section~\ref{Bounded-dominating-sec}. 
That is, we compute in the first preprocessing step the values $%
BD_{G_{j}}(l_{1},b,1,i,i^{\prime })$ for every triple $(j,i,i^{\prime })$ in 
$O(n^{10})$ time. In the second preprocessing step we compute, for each of
the $O(n^{6})$ possible values for $q^{\prime },j,w,w^{\prime },i,i^{\prime
} $, the graph $G(q^{\prime },j)$ in $O(n)$ time (cf.~Eq.~(\ref{G-i-j-eq2}))
and then we compute by Algorithm~\ref{restricted-bounded-alg} in $O(n^{9})$
time the values $RD_{G(q^{\prime },j)}(w,w^{\prime },i,i^{\prime })$, 
cf.~Theorem~\ref{restricted-correctness-thm} in Section~\ref{Restricted-domination-sec}. 
That is, we compute in the second
preprocessing step all values $RD_{G(q^{\prime },j)}(w,w^{\prime
},i,i^{\prime })$ in $O(n^{15})$ time.

Consider a fixed value for the triple $(j,i,i^{\prime })$. Then there exist $%
O(n)$ feasible values for $q^{\prime }$, cf.~Conditions~1 and~2 of Lemma~\ref%
{domination-correctness-lem-1-first-direction}. Furthermore there exist $O(n^{2})$
feasible values for the pair $(z,z^{\prime })$, cf.~Condition~4 of
Lemma~\ref{domination-correctness-lem-1-first-direction}. Once
the values of $q,z,z^{\prime }$ have been chosen, we can compute in $O(n)$
time the value of $q$, cf.~Conditions~5 and~6 of Lemma~\ref%
{domination-correctness-lem-1-first-direction}. Furthermore, once the values
of $q^{\prime }$ and $q$ have been chosen, we can check Condition~7 of 
Lemma~\ref{domination-correctness-lem-1-first-direction} in $O(n^{2})$ time. Thus,
given a fixed value for the triple $(j,i,i^{\prime })$, we can compute in $%
O(n^{5})$ time the sets $D(q,z,z^{\prime })\cup \left\{ p_{k}\in \mathcal{P}%
^{\ast }:q\leq k\leq q^{\prime }\right\} $, for all feasible values of the
triples $(q,z,z^{\prime })$. Moreover, for each of the $O(n^{2})$ feasible
pairs $(w,w^{\prime })$ (cf.~Condition~3 of Lemma~\ref{domination-correctness-lem-1-first-direction}) we can compute in $O(n)$ time the set $%
D(q,z,z^{\prime })\cup \left\{ p_{k}\in \mathcal{P}^{\ast }:q\leq k\leq
q^{\prime }\right\} \cup RD_{G(q^{\prime },j)}(w,w^{\prime },i,i^{\prime })$%
, cf.~Lemma~\ref{domination-correctness-lem-1-first-direction}. That is, for
a fixed value of the triple $(j,i,i^{\prime })$, we can compute all these
sets in $O(n^{8})$ time, and thus we can compute all values of $%
D(j,i,i^{\prime })$ in $O(n^{11})$ time.

Summarizing, the running time of the algorithm is dominated by the two
preprocessing steps for computing in advance all values $%
BD_{G_{j}}(l_{1},b,1,i,i^{\prime })$ and $RD_{G(q^{\prime },j)}(w,w^{\prime
},i,i^{\prime })$, and thus the running time of Algorithm~\ref%
{dominating-tol-alg} is $O(n^{15})$.
\end{proof}

\section{Concluding Remarks\label{conclusions-sec}}

In this paper we introduced two new geometric representations for tolerance
and multitolerance graphs, called the \emph{horizontal shadow representation} and
the \emph{shadow representation}, respectively. Using these new representations we
first proved that the dominating set problem is \textsf{APX}-hard on
multitolerance graphs and then we provided a polynomial time algorithm for
this problem on tolerance graphs, thus answering to a longstanding open question.
Therefore, given the (seemingly) small difference between the definition of
tolerance and multitolerance graphs, this dichotomy result appears to be
surprising. 

The two new representations have the potential for further
exploitation via sweep line algorithms. For example, using the shadow
representation, it is not very difficult to design a polynomial sweep line
algorithm for the independent dominating set problem, even on the larger
class of multitolerance graphs. In particular, although the complexity of
the dominating set problem has been established in this paper for both
tolerance and multitolerance graphs, an interesting research direction would
be to use these new representations also for other related problems, e.g., 
for the connected dominating set problem. A major open problem in tolerance and
multitolerance graphs is to establish the computational complexity of the \emph{Hamiltonicity problems}. 
We hope that the two new geometric representations can provide new insights also for these problems.

Our algorithm for tolerance graphs is highly non-trivial and its running
time is upper-bounded by $O(n^{15})$, where $n$ is the number of vertices in
the input tolerance graph. Using more sophisticated data structures our
algorithm could run slightly faster. As our main aim in this paper was to
establish the \emph{first} polynomial-time algorithm for this problem,
rather than finding an optimized efficient algorithm, an interesting
research direction is to explore to what extend the running time can be
reduced. The existence of a \emph{practically efficient} polynomial-time
algorithm for the dominating set problem on tolerance graphs remains widely
open.

\paragraph*{Acknowledgments.}

The second author wishes to thank Steven Chaplick for insightful initial
discussions.

{

}

\end{document}